\newtheorem{theorem}{Theorem}[section]
\newtheorem{lemma}[theorem]{Lemma}
\newtheorem{corollary}[theorem]{Corollary}
\newcounter{definition}[section]
\newenvironment{definition}[1][]{\refstepcounter{definition}\par\medskip
   \noindent \textbf{Definition~\thedefinition. #1} \rmfamily}{\medskip}	
\newenvironment{reminder}[1]{\smallskip
	\noindent {\scshape \textbf{Reminder of #1 }}\em}{
}
 \gdef\xxxmark{%
   \expandafter\ifx\csname @mpargs\endcsname\relax 
     \expandafter\ifx\csname @captype\endcsname\relax 
     \marginpar{xxx}
          \else
       xxx 
    \fi
   \else
     xxx 
   \fi}
 \gdef\xxx{\@ifnextchar[\xxx@lab\xxx@nolab}
 \long\gdef\xxx@lab[#1]#2{\textbf{[\xxxmark #2 ---{\sc #1}]}}
 \long\gdef\xxx@nolab#1{\textbf{[\xxxmark #1]}}
\newcommand{\dOVm}{$\#$OV$^{\mu,d}$}
\newcommand{\tO}{\tilde{O}}
\newcommand{\eps}{\epsilon}
\def \Z {\mathbb Z}
\def \R {\mathbb R}
\newcommand{\kOV}{$k$-OV}
\newcommand{\kSUM}{$k$-SUM}
\newcommand{\kXOR}{$k$-XOR}
\newcommand{\kxor}{\kXOR}
\newcommand{\aczkc}{ACZ$k$C}
\newcommand{\zkc}{Z$k$C}
\newcommand{\kov}{\kOV}
\newcommand{\klcs}[1][]{\ifthenelse{\equal{#1}{}}{$k$-LCS}{${#1}$-LCS}}
\newcommand{\kwlcs}[1][]{\ifthenelse{\equal{#1}{}}{$k$-WLCS}{${#1}$-WLCS}}
\newcommand{\kNLstC}[1][]{\ifthenelse{\equal{#1}{}}{$k$-NLstC}{${#1}$-NLstC}}
\newcommand{\kELstC}[1][]{\ifthenelse{\equal{#1}{}}{$k$-ELstC}{${#1}$-ELstC}}
\newcommand{\czkc}[1][]{\ifthenelse{\equal{#1}{}}{FZ$k$C}{FZ${#1}$C}}
\newcommand{\czkch}[1][]{\ifthenelse{\equal{#1}{}}{FZ$k$CH}{FZ${#1}$CH}}
\newcommand{\cfkc}[1][]{\ifthenelse{\equal{#1}{}}{F$\mathfrak{f}k$C}{F$\mathfrak{f}{#1}$C}}
\newcommand{\cfkch}[1][]{\ifthenelse{\equal{#1}{}}{F$\mathfrak{f}k$CH}{F$\mathfrak{f}{#1}$CH}}
\newcommand{\ckov}[1][]{\ifthenelse{\equal{#1}{}}{F$k$-OV}{F${#1}$-OV}}
\newcommand{\ckovh}[1][]{\ifthenelse{\equal{#1}{}}{F$k$-OVH}{F${#1}$-OVH}}
\newcommand{\cksum}[1][]{
\ifthenelse{\equal{#1}{}}
{F$k$-SUM}
{F${#1}$-SUM}
}
\newcommand{\cksumh}[1][]{\ifthenelse{\equal{#1}{}}{F$k$-SUMH}{F${#1}$-SUMH}}
\newcommand{\fzkc}[1][]{\ifthenelse{\equal{#1}{}}{FZ$k$C}{FZ${#1}$C}}
\newcommand{\ckxor}[1][]{\ifthenelse{\equal{#1}{}}{F$k$-XOR}{F${#1}$-XOR}}
\newcommand{\ckxorh}[1][]{\ifthenelse{\equal{#1}{}}{F$k$-XORH}{F${#1}$-XORH}}
\newcommand{\ckfunc}[1][]{\ifthenelse{\equal{#1}{}}{F$k$-$\mathfrak{f}$}{F${#1}$-$\mathfrak{f}$}}
\newcommand{\ckfunch}[1][]{\ifthenelse{\equal{#1}{}}{F$k$-$\mathfrak{f}$H}{F${#1}$-$\mathfrak{f}$H}}
\newcommand{\ksum}[1][]{\ifthenelse{\equal{#1}{}}{$k$-SUM}{${#1}$-SUM}}
\newcommand{\chpgh}{CHGHP}
\newcommand{\ckfunct}{\textrm{F}k\textrm{-}f}
\newcommand{\czt}{FZT}
\newcommand{\NDMT}{PMT}
\newcommand{\Unif}{\textsc{Unif}}
\newcommand{\GoodPolys}{Good Low-Degree Polynomials}
\newcommand{\goodPoly}{good low-degree polynomial}
\newcommand{\gPol}[1]{GLDP(#1)}
\newenvironment{proofof}[1]{{\bf Proof of #1.  }}{\hfill$\Box$}
\title{New Techniques for Proving Fine-Grained Average-Case Hardness}
\author{Mina Dalirrooyfard\\{MIT, minad@mit.edu} \and Andrea Lincoln\\ {MIT, andreali@mit.edu} \and Virginia Vassilevska Williams\\ {MIT, virgi@mit.edu} }
\date{}
\begin{document}

\maketitle
\thispagestyle{empty}
\begin{abstract}
The recent emergence of fine-grained cryptography strongly motivates developing an average-case analogue of Fine-Grained Complexity (FGC).

Prior work [Goldreich-Rothblum 2018, Boix-Adser{\`{a}} et al. 2019, Ball et al. 2017] developed worst-case to average-case fine-grained reductions (WCtoACFG) for certain algebraic and counting problems over natural distributions and used them to obtain a limited set of cryptographic primitives.
To obtain stronger cryptographic primitives based on standard FGC assumptions, ideally, one would like to develop WCtoACFG reductions from the core hard problems of FGC, Orthogonal Vectors (OV), CNF-SAT, $3$SUM, All-Pairs Shortest Paths (APSP) and zero-$k$-clique. Unfortunately, it is unclear whether these problems actually are hard for any natural distribution. It is known, that e.g.~OV can be solved quickly for very natural distributions [Kane-Williams 2019], and in this paper we show that even counting the number of OV pairs on average has a fast algorithm.

This paper defines new versions of OV, $k$SUM and  zero-$k$-clique that are both worst-case and average-case fine-grained hard assuming the core hypotheses of FGC. We then use these as a basis for fine-grained hardness and average-case hardness of other problems.
The new problems represent their inputs in a certain ``factored'' form. We call them ``factored''-OV, ``factored''-zero-$k$-clique and ``factored''-$3$SUM. We show that factored-$k$-OV and factored $k$SUM are equivalent and are complete for a class of problems defined over Boolean functions.
Factored zero-$k$-clique is also complete, for a different class of problems.

Our hard factored problems are also simple enough that we can reduce them to many other problems, e.g.~to edit distance, $k$-LCS and versions of  Max-Flow. We further consider counting variants of the factored problems and give WCtoACFG reductions for them for a natural distribution. Through FGC reductions we then get \emph{average-case} hardness for well-studied problems like regular expression matching from standard \emph{worst-case} FGC assumptions. 

To obtain our WCtoACFG reductions, we formalize the framework of [Boix-Adser{\`{a}} et al. 2019] that was used to give a WCtoACFG reduction for counting $k$-cliques.  We define an explicit property of problems such that if a problem has that property one can use the framework on the problem to get a WCtoACFG self reduction. We then use the framework to slightly extend Boix-Adser{\`{a}} et al.'s average-case counting $k$-cliques result to average-case hardness for counting arbitrary subgraph patterns of constant size in $k$-partite graphs. 

The fine-grained public-key encryption scheme of [LaVigne et al.'20] is based on an average-case hardness hypothesis for the {\em decision} problem, zero-$k$-clique, and the known techniques for building such schemes break down for algebraic/counting problems.
Meanwhile, the WCtoACFG reductions so far have only been for counting problems. To bridge this gap, we show that for a natural distribution, an algorithm that detects a zero-$k$-clique with high enough probability also implies an algorithm that can count zero-$k$-cliques with high probability. This gives hope that the FGC cryptoscheme of [LaVigne et al.'20] can be based on standard FGC assumptions.

\end{abstract}
\newpage

\section{Introduction}
\label{sec:Intro}










Fine-grained complexity (FGC) is an active research area that seeks to understand why many problems of interest have particular running time bounds $t(n)$ that are easy to achieve with known techniques, but have not been improved upon significantly in decades, except by $t(n)^{o(1)}$ factors.
FGC has produced a versatile set of tools that have resulted in surprising {\em fine-grained} reductions that together with popular hardness hypotheses explain the running time bottlenecks for a large variety of problems \cite{virgiSurvey}. The reductions of FGC have, for example, explained the difficulty of improving over the $n^{2-o(1)}$ time algorithms for Longest Common Subsequence (LCS) by giving a tight reduction from $k$-SAT, and thus showing that an improved LCS algorithm would violate 
the Strong Exponential Time Hypothesis (SETH)  \cite{LCSisHard}.

There are three main problems, with associated hardness hypotheses about their running times, that FGC primarily uses as sources of hardness reductions (see \cite{virgiSurvey}). The three core hard problems are All Pairs Shortest Paths  (APSP), hypothesized to require $n^{3-o(1)}$ time in $n$-node graphs\footnote{All hypotheses are for the word-RAM model of computation with $O(\log n)$ bit words.}, the $3$SUM problem, hypothesized to require $n^{2-o(1)}$ time on $n$ integer inputs, and the Orthogonal Vectors (OV) problem, hypothesized to require $n^{2-o(1)}$ time for $n$ vector inputs of dimension $\omega(\log n)$
(the OV hypothesis is implied by SETH \cite{ryanThesis}). 

While it is unknown whether these three hypotheses are equivalent, some work suggests they might not be \cite{CarmosinoGIMPS16}.
There is a problem, Zero Triangle, on $n$ node graphs that requires $n^{3-o(1)}$ time under both the $3$SUM and the APSP hypothesis \cite{vw10j,vw09j}. Zero Triangle asks if an $n$ node graph with integer edge weights contains a triangle whose three edge weights sum to $0$. A natural extension of Zero Triangle, zero-$k$-clique (where one wants to detect a $k$-clique with edge weight sum $0$), is conjectured to require $n^{k-o(1)}$ time.
There are also some simple to define problems on $n$ node graphs that require $n^{3-o(1)}$ time under three core hardness hypotheses (SETH, APSP and $3$SUM): Matching Triangles and Triangle Collection \cite{matchingTriangles}. 

Recently there has been increased interest in developing average-case fine-grained complexity (ACFGC), with a new type of {\em fine-grained cryptography} as a main motivation \cite{BallWorstToAvg,BallRSV18,GoldreichR18, fgCrytpo, UniformCliqueABB}. The main goal is to identify a problem $P$ that requires some $t(n)^{1-o(1)}$ time on average for an easily sampled distribution, and then to build interesting cryptographic primitives from this problem, where any honest party only needs to run a very fast algorithm, in some $t'(n)\leq O(t(n)^{c})$ time for $c$ much smaller than $1$, while an adversary would need to run at least in $t(n)^{1-o(1)}$ time, unless problem $P$ can be solved fast on average.


To obtain average-case fine-grained hard problems, one would like to be able to obtain worst-case to average-case fine-grained reductions for natural problems that are hypothesized to be fine-grained hard in the worst-case\footnote{Well, even more ideally, one would like to use problems that are provably unconditionally average-case hard, such as the problems from the known time-hierarchy theorems, but these problems are difficult to work with and there are no known techniques to build cryptography from them.}. This is what prior work does. 

The problems for which fine-grained worst-case to average-case hardness reductions are known are mostly algebraic or counting problems, such as counting $k$-cliques \cite{goldreichrothblum20,GoldreichR18,BallRSV18,UniformCliqueABB}, or some problems involving polynomials. Some limited cryptographic primitives have been obtained from such problems, e.g. fine-grained proofs-of-work \cite{BallRSV18,BallWorstToAvg}.
Building fine-grained one-way functions or fine-grained public key cryptography based on any worst-case FGC hardness assumption is still an open problem. Such primitives have been developed, based on plausible assumptions about the average-case complexity of zero-$k$-clique \cite{fgCrytpo}. This motivates the following question: {\em  Is there a fine-grained worst-case to average-case reduction for zero-$k$-clique?} 

As prior work showed  worst-case to average-case case reductions for counting cliques, 
a natural approach to obtaining worst-case to average-case reductions for the detection variant of zero-$k$-clique is to give a fine-grained reduction from counting to decision. A tight reduction is not known for the worst-case version of the problem. It turns out that a fine-grained reduction from counting to decision for zero-$k$-clique is possible in the average-case for a natural distribution with certain parameters, if the detection probability is high enough. We prove this in Section \ref{sec:countToDetect}. While the parameters are currently not good enough to imply a worst-case to average-case reduction for (the decision version of) zero-$k$-clique, the reduction gives hope that the fine-grained public-key scheme of \cite{fgCrytpo} can eventually be based on a standard FGC (worst-case) hardness assumption.

The next natural question is whether worst-case to average-case reductions are possible for the other core problems of FGC, and in particular for OV (as it is as far as we know unrelated to zero-$k$-clique). Consider the most natural distribution for OV: given a fixed probability $p\in (0,1)$, one generates $n$ vectors of dimension $d=\omega(\log n)$ by selecting for each vector $v$ and $i\in [d]$ independently, $v_i$ to be $1$ with probability $p$ and $0$ otherwise. Kane and Williams~\cite{ryanAvgCaseOV} showed that for every $p$, there is an $\eps_p>0$ and an $O(n^{2-\eps_p})$ time algorithm that solves OV on instances generated from the above distribution with high probability. Thus, for this distribution (if the OV conjecture is true), there can't be a fine-grained $(n^2,n^2)$-worst-case to average-case reduction for OV. 
In Section \ref{sec:OV} we also show that even the counting version of OV, in which one wants to determine the number of pairs of orthogonal vectors, has a truly-subquadratic time algorithm that works with high probability over the same distribution. Thus, even counting OV cannot be average-case $n^{2-o(1)}$-hard. (Though, it could be fine-grained average-case hard for a different time function. We leave this to future work.)

The first key contribution of this paper is in defining a new type of problem, a ``factored problem'' that is fine-grained hard from a core FGC assumption, whose counting version is average-case hard for a natural distribution again under a core FGC assumption, and that is also simple enough so that one can reduce it to well-studied problems and develop average-case hardness for them.

While developing worst-case to average-case reductions for our factored problems, we formalize the worst-case to average-case fine-grained reductions framework of Boix et al. \cite{UniformCliqueABB}. We identify a property of problems (the existence of a ``good polynomial'') that makes it possible for these problems to have such a worst-case to average-case reduction. 
Originally, \cite{UniformCliqueABB} gave average-case hardness for counting $k$-Cliques in Erd\"os-Renyi graphs using their framework.
Along the way of generalizing their framework, we also obtain a worst-case to average-case reduction for counting copies of $H$ for any $k$-node $H$, where the distribution for the average-case instance is again for Erd\"o-Renyi graphs. We achieve this using a new technique we call Inclusion-Edgesclusion.

In the rest of the introduction we will present our results mentioned in the above two paragraphs.

\subsection{The factored problems}
We call the problems we introduce ``factored problems'' (a full formal definition is in Section~\ref{sec:Prelims}).
To define them, let us first define a {\em factored vector}. Let $b$ and $g$ be positive integers. A $(g,b)$-factored vector, $v$, is made up of $g$ sets $v[1], \ldots, v[g]$. Each set is a subset $v[i] \subseteq \{0,1\}^b$. Roughly speaking, a factored vector $v$ represents many $b\cdot g$ binary vectors, namely a concatenation $x_1,x_2,\ldots,x_g$ for each choice of a $g$-tuple of vectors $x_i\in v[i]$ for all $i$.  For example, for $g=2$ and $b=3$, let $v$ be a factored vector where
$v[0] = \{001,010\}$ and $v[1] = \{010,110\}$. A natural interpretation of $v$ is that it is a set of the following $4$ binary vectors, by concatenating each member of $v[0]$ with each member of $v[1]$, that is $\{001010, 001110,010010,010110\}$.

Now, consider a function $f$ that takes a $2b$-bit input $x_1,\ldots,x_b,y_1,\ldots, y_b$ and returns a value in $\{0,1\}$; we can consider $f$ as a Boolean function. Then, for two factored vectors $v$ and $v'$ and a coordinate $i\in [g]$, we can consider the number of pairs of $b$-bit vectors $x\in v[i], y\in v'[i]$ that $f$ accepts. This is $accept_f(v,v',i):=\sum_{x\in v[i],y\in v'[i]} f(x_1,\ldots,x_b,y_1,\ldots,y_b)$, where $x=x_1\ldots x_b$ and $y=y_1\ldots y_b$. If we take the product $\prod_{i=1}^g accept_f(v,v',i)$, we would obtain the number of pairs of $b\cdot g$-length vectors represented by $v$ and $v'$ that are accepted by $f$, where $f$ is said to accept a pair of $b\cdot g$-length vectors if it accepts each of the $g$ pairs of chunks of $b$-length subvectors between positions $(i-1)b+1$ to $ib$ for $i\in [g]$.

Then we can define the factored problem for $f$, \ckfunc[2]~that given two sets $S$ and $T$ of $n$ $(g,b)$-factored vectors, computes the sum $\sum_{v\in S,v'\in T} \prod_{i=1}^g accept_f(v,v',i)$, i.e. the total number of pairs of vectors represented by vectors in $S$ and $T$ that are accepted by $f$. For technical reasons, we restrict the
 values $g=o(\lg(n)/\lg\lg(n))$ and $b = o(\lg(n))$, so that each factored vector can be represented with at most $gb2^b$ bits ($g$ sets of at most $2^b$ vectors of length $b$). 
 
Depending on the function $f$, we get different versions of a factored problem. If $f$ on $b$-length vectors $x$ and $y$, returns $1$ iff $x\cdot y=0$, then we get the factored OV problem \ckov[2]. If $f$ returns $1$ if the XOR of $x$ and $y$ is $0$, we get the \ckxor[2]~problem, and if $f$ returns $1$ iff $x+y=0$ when viewed as integers, we get the \cksum[2]~problem.

More generally, $f$ can be defined over $k\cdot b$-length vectors, for integer $k\geq 2$, taking $k$-tuples of $b$-length binary vectors to $\{0,1\}$. Then analogously we can define \ckfunc~to compute the number of $k$-tuples of vectors represented by some $k$-tuple of factored vectors, one from each $n$-sized input set $S_i$, $i\in [k]$, so that $f$ accepts the $k$-tuple. This way we can define \ckov, \ckxor, \cksum etc, the factored versions of \kov, \kxor~and \ksum.

Similarly to these problems defined on $k$-tuples of sets of factored vectors, we define problems reminiscent to $k$-clique. Here $f$ is a function that takes $\binom{k}{2}$-tuples of $b$-length vectors to $\{0,1\}$, one is given a graph whose edges are labeled by factored vectors and the factored $f$ $k$-clique problem, \cfkc[], asks to compute the number of $\binom{k}{2}$-tuples of vectors that are accepted by $f$ and are represented by the factored vectors labeling the edges of a
$k$-clique in the graph. We focus in particular on the factored zero-$k$-clique problem, \czkc[], in which $f$ corresponds to returning whether the sum of $\binom{k}{2}$ $b$-bit numbers is $0$.

\subsection{Results for factored problems}

We will summarize the results around our factored problems below. 
They appear in sections \ref{sec:compressedVariants} and \ref{sec:harderProblem}. We give a visual summary of our results in Figure \ref{fig:results}. We use the shortened names for many of the problems in the figure. 
The results will concern both counting and decision versions of our factored problems. The decision versions ask whether the count is nonzero, whereas the counting versions ask for the exact count. When we want the counting version, we will place $\#$ in front of the name of the problem. See the Preliminaries (Section \ref{sec:Prelims}) for more details.

\begin{figure}[ht]
    \centering
    \includegraphics[width=0.9\textwidth]{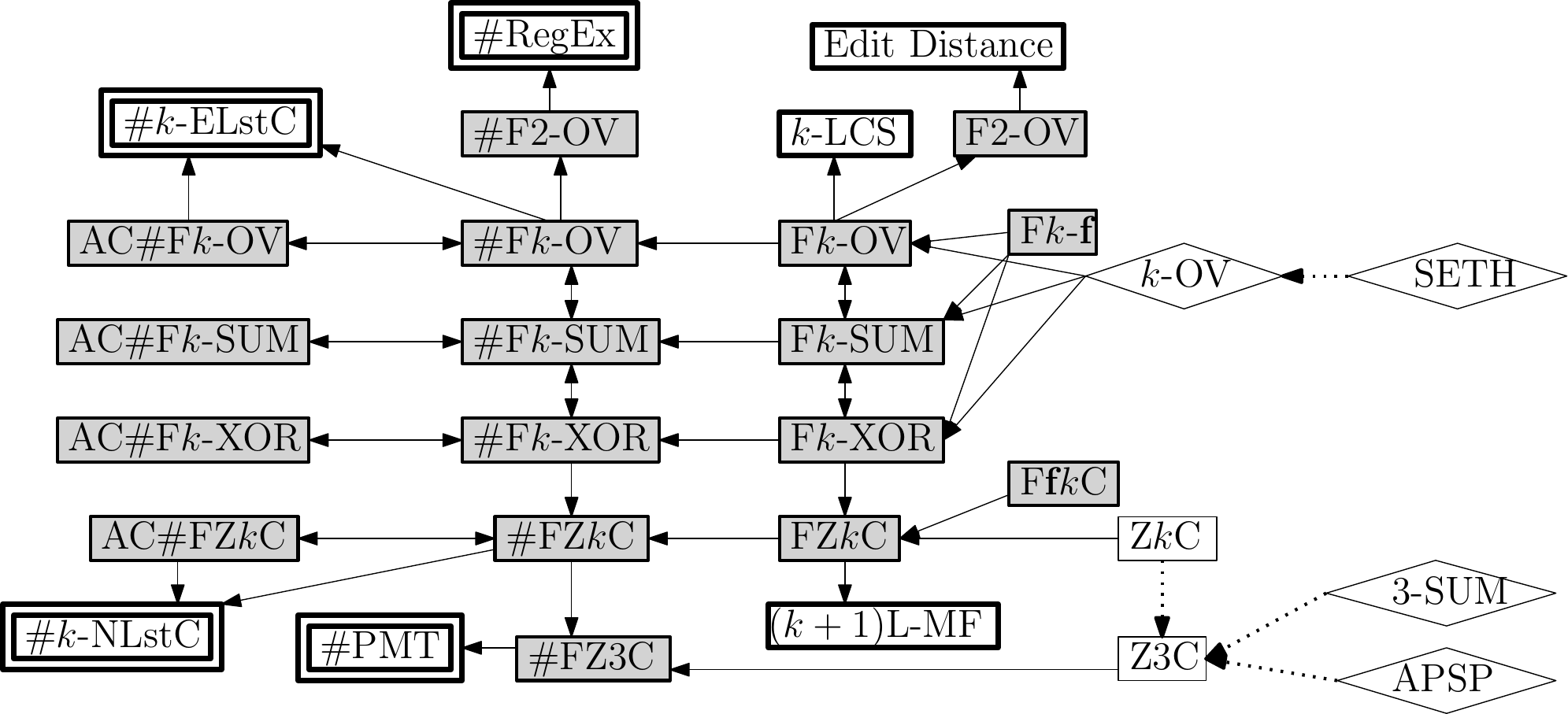}
    \caption{A summary of the reductions to and from factored problems in the paper. The problems in diamonds are the core problems of FGC. The full lines are reductions from this paper, while doted lines are pre-existing reductions. 
    The problems in gray boxes are our factored problems. The  problems in thick lined boxes are the problems we reduce from factored problems. For the problems surrounded by a thick-lined double box we have generated an explicit average case distributions on which they are hard (but it is not the uniform distribution).
    These results appear in Sections \ref{sec:compressedVariants} and \ref{sec:harderProblem}.}
    \label{fig:results}
\end{figure}

\paragraph{Summary.}
We first provide an overview summary of our results.

First we show that the factored versions of $k$-OV, $k$-SUM and $k$-XOR are all $n^{k-o(1)}$-fine-grained hard under SETH. We also show that the factored version of zero-$3$-clique (\fzkc[3]) is $n^{3-o(1)}$-fine-grained hard based on any of the three core hypotheses of FGC (SETH, or the APSP or $3$-SUM hypothesis).  Additionally, we show that the counting versions of these factored problems are as hard in their natural uniform average-case as they are in the worst case. Moreover, we show that many natural problems, like counting regular expression matchings, reduce from our factored problems. This even implies fine-grained average-case hardness for these problems over some explicit distributions. 

Thus our factored problems do three things simultaneously:
\begin{itemize}
\item Instead of trying to use the uniform average-case of the core problems of FGC as central problems in a network of average-case reductions, we can use the factored versions of the core problems in FGC. For example, the counting variant of factored OV (\#\ckov[2]) is hard in its uniform average case from the worst-case OV hypothesis. 
 Generically, our factored problems serve as an alternative central problem for average-case hardness. To demonstrate this, in Section \ref{sec:harderProblem}, we give reductions from counting factored problems to four problems in graph algorithms and sequence alignment (including counting regular expression matchings).
 
  \item The factored versions of the core problems are sufficiently expressive that they are complete for the large class of factored problems. In particular, \ckov[], \ckxor[], and \cksum~are complete for the class of problems of the form \ckfunc[]~over all $f$, while \fzkc[]~is complete for the class of problems \cfkc[]~over all $f$. 
  Despite this expressiveness we are still able to reduce our factored problems to many natural problems. In section \ref{sec:harderProblem} we give fine-grained reductions from our factored problems to  $k$-LCS, Edit Distance and a labeled version of Max Flow.  
 
\item Abboud et al. \cite{matchingTriangles} gave two problems, Triangle Collection and Matching Triangles that are hard from all three core assumptions in FGC. They also showed that one can reduce  Triangle Collection \footnote{Actually a version of the problem that is still hard under all three assumptions.} to several natural problems in graph algorithms. Unfortunately, however, neither Triangle Collection, nor Matching Triangles are known to be hard on average.
One of our factored problems, \fzkc[3]~is also hard from all three core assumptions. Moreover, the counting version of \fzkc[3]~is additionally $n^{3-o(1)}$ hard in the average-case from all three core assumptions of FGC. Thus, problems that reduce from counting \fzkc[3]~get average-case hardness for some explicit average-case distribution. We give two examples of problems that reduce from counting \fzkc[3]~in Section \ref{sec:harderProblem}. Hence if you are interested in average-case hardness then counting \fzkc[3]~might be a better source for reductions than, say Matching Triangles or Triangle Collection.

\end{itemize}

\paragraph{Fine-grained hardness for factored problems.}
Here we show that our factored problems are fine-grained hard under standard FGC hypotheses.

We first show that a single call to a factored problem solves its non-factored counterpoint. 
\begin{theorem} \label{thm:unfactoredToFactored}
In $O(n)$ time, one can reduce an instance of size $n$ of $k$-OV, $k$-XOR, $k$-SUM and Z$k$C to a single call to an instance of size $\tO(n)$  of \ckov[], \ckxor[], \cksum[]~and \czkc, respectively.
\end{theorem}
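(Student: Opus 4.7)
My plan is to represent each vector, integer, or edge weight in the original instance as a factored vector whose $g$ sets store the $b$-bit chunks of that object, for parameters to be chosen. Since the hard instances of all four problems involve objects of $\polylog(n)$ bits, one can take for example $b = \Theta(\lg\lg n \cdot \lg\lg\lg n)$ and $g = \lceil d/b \rceil$ where $d$ is the bit-length of each object; this satisfies $b = o(\lg n)$ and $g = o(\lg n/\lg\lg n)$ simultaneously, and each factored vector has $\tO(1)$ size, so the whole reduction runs in $\tO(n)$ time and produces an instance whose natural size parameter (number of factored vectors, or number of vertices in the graph case) is still $\tO(n)$.

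For \kOV{} and \kXOR{} the reduction is immediate, because orthogonality and XOR decompose coordinate-wise. I define $\phi(v)[i]$ to be the singleton containing the $i$-th $b$-bit chunk of $v$, and let $f$ be the chunk-level version of the predicate (bitwise AND equal to $0$ for \ckov, bitwise XOR equal to $0$ for \ckxor). Then $\prod_i accept_f = 1$ on a $k$-tuple of factored vectors iff the corresponding $k$-tuple of original vectors is $k$-orthogonal (respectively, XORs to $0$), so the factored decision instance is a yes-instance iff the original is.

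For \kSUM{} and \ZKC{} the naive chunk-wise sum does not match the integer sum because of carries, so my plan is to use a Chinese-Remainder-Theorem (CRT) encoding. I choose primes $p_1,\ldots,p_g$ with $\lg p_i \le b - \lceil\lg k\rceil - 1$ and $\prod_i p_i$ exceeding the maximum possible absolute value of any sum (of the $k$ integers in a \kSUM{} witness, or of the $\binom{k}{2}$ edge weights of a $k$-clique in \ZKC). For each input object $x$, let $\phi(x)[i] = \{(x \bmod p_i) - m p_i : m = 0,1,\ldots,K-1\}$, where $K = k$ for \kSUM{} and $K = \binom{k}{2}$ for \ZKC. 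Each such offset fits into a signed $b$-bit integer since $(K-1) p_i < 2^{b-1}$. At each position $i$, some choice of offsets makes the sum of the $K$ chunks equal $0$ as signed integers iff the corresponding sum of $x$-values is $\equiv 0 \pmod{p_i}$; by CRT this holds for every $i$ iff that sum is exactly $0$. Hence the decision version of the factored instance has a yes-answer iff the original does, via a single call.

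The main obstacle is exactly this carry handling for \kSUM{} and \ZKC{}: one must make the factored encoding equivalent to the true integer-sum condition rather than the strictly stronger ``every chunk separately sums to zero'' condition, while keeping every set $\phi(x)[i]$ of constant size so that total size and running time stay $\tO(n)$. The CRT-plus-offset encoding handles both requirements, because each set grows by only the constant factor $K$.
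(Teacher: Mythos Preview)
Your treatment of $k$-OV and $k$-XOR is exactly the paper's: chunk each vector into $b$-bit pieces and make each $\phi(v)[i]$ a singleton, using that orthogonality and XOR decompose coordinatewise.

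For $k$-SUM and Z$k$C you take a genuinely different route. The paper first hashes the weights down to $O(k\lg n)$ bits, splits each weight into $g$ chunks of $b$ bits each, and then \emph{enumerates all $O(K^g)=n^{o(1)}$ carry patterns} (with $K=k$ for $k$-SUM and $K=\binom{k}{2}$ for Z$k$C), adjusting the chunks accordingly; each carry pattern yields a singleton-set factored instance. So the paper in fact produces $n^{o(1)}$ instances rather than a literal single call. Your CRT-with-offsets encoding sidesteps the carry bookkeeping entirely and gives an honest single instance, which actually matches the theorem's ``single call'' wording more cleanly. The trade-off is that the paper's per-instance reduction is a bijection on witnesses and hence also serves the counting versions (which the paper exploits downstream), whereas your offset sets over-count each original zero-sum by a factor that depends on the residues $\sum_j (x_j\bmod p_i)$, so your argument covers only the decision problems. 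Since the theorem as stated is about decision, this is not a gap.
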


The above theorem holds both in the decision and counting context. It gives fine-grained hardness for the factored variants of all our problems, under the hypothesis that the original variants are hard.  Note that  $k$-XOR, $k$-SUM have $\tO(n^{ \lceil k/2 \rceil})$ time algorithms. However, we have $n^{k-o(1)}$ conditional lower bounds for all of \ckov[], \ckxor[], \cksum[]~and \czkc. So, while we do get fine-grained hardness from the $k$-XOR and $k$-SUM hypotheses, this hardness is not tight. The hardness is tight from the $k$-OV and Z$k$C hypotheses however. 

Now we give fine-grained hardness for \czkc[3]~under all three core hypotheses from FGC.
\begin{theorem}
If \czkc[3]~(even for $b=o(\log n)$ and $g=o(\log(n)/\log\log(n))$) can be solved in $O(n^{3-\eps})$ time for some constant $\eps>0$, then SETH is false, and there exists a constant $\eps'>0$ such that $3$-SUM can be solved in $O(n^{2-\eps'})$ time and APSP can be solved in $O(n^{3-\eps'})$ time.
\label{thm:fzkcIsHardFromAll}
\end{theorem}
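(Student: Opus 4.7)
The plan is to chain Theorem~\ref{thm:unfactoredToFactored} with the well-known worst-case hardness of the unfactored zero-$3$-clique problem Z$3$C under all three core hypotheses. It is standard \cite{vw10j,vw09j} that on $n$-node graphs with $O(\log n)$-bit integer edge weights, Z$3$C requires $n^{3-o(1)}$ time under each of SETH (via $3$-OV), the $3$SUM hypothesis, and the APSP hypothesis. So it suffices to show that a truly subcubic algorithm for \czkc[3] yields a truly subcubic algorithm for Z$3$C.

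This is immediate from Theorem~\ref{thm:unfactoredToFactored} specialized to $k=3$: an arbitrary Z$3$C instance on $n$ nodes reduces in $O(n)$ time to a single call to \czkc[3] on an instance of size $N=\tO(n)$, whose parameters fall in the regime $b=o(\log n)$ and $g=o(\log n/\log\log n)$ assumed in the hypothesis. Suppose for contradiction that \czkc[3] admits an $O(N^{3-\eps})$ algorithm for some constant $\eps>0$. Applied to the reduced instance, this solves Z$3$C in time $\tO(n^{3-\eps})$, which is $O(n^{3-\eps/2})$ for all sufficiently large $n$. Feeding this into the three lower bounds for Z$3$C in turn yields a subexponential $k$-SAT algorithm (refuting SETH), a truly subquadratic $3$SUM algorithm, and a truly subcubic APSP algorithm, as claimed.

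All of the actual work is done inside Theorem~\ref{thm:unfactoredToFactored}, and I expect that to be the only nontrivial obstacle. The difficulty there is that a zero-sum of $O(\log n)$-bit integers does not decompose cleanly into independent zero-sums over $b$-bit chunks because of carries between chunks, whereas the factored count multiplies across chunks and so treats them independently. The natural fixes are to pad each $b$-bit chunk with $O(1)$ slack bits so that the sum of three chunks never overflows a single chunk, and to encode the possible carry states into the factored sets $v[i]$, while keeping $b=\Theta(\log\log n)$ and $g=\Theta(\log n/\log\log n)$ so as to remain in the required parameter regime and only blow up the instance size by a polylogarithmic factor absorbed in the $\tO(\cdot)$. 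Once Theorem~\ref{thm:unfactoredToFactored} is established, the present theorem follows immediately from the chain above.
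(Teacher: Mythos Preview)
Your argument for the $3$-SUM and APSP parts is correct and matches the paper exactly: apply Theorem~\ref{thm:unfactoredToFactored} to reduce unfactored Z$3$C to \czkc[3], then invoke the known $n^{3-o(1)}$ hardness of Z$3$C under the $3$-SUM and APSP hypotheses \cite{vw10j,vw09j}.

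The gap is in the SETH part. You assert as ``standard'' that Z$3$C requires $n^{3-o(1)}$ time under SETH via $3$-OV, but this is \emph{not} known. The paper's own introduction is careful to say that Zero Triangle is hard under the $3$-SUM and APSP hypotheses only, while Matching Triangles and Triangle Collection are the problems known to be hard under all three. There is no known reduction from $3$-OV (or SETH) to zero-weight triangle, so your chain breaks there. The paper therefore takes a different route for SETH and does not pass through unfactored Z$3$C at all: it shows (Theorem~\ref{thm:cztIsAUnit}, via Lemma~\ref{lem:czt>cksum} and Theorem~\ref{thm:compressedKVarEquiv}) that a subcubic algorithm for \czkc[3] yields a subcubic algorithm for \ckov[3], and then Theorem~\ref{thm:unfactoredToFactored} together with the standard SETH $\Rightarrow$ $k$-OV implication refutes SETH. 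The crucial point is that the factored problems \ckov[3], \cksum[3], \ckxor[3] are all equivalent---unlike their unfactored counterparts---and \czkc[3] sits above all of them. This equivalence among factored problems is precisely the extra expressiveness the factored framework buys, and it is what lets the SETH implication go through; your proposed chain does not exploit it.
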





\paragraph{Worst-case to average-case reductions for factored problems.}
We show that our factored problems admit fine-grained worst-case to average-case reductions. Our first theorem about this is a worst-case to average-case fine-grained reduction for the counting version of \ckfunc~for a natural distribution (defined in Definition \ref{def:muavgCasefactored}). The proof appears in Section \ref{sec:compressedVariants}.

\begin{theorem}
\label{thm:worstToAvgckf}
Let $\mu$ be a constant such that $0 < \mu <1$. 
Suppose that average-case \#\ckfunc[]$^{\mu}$~(see definition \ref{def:muavgCasefactored}, this is an iid distribution which has ones with probability $\mu$) can be solved in time $T(n)$ with probability at least $1-1/(lg(n)^{kg}\lg\lg(n)^{kg})$. Then worst-case \#\ckfunc[]~can be solved in time $\tO(T(n))$ \footnote{Note that given that $g= o(\lg(n)/\lg\lg(n)) $ then a probability of $1-1/n^{\epsilon}$ will be high enough for any $\epsilon>0$.}.

When $\mu =1/2$ average-case \#\ckfunc[]$^{\mu}$ is average-case \#\ckfunc[]. 
\end{theorem}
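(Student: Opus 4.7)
The plan is to express the count $\#$\ckfunc[]~as an explicit low-degree polynomial in the Bernoulli indicator variables of the input, and then invoke the paper's generic ``good low-degree polynomial'' framework for worst-case to average-case reductions. For each $j\in[k]$, $v\in S_j$, $i\in[g]$, and $u\in\{0,1\}^b$, let $x^{(j)}_{v,i,u}\in\{0,1\}$ be the indicator of $u\in v[i]$; under the distribution \ckfunc[]$^{\mu}$ these $N := nkg2^{b}$ variables are mutually independent Bernoulli$(\mu)$. A direct expansion of $\mathrm{accept}_f$ shows that the count equals
\[
\sum_{v_1\in S_1,\ldots,v_k\in S_k}\prod_{i=1}^{g}\sum_{u_1,\ldots,u_k\in\{0,1\}^b} f(u_1,\ldots,u_k)\prod_{j=1}^{k} x^{(j)}_{v_j,i,u_j},
\]
which is a multilinear polynomial $P$ of total degree exactly $kg$ in the $x$-variables, with nonnegative integer coefficients depending only on $f$. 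Since $kg=o(\lg n)$ and each term is a product of $kg$ zero-one variables, the value $P(X)$ fits in $\tO(1)$ bits, and all arithmetic can be performed inside a prime field $\mathbb{F}_p$ of size $\poly(n)$.

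Given this representation, I would apply the framework's random self-reduction: from a worst-case instance $X\in\{0,1\}^{N}$, sample enough auxiliary randomness to construct $kg+1$ correlated queries $Z^{(t_1)},\ldots,Z^{(t_{kg+1})}$ whose coordinates are marginally Bernoulli$(\mu)$ and such that $P(Z^{(t)})$, viewed as a function of $t$, is a univariate polynomial of degree $kg$ with $P(Z^{(0)})=P(X)$. Next, the assumed average-case algorithm is called on each $Z^{(t_r)}$; by hypothesis it errs with probability at most $1/(\lg(n)^{kg}\lg\lg(n)^{kg})$ per call, so a union bound over the $kg+1=\tO(1)$ queries keeps the total failure probability vanishing. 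If stronger guarantees are needed, $O(\lg n)$-fold repetition with majority voting per query (valid because the correct value at each $t_r$ is uniquely determined) drives the overall error below any $1/\poly(n)$. Finally, Lagrange interpolation through the $kg+1$ recovered evaluations returns $P(X)$; the total running time is $(kg+1)\cdot T(n)+\poly(kg,\lg n)=\tO(T(n))$, using $g=o(\lg n/\lg\lg n)$ and $k=O(1)$.

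The main subtlety is verifying that $P$ qualifies as a \emph{good low-degree polynomial} in the sense of the paper's framework: concretely, exhibiting the coordinate-wise gadget that (i) produces each $Z^{(t_r)}$ with the exact \ckfunc[]$^{\mu}$ marginal distribution and (ii) keeps $t\mapsto P(Z^{(t)})$ of degree at most $kg$ in $t$. Once the inputs are identified as independent Bernoulli$(\mu)$ (immediate from the definition of the distribution) and the degree bound $kg$ is established (immediate from the expansion above), this instantiation is supplied directly by the generic framework, and no problem-specific algorithmic idea beyond careful bookkeeping is required. The only place the exact quantitative hypothesis $1-1/(\lg(n)^{kg}\lg\lg(n)^{kg})$ is used is in the union bound over the $kg+1$ queries, which is exactly where the threshold is calibrated.
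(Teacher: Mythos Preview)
Your approach is essentially identical to the paper's: write down the same polynomial (the paper's $f_{ckfunc}$), verify it is a good low-degree polynomial of degree $kg$, and invoke the framework (Theorem~\ref{thm:framework}). One condition you should state explicitly is \emph{strong $d$-partiteness}: the $kg$ partitions are $S_{j,i} := \{x^{(j)}_{v,i,u} : v\in V_j,\ u\in\{0,1\}^b\}$ for $j\in[k]$, $i\in[g]$, and each monomial in your expansion uses exactly one variable from each $S_{j,i}$. This is part of Definition~\ref{def:goodPoly} and is what makes the bit-expansion step below work.

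That said, your description of the framework's internals, and hence your account of the probability threshold, is off. The random self-reduction you sketch (evaluate at $kg+1$ points on a random line through $X$) yields points uniform over a finite field, not Bernoulli$(\mu)$ over $\{0,1\}$. The paper's framework (Lemma~\ref{lem:UnifPtoACf}) instead (i) reduces via CRT to primes $p_i=\Theta(\lg n)$, (ii) runs the line self-reduction over each $\mathbb{F}_{p_i}$, and (iii) re-samples each field coordinate as a $t=\Theta(\lg n\,\lg\lg n)$-bit string with near-Bernoulli$(\mu)$ bits, after which strong $kg$-partiteness lets one field evaluation be computed from $t^{kg}$ binary evaluations. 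So the number of oracle calls is on the order of $(\lg n\,\lg\lg n)^{kg}$, not $kg+1$; this is precisely why the required success probability is $1-1/\omega\big((\lg n)^{kg}(\lg\lg n)^{kg}\big)$, whereas a union bound over $kg+1$ queries would only need the far weaker $1-O(1/\lg n)$. Since you ultimately invoke the framework as a black box, the proof still goes through, but your final sentence mis-locates where the threshold is calibrated.
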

 
 Thus, if we have worst-case fine-grained hardness for \#\ckfunc~for some $f$, then we get average-case hardness for the same problem over a natural distribution. In particular, in the corollary below we obtain average-case hardness for \#\ckov[], \#\cksum[], \#\ckxor[], based on the standard FGC hardness of \kov, \ksum, \kxor~(as implied by Theorem~\ref{thm:unfactoredToFactored}).
 
\begin{corollary}
\label{cor:worstToAvg}
If average-case \#\ckov[]~can be solved in time $T(n)$ with probability $1-1/(lg(n)^{gk}\lg\lg(n)^{gk})$  then worst-case \#\ckov[]~ can be solved in time $\tO(T(n))$ \footnotemark[\value{footnote}].

If average-case \#\cksum[] can be solved in time $T(n)$ with probability $1-1/(lg(n)^{gk}\lg\lg(n)^{gk})$  then worst-case \#\cksum[]~can be solved in time $\tO(T(n))$ \footnotemark[\value{footnote}].

If average-case \#\ckxor[]~can be solved in time $T(n)$ with probability $1-1/(lg(n)^{gk}\lg\lg(n)^{gk})$  then worst-case \#\ckxor[]~can be solved in time $\tO(T(n))$ \footnotemark[\value{footnote}].

\end{corollary}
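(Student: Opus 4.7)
The plan is to recognize each of \#\ckov[], \#\cksum[], \#\ckxor[] as a specific instance of \#\ckfunc[], and then invoke Theorem~\ref{thm:worstToAvgckf} three times with the appropriate choice of the Boolean predicate $f$. Concretely, for \#\ckov[] take $f(x^{(1)},\ldots,x^{(k)})=1$ iff $\bigwedge_{j=1}^{b}\bigl(\bigwedge_{i=1}^{k} x^{(i)}_{j}=0\bigr)$ (the $k$-wise orthogonality predicate on $b$-bit chunks); for \#\ckxor[] take $f(x^{(1)},\ldots,x^{(k)})=1$ iff $x^{(1)}\oplus\cdots\oplus x^{(k)}=0^{b}$; and for \#\cksum[] take $f(x^{(1)},\ldots,x^{(k)})=1$ iff $\sum_{i=1}^{k} x^{(i)}=0$ when each $x^{(i)}$ is interpreted as an integer modulo some appropriate large value built into $f$. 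In each case the count produced by \#\ckfunc[] on the given factored inputs is by definition the count that \#\ckov[], \#\ckxor[], or \#\cksum[] is asking for, so no overhead is introduced by the translation.

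Next I would verify that the ``average-case'' distribution associated with each of these three factored problems is precisely the $\mu=1/2$ specialization of the distribution in Definition~\ref{def:muavgCasefactored}. Since the natural uniform distribution on a $(g,b)$-factored vector independently includes each of the $2^b$ possible $b$-bit strings in each set $v[i]$ with probability $1/2$, it coincides with the iid-with-parameter-$\mu=1/2$ distribution used in Theorem~\ref{thm:worstToAvgckf}. The final sentence of that theorem, ``When $\mu=1/2$, average-case \#\ckfunc[]$^{\mu}$ is average-case \#\ckfunc[],'' is exactly what is needed to line up the two notions.

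Given the setup, each bullet in the corollary is now an immediate application of Theorem~\ref{thm:worstToAvgckf} with $\mu=1/2$ and with $f$ chosen as the orthogonality, XOR-to-zero, or SUM-to-zero predicate, respectively. The required success probability $1-1/(\lg(n)^{gk}\lg\lg(n)^{gk})$ matches the hypothesis of Theorem~\ref{thm:worstToAvgckf}, and the conclusion $\tO(T(n))$ for the worst-case runtime is inherited verbatim.

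I do not anticipate a substantive obstacle: the only thing to be careful about is that $f$ really is a Boolean function of the right arity, which is straightforward for OV and XOR, and for SUM requires noting that the sum-to-zero check on $b$-bit chunks is a well-defined Boolean predicate on $k\cdot b$ input bits (with whatever modulus the problem definition uses). Once that is confirmed, the corollary is a clean instantiation of the main theorem.
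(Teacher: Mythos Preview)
Your approach is correct and matches the paper's: the corollary is stated there as an immediate consequence of Theorem~\ref{thm:worstToAvgckf}, and you have simply spelled out the instantiation of $f$ in each case. One small slip: your orthogonality predicate as written, $\bigwedge_{j}\bigl(\bigwedge_{i} x^{(i)}_{j}=0\bigr)$, forces every coordinate of every vector to be zero; the correct $k$-OV predicate is $\bigwedge_{j}\bigl(\bigvee_{i} x^{(i)}_{j}=0\bigr)$, but this does not affect the argument.
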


Similarly, we obtain fine-grained average-case hardness for \#\cfkc, based on the fine-grained worst-case hardness of \#\cfkc. 

\begin{theorem}
\label{thm:worstToAvgclique}
Let $\mu$ be a constant and $0<\mu<1$.
If average-case \#\cfkc[]$^{\mu}$~(see Definition \ref{def:muavgCasefactored}, this is an iid distribution which has ones with probability $\mu$) can be solved in time $T(n)$ with probability $1-1/(lg(n)^{k^2g}\lg\lg(n)^{k^2g})$  then worst-case \#\cfkc[]~can be solved in time $\tO(T(n))$ \footnotemark[\value{footnote}].

When $\mu=1/2$ average-case \#\cfkc[]$^{\mu}$ is average-case \#\cfkc[].
\end{theorem}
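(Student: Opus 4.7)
The plan is to adapt the argument for Theorem~\ref{thm:worstToAvgckf} by replacing the $k$-tuple structure with the $k$-clique structure. First I would arithmetize \#\cfkc[] as a multilinear polynomial in Boolean variables $z_{e,i,x}$, where $z_{e,i,x}$ indicates $x \in v_e[i]$ for each edge $e$, coordinate $i \in [g]$, and $b$-bit string $x \in \{0,1\}^b$. For a fixed $k$-clique with edges $e_1, \ldots, e_{\binom{k}{2}}$, each factor $accept_f(v_{e_1}, \ldots, v_{e_{\binom{k}{2}}}, i)$ expands as $\sum_{x_1, \ldots, x_{\binom{k}{2}}} f(x_1, \ldots, x_{\binom{k}{2}}) \prod_{j} z_{e_j, i, x_j}$, a polynomial of degree $\binom{k}{2}$ in the $z$-variables. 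Taking the product over $i \in [g]$ and summing over all $k$-cliques in the host graph yields a polynomial $P$ of total degree $\binom{k}{2} g = \Theta(k^2 g)$ in the $z$-variables, which matches the exponent $k^2 g$ appearing in the theorem's tolerance on the oracle's failure probability.

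Next I would invoke the good-low-degree-polynomial (GLDP) worst-to-average-case framework formalized earlier in the paper. Under the $\mu$-distribution each $z_{e,i,x}$ is independent Bernoulli$(\mu)$, which is exactly what the framework is designed to handle. Given a worst-case instance $z^*$, the framework produces a random polynomial curve in the input space that passes through $z^*$ at one designated parameter value and whose evaluations at $O(\deg P)$ other parameter values are marginally distributed as independent samples from the $\mu$-distribution. Calling the average-case oracle at these points and then interpolating the resulting univariate polynomial in the curve parameter recovers $P(z^*)$ exactly.

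The number of oracle calls is $O(k^2 g)$ and each individual call fails with probability at most $1/(\lg(n)^{k^2 g}\lg\lg(n)^{k^2 g})$, so a union bound gives a total failure probability of $o(1)$; interpolation is negligible and the overall running time is $\tO(T(n))$. The main obstacle, and the step where the argument genuinely differs from the \ckfunc[] case rather than being a one-line corollary of Theorem~\ref{thm:worstToAvgckf}, is verifying that the clique polynomial satisfies the GLDP hypothesis: one must check that each variable $z_{e,i,x}$ enters $P$ with a clean enough monomial structure for the framework's curve construction to produce genuine marginal-$\mu$ evaluations. Because each variable is attached to a single edge and a single coordinate, and because the graph structure only selects which clique monomials appear rather than forcing unwanted variable sharing, the per-edge factorization that powered the \ckfunc[] analysis carries over, with the $k$ factors in the per-tuple count replaced by $\binom{k}{2}$ factors in the per-clique count, yielding the stated degree $k^2 g$ and completing the reduction.
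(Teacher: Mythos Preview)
Your proposal is correct and follows essentially the same approach as the paper: define the arithmetization $f_{ffkc}$ in the variables $x_{e[i]}(\vec{s})$ (your $z_{e,i,x}$), verify it is a \gPol{\#\cfkc[]} by checking correctness on Boolean inputs, degree $\binom{k}{2}g = o(\lg n/\lg\lg n)$, and the strongly $d$-partite monomial structure (your ``clean enough monomial structure''), and then invoke the framework Theorem~\ref{thm:framework}. The paper packages the polynomial check as Lemma~\ref{lem:funcIsGoodPolclique} and then applies Theorem~\ref{thm:framework} as a black box, exactly as you outline.
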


By Theorem \ref{thm:worstToAvgclique}, we have the following result for \#\czkc[]~in particular.
 
\begin{corollary}
\label{cor:worstToAvgClique}
If average-case \#\czkc[]~can be solved in time $T(n)$ with probability $1-1/(lg(n)^{k^2g}\lg\lg(n)^{k^2g})$ then worst-case \#\czkc[]~can be solved in time $\tO(T(n))$ \footnotemark[\value{footnote}].
\end{corollary}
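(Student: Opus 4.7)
The plan is to derive this corollary as a direct specialization of Theorem~\ref{thm:worstToAvgclique}. First I would make explicit the choice of Boolean function $f$: namely, the function on $\binom{k}{2}$-tuples of $b$-bit strings that interprets each string as a $b$-bit integer and outputs $1$ iff their sum is $0$. With this $f$, the problem \cfkc[] over $k$-cliques labeled by factored vectors is, by the definitions laid out in Section~\ref{sec:Prelims}, precisely the problem \czkc[]; and the same identification transports counting, worst-case, and average-case variants.

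Next I would fix $\mu = 1/2$ in the theorem. The closing remark of Theorem~\ref{thm:worstToAvgclique} tells us that average-case \#\cfkc[]$^{1/2}$ coincides with (the unadorned) average-case \#\cfkc[]; this is the expected statement, since the $\mu$-biased iid distribution on the $gb2^b$ bits labeling each factored edge degenerates to the uniform distribution exactly when $\mu = 1/2$. Combining this with the $f$-specialization from the first step, average-case \#\cfkc[]$^{1/2}$ for our $f$ is exactly average-case \#\czkc[], and analogously for the worst-case side.

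With the instantiation in hand, the proof is a single invocation of Theorem~\ref{thm:worstToAvgclique}: the success-probability threshold $1 - 1/(\lg(n)^{k^2 g}\lg\lg(n)^{k^2 g})$ in the corollary is literally the same quantity as in the theorem, and the parameter regime $b = o(\lg n)$, $g = o(\lg n / \lg\lg n)$ is inherited unchanged from the ambient definition of factored problems. Hence the concluding run-time bound $\tO(T(n))$ transfers directly.

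The only place I anticipate having to be careful is checking that nothing inside the machinery behind Theorem~\ref{thm:worstToAvgclique} implicitly restricts $f$ (for instance by demanding that $f$ admit a \goodPoly\ of some particular degree beyond what the theorem statement asserts). Since the zero-sum indicator is an arbitrary Boolean function on $\binom{k}{2}b$ bits and Theorem~\ref{thm:worstToAvgclique} is stated uniformly over all such $f$, no such side condition should arise, and the corollary reduces to bookkeeping.
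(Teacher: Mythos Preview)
Your proposal is correct and matches the paper's approach exactly: the paper derives the corollary directly from Theorem~\ref{thm:worstToAvgclique} by specializing $f$ to the zero-sum function and (implicitly) taking $\mu = 1/2$. Your write-up is in fact more explicit than the paper's one-line ``By Theorem~\ref{thm:worstToAvgclique}'' justification.
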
 

Thus in particular we obtain fine-grained average-case hardness for counting factored zero-$3$-cliques, based on the hardness of zero-$3$-clique, and thus based on the APSP and $3$-SUM hypotheses.

\paragraph{Completeness for \ckov[],\cksum[], \ckxor[]~and \czkc[].}

Let $k\geq 2$ be a fixed integer. Consider the class of problems \ckfunc[]~defined over all boolean functions $f$ on $kb$-length inputs. Our first sequence of results show that \ckov[],\cksum[]~and  \ckxor[]~are complete for the class, so that a $T(n)$ time algorithm for any of these problems would imply an $\tilde{O}(T(n))$ time algorithm for \ckfunc[]~for any $f$.

To prove this, we first show that \ckxor[]~is complete for the class:
\begin{theorem}
\label{thm:ckxorIscomplete}
If we can solve \#\ckxor[]~with $g$ sets of $k^3b$ length vectors in time $T(n)$ then, for any $f$, we can solve a \#\ckfunc[]~instance with $g$ sets of $b$ length vectors in time $T(n)+\tilde{O}(n)$ time. 
\end{theorem}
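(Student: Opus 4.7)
The plan is to reduce \#\ckfunc[] to \#\ckxor[] by turning each raw $b$-bit entry of each input factored vector into a bounded collection of longer entries, arranged so that a $k$-tuple of new entries XORs to zero iff the underlying $k$-tuple of original entries is accepted by $f$. I would first precompute once per $f$ the set $A_f=\{\ell=(y_1,\ldots,y_k)\in(\{0,1\}^b)^k: f(\ell)=1\}$; note $|A_f|\le 2^{kb}=n^{o(1)}$ since $b=o(\log n)$.

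Build a new factored instance with the same $k$ and $g$ and new entry length $b'=(k-1)kb\le k^3 b$, viewed as $k-1$ slots of $kb$ bits each. For each $v_j\in S_j$, each coordinate $i\in[g]$, each $x\in v_j[i]$, and each $\ell\in A_f$ with $y_j=x$, emit a new vector $w_j(\ell)\in\{0,1\}^{b'}$ whose contents are: a copy of $\ell$ in slots $j-1$ and $j$ (restricted to those indices that lie in $\{1,\ldots,k-1\}$) and zero elsewhere. Thus $w_1$ uses only slot $1$, $w_k$ uses only slot $k-1$, and $w_j$ for $2\le j\le k-1$ uses slots $j-1$ and $j$. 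In slot $s$, the XOR contribution of a $k$-tuple of new vectors (one per position) is exactly $\ell^{(s)}\oplus\ell^{(s+1)}$, where $\ell^{(j)}$ is the tuple chosen at position $j$; the full XOR vanishes iff $\ell^{(1)}=\cdots=\ell^{(k)}$, i.e., iff there is a common $\ell^\star\in A_f$ selected consistently by all $k$ positions. Since position $j$ only emits vectors indexed by $\ell$ with $y_j\in v_j[i]$, such an $\ell^\star$ exists iff $(\ell^\star_1,\ldots,\ell^\star_k)\in v_1[i]\times\cdots\times v_k[i]$; summing the indicator over $A_f$ recovers $\mathrm{accept}_f(v_1,\ldots,v_k;i)$, and the coordinate-wise product structure of both problems then equates the two global counts.

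For the runtime, the transformation inspects $O(gn)$ entries and emits at most $|A_f|=n^{o(1)}$ new vectors per entry, each of length $\tilde O(kb)$, so the total preprocessing cost is $\tilde O(n)$ once $g=o(\log n/\log\log n)$ is absorbed. One oracle call to \#\ckxor[] on the new instance then returns the desired \#\ckfunc[] count, meeting the $T(n)+\tilde O(n)$ bound.

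The delicate step, which I expect to be the main source of off-by-one errors, is the choice of XOR encoding. A naive scheme that places $\ell$ into a single shared slot would make the $k$-fold XOR vanish whenever $\ell^{(1)}\oplus\cdots\oplus\ell^{(k)}=0$, which admits many non-diagonal solutions and hence over-counts accepted tuples. The adjacent-slot scheme above collapses the zero-XOR condition to the chained pairwise equalities $\ell^{(s)}=\ell^{(s+1)}$ for all $s\in[k-1]$, which force the full equality $\ell^{(1)}=\cdots=\ell^{(k)}$, while keeping the new vector length at $(k-1)kb$, comfortably within the $k^3 b$ budget.
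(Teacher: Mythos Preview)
Your proposal is correct and follows the same high-level strategy as the paper: for each original entry $x$ in position $j$, emit one new vector for every accepting tuple $\ell\in A_f$ that has $x$ in coordinate $j$, and arrange the new vectors so that a $k$-fold XOR vanishes iff all $k$ positions chose the \emph{same} tuple $\ell^\star$. The injectivity of $\ell\mapsto w_j(\ell)$ avoids any multiset issues, and your count verification is exactly right.

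The only difference from the paper is the consistency encoding. The paper uses an all-pairs scheme: $k^3$ blocks of $b$ bits indexed by $(x,y,z)\in[k]^3$, where position $i$ writes its guess for $w_z$ into every block with $x=i$ or $y=i$; block $(x,y,z)$ then XORs to zero iff positions $x$ and $y$ agree on coordinate $z$. You instead use a chain scheme: $k-1$ blocks of $kb$ bits, with position $j$ writing its full tuple $\ell$ into blocks $j-1$ and $j$, so block $s$ enforces $\ell^{(s)}=\ell^{(s+1)}$. Your encoding is strictly shorter ($(k-1)kb$ versus $k^3 b$) and arguably cleaner; the paper's encoding is a bit more ``robust'' in that every pair is checked directly, but this redundancy is not needed for the proof. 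To match the theorem's hypothesis literally you should pad your vectors with zeros up to length $k^3 b$, which of course preserves the XOR count.
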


We then show that \ckov[], \cksum[]~and \ckxor[]~are equivalent.
\begin{theorem}
\label{thm:compressedKVarEquiv}
If any of \#\ckov[], \#\cksum[], or \#\ckxor[]~can be solved in $T(n)$ time then all of \#\ckov, \#\cksum[], and \#\ckxor[]~can be solved in $\tO(T(n))$ time.
\end{theorem}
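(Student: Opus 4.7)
By Theorem~\ref{thm:ckxorIscomplete}, $\#\ckxor[]$ is complete for the class $\{\#\ckfunc[] : f \text{ Boolean}\}$. Since both the inner-product-zero function (defining \ckov[]) and the integer-sum-zero function (defining \cksum[]) are specific Boolean functions on $kb$-bit inputs, we immediately obtain reductions $\#\ckov[] \to \#\ckxor[]$ and $\#\cksum[] \to \#\ckxor[]$ in time $T(n)+\tilde{O}(n)$.  To close the equivalence I need the opposite-direction reductions $\#\ckxor[] \to \#\ckov[]$ and $\#\ckxor[] \to \#\cksum[]$, each blowing up $b$ by only a constant factor so that $b'=o(\log n)$ is preserved.

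For $\#\ckxor[] \to \#\ckov[]$, the plan is a direct pattern-matching encoding.  For each bit position $j\in[b]$ and each odd-parity pattern $\alpha\in\{0,1\}^k$ (there are $2^{k-1}$ such), introduce a new OV coordinate $(j,\alpha)$, and for the $m$-th input vector define the image $\phi_m(x)$ to have $(j,\alpha)$-bit equal to $1$ iff $x_j = \alpha_m$.  The OV coordinate $(j,\alpha)$ is then all-ones on a $k$-tuple exactly when the actual bit pattern at position $j$ equals $\alpha$ (an odd-parity pattern), i.e., when the $j$-th XOR bit is $1$.  Hence ``no new OV coordinate is all-ones'' is equivalent to ``XOR equals zero'' at each of the $g$ factored positions; applying the map element-wise inside every factored vector's set produces a $\#\ckov[]$ instance whose count equals the original $\#\ckxor[]$ count, with $b' = b\cdot 2^{k-1} = O(b)$.

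For $\#\ckxor[] \to \#\cksum[]$, I mimic the same pattern-matching skeleton but replace OV coordinates with integer digit-weights: for each $(j,\alpha)$ with $\alpha$ odd-parity, assign a distinct weight $w_{j,\alpha} = (k+1)^{\mathrm{idx}(j,\alpha)}$ in a separated base-$(k+1)$ digit slot, and define $\phi_m(x) = \sum_{(j,\alpha)} w_{j,\alpha}[x_j=\alpha_m]$.  Because each match-count $c_{j,\alpha}\in\{0,\ldots,k\}$ and the weights occupy non-overlapping digits, the tuple's integer sum $\sum_m \phi_m(x_m)$ uniquely encodes the full profile of per-bit match-counts.  The reduction then expresses the $\#\ckxor[]$ count as a collection of shifted $\#\cksum[]$ calls, one for each valid ``good'' profile (i.e., an assignment of an even-parity actual-pattern to each bit at each factored position); each shifted call is obtained by subtracting the target encoding of that profile from one of the $k$ input sets.

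The main obstacle is bounding the number of such $\#\cksum[]$ calls within $\tilde{O}(1)$: naively there are $(2^{k-1})^{bg}$ profiles, which is super-polynomial when $bg = \omega(\log\log n)$.  To control this I would exploit the factored product structure, reducing each factored position independently so that the per-position enumeration of $2^{(k-1)b}=n^{o(1)}$ profiles is absorbed into the reduction's preprocessing rather than multiplied across the $g$ positions; equivalently, a Fourier-inversion argument over $\mathbb{F}_2^b$ can express the per-position parity constraint as a single polylogarithmic-depth evaluation and collapse the per-profile enumeration.  Verifying that one of these strategies keeps the total overhead within $\tilde{O}(T(n))$ while respecting $b'=o(\log n)$ is the technical crux of the proof.
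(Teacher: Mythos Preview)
Your use of Theorem~\ref{thm:ckxorIscomplete} for the two easy directions is correct and matches the paper.  Your reduction $\#\ckxor[]\to\#\ckov[]$ is also correct and in fact \emph{cleaner} than the paper's: the paper (Lemma~\ref{lem:ckov>cksum}) follows the ``guess the whole $k$-tuple and check equality via orthogonality'' template of Theorem~\ref{thm:ckxorIscomplete}, blowing each $b$-bit element up into $2^{b(k-1)}$ vectors of length $2k^3b$, whereas your odd-parity-pattern encoding is a single injective map $\phi_m$ per list with $b'=b\cdot 2^{k-1}$.  Both are valid; yours is more economical.

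The genuine gap is in $\#\ckxor[]\to\#\cksum[]$.  Your encoding via base-$(k{+}1)$ digits does uniquely record the match-count profile, but you then need to sum over all ``good'' profiles.  Your first suggestion---enumerate per factored position and absorb into preprocessing---would mean replacing $v_k'[i]$ by $\{\phi_k(x)-t(\pi):x\in v_k[i],\ \pi\text{ good}\}$, but this map is not obviously injective, and since factored-vector entries are \emph{sets} (not multisets), collisions would corrupt the count.  Your second suggestion, Fourier inversion over $\mathbb{F}_2^b$, would express the per-position XOR-zero indicator as a $\pm$-signed sum of $2^b$ characters; but each character evaluation is a product over bit positions, not an integer-sum-zero predicate, so it does not obviously translate into $\#\cksum[]$ calls either.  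You yourself flag this as ``the technical crux,'' and as stated the argument is incomplete.

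The paper's Lemma~\ref{lem:cksum>ckxor} sidesteps the whole profile-enumeration issue with a different, local idea.  At each bit position $j$ it pads to $\lceil\lg k\rceil{+}1$ bits so the $k$ contributions add without carry; lists $1,\dots,k{-}1$ contribute their raw bit, while list $k$ contributes (within its factored set) \emph{every} even number in $[0,k{-}1]$ if its bit is $0$ and every odd number if its bit is $1$.  Since $b_1+\dots+b_{k-1}\in[0,k{-}1]$, there is exactly one choice in the $k$th set that makes the position sum to zero iff the parity of $b_1+\dots+b_k$ is even.  This is a single-call reduction with $b'=(\lceil\lg k\rceil{+}1)b=O(b)$; the only blow-up is that each element of the $k$th set spawns at most $\lceil k/2\rceil^b=n^{o(1)}$ elements, which is absorbed per factored position and never multiplied across the $g$ positions.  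This is the missing ingredient in your argument.
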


The above two theorems imply the final completeness theorem:
\begin{theorem}
\label{thm:allCompressedAreComplete}
If any of \#\ckov[], \#\cksum[], or \#\ckxor[]~can be solved in $T(n)$ time then \#\ckfunc[]~can be solved in $\tO(T(n))$ time.
\end{theorem}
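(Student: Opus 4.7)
The plan is to obtain Theorem 4.10 by directly composing Theorem 4.8 (completeness of \#\ckxor[]~for the \#\ckfunc[]~class) with Theorem 4.9 (mutual equivalence of \#\ckov[], \#\cksum[], and \#\ckxor[]). Suppose we are given a $T(n)$-time algorithm $\mathcal{A}$ for any one of \#\ckov[], \#\cksum[], or \#\ckxor[]. First I would invoke Theorem 4.9 to promote $\mathcal{A}$ to a $\tO(T(n))$-time algorithm $\mathcal{A}'$ specifically for \#\ckxor[]. This is a black-box appeal; the $\tO$ merely absorbs polylogarithmic factors, which is acceptable because the factored setting already restricts $g = o(\log n/\log\log n)$ and $b = o(\log n)$.

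Next, given an arbitrary boolean function $f$ on $kb$-length inputs and a \#\ckfunc[]~instance of size $n$ with $g$ sets of $b$-length factored vectors, I would apply Theorem 4.8 to reduce it, with $\tilde O(n)$ additive overhead, to a single \#\ckxor[]~instance on the same $n$ and $g$ but with per-block vectors of dimension $k^3 b$ instead of $b$. Since $k$ is constant and the input satisfies $b = o(\log n)$, we still have $k^3 b = o(\log n)$, so the reduced instance lies in the admissible parameter regime for factored problems and is thus a legal input to $\mathcal{A}'$. Running $\mathcal{A}'$ on it and returning the output solves the original \#\ckfunc[]~instance in total time $\tilde O(n) + \tO(T(n)) = \tO(T(n))$, assuming the standard $T(n) = \Omega(n)$ lower bound that holds for any problem whose input has size at least $n$.

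There is no serious obstacle here: Theorem 4.10 is essentially a corollary of the two preceding theorems. The only bookkeeping step worth being careful about is checking that the $k^3$ blowup in the per-block dimension from Theorem 4.8 does not push the intermediate \#\ckxor[]~instance outside the parameter range used in the hypotheses of Theorem 4.9 and of the assumed algorithm $\mathcal{A}$; as noted, this is immediate from $k = O(1)$ and $b = o(\log n)$. If desired, one could phrase the argument uniformly by first reducing \#\ckfunc[]~to \#\ckxor[]~via Theorem 4.8 and then translating \#\ckxor[]~to whichever of \#\ckov[], \#\cksum[]~the assumed algorithm solves via the appropriate direction of Theorem 4.9, but the composition-order chosen above minimizes the number of parameter changes to track.
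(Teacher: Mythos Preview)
Your proposal is correct and matches the paper's approach exactly: the paper's proof is the one-line ``Use Theorem \ref{thm:ckxorIscomplete} and Theorem \ref{thm:compressedKVarEquiv},'' and you have simply spelled out the composition with the (correct) additional observation that the $k^3 b$ blowup keeps the instance within the admissible parameter regime.
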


We also consider the class of problems (\#)\cfkc[]~defined by Boolean functions $f$ on $\binom{k}{2}b$-length inputs. We show that (\#)\czkc~is complete for this class.

\begin{theorem}
If (\#)\czkc~can be solved in $T(n)$ time then (\#)\cfkc[]~for any $f$, can be solved in $\tO(T(n)+n^2)$ time.
\label{thm:fzkcIsComplete}
\end{theorem}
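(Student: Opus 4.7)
\medskip
\noindent\textbf{Proof plan.}
The plan is to mimic the enumerate-and-encode strategy of Theorem~\ref{thm:ckxorIscomplete} (which reduces \#\ckfunc[]~to \#\ckxor[]) but in the $k$-clique setting, using zero-sum as the arithmetic test rather than XOR. First I will brute-force enumerate every accepting $\binom{k}{2}$-tuple of $b$-bit vectors for $f$, producing a list $A_1,\ldots,A_N$ with $N\le 2^{\binom{k}{2}b}$. Because $k$ is constant and $b=o(\log n)$, this enumeration costs $n^{o(1)}$ time and fits comfortably inside the $\tilde O(n^2)$ budget.

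The next step is to design a zero-sum gadget over the accepting-tuple indices $j\in[N]$. Fixing the standard $k$-partite ordering $e_1,\ldots,e_{\binom{k}{2}}$ of the edges of a $k$-clique, I will construct new edge-label integers $w_{e_t,j}$ with the property
\[
\sum_{t=1}^{\binom{k}{2}} w_{e_t,j_t}=0 \ \iff\ j_1=\cdots=j_{\binom{k}{2}}.
\]
Concretely, the $b'$-bit integer $w$ is split into $\binom{k}{2}-1$ padded slots of width $\lceil\log N\rceil+O(\log k)$; for $t\ge 2$, $w_{e_t,j}$ holds $+j$ in slot $t-1$ and zeros elsewhere, while $w_{e_1,j}$ carries $-j$ in every slot. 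The padding is sized so that the $\binom{k}{2}$ signed summands cannot cause carries between slots, so the total sum vanishes iff each slot sum $j_t-j_1$ vanishes, which forces all $j_t$ equal.

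With the gadget in hand, the reduction is a local, blockwise relabeling: for every edge $e$ at clique-position $e_t$ and every block $i\in[g]$, define $v'_e[i]=\{\,w_{e_t,j}:A_j[e_t]\in v_e[i]\,\}$. For any $k$-clique $K$ and block $i$, a tuple $(w_{e_t,j_t})_t$ with $w_{e_t,j_t}\in v'_{e_t}[i]$ summing to zero is forced by the gadget to share a single $j$, and such a $j$ contributes exactly when $A_j[e_t]\in v_{e_t}[i]$ for every $t$. Hence the per-block zero-sum count on the new instance equals $accept_f(K,i)$ exactly. Multiplying over blocks and summing over $k$-cliques reproduces the \#\cfkc[]~count via \#\czkc[], and positivity of each per-block count tracks $f$-acceptance per block, so the same construction handles the decision variant simultaneously.

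The hard part will be parameter hygiene rather than anything conceptual. I need the constructed instance to satisfy the factored-problem constraints: $g'=g=o(\log n/\log\log n)$ is unchanged; the new label width is $b'=O(k^4 b)$, which is $o(\log n)$ because $k$ is constant and $b=o(\log n)$; and the per-vector representation size $g'\cdot 2^{b'}\cdot b'$ remains $2^{o(\log n)}$ once one uses $N\le 2^{O(k^2 b)}=n^{o(1)}$. After verifying these bounds and the no-carry invariant of the slotted gadget, what remains is routine accounting that the reduction runs in $\tilde O(n^2)$ time and makes a single $T(n)$-time call to the assumed (\#)\czkc~algorithm.
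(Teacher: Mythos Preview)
Your plan is correct and close in spirit to the paper's proof, but the mechanics differ. The paper does not build a fresh zero-sum gadget; it simply composes the two already-proved local transforms $\hat\gamma_\ell(e_i,i)=\gamma_{XOR\to SUM,\ell}\bigl(\gamma_{\mathfrak f\to XOR,\ell}(e_i,i),i\bigr)$ from Theorem~\ref{thm:ckxorIscomplete} and Lemma~\ref{lem:cksum>ckxor}, applies them edge-by-edge, and then observes that $\circledcirc'_f=\circledcirc_f\cdot isClique$ so the per-edge transform automatically lifts from the list setting to the clique setting. Your approach instead globally indexes the accepting tuples $A_1,\dots,A_N$ and uses a slotted index-equality test, which is arguably more direct (one gadget instead of two composed ones); the paper's approach is more modular (no new gadget to verify). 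Both yield $b'=O(\mathrm{poly}(k)\cdot b)=o(\log n)$ and a single-call $\tilde O(n^2)$ reduction.

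One technical point to tighten: you place ``$-j$'' in slots and rely on padding to avoid inter-slot carries, but the factored vectors are bit strings in $\{0,1\}^{b'}$ and the SUM function is an integer (or mod-$2^{b'}$) sum of nonnegative values, so a two's-complement $-j$ would have high bits set in every slot and destroy slot isolation. The fix is standard and is exactly what the paper does in Lemma~\ref{lem:cksum>ckxor}: use the asymmetric variant where one distinguished edge carries the target and you test whether the remaining $\ell-1$ edge values sum to it (equivalently, put $j$ in every slot on edge $e_1$ and $j$ in slot $t-1$ on edge $e_t$ and check equality rather than zero), or offset so all entries are nonnegative. With that adjustment your argument goes through.
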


Thus our factored problems corresponding to core problems in FGC, are the hard problems for natural classes of factored problems.

\paragraph{Fine-grained hardness for well-studied problems, based on the hardness of factored problems.}
The results we mention here appear in Section \ref{sec:harderProblem}. The main upshot is that the factored problems are both hard and also simple enough to imply hardness for basic problems in graph and string algorithms.
Some of the results are based on the hardness of \fzkc[3]~which implies hardness from all of SETH, 3-SUM and APSP. Some come from \ckfunc[]~which implies hardness from SETH. 


\noindent \textbf{\emph{Partitioned Matching Triangles.}}
First we define the {\em Partitioned Matching Triangles} problem (\NDMT) as follows: Given $g$ disjoint $n$-node graphs with node colors, is there a triple of colors $a,b,c$ so that every one of the $g$ graphs contains a triangle whose nodes are colored by $a,b,c$? The counting variant of \NDMT~is to count the total number of such $g$-tuples of colored triangles.

Abboud et al. \cite{matchingTriangles} consider the related Matching Triangles problem mentioned earlier in the introduction, and show that it is hard from all three core FGC hypotheses. In the Matching Triangles problem one is given an integer $T$ and a node-colored graph $G$ and one wants to know if there is a triple of colors $a,b,c$ so that there are at least $T$ triangles in $G$ colored by $a,b,c$. 

We observe first that for the particular parameters for which Matching Triangles is shown to be hard in \cite{matchingTriangles}, one can actually reduce Matching Triangles in a fine-grained way to Partitioned Matching Triangles (\NDMT), so that the latter problem is also hard from all three hypothesis. Furthermore, we give a powerful reduction to \NDMT~from \czkc[3]. Moreover, our reduction also holds between the counting versions of the problems, so that we get fine-grained {\em average-case} hardness for counting \NDMT~under all three hypotheses as well.

\begin{theorem}
\label{thm:czkc-to-ndmt}
If (\#)Partitioned Matching Triangles can be solved in $T(n)$ time, then we can solve (\#)\czkc[3]~in time $\tilde{O}(T(n)+n^2)$.
\end{theorem}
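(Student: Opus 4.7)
The plan is to reduce (\#)\czkc[3]~to (\#)\NDMT~by building, for each coordinate $i \in [g]$ of the factored vectors, a tripartite graph $G_i$ whose color-matched triangles enumerate exactly the zero-sum configurations for slot $i$ of \czkc[3]. Using the same coloring in every $G_i$ then lets a common color triple across all $g$ graphs encode one triangle of $G$ and the product of its per-slot zero-sum counts, which is the quantity that \czkc[3]~totals over all triangles.

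For each $i$, I would make $G_i$ tripartite with parts $A$, $B$, $C$, where $A$ is a copy of $V(G)$ and $B$, $C$ are copies of $V(G) \times \{0,1\}^b$. Every $(u)_A$ is colored $(u,A)$, every $(v,x)_B$ is colored $(v,B)$, and every $(w,y)_C$ is colored $(w,C)$, so that colors depend only on the original vertex and the part, not on $i$ or on the interior vector label. Using the partial-sum trick familiar from zero-weight triangle reductions, and writing $e_{uv}[i]$ for the $i$-th slot of the factored label on edge $uv$ of $G$, I add the edges of $G_i$:
\begin{itemize}
\item $(u)_A - (v,x)_B$ iff $uv \in E(G)$ and $x \in e_{uv}[i]$;
\item $(v,x)_B - (w,y)_C$ iff $vw \in E(G)$ and $y-x \in e_{vw}[i]$;
\item $(u)_A - (w,y)_C$ iff $uw \in E(G)$ and $-y \in e_{uw}[i]$.
\end{itemize}

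A triangle of $G_i$ on the color triple $\{(u,A),(v,B),(w,C)\}$ is then precisely a pair $(x,y)$ with $(u,v,w)$ a triangle of $G$, $x \in e_{uv}[i]$, $y-x \in e_{vw}[i]$, and $-y \in e_{uw}[i]$. The substitution $x_{uv}=x$, $x_{vw}=y-x$, $x_{uw}=-y$ is a bijection onto triples in $e_{uv}[i] \times e_{vw}[i] \times e_{uw}[i]$ summing to zero, so the number of such triangles equals exactly the slot-$i$ local count $N_i(u,v,w)$ that appears in the definition of \#\czkc[3].

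Because the coloring is identical across all $g$ graphs, (\#)\NDMT~on $(G_1,\ldots,G_g)$ returns $\sum_{(u,v,w)} \prod_{i=1}^g N_i(u,v,w)$ with the sum over ordered triangles of $G$; by symmetry of $N_i$ this equals the \#\czkc[3]~count up to the fixed factor $3!=6$ coming from the orderings of each unordered triangle, which I divide out. The decision version is immediate because a color triple matched across all $G_i$ exists iff some triangle of $G$ has a zero-sum configuration in every slot. Using $b = o(\log n)$ and $g = o(\log n/\log\log n)$ from the \czkc[3]~preamble, we have $2^b = n^{o(1)}$, so each $G_i$ has $\tO(n)$ vertices and $\tO(n^2)$ edges (the densest layer is $B$-to-$C$, contributing at most $|E(G)|\cdot 2^{2b}$ edges); constructing all $g$ graphs costs $\tO(n^2)$ and a single oracle call to \NDMT~costs $T(\tO(n))$, giving the claimed $\tO(T(n)+n^2)$ bound. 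I expect the main subtlety to be the bookkeeping: verifying that color triples not of the form $\{(u,A),(v,B),(w,C)\}$ contribute $0$ in every $G_i$ (so that they cannot pollute the \NDMT~count), and matching the part-tagged coloring used here to whichever precise notion of ``same triple of colors'' is built into the definition of \NDMT.
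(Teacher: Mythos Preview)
Your proposal is correct and follows essentially the same approach as the paper: one graph $G_i$ per coordinate $i\in[g]$, a partial-sum encoding where part $A$ carries the bare vertex and parts $B$, $C$ carry a vertex together with a running prefix sum, and colors depending only on the original vertex so that a color triple shared across all $G_i$ encodes a single triangle of $G$ together with a per-slot zero-sum configuration. The paper starts from a tripartite instance $G=(U,V,W)$ (so no $3!$ factor arises and colors need no part tag), and it takes the label range for the prefix sums to be $[-2^{b+1},2^{b+1}]$ rather than $\{0,1\}^b$, but the reduction is otherwise the same as yours; the bookkeeping concerns you flag are exactly the points where the two write-ups differ in convention.
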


\noindent\textbf{\emph{$\bm{k}$-color Node Labeled $st$ Connectivity.}}
In the $k$-color Node Labeled $st$ Connectivity Problem (\kNLstC) one is given an acyclic graph $G=(V,E)$ with two designated nodes $s,t\in V$, and colors on all nodes in $V\setminus \{s,t\}$ from a set of colors $C$. One is then asked whether there is a path from $s$ to $t$ in $G$ using at most $k$ node colors.

We give a fine-grained reduction from \czkc[]~to \kNLstC~that also holds between the counting versions. Here in the counting version of \kNLstC~we want to output the number of $s$-$t$ paths through at most $k$ colors, mod $\lceil2^{2k\lg^2(n)}\rceil$.

\begin{theorem} \label{thm:knlstc-hardness}
If a $O(|C|^{k-2}|E|^{1-\epsilon/2})$ or $O(|C|^{k-2-\epsilon}|E|)$ time algorithm exists for (counting mod $2^{2k\lg^2(n)}$) \kNLstC~then a $O(n^{k-\epsilon})$ algorithm exists for (\#)\czkc.
\end{theorem}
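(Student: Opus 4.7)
The plan is to reduce $(\#)$\czkc~to $(\#)$\kNLstC~by constructing a layered DAG $H$ whose source-to-sink paths using at most $k$ colors are in bijection with (factored zero-$k$-clique, witness) pairs of the input instance. We may assume WLOG that the input \czkc~instance is $k$-partite with parts $V_1,\ldots,V_k$ of size $n$, and each edge $(u,v)$ between $V_i$ and $V_j$ is labeled by a $(g,b)$-factored vector whose $p$-th component is $w_{u,v}[p]\subseteq\{0,1\}^b$. The DAG $H$ has $g\binom{k}{2}$ ``edge-layers'' indexed by $(p,(i,j))$ with $p\in[g]$ and $1\le i<j\le k$, visited in lexicographic order. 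Inside the layer $(p,(i,j))$ I place nodes $A_{p,i,j}(u,s)$ for $u\in V_i, s\in\{0,1\}^b$, colored by $(i,u)$, and nodes $B_{p,i,j}(v,s')$ for $v\in V_j, s'\in\{0,1\}^b$, colored by $(j,v)$. For every $G$-edge $(u,v)$ between $V_i$ and $V_j$ and every $x\in w_{u,v}[p]$ I add an arc $A_{p,i,j}(u,s)\to B_{p,i,j}(v,s+x\bmod 2^b)$. Between the exit of one layer and the entry of the next at the same position I add arcs that preserve the running sum; between position $p$ and $p+1$ I only add arcs through exits with $s=0$; the source connects into the $A$-nodes of the first layer with $s=0$, and the $B$-nodes of the final layer with $s=0$ feed into the sink.

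The $k$-color budget is what forces the clique structure. Each color $(i,v)$ can be reused freely, but every time the path enters an $A$- or $B$-node whose vertex label differs from a previous choice for the same partition, a fresh color is consumed. Since partition $i$ is touched by each of its $k-1$ incident clique edges, an inconsistent choice anywhere inflates the color count to at least $k+1$; so any source-to-sink path using at most $k$ colors commits to a single vertex $v_i\in V_i$ per partition. Because I created arcs only for real $G$-edges, the committed tuple $(v_1,\ldots,v_k)$ is a $k$-clique, and the sequence of chosen $x$-values along the path is precisely one choice $x_{i,j,p}\in w_{v_i,v_j}[p]$ per (edge, position). The inter-position gating enforces $\sum_{(i,j)}x_{i,j,p}=0$ for every $p$, which is exactly what \czkc~requires of a witness. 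This bijection shows that for the decision version, source-to-sink reachability under at most $k$ colors is equivalent to the existence of a factored zero $k$-clique, and for the counting version, the number of such paths equals the \czkc~count.

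It remains to verify the parameters. The color palette has size $|C|=kn=O(n)$, and a simple count of arcs within and between layers gives $|E|=O(gk^2n^2\cdot 2^{2b})=n^{2+o(1)}$ since $b=o(\log n)$ and $g=o(\log n/\log\log n)$. Plugging into the hypothesized running times, $O(|C|^{k-2}|E|^{1-\epsilon/2})=O(n^{k-\epsilon+o(1)})$ and $O(|C|^{k-2-\epsilon}|E|)=O(n^{k-\epsilon+o(1)})$, either bound absorbs into $O(n^{k-\epsilon'})$ for any $\epsilon'<\epsilon$. For the counting case, the \czkc~count is at most $n^k\cdot 2^{gb\binom{k}{2}}$, whose base-$2$ logarithm is $k\log n+O(k^2\log^2 n/\log\log n)<2k\log^2 n$ for $n$ large, so the modulus $\lceil 2^{2k\log^2 n}\rceil$ of the counting \kNLstC~algorithm recovers the exact \czkc~count.

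The main delicate point is to arrange the coloring so that the constraint ``at most $k$ colors'' is simultaneously tight enough to force consistent clique commitments and loose enough to let the path branch freely over all $x$-choices. Encoding the $x$-branching in the arc set rather than in extra intermediate nodes is what keeps $k$ the exact threshold; any auxiliary node introduced inside a gadget would either need to borrow one of the two endpoint colors or else inflate the budget, and I would need to keep this invariant in mind throughout the construction and in the inter-layer connectors that carry the partial sum across gadget boundaries.
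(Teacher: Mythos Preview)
Your construction is essentially the paper's: a layered DAG with colour set $\{(i,v):i\in[k],\,v\in V_i\}$, one block of layers per (position, clique-edge) pair, so that the $k$-colour budget forces a consistent vertex per part while the in-block arcs record one element of each factored set, giving $|C|=O(n)$ and $|E|=n^{2+o(1)}$. The only difference is the in-block state: you propagate a running partial sum $s$ and gate on $s=0$ at the end of each position, whereas the paper carries the full tuple $(s_1,\dots,s_{j-1})$ of already-chosen $b$-bit values and tests the sum only in the last layer of each position-gadget; both state spaces are $n^{o(1)}$, so the asymptotics coincide. One small caveat: your ``$s+x\bmod 2^b$'' presumes the zero-sum in \czkc\ is taken modulo $2^b$; if it is over the integers you need state in $[0,\binom{k}{2}2^b)$, which changes nothing asymptotically but is worth making explicit.
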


The conditional lower bound of $(|C|^{k-2}|E|)^{1-o(1)}$ resulting from the above theorem is tight. In Appendix \ref{app:algorithms}~we give the corresponding algorithm.

\noindent\textbf{\emph{$\bm{k}$-color Edge Labeled $st$ Connectivity.}}
The $k$-color Edge Labeled $st$ Connectivity problem (\kELstC[]) asks for a given acyclic graph with colored edges and given source $s$ and target $t$, if there is a path from $s$ to $t$ that uses only $k$ colors of edges. 

We give conditional hardness for both the decision and counting version of the problem (where the counts are mod a small $R$). This also implies average-case hardness for the counting mod $R$ problem under all three hardness hypotheses of FGC. 

\begin{theorem}
\label{kelstcCKFUNC}
If a $\tilde{O}(|E||C|^{k-1-\epsilon})$ or $\tilde{O}(|E|^{1-\epsilon}|C|^{k-1})$ time algorithm exists for (counting mod $ 2^{2k\lg^2(n)}$) \kELstC[], then a $\tilde{O}(n^{k-\epsilon})$ algorithm exists for (\#)\ckfunc[].
\end{theorem}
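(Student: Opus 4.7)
The plan is to build a layered graph with $g$ sequentially-connected ``gadgets,'' one per position $j \in [g]$, so that an $s$-$t$ path using at most $k$ edge colors corresponds uniquely to a choice of $(v_1, \ldots, v_k) \in S_1 \times \cdots \times S_k$ together with a witness $\vec{x}^{(j)} = (x_1^{(j)}, \ldots, x_k^{(j)})$ at every position $j$ satisfying $f(\vec{x}^{(j)}) = 1$ and $x_i^{(j)} \in v_i[j]$. I would set the edge color palette to $C = \{(i, v) : i \in [k],\, v \in S_i\}$, so $|C| = kn$, and a valid path will use exactly $k$ colors, namely $(1, v_1), \ldots, (k, v_k)$, one per set.

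Gadget $j$ is built as follows. For every $\vec{x} \in (\{0,1\}^b)^k$ with $f(\vec{x}) = 1$, create a chain $a_0^{j,\vec{x}}, a_1^{j,\vec{x}}, \ldots, a_k^{j,\vec{x}}$ of $k+1$ fresh nodes; between $a_{i-1}^{j,\vec{x}}$ and $a_i^{j,\vec{x}}$, place parallel edges colored $(i, v_i)$, one for each $v_i \in S_i$ satisfying $x_i \in v_i[j]$. Then identify every $a_0^{j,\vec{x}}$ with a common entry node $s_j$ and every $a_k^{j,\vec{x}}$ with a common exit node $t_j$, and stitch gadgets together by setting $t_j = s_{j+1}$, $s_1 = s$, $t_g = t$. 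The total edge count is $|E| = O(g \cdot 2^{kb} \cdot k n) = n^{1+o(1)}$ since $b = o(\log n)$ and $g = o(\log n / \log\log n)$, and $|C| = kn = O(n)$.

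For correctness, any $s$-$t$ path visits each gadget in order and traverses exactly one chain per gadget (chains share only the endpoints $s_j, t_j$), choosing some $\vec{x}^{(j)}$ and one edge of color type $(i, \cdot)$ at layer $i$ of that chain. Thus every path uses at least one color from each of the $k$ disjoint color groups $\{(i, v) : v \in S_i\}$; a path using at most $k$ colors therefore uses exactly one color per group, i.e.\ a fixed $(v_1, \ldots, v_k)$ repeated across all gadgets. Conversely, each valid tuple $(v_1, \ldots, v_k, \vec{x}^{(1)}, \ldots, \vec{x}^{(g)})$ yields a unique $s$-$t$ path, so the number of such paths equals $\sum_{(v_1, \ldots, v_k)} \prod_{j=1}^g \mathit{accept}_f(v_1, \ldots, v_k, j)$, which is exactly the \#\ckfunc[]~value; for the decision version, a path exists iff this sum is nonzero.

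The main subtlety will be the counting-mod bookkeeping: the true count is at most $n^k \cdot 2^{kbg}$, and since $bg = o(\log^2 n/\log\log n)$ the bit length stays below $2k \lg^2 n$, so the mod-$2^{2k \lg^2 n}$ answer equals the true count. Finally, plugging $|E| = n^{1+o(1)}$ and $|C| = O(n)$ into either hypothesized \kELstC[]~bound yields an $\tilde{O}(n^{k-\epsilon})$ algorithm for \ckfunc[]. The part requiring the most care will be confirming that no path can sneak through using fewer than $k$ colors and that chains from distinct $\vec{x}$ in the same gadget cannot be ``mixed''; both are ruled out because each chain structurally mandates using exactly one color from each of the $k$ disjoint groups $(i, \cdot)$, and distinct chains share only the two endpoints $s_j, t_j$, so once a path enters a particular chain it must exit through the same one.
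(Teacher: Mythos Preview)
Your construction is correct and achieves the same parameters ($|E|=n^{1+o(1)}$, $|C|=O(n)$) as the paper, but via a genuinely different gadget design. The paper carries the partial witness $(s_1,\ldots,s_j)$ explicitly as part of the node labels: layer $L_j^i$ has a ``fan-out'' sub-layer of $n\cdot 2^{bj}$ nodes $u_j(s_1,\ldots,s_j)$ and a ``merge'' sub-layer of only $2^{bj}$ nodes $l_i(s_1,\ldots,s_j)$, connected by color-$u_j$ matchings; membership tests $s_{j+1}\in u_{j+1}[i]$ are encoded by edges between merge and fan-out layers, and $f$ is tested only at the end of each gadget. Your approach instead enumerates the accepting tuples $\vec{x}$ up front as disjoint chains and encodes the membership test $x_i\in v_i[j]$ by parallel edges. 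This is more direct and arguably cleaner; the paper's state-carrying layout, on the other hand, yields a simple graph and mirrors the structure used in the companion \kNLstC{} reduction.

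One small point to tidy up: as written, your gadget uses parallel edges between $a_{i-1}^{j,\vec{x}}$ and $a_i^{j,\vec{x}}$ (one per qualifying $v_i$), so the instance is a multigraph. The paper's definition of \kELstC{} speaks of a DAG $G=(V,E)$ without explicitly allowing multi-edges, and its own construction is simple. The fix is trivial---subdivide each parallel edge through a fresh intermediate node, coloring both halves with $(i,v_i)$---which at most doubles $|E|$ and preserves the path count exactly. With that adjustment your argument goes through unchanged, including the mod-$2^{2k\lg^2 n}$ bookkeeping, which you handled correctly.
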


This is tight.  Note this algorithm is slower (by a factor of $|C|$) than the node-labeled version, however it is optimal. The corresponding algorithm is in a theorem from Appendix \ref{app:algorithms}.


\noindent\textbf{\emph{$\bm{(k+1)}$ Labeled Max Flow.}}
The $(k+1)$ Labeled Max Flow problem studied in \cite{mfml}
 asks, given a capacitated graph $G=(V,E)$ where the edges have colors, and $s,t\in V$, if there is a maximum flow from the source $s$ to the sink $t$ where number of distinct colors of the edges with non-zero flow is at most $k+1$.

\begin{theorem}
\label{klmf}
If $(k+1)$L-MF can be solved in $T(n)$ time, then we can solve \czkc~in time $\tilde{O}(T(n)+n^2)$.
\end{theorem}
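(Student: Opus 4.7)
The plan is to reduce \czkc~(decision) to a single call of $(k+1)$-Labeled Max Flow on an instance of size $\tO(n^2)$, yielding overall running time $\tO(T(n)+n^2)$ as claimed. Given a \czkc~instance on graph $G=(V,E)$ with $|V|=n$ and factored vector edge labels, I would construct a flow network $H$ whose maximum flow uses at most $k+1$ distinct edge colors if and only if $G$ admits a $k$-clique together with sum-zero choices of representatives at all $g$ coordinates of the factored labels.

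First I would build the clique skeleton. Create a source $s$, sink $t$, and $k$ vertex-slot layers $L_1,\ldots,L_k$, each containing a node $n_v^{(\ell)}$ per vertex $v\in V$. Connect $s$ to $L_1$, $L_k$ to $t$, and between consecutive layers $L_\ell$, $L_{\ell+1}$ place an edge from $n_v^{(\ell)}$ to $n_u^{(\ell+1)}$ exactly when $(v,u)\in E$. For non-consecutive slot pairs, attach clique-check sub-gadgets that route flow along an auxiliary path open only when the corresponding pair of vertices is in $E$. Color every structural edge associated with slot $\ell$ with a common color $c_\ell$, using exactly $k$ colors for the structural portion of the network; any unit of flow from $s$ to $t$ traversing this skeleton corresponds to a choice of a $k$-clique in $G$.

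Second I would attach a ``sum-zero'' sub-gadget encoding the factored labels, with one extra color $c_0$. For each coordinate $i\in[g]$ and each edge $(u,v)\in E$, route the flow through parallel capacitated edges representing the $b$-bit values in the factored set at coordinate $i$; flow conservation at an accumulator node (with capacities chosen to match numeric values) ensures that the $\binom{k}{2}$ chosen numbers sum to zero. I would wire the $g$ coordinate-gadgets in series so that a designated max-flow target $M^*$ is attained only when a single clique choice succeeds at every coordinate simultaneously, and the single shared color $c_0$ brings the total color count used by a valid max flow to exactly $k+1$.

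The main obstacle will be designing the sum-zero gadget so that it enforces an exact integer zero-sum while using only one color and keeping the network size $\tO(n^2)$. I would address this by representing each $b$-bit chosen value as a bundle of $O(2^b)=\polylog(n)$ unit-capacity parallel edges and converting the arithmetic sum into a flow-balance constraint at the accumulator, reusing the single color $c_0$ across all $g$ coordinates. Since $b=o(\log n)$ and $g=o(\log n/\log\log n)$, the resulting network has $\tO(n^2)$ edges and $k+1$ colors; the single call to $(k+1)$L-MF then decides \czkc~in time $\tO(T(n)+n^2)$.
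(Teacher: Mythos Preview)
Your proposal has a fundamental gap in the color scheme. You assign colors $c_1,\ldots,c_k$ to \emph{slot positions} and a single extra color $c_0$ to the sum-zero machinery. But then the color budget carries no information about \emph{which} vertices of $G$ are chosen: every edge in slot $\ell$ has the same color $c_\ell$, so a max flow is free to use different vertices in different parts of the network (in particular, across the $g$ coordinate gadgets) while still touching only the $k+1$ colors $c_0,c_1,\ldots,c_k$. Nothing forces a single $k$-clique to be used consistently at all $g$ coordinates. The paper's reduction works precisely because it inverts this: the label set has size $n+1$, with one label per vertex of $G$ plus a special label $l^*$ on the source/sink edges. The $g$ set gadgets from Theorem~\ref{thm:knlstc-hardness} are placed in \emph{parallel} between $s$ and $t$, so the max flow equals $g$; achieving that value with only $k+1$ labels forces all $g$ paths to reuse the same $k$ vertex-labels, which is exactly the clique-consistency you need.

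Your sum-zero encoding is also not sound as stated. Representing a chosen $b$-bit value by a bundle of unit-capacity parallel edges and hoping flow conservation at an accumulator enforces $\sum = 0$ does not work: max flow will saturate as many parallel edges as capacities allow rather than ``choose'' a particular value, and flow balance gives you a single linear equality per node, not a selection among elements of a set. The paper sidesteps this entirely by reusing the path-based set gadgets of Theorem~\ref{thm:knlstc-hardness}, where a path through $SG_j$ with node colors $u_1,\ldots,u_k$ exists iff the $j$th coordinate of the factored labels on the clique $\{u_1,\ldots,u_k\}$ admits a zero sum; the arithmetic is encoded combinatorially in the layered structure, not via capacities.
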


This implies an $n^{k-1-o(1)}$ lower bound for $k$L-MF under all three FGC hypotheses.
We also show that for the particular structured version of the problem given in our reduction, this lower bound is tight.

\noindent\textbf{\emph{Regular Expression Matching.}}
The Regular Expression Matching problem (studied e.g. in \cite{regularexpression}) takes as input a regular expression (pattern) $p$ of size $m$ and a sequence of symbols (text) $t$ of length $n$, and asks if there is a substring of $t$ that can be derived from $p$.
The counting version of the problem, \#Regular Expression Matching asks for the number of subset alignments of the pattern in the text mod an integer $R$, where $R=n^{o(1)}$. A classic algorithm constructs and simulates a non-deterministic finite automaton corresponding to the expression, resulting in the rectangular $O(mn)$ running time for the detection version of the problem.

We give hardness from \#\ckov[2]~(mod $R$) which in turn implies average-case fine-grained hardness for counting regular expression matchings mod $R$, from SETH.

\begin{theorem}
Let $R$ be an integer where $\lg{(R)}$ is subpolynomial. If you can solve (\# mod ${R}$) regular expression matching in $T(n)$ time, then you can solve (\# mod $R$) \ckov[2]~in $\tO(T(n)+n)$ time
\label{thm:regex}
\end{theorem}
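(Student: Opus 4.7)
The plan is to reduce a single instance of \#\ckov[2] with parameters $(g,b)$ and sets $S,T$ of size $n$ to one query to (\# mod $R$) regular expression matching on a text $\tau$ and a pattern $\rho$, each of size $\tilde O(n)$, such that the number of matches of $\rho$ in $\tau$ modulo $R$ equals
\[
\sum_{s\in S,\, t\in T}\ \prod_{i=1}^g \bigl|\{(x,y)\in s[i]\times t[i]:\ x\cdot y=0\}\bigr|.
\]
Over the alphabet $\{0,1\}$ augmented by three reserved delimiter symbols $\alpha,\beta,\gamma$, I encode each $b$-bit vector $x$ by the bit-gadget $0\mapsto 00$, $1\mapsto 01$; call this $\mathrm{enc}(x)$. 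For each $s\in S$ I form the block $T(s)=\beta\, D_1(s)\,\beta\,D_2(s)\,\beta\cdots\beta\,D_g(s)\,\beta$, where $D_i(s)=\gamma\,\mathrm{enc}(x_{i,1})\,\gamma\,\mathrm{enc}(x_{i,2})\cdots\gamma\,\mathrm{enc}(x_{i,|s[i]|})$ enumerates $s[i]$, and set $\tau=\alpha\,T(s_1)\,\alpha\,T(s_2)\cdots\alpha\,T(s_n)\,\alpha$. Since $|s[i]|\le 2^b$ with $b=o(\log n)$ and $g=o(\log n/\log\log n)$, the total length is $O(n\,g\,2^b\,b)=\tilde O(n)$.

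For the pattern I take $\rho=P(t_1)\mid P(t_2)\mid\cdots\mid P(t_n)$, where $P(t)=\Sigma_\alpha^{\ast}\,\alpha\,\beta\,Q_1(t)\,\beta\cdots\beta\,Q_g(t)\,\beta\,\alpha\,\Sigma_\alpha^{\ast}$ and $\Sigma_\alpha$ denotes the character class excluding $\alpha$, so the outer stars cannot straddle two $T(s)$ blocks. The $i$-th chunk is $Q_i(t)=C^{\ast}\bigl(\mathrm{pat}(y_{i,1})\mid\cdots\mid\mathrm{pat}(y_{i,|t[i]|})\bigr)C^{\ast}$, where $C$ excludes both $\alpha$ and $\beta$, and $\mathrm{pat}(y)=\gamma\prod_{k=1}^{b}e'(y_k)$ with $e'(1)=00$ (forcing the $k$th bit of $x$ to $0$) and $e'(0)=(00\mid 01)$, so that $\mathrm{pat}(y)$ matches $\gamma\,\mathrm{enc}(x)$ iff $x\cdot y=0$. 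The pattern length is also $\tilde O(n)$, so one oracle call runs in $\tilde O(T(n))$ time.

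Next I verify a bijection between parses of $\rho$ in $\tau$ and valid tuples: (i) the outer alternation fixes $t$; (ii) the two $\alpha$-anchors in $P(t)$ force the central block to align with one unique $T(s)$, identifying $s$; (iii) inside each $D_i(s)$, the $\gamma$-prefix of the chosen $\mathrm{pat}(y)$ must land at a $\gamma$-separator, identifying a unique $x_i\in s[i]$; (iv) the alternation branch inside $Q_i(t)$ identifies $y_i\in t[i]$, and the construction of $\mathrm{pat}$ makes the match succeed iff $x_i\cdot y_i=0$ (and the inner $(00\mid 01)$ choices are pinned by $x_i$, contributing no extra multiplicity). Summing over all tuples gives the target count, an equality preserved modulo $R$.

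The main obstacle is ruling out spurious parses so the bijection is exact. I must verify that (a) the $C^{\ast}$ wildcards inside $Q_i(t)$ cannot place $\mathrm{pat}(y)$ at an offset that is not aligned to a $\gamma\,\mathrm{enc}(x_j)$ block---enforced because $\mathrm{pat}(y)$ begins with the reserved $\gamma$ and $\gamma$ appears in $D_i(s)$ only as the inter-vector separator---and (b) the outer $\Sigma_\alpha^{\ast}$ cannot bridge two distinct $T(s)$ blocks---enforced by excluding $\alpha$ from $\Sigma_\alpha$. A secondary concern is that the precise ``subset alignment'' convention of \# regular expression matching weights each parse once with no extra factor coming from the $\Sigma_\alpha^{\ast}$ stars; if needed I can pad $\tau$ with a fixed $\alpha$-free prefix and suffix so that each outer star admits exactly one parse per anchored middle, preserving the equality mod $R$.
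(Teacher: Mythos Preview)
Your high-level architecture matches the paper's: a per-group orthogonality gadget, the $g$ groups concatenated with one delimiter, an outer alternation over the $T$-side, and the $S$-side laid out as text blocks separated by a second delimiter. Your explicit $00/01$ bit encoding is a clean self-contained substitute for the Backurs--Indyk gadget the paper invokes as a black box, and your inner analysis (the $\gamma$-anchor forcing the middle of each $Q_i$ onto a unique $\gamma\,\mathrm{enc}(x)$ block, with the $(00\mid 01)$ branches pinned by $x$) is correct.

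The gap is at the outer layer. In the substring semantics one counts pairs (substring of $\tau$, accepting parse). Your $P(t)=\Sigma_\alpha^{\ast}\,\alpha\,[\text{middle}]\,\alpha\,\Sigma_\alpha^{\ast}$ forces the two literal $\alpha$'s onto the two text-$\alpha$'s surrounding some $T(s_j)$, but it does \emph{not} pin the substring endpoints: the left endpoint may sit anywhere from just after the preceding $\alpha$ up to the left anchor, and symmetrically on the right. Each anchored middle is therefore counted with weight $(|T(s_{j-1})|{+}1)(|T(s_{j+1})|{+}1)$, which varies with $j$ (since $|T(s)|$ depends on the set sizes $|s[i]|$) and destroys the bijection modulo an arbitrary $R$. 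Padding $\tau$ with an $\alpha$-free prefix and suffix only repairs $j=1$ and $j=n$; the interior multiplicities remain. The clean fix is to drop the outer stars and anchors entirely and take $P(t)=\beta\,Q_1(t)\,\beta\cdots\beta\,Q_g(t)\,\beta$: any matching substring must then begin at a $\beta$, contain exactly $g{+}1$ $\beta$'s with only $C$-characters between consecutive ones, and hence cannot cross an $\alpha$---so it is exactly one full block $T(s_j)$, after which your inner count gives precisely $\prod_i |\{(x,y)\in s_j[i]\times t[i]:x\cdot y=0\}|$ parses. This is essentially how the paper handles the outer layer too: it puts a separator (its symbol~$4$) in the text that never appears in the pattern, letting the substring semantics alone confine each match to one block.
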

Again, we show in Appendix \ref{app:algorithms} that for the particular ``type'' of pattern used in our reduction, this lower bound is tight.

\noindent\textbf{\emph{LCS and Edit Distance.}}
The $k$-LCS problem is a basic problem in sequence alignment. Given $k$ sequences $s_1,\ldots,s_k$ of length $n$, one is asked to find the longest sequence that appears in every $s_i$ as a subsequence. 
\klcs~can be solved in $O(n^k)$ time with dynamic programming and requires $n^{k-o(1)}$ time under SETH, via a reduction from \kov~\cite{LCSisHard}. Here we show that $k$-LCS is also fine-grained hard via a reduction from \ckov[].

\begin{theorem}
\label{thm:kLCSisWorstCaseHard}
A $T(n)$ time algorithm for \klcs~with alphabet size $O(k)$ implies a $\tO(T(n))$ algorithm for \ckov[].
\end{theorem}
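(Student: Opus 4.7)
The plan is to extend the standard reduction from \kov~to \klcs~of~\cite{LCSisHard} by inserting two additional levels of ``choice gadgets'' that mirror the two extra layers of structure in a factored vector (the $g$ groups, and within each group the up to $2^b$ sub-vectors). First I would invoke the Abboud--Backurs--Vassilevska Williams construction as a black box, obtaining for every $x \in \{0,1\}^b$ a sub-vector gadget $VG_b(x)$ over an alphabet of size $O(k)$, with the property that for any $k$-tuple $x^{(1)}, \ldots, x^{(k)} \in \{0,1\}^b$ the LCS of $(VG_b(x^{(1)}), \ldots, VG_b(x^{(k)}))$ equals a known value $L_{\mathrm{orth}}$ if the tuple is orthogonal and is at most a strictly smaller value $L_{\mathrm{non}}$ otherwise.

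Next I would build the reduction inductively. For each factored vector $v$ and each group $i \in [g]$, form a group gadget $GG(v[i]) = \sigma_1 VG_b(x_1) \sigma_1 VG_b(x_2) \sigma_1 \cdots \sigma_1 VG_b(x_m) \sigma_1$ that enumerates the sub-vectors $x_1, \ldots, x_m \in v[i]$ separated by a fresh symbol $\sigma_1$; then form a factored vector gadget $FVG(v) = \sigma_2 GG(v[1]) \sigma_2 \cdots \sigma_2 GG(v[g]) \sigma_2$ joining the $g$ group gadgets with a second fresh symbol $\sigma_2$; finally form the side sequence $s_r = \sigma_3 FVG(v_1^{(r)}) \sigma_3 \cdots \sigma_3 FVG(v_n^{(r)}) \sigma_3$ with a third fresh symbol $\sigma_3$, together with appropriate normalizer padding at each level. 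The three separators add only a constant number of fresh symbols, so the alphabet remains $O(k)$. Because $g = o(\log n / \log\log n)$ and $b = o(\log n)$, we have $g \cdot 2^b = n^{o(1)}$ and $|VG_b| = \mathrm{poly}(b) = n^{o(1)}$, so each sequence has length $\tO(n)$ and the overall reduction runs in $\tO(n)$ time.

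The main obstacle, and where most of the work sits, is calibrating the lengths of the normalizer gadgets at the three nested levels so that a single global LCS threshold $L^\star$ is attained if and only if the LCS alignment simultaneously (i) selects exactly one factored vector per side, (ii) within that factored vector and each group $i \in [g]$, selects exactly one sub-vector per side, and (iii) for every group $i$ the chosen $k$-tuple of sub-vectors is orthogonal. Completeness follows by constructing the obvious witness alignment from a factored orthogonal tuple. Soundness requires the familiar multi-level exchange argument: any alignment that ``crosses'' two groups, two factored vectors, or that aligns a non-orthogonal $k$-tuple inside some group, loses at least $L_{\mathrm{orth}} - L_{\mathrm{non}}$ relative to an honest alignment, and this loss must dominate any savings created by the padding. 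Choosing the padding so that the inner-level orthogonality bonus is strictly propagated outward at each of the two new levels of nesting is the delicate piece I expect to be the main technical obstacle; once it is in place, the conclusion follows along the same lines as in~\cite{LCSisHard}, and Theorem~\ref{thm:unfactoredToFactored} combined with this reduction yields the $n^{k-o(1)}$ SETH-based lower bound for \klcs~directly from \ckov[].
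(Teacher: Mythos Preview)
Your proposal takes a genuinely different route from the paper. The paper does not attempt a direct reduction to unweighted \klcs; instead it first reduces \ckov[] to \emph{weighted} $k$-LCS (Theorem~\ref{thm:WLCSisWorstCaseHard}) using a hierarchy of weighted gadgets, and then invokes the standard WLCS-to-LCS conversion (Lemma~\ref{lem:WLCSandLCS}), which is cheap because all weights are $\tO(1)$. Concretely, the paper builds: (i) a \emph{new} vector gadget (not the one from~\cite{LCSisHard}) that enumerates all $2^{b(k-1)}$ orthogonal $k$-tuples containing a given $b$-bit vector as explicit $0$--$1$ strings and wraps them in a selector gadget with heavy $@$-symbols; (ii) a set gadget that is another selector over these vector gadgets; (iii) a parallel gadget with a heavy $\$$-symbol to conjoin the $g$ groups; and (iv) the alignment gadget of~\cite{LCSisHard} at the outer level. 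The weights are precisely what make the nested hierarchy trivial to enforce---each separator is worth more than everything enclosed by it---so the ``calibration'' you identify as the hard part simply evaporates.

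Your plan skips the WLCS detour and tries to enforce three nested levels directly with unweighted separators $\sigma_1,\sigma_2,\sigma_3$ and normalizer padding, using the~\cite{LCSisHard} vector gadget as a black box at the bottom. This is in principle workable, but the step you flag as ``the main technical obstacle'' is exactly where the paper's approach buys simplicity: without weights, a bare separator $\sigma_1$ between sub-vectors does not by itself force one-per-side selection; you need the full alignment/normalizer machinery (dummy copies, length equalization) at \emph{each} of the inner levels, and you must also pad every set $v[i]$ to size $2^b$ so the group gadgets have uniform length across factored vectors. Your sketch does not yet contain these ingredients. If you want to continue along the direct-LCS route, you should either black-box the \emph{entire} $k$-OV-to-$k$-LCS reduction of~\cite{LCSisHard} at the group level (treating each $(v_1[i],\ldots,v_k[i])$ as a tiny $k$-OV instance), or follow the paper and pass through WLCS, where the delicate calibration becomes a one-line weight assignment.
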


The Edit Distance problem is another famous sequence alignment problem. Here one is given two $n$ length sequences $a$ and $b$ and one needs to compute the minimum number of symbol insertions, deletions and substitutions needed to transform $a$ into $b$.
Edit Distance can be solved in $O(n^2)$ time via dynamic programming, and requires $n^{2-o(1)}$ time under SETH, via a reduction from OV \cite{BackursI15,BackursI18}.

In section \ref{sec:harderProblem} we show that edit distance is also fine-grained hard from \ckov[2].

\begin{theorem}
\label{thm:editDistisWorstCaseHard}
A $T(n)$ time algorithm for Edit Distance implies a $\tO(T(n))$ algorithm for \ckov[2].
\end{theorem}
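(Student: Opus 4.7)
The plan is to adapt the Backurs--Indyk (BI) reduction from $2$-OV to Edit Distance by inserting an extra middle layer that encodes the factored structure of each vector, while exploiting that $g=o(\log n/\log\log n)$ and $b=o(\log n)$, so that each factored vector has description length $g\cdot b\cdot 2^b=n^{o(1)}$. At the innermost level, for each position $i\in[g]$ and each pair $v\in S,v'\in T$, I will treat $(v[i],v'[i])$ as a miniature $2$-OV instance on at most $2^b$ vectors of dimension $b$; applying BI yields position gadgets $P_i(v),P_i'(v')$, each of length $\tO(2^b\cdot b)=n^{o(1)}$, with $\mathrm{ED}(P_i(v),P_i'(v'))=D$ when some $x\in v[i],y\in v'[i]$ satisfy $x\cdot y=0$ and $\mathrm{ED}(P_i(v),P_i'(v'))=D+c$ otherwise, for a fixed threshold $D$ and a small constant gap $c$.

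Next I will combine positions using BI's AND-gadget: concatenate $P_1(v),\ldots,P_g(v)$ (with guard symbols) to obtain $Q(v)$, and similarly $Q'(v')$, each of length $g\cdot n^{o(1)}=n^{o(1)}$, chosen so that $\mathrm{ED}(Q(v),Q'(v'))=gD$ exactly when every position $i$ admits an orthogonal pair in $v[i]\times v'[i]$, and $\mathrm{ED}(Q(v),Q'(v'))\ge gD+c$ otherwise. Finally, I will wrap $\{Q(v)\}_{v\in S}$ and $\{Q'(v')\}_{v'\in T}$ inside the top-level BI OR-gadget over the $n\times n$ factored pairs, producing two strings $X,Y$ of total length $\tO(n\cdot g\cdot 2^b\cdot b)=\tO(n)$ whose edit distance is $gD$ iff some $(v,v')\in S\times T$ is a factored-orthogonal pair and is at least $gD+c$ otherwise. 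An algorithm computing Edit Distance in time $T(N)$ then solves \ckov[2]\ in $T(\tO(n))=\tO(T(n))$ time, as claimed.

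The principal obstacle is to verify that the three layers of BI-style gadgets (inner mini-$2$-OV, AND across $g$ positions, outer OR across $n\times n$ factored pairs) compose with a clean constant-size additive gap that survives to the final edit distance, rather than being attenuated into noise that blurs the ``factored-orthogonal'' and ``not factored-orthogonal'' cases. This reduces to carefully tracking the baseline and gap at each layer and picking separator and guard symbols so that they align uniquely under the optimal edit. Since BI's gadgets were designed precisely for nested AND/OR composition this should go through as bookkeeping, though some care is needed in using a $D$ vs.\ $D+c$ position gadget as a drop-in replacement for a $0$ vs.\ $1$ coordinate gadget in the outer BI layers, and in controlling the constant-size alphabet throughout the composition.
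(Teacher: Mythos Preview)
Your approach is plausible but differs from the paper's. The paper does \emph{not} build a direct Edit Distance reduction; instead it first reduces \ckov[2]\ to \emph{weighted} LCS (Theorem~\ref{thm:WLCSisWorstCaseHard}) using selector, parallel, and alignment gadgets, and then invokes a black-box equivalence between WLCS and Edit Distance (Lemma~\ref{lem:editDistandLCS}, citing~\cite{EditDistLCS}). The weights in WLCS are what make the multi-level composition clean: each layer can enforce its constraint by assigning a single new symbol a weight dominating everything below it, so the exact $C$ vs.\ $C-1$ gap is easy to certify at every level. Your direct Edit Distance route has to achieve the same effect purely through string lengths and padding, and the worry you flag---that the inner BI output may not behave as a drop-in ``exact $D$ vs.\ $D+c$'' coordinate gadget---is exactly the point where the bookkeeping becomes nontrivial. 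In particular, the raw BI construction gives a \emph{threshold} rather than two pinned-down values, so you would need an explicit normalization step (analogous to the paper's two-option selector with a dummy $0^{y-1}$ string in Lemma~\ref{lem:vecGadget}) before feeding the position gadgets into the AND layer.

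The paper in fact anticipates your route: in Section~\ref{subsec:lcs} and the Future Work section it remarks that the Bringmann--K\"unnemann alignment-gadget framework should extend to \ckov[]\ once one supplies a selector gadget for $\tO(1)$ many $\tO(1)$-length strings, but declares this out of scope. So your proposal is a reasonable alternative that the authors themselves believed would work; what it buys is a self-contained reduction avoiding the WLCS detour, at the cost of redoing the normalization/composition analysis for Edit Distance specifically. The paper's route buys modularity and a much shorter proof by outsourcing the Edit Distance step to a citation.
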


\subsubsection{Counting OV is Easy on Average}
As mentioned earlier in the introduction we show that counting orthogonal vectors over the uniform distribution is easy in the average-case. Let \dOVm~be the problem of solving orthogonal vectors on instances generated by sampling $n$ vectors iid from the distribution over $d$ bit vectors where every bit in the vector is sampled iid from the distribution that returns $1$ with probability $\mu$ and returns $0$ with probability $1-\mu$.

\begin{theorem}
\label{thm:avgCaseCountOVisEasy}
For all constant values of $\mu$ and all values of $d$ there exists constants $\epsilon>0$ and $\delta>0$ such that there is an algorithm for \dOVm~that runs in time $\tO(n^{2-\delta})$ with probability at least $1-n^{-\epsilon}$.
\end{theorem}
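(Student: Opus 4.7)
The plan is to branch on the size of $d$ relative to $\log n$. Let $\alpha := \log_2\bigl(1/(1-\mu^2)\bigr)$; because $\mu$ is a fixed constant in $(0,1)$, $\alpha$ is a positive constant depending only on $\mu$. Fix a threshold constant $C=C(\mu,\eps) \ge (2+\eps)/\alpha$. The algorithm reads $d$ and branches into one of two cases.

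\emph{Large $d$} ($d \ge C\log_2 n$). By independence of the coordinates, the expected number of orthogonal pairs satisfies
\[
\mathbb{E}[\#\mathrm{OV}] \;\le\; n^2(1-\mu^2)^{d} \;\le\; n^{2 - \alpha C} \;\le\; n^{-\eps}.
\]
Markov's inequality then gives $\Pr[\#\mathrm{OV} \ge 1] \le n^{-\eps}$, so the algorithm simply outputs $0$ in constant time and is correct with probability at least $1-n^{-\eps}$.

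\emph{Small $d$} ($d < C\log_2 n$). Compute the exact count. The baseline tool is the inclusion--exclusion identity
\[
\#\mathrm{OV} \;=\; \sum_{S \subseteq [d]}(-1)^{|S|}\,\alpha_U(S)\,\alpha_V(S),
\]
where $\alpha_U(S) := |\{i : S \subseteq \mathrm{supp}(u_i)\}|$ and $\alpha_V(S)$ is defined analogously for the second set. Both tables $\{\alpha_U(S)\}_S$ and $\{\alpha_V(S)\}_S$ can be filled in time $\tO(2^d)$ by a single zeta/M\"obius transform on the Boolean lattice $\{0,1\}^d$, after which the outer sum uses $\tO(2^d)$ multiplications. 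This already yields a deterministic exact algorithm in time $\tO(n^{2-\delta'})$ whenever $d \le (2-\delta')\log_2 n$. When $C>2$ (which occurs precisely when $\alpha<1$, i.e., $\mu<1/\sqrt{2}$), there is a remaining sub-range $(2-\delta')\log_2 n < d \le C\log_2 n$; for this we invoke a polynomial-method based counting algorithm for $\#\mathrm{OV}$ (a counting adaptation of Abboud--Williams--Yu / Chan--Williams), which runs deterministically in time $n^{2-1/\Theta(\log C)}$ on $d \le C\log_2 n$.

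Taking $\delta := \min\bigl(\delta',\,1/\Theta(\log C)\bigr)>0$ gives $\tO(n^{2-\delta})$ time with success probability at least $1-n^{-\eps}$ in both cases. The main obstacle is the intermediate dimension range $d = \Theta(\log n)$ when $\mu$ is small: there the Markov expected-count bound is too weak to justify outputting $0$, yet the elementary inclusion--exclusion cost $\tO(2^d)$ is already $\Omega(n^2)$. Bridging that gap is the technically delicate step and requires invoking the polynomial-method counting algorithm; the easy regimes at either end of $d$ are handled by the two transparent arguments above.
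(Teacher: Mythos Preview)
Your proposal is correct and follows essentially the same two-case strategy as the paper: for $d$ above a $\mu$-dependent constant times $\log n$, output $0$ and bound the failure probability by the expected number of orthogonal pairs; for $d$ below that threshold, invoke the known $\tO(n^{2-1/O(\log c)})$ counting algorithm for $\#\mathrm{OV}$ in dimension $d=c\log n$ (the paper cites this directly as a black box). Your inclusion--exclusion detour for very small $d$ is correct but unnecessary, since the polynomial-method counting algorithm already covers the entire small-$d$ range uniformly; the paper simply applies it in one shot without the intermediate case split.
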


\subsubsection{Counting to Detection for \texorpdfstring{\zkc}{Lg}}
Our worst-case to average-case reductions show hardness for counting problems. We mentioned earlier in the introduction that stronger cryptographic primitives have been built from detection problems than from counting problems. 
In this paper we show that in the sufficiently low error regime there is a counting to detection reduction for the zero-$k$-clique problem. Unfortunately, this does \emph{not} give a fine-grained one-way function from worst-case assumptions. However, it makes progress towards bridging the gap between the problems we can show hard from the worst-case and those we can build powerful cryptographic primitives from. 

\begin{definition}
An average case instance of \zkc~(\aczkc) with range $R$ takes as input a complete $k$-partite graph with $n$ nodes in each partition. Every edge has a weight chosen iid from $[0,R-1]$. A clique is considered a zero $k$ clique if the sum of the edges is zero mod $R$.
\end{definition}



\begin{theorem}
\label{thm:countToDecision}
Given a decision algorithm for \aczkc~that runs in time $O(n^{k-\eps})$ for some $\eps>0$ and succeeds with probability at least $1-n^{-\omega(1)}$, there is a counting algorithm that runs in $O(n^{k-\eps'})$ time for some $\eps'>0$ and succeeds with probability at least $1-n^{-\omega(1)}$, where $\omega(1)$ here means any function that is asymptotically larger than constant.  
\end{theorem}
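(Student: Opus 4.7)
The plan is to use the decision oracle together with random subsampling of vertices to estimate, and then exactly recover, the count $Z$ of zero-$k$-cliques in the input instance. The key self-reducibility observation is that if we independently keep each vertex in every part $V_i$ with probability $p$, the resulting sub-instance has edge weights iid uniform on $[0,R-1]$ with $pn$ vertices per part; it is distributed exactly like a fresh \aczkc~instance on size $pn$. Hence the decision oracle correctly answers whether the sub-instance contains a zero-$k$-clique with probability $1-(pn)^{-\omega(1)}$.

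Fix $p$ and note that each of the $Z$ zero-cliques of the original instance survives subsampling with probability $p^k$, so by inclusion-exclusion (Bonferroni) the probability that the sub-instance still contains at least one zero-clique equals $p^k Z\bigl(1\pm O(p^k Z)\bigr)$ as long as $p^kZ\le 1/2$. Running the oracle on $T$ independent subsamples and averaging gives an estimator $\hat q$ with $|\hat q-p^kZ|\le \eps_0$ with failure probability at most $\exp(-\Omega(T p^k \eps_0^{\,2}))$ by a Chernoff bound. Setting $\eps_0=p^k/2$ and rounding $\hat q/p^k$ recovers $Z$ exactly, and needs $T=\tilde O(Z/p^k)$ trials. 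Choosing $p$ as large as possible subject to $p^kZ\le 1/2$, i.e.\ $p=\Theta(Z^{-1/k})$, the total running time is
\[
T\cdot(pn)^{k-\eps}\;=\;\tilde O\bigl(Z^{\,1+\eps/k}\, n^{k-\eps}\bigr),
\]
which is $O(n^{k-\eps'})$ for some $\eps'>0$ whenever $Z$ is not too large.

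Since $Z$ is unknown, I would binary-search over $p$: start with $p$ so small that almost all oracle calls return ``no'', then double $p$ until the empirical detection rate becomes $\Theta(1)$, and lock in that $p$ for the high-accuracy estimation. The concentration of $Z$ around $n^k/R$ (Poisson-type tails, from near-independence of disjoint cliques) means $O(\log n)$ doublings suffice. A union bound over the $\mathrm{poly}(n)$ total oracle calls keeps the overall failure probability at $n^{-\omega(1)}$. The edge case $Z=0$ is handled first by a single oracle call on the full instance: a ``no'' answer lets us output $0$.

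The main obstacle will be the regime where $Z$ is large: the number of required trials grows linearly in $Z$, so once $Z=n^k/R$ exceeds roughly $n^{k\eps/(k+\eps)}$, the budget of $n^{k-\eps'}$ is blown. Hence the reduction will actually give $\eps'$ as a function of the range parameter $R$, with a genuine polynomial saving exactly when $R$ is not too small (equivalently, when the expected count is subpolynomially bounded). This parameter dependence matches the remark in the introduction that the reduction, while progress, does not yet give strong enough parameters to yield a worst-case to average-case reduction for decision \zkc.
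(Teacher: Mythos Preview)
Your estimation approach has a genuine gap. The Bonferroni identity $q = p^k Z\bigl(1\pm O(p^k Z)\bigr)$ you invoke presumes that any two zero-cliques are vertex-disjoint. When cliques $C_1,C_2$ share $j\ge 1$ vertices, $\Pr[C_1,C_2\text{ both survive}] = p^{2k-j}$ rather than $p^{2k}$, so the second-order correction can be of order $p^{k+1}$ per overlapping pair. This is a \emph{systematic} bias that more sampling does not remove: two instances with the same $Z$ but different overlap structure have different detection probabilities, so you cannot recover $Z$ from $q$ by rounding $\hat q/p^k$. In the average case, overlaps are negligible w.h.p.\ only when $R\gtrsim n^{k}$; for smaller $R$ they are typical. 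Combined with the large-$Z$ regime you already flag, the scheme does not cover the full range of $R$ that the theorem addresses. (The introduction's caveat about ``parameters not good enough'' refers to the stringent success probability $1-n^{-\omega(1)}$ required of the decision oracle, not to a restriction on $R$.)

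The paper takes a different route that sidesteps both issues. It first reduces decision to \emph{search} via the standard partitioning self-reduction. Then it splits on $R$: for small $R$ there is a direct $\tilde O(R^2 n^{\omega\lceil k/3\rceil})$ counting algorithm with no oracle needed; for $R\gtrsim n^k$ there are only $\mathrm{polylog}(n)$ zero-cliques w.h.p., and one can \emph{list} them all by repeatedly drawing a random sub-instance, calling search, and then brute-forcing every clique that shares a vertex with the one returned; for intermediate $R$ one partitions (rather than subsamples) each part into blocks of size $x=R^{1/k}$, so every sub-instance has range equal to its size to the $k$th power and falls into the previous case, with the counts summing exactly. Counting is thus done by explicit enumeration rather than statistical estimation, which is why vertex-sharing causes no trouble and why the result holds for every $R$.
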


\subsubsection{Worst-Case to Average-Case Reductions}
We define the notion of a \goodPoly~for the problem $P$ (a \gPol{$P$}). We define the properties of a \goodPoly~in Definition \ref{def:goodPoly}. Intuitively these properties are that the function must be low degree, count the output of the problem, and have well structured monomials.
We show that any problem $P$ that has a \gPol{$P$} is hard in its uniform average case in appendix \ref{sec:Framework}. We do this using techniques from Boix{-}Adser{\`{a}} et al \cite{UniformCliqueABB}. We use the \gPol{$\cdot$} framework to show uniform average-case hardness for our counting factored problems (in section \ref{sec:compressedVariants}). We give the framework theorem statement below.




\begin{theorem}
\label{thm:framework}
Let $\mu$ be a constant such that $0 < \mu <1$.
Let $P$ be a problem such that a function $f$ exists that is a \gPol{$P$}, and let $d$ be the degree of $f$. 
Let $A$ be an algorithm  that runs in time $T(n)$ such that when $\vec{I}$ is formed by $n$ bits each chosen iid from $Ber[\mu]$:
$$Pr[A(\vec{I}) = P(\vec{I})] \geq 1-1/\omega\left( \lg^d(n)\lg\lg^d(n) \right).$$
Then there is a randomized algorithm $B$ that runs in time $\tO(n + T(n))$ such that
for \emph{any} for $\vec{I} \in \{0,1\}^n$:
$$Pr[B(\vec{I}) = P(\vec{I})] \geq 1-O\left(2^{-\lg^2(n)}\right).$$
\end{theorem}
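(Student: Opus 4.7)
The plan is to abstract the worst-case to average-case self-reduction of Boix-Adser\`a et al.~\cite{UniformCliqueABB} so that it depends only on the existence of a \goodPoly~for $P$. Given a worst-case input $\vec{I} \in \{0,1\}^n$, we want to compute $P(\vec{I})$; since $f$ is a \gPol{$P$} of degree $d$, it suffices to compute $f(\vec{I})$. The high-level idea is to build a degree-$d$ univariate curve $\vec{I}(t)$ in the input space such that $\vec{I}(t^\star) = \vec{I}$ for a distinguished point $t^\star$, while for $d+1$ other parameter values $t_1,\ldots,t_{d+1}$ each vector $\vec{I}(t_i)$ has marginal distribution exactly $Ber[\mu]^n$. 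Since $f$ has degree $d$, the composition $g(t) := f(\vec{I}(t))$ is a univariate polynomial of degree at most $d$, so the values $g(t_1),\ldots,g(t_{d+1})$ determine $g$, and in particular $g(t^\star) = f(\vec{I})$, via Lagrange interpolation.

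The key technical step is constructing $\vec{I}(t)$ so that both requirements hold simultaneously. For each coordinate $j \in [n]$, I would sample $d$ fresh independent auxiliary bits $r^{(1)}_j,\ldots,r^{(d)}_j \sim Ber[\mu]$ and define the $j$-th coordinate $I_j(t)$ as a degree-$d$ polynomial in $t$ whose coefficients depend only on $I_j$ and $r^{(1)}_j,\ldots,r^{(d)}_j$. This is precisely the role of the ``well-structured monomials'' requirement in Definition~\ref{def:goodPoly}: it guarantees that (i) evaluating at $t^\star$ returns $I_j$, and (ii) evaluating at each $t_i$ gives an independent $Ber[\mu]$ sample, so that $\vec{I}(t_i)$ as a whole is distributed exactly like a fresh average-case instance drawn from the hard distribution of $A$. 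Although the vectors $\vec{I}(t_1),\ldots,\vec{I}(t_{d+1})$ are correlated across $i$, each one in isolation looks average-case to $A$, which is all a union bound needs.

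Given the curve, each call $A(\vec{I}(t_i))$ returns $g(t_i)$ with failure probability $1/\omega(\lg^d(n)\lg\lg^d(n))$ by hypothesis. A union bound over the $d+1$ queries shows that one run of the interpolation recovers $f(\vec{I})$ with failure probability $o(1/\lg^{d-1}(n))$, in particular well below $1/3$ for large $n$. To amplify, I would repeat the entire procedure $\Theta(\lg^2(n))$ times with independent auxiliary randomness and output the modal answer; a Chernoff bound on the independent trials yields failure probability $2^{-\Omega(\lg^2(n))} = O(2^{-\lg^2(n)})$. The running time is dominated by $\tO(\lg^2(n))$ calls to $A$ plus $\tO(n)$ work per run to build the curve and interpolate, giving total time $\tO(n + T(n))$ as claimed.

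The main obstacle is the construction of the per-coordinate polynomial $I_j(t)$ for general $\mu \in (0,1)$. Unlike the symmetric case $\mu = 1/2$ treated in much of the prior literature (where an XOR-based splitting over $\mathbb{F}_2$ preserves the marginal), an arbitrary $\mu$ forces the construction into a larger ring, and a naive affine combination of $I_j$ with auxiliary Bernoulli bits does not produce $Ber[\mu]$ marginals at $t_1,\ldots,t_{d+1}$. The heart of the proof is therefore to identify the minimal axioms a \goodPoly~must satisfy so that such a degree-$d$ curve exists, phrase them as Definition~\ref{def:goodPoly}, and then verify (in Section~\ref{sec:compressedVariants}) that the factored counting problems \#\ckfunc[]~and \#\cfkc[]~admit polynomials meeting these axioms, which is what produces Theorems~\ref{thm:worstToAvgckf} and \ref{thm:worstToAvgclique} as corollaries.
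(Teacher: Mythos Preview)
Your plan has a real gap precisely at the point you flag as the ``main obstacle'': you never construct the degree-$d$ curve whose evaluations at $t_1,\ldots,t_{d+1}$ are each marginally $Ber[\mu]^n$. You defer this to Definition~\ref{def:goodPoly}, but that definition contains nothing about Bernoulli-marginal curves; its only structural requirement is that $f$ be strongly $d$-partite. So as written your argument is incomplete: the axioms you intend to lean on are not the axioms the paper actually imposes, and there is no known way to build such a curve for arbitrary constant $\mu$ while keeping the composed polynomial degree exactly $d$.

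The paper sidesteps this obstacle with a two-layer reduction, and the strongly $d$-partite condition is used for a very different purpose than you imagine. Layer one is the standard finite-field random self-reduction (Lemma~\ref{lem:ballWCtoAC}, via CRT over small primes $p_i=\Theta(\lg n)$): this reduces worst-case $\{0,1\}^n$ to the average case over $\mathbb{F}_{p_i}^n$, where curves with uniform marginals are trivial. Layer two (Lemma~\ref{lem:UnifPtoACf}) is the new step: given $x\sim\mathbb{F}_{p_i}^n$, rejection-sample each coordinate as a $t=O(\lg n\lg\lg n)$-bit integer whose bits are close to iid $Ber[\mu]$ and which is congruent to $x[j]\pmod{p_i}$ (Corollary~\ref{cor:SampleButNoEps}). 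Now the strongly $d$-partite structure lets you expand $f(x')$ as a weighted sum $\sum_{r_1,\ldots,r_d\in[0,t-1]} 2^{r_1+\cdots+r_d}\, f(\hat{x}_{r_1,\ldots,r_d})$, where $\hat{x}_{r_1,\ldots,r_d}$ replaces each variable in partition $\ell$ by its $r_\ell$-th bit. Each $\hat{x}$ is a $\{0,1\}^n$ vector close in total variation to $Ber[\mu]^n$, so $A$ applies. This is also why the hypothesis demands error $1/\omega(\lg^d n\,\lg\lg^d n)$ rather than the $1/\omega(d)$ your $d{+}1$-query plan would need: the union bound is over $t^d$ calls, and the degree bound $d=o(\lg n/\lg\lg n)$ is exactly what makes $t^d=n^{o(1)}$.
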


Boix{-}Adser{\`{a}} et al show that counting $k$ cliques is as hard in Erd{\H{o}}s-R{\'{e}}nyi graphs as it is in the worst case. We use the \gPol{$\cdot$}~framework a second time to slightly generalize their result to show that counting any subgraph $H$ in an Erd{\H{o}}s-R{\'{e}}nyi graph is at least as hard as counting subgraphs $H$ in worst case $k$-partite graphs (in section \ref{sec:SubgraphsWithKNodes}). 

\begin{theorem}
\label{thm:ACSubgraphCountToWCSubgraphCount}
Let $H$ have $e$ edges and $k$ vertices where $k =o(\sqrt{\lg(n)})$. 
Let $A$ be an average-case algorithm for counting subgraphs $H$ in Erd{\H{o}}s-R{\'{e}}nyi graphs with edge probability $1/b$ which takes $T(n)$ time with probability $1-2^{-2k} \cdot b^{-k^2} \cdot (\lg(e)\lg\lg(e))^{-\omega(1)}$.

Then an algorithm exists to count subgraphs $H$ in $k$-partite graphs in time $\tilde{O}(T(n))$ with probability at least $1-\tO(2^{-\lg^2(n)})$. 
\end{theorem}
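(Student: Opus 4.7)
The plan is to apply the \goodPoly~framework (Theorem \ref{thm:framework}) with $\mu = 1/b$ to the natural polynomial that counts $H$-subgraphs. I would represent an $n$-vertex graph $G$ by its adjacency bit string of length $N = \binom{n}{2}$, one bit per potential edge, so that under the iid Bernoulli$(1/b)$ measure this string is exactly an Erd\H{o}s--R\'enyi graph with edge probability $1/b$, matching the distribution on which $A$ is assumed to succeed. Then define
\[
f(G) \;=\; \frac{1}{|\mathrm{Aut}(H)|} \sum_{\phi:\, V(H) \hookrightarrow V(G)} \prod_{\{u,v\}\in E(H)} G_{\phi(u)\phi(v)},
\]
a polynomial of degree $e$ in the edge variables that evaluates to the number of (non-induced) copies of $H$ in $G$, restricted to part-respecting injections when $G$ is $k$-partite.

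The bulk of the verification is to show that $f$ is a \goodPoly~per Definition \ref{def:goodPoly}: the degree is $e$; correctness holds by construction; and the well-structured-monomial condition holds because each monomial is multilinear of degree exactly $e$ and uniquely indexes a labeled placement of $H$, with bounded coefficient. Once this is established, Theorem \ref{thm:framework} directly yields a randomized $\tO(T(n))$-time worst-case algorithm that, on any input graph (in particular any $k$-partite graph), returns the correct $H$-subgraph count with failure probability $\tO(2^{-\lg^2 n})$, as required by the conclusion.

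The main obstacle is handling $\mu = 1/b$ rather than the symmetric $\mu = 1/2$ of \cite{UniformCliqueABB}: the self-correction inside Theorem \ref{thm:framework} must simulate evaluations of $f$ on inputs with certain edges externally fixed using samples from a Bernoulli$(1/b)$ oracle, and the asymmetry forces a signed edge-level inclusion-exclusion, the \emph{Inclusion-Edgesclusion} technique advertised in the introduction. This expansion produces up to $2^{O(k)} b^{O(k^2)}$ perturbed Bernoulli$(1/b)$ evaluations per worst-case query, each of which must succeed simultaneously; the factors $2^{-2k}$ and $b^{-k^2}$ in the hypothesized success probability are precisely the union-bound cost of these calls, and the regime $k = o(\sqrt{\lg n})$ keeps $b^{k^2} = N^{o(1)}$ and $e = O(k^2) = o(\lg n)$ so that the overall bound slots inside the $1 - 1/\omega(\lg^e N \lg\lg^e N)$ budget demanded by the framework. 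Carrying out this bookkeeping and checking that the signed combination of corrupted Bernoulli queries recovers $f$ on the desired worst-case bits is the technical heart of the argument.
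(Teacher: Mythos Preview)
Your proposal has a real gap: the polynomial you write down is \emph{not} a \gPol{$\cdot$} for the problem of counting $H$ in arbitrary $n$-vertex graphs, because it fails the strongly $d$-partite condition of Definition~\ref{def:goodPoly}. That condition demands a fixed partition of the edge variables into $e$ classes so that every monomial uses exactly one variable from each class. For the triangle polynomial on $K_n$ with $n\ge 5$ this would require a $3$-edge-coloring in which every triangle is rainbow; but then every vertex would need all incident edges differently colored, forcing degree $\le 3$. The framework (Theorem~\ref{thm:framework}) therefore does not apply directly to the general-graph counting polynomial, and your ``restricted to part-respecting injections when $G$ is $k$-partite'' cannot fix this while keeping the input distribution equal to full \erdosRen.

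Relatedly, you have misidentified what Inclusion-Edgesclusion does. The framework already handles arbitrary constant $\mu\in(0,1)$; there is no extra work needed for $\mu=1/b$. The paper instead applies the framework to the $H$-\emph{partite} problem \chpgh\ (Lemma~\ref{lem:gpolForSmallK}), where the edge variables are naturally partitioned by the $|E_H|$ edge slots of $H$, making the polynomial strongly $e$-partite. This yields a worst-case to average-case reduction whose average-case target is \uchpgh, i.e.\ counting $H$ in random $H$-partite graphs, not in full \erdosRen\ graphs. Inclusion-Edgesclusion (Lemmas~\ref{lem:inductiveStep1}--\ref{lem:ERtoKpartite}) is the purely combinatorial step that bridges \uchpgh\ to the \erdosRen\ oracle: one pads the $H$-partite instance with fresh random edges in the missing partitions, forms the $b^{\binom{k}{2}}$ graphs in $S_G$, and recovers the desired labeled count by an inclusion-exclusion over these together with counts of smaller labeled subgraphs. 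The $2^{-2k}$ and $b^{-k^2}$ in the hypothesis are the union-bound cost of the $2^k$ standard inclusion-exclusion calls and the $|S_G|\le b^{k^2}$ oracle calls inside that combinatorial reduction, not an artifact of handling $\mu\neq 1/2$.
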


\subsection{Organization of the Paper}
In the preliminaries section \ref{sec:Prelims} we give a formal definition of our factored problems. We also define the problems that we use throughout the paper, and we give an introduction of the average-case framework which is defined formally in Appendix \ref{sec:Framework}.
We show that the factored problems are hard, and give the worst-case to average-case reductions for the factored problems in section \ref{sec:compressedVariants}. 
In section \ref{sec:harderProblem}, we show that our factored problems can show hardness for many natural non-factored problems.
We use the same framework that gives average-case hardness for the factored problems to show that counting arbitrary subgraphs in random graphs is hard in section \ref{sec:SubgraphsWithKNodes}. We give a fast algorithm for counting OV over the uniform average-case in section \ref{sec:OV}. We give counting to detection reduction for average-case zero-$k$-clique with high probability in section  \ref{sec:countToDetect}. Finally, we list problems that seem like promising future work in section \ref{sec:FutureWork}.

We give the efficient algorithms for our factored problems and the problems that reduce from our factored problems in appendix \ref{app:algorithms}. We give the framework that generalizes the techniques of Boix{-}Adser{\`{a}} et al. in appendix \ref{sec:Framework}.

\section{Preliminaries}
\label{sec:Prelims}

We cover useful preliminaries for sections \ref{sec:compressedVariants} and \ref{sec:harderProblem} in this section. We include preliminaries for Section  \ref{sec:SubgraphsWithKNodes}, Appendix \ref{sec:Framework}, and proofs of algorithm running times in Appendix \ref{app:algorithms}.


\subsection{Hypotheses about Core Problems of Fine-Grained Complexity}

\begin{definition} [The $3$-SUM Hypothesis \cite{C3sum}]
	In the \emph{\ksum[]~problem}, we are given an unsorted list $L$ of $n$ values (over $\Z$ or $\R$) and want to determine if there are $a_1, \ldots, a_k \in L$ such that $\sum_{i=1}^k a_i = 0$. The counting version of \ksum[]~asks how many sets of $k$ numbers $a_1, \ldots, a_k \in L$ sum to zero. 
	
The \ksum[]~hypothesis states that that the \ksum[]~problem requires $n^{\lceil k/2 \rceil -o(1)}$ time \cite{C3sum}.

This is equivalent to saying no $n^{\lceil k/2 \rceil-\eps}$ time algorithm exists for \ksum[]~for constant $\eps>0$. 

	\label{def:ksum}
\end{definition}

\begin{definition}[APSP Hypothesis \cite{CAPSP}]
APSP takes as input a graph $G$ with $n$ nodes (vertices), $V$ and $m$ edges, $E$. These edges are given weights in $[-R,R]$ where $R = O(n^c)$ for some constant $c$. We must return the shortest path length for every pair of vertices $u,v \in V$. The length of a path is the sum of the edge weights for all edges on that path.

The APSP Hypothesis states that the APSP problem requires $n^{3-o(1)}$ time when $m=\Omega(n^2)$.
	
\label{def:apsp}
\end{definition}

\begin{definition}[Strong Exponential Time Hypothesis (SETH) \cite{cseth}]
Let $c_k$ be the smallest constant such that there is an algorithm for $k$-CNF SAT that runs in $O(2^{c_k n +o(n)})$ time.

SETH states that there is no constant $\eps>0$ such that $c_k \leq 1-\eps$ for all constant $k$.
\end{definition}

Intuitively SETH states that there is no constant $\epsilon>0$ such that there is a $O(2^{n(1-\epsilon)})$ time algorithm for $k$-CNF SAT for all constant values of $k$.

\begin{definition} [The $k$-OV Hypothesis \cite{ryanThesis}]
In the \emph{\kOV~problem}, we are given $k$ unsorted lists $L_1,\ldots, L_k$ of $n$ zero-one vectors of length $d$ as input. If there are $k$ vectors $v_1 \in L_1,\ldots,v_k \in L_k$ such that for $\forall i\in[1,d]~\exists j\in[1,k]$ such that $v_i[j]=0$ we call these $k$ vectors an orthogonal $k$-tuple. One should return true if there is an orthogonal $k$-tuple in the input. The counting version of \kOV~(\#\kOV) asks for the number of orthogonal $k$-tuples. 
	
The \kOV~hypothesis states that that the \kOV~problem requires $n^{k-o(1)}$ time \cite{ryanThesis}.

This is equivalent to saying no $O(n^{k-\eps})$ time algorithm exists for \kOV for constant $\eps>0$. 

	\label{def:kOV}
\end{definition}
\subsection{Graphs}

\begin{definition}
Let $H=(V_H,E_H)$ be a $k$-node graph with $V_H=\{x_1,\ldots,x_k\}$. 

An $H$-partite graph is a graph with $k$ partitions $V_1,\ldots,V_k$. This graph must only have edges between nodes $v_i \in V_i$ and $v_j \in V_j$ if e $(x_i,x_j)\in E_H$. (See Figure \ref{fig:Hpartite})
\end{definition}

\begin{figure}[ht]
\centering
\includegraphics[width=0.8\textwidth]{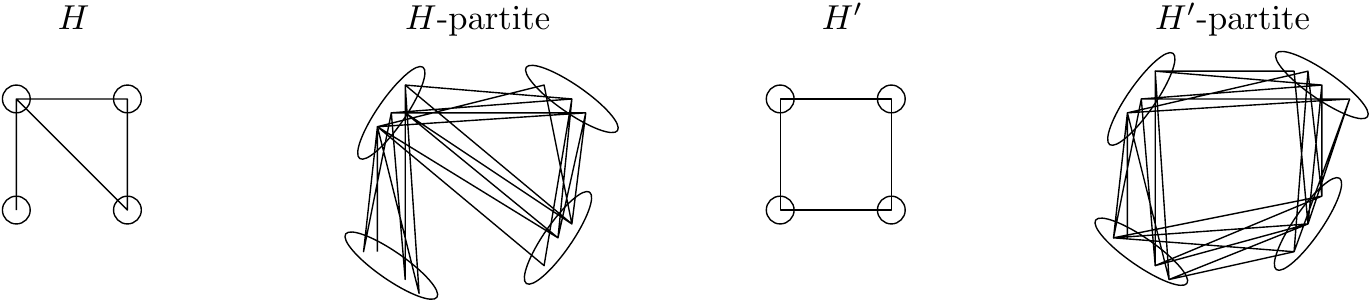}
\caption{An example of the corresponding $H$-partite graphs. }
\label{fig:Hpartite}
\end{figure}

\subsection{\GoodPolys}
We define the \goodPoly~for a problem $P$ ($\gPol{P}$). In Appendix \ref{sec:Framework} we provide a framework which shows that if a problem $P$ has a  $\gPol{P}$ then $P$ is hard over the uniform average case. The proof of this framework is a generalization of the proof in Boix et al. \cite{UniformCliqueABB}. We use this to show average-case hardness for counting versions of factored problems and counting subgraphs in sections \ref{sec:compressedVariants} and \ref{sec:SubgraphsWithKNodes} respectively. 

\begin{definition}
Let the polynomial  $f$ have $n$ inputs $x_1,\ldots, x_n$.
We say $f$ is \textit{strongly $d$-partite} if one can partition the inputs into $d$ sets $S_1,\ldots, S_d$ such that $f$ can be written as a sum of monomials $\sum_i x_{1,i}\cdots x_{d,i}$, where every variable $x_{j,i}$ is from the partition $S_j$. 
That is, if there is a monomial $x_{i_1}^{c_1} \cdots x_{i_k}^{c_k}$ in $f$ then it must be that $c_j =1 $ and for all $j \ne \ell$ if $x_{i_j} \in S_m$ then $x_{i_\ell} \notin S_m$.
\end{definition}

\begin{definition}
Let $P(\vec{I})$ be the correct output for problem $P$ given input $\vec{I}$. 
\end{definition}

\begin{definition}
Let $n$ be the input size of the problem $P$, let $P$ return an integer in the range $[0,p-1]$ where $p$ is a prime and $p < n^c$ for some constant $c$. 
A \emph{\goodPoly} for problem $P$  (\gPol{P})  is a polynomial $f$ over a prime finite field $F_{p}$ where:
\begin{itemize}
    \item If $\vec{I} = b_1,\ldots, b_n$, then 
    $f(b_1,\ldots, b_n) = f(\vec{I})=P(\vec{I})$ where $b_i$ maps to either a zero or a one in the prime finite field.
    \item The function $f$ has degree $d = o\left( \lg(n)/\lg\lg(n) \right)$.
    \item The function $f$ is strongly $d$-partite. 
\end{itemize}
\label{def:goodPoly}
\end{definition}

\subsection{Factored Problems}
\label{subsec:prelimsCompressed}

We introduce a more expressive extension of \kSUM, \kOV, \kXOR, and \zkc.
At a high level this extension takes every number or vector from the original problems and splits them up into $g =o(\lg(n)/\lg\lg(n))$ groups of numbers or vectors with bit representations of size $b =o(\lg(n))$. If the original numbers had length $\ell$, then $\ell \approx b\cdot g$. Then, we allow each group to contain multiple numbers or vectors. 

We start by giving a definition of \ckov[], then we give a small example of \ckov[2]. Next, we follow up with the analogously defined \cksum[],\ckxor[], and \czt. Finally, we give algorithms for these problems in the Appendix \ref{app:algorithms}.

\subsubsection{\texorpdfstring{F$k$-OV}{Lg}, Intuition and Examples}

\begin{definition}
A $(g,b)$-{\em factored vector} $v$ is defined by $g$ sets $(v[1],\ldots,v[g])$ where each $v[i]\subseteq \{0,1\}^b$ is a set of $b$-dimensional binary vectors.


For a set of vectors $\vec{w_1},\ldots,\vec{w_k}$ of the same dimension $d$,
let $isOrthogonalTuple(\vec{w_1},\ldots,\vec{w_k})$ return $1$ iff $\vec{w_1},\ldots,\vec{w_k}$ are orthogonal, i.e. iff $\sum_{a=1}^d \prod_{j=1}^k w_j[a]~= 0$, where $w_j[a]$ is the $a^{th}$ bit of the vector $w_j$.

Now we define a useful operator, $\circ$ for a set $\{Z_1,\ldots, Z_k\}$ where each $Z_i$ is a set of $d$-dimensional binary vectors as follows. 
$$\circ(Z_1,\ldots,Z_k) :=  \sum_{\vec{w_1}\in Z_1, \ldots, \vec{w_k} \in Z_k} isOrthogonalTuple(\vec{w_1},\ldots,\vec{w_k}).$$

Now, given $k$ $(g,b)$-factored vectors $v_1,\ldots,v_k$ the number of orthogonal vectors within those factored vectors is $\circledcirc(v_1,\ldots,v_k) := \Pi_{i=0}^{g-1}\circ(v_1[i],\ldots,v_k[i]).$

The input to \ckov[]~is $V_1,\ldots, V_k$, where each $V_j$ is a set of $n$ $(g,b)$-factored vectors, where $g = o(\lg(n)/\lg\lg(n))$ and $b= o(\lg(n))$.
The total number of orthogonal vectors in a given \ckov[]~instance is 
$$\sum_{v_1,\ldots,v_k \in V_1,\ldots, V_k}\circledcirc(v_1,\ldots,v_k).$$

The \ckov[]~problem
asks to determine whether $\sum_{v_1,\ldots,v_k \in V_1,\ldots, V_k}\circledcirc(v_1,\ldots,v_k)>0$.
\end{definition}

\paragraph{An Example: }
We give a small example bellow. Consider \ckov[2]~where $g=2$ and $b={3}$. We give an example of factored vectors $u$, $v$ and $w$:
\begin{align*}
   &u[0] = \{001,010\} \text{~~} & u[1] = \{001,010\}\\
   &v[0] = \{000,010,110\} \text{~~} & v[1] = \{110,101\}\\
   &w[0] = \{\} \text{~~} & w[1] = \{000,011,100,111\}
\end{align*}

First, note that $\circledcirc(w,u) = \circledcirc(w,v) = 0$ trivially because $w[0]$ is the empty set. Empty sets are valid in this factored representation, but, rather degenerate. 
Next, note that $\circledcirc(v,u)$ is $4 \cdot 2 = 8$. For $\circ(u[0],v[0])$ all of $(001,000), (001,010), (001, 110),$ and $ (010,000)$ are orthogonal.  For $\circ(u[1],v[1])$ both $(001,110),$ and $(010,101)$ are orthogonal.  

\paragraph{A Natural Interpretation:} We can generate a $\kOV$ instance by interpreting a factored vector as representing $|v_1|\cdot \ldots \cdot |v_k|$ vectors. For example $u$ in the above example would represent the following list of vectors:
$$ 001001, 001010, 010001, 010010.$$
As another example $v$ would represent the following list of vectors:
$$ 000110, 000101, 010110, 010101, 110110, 110101.$$
Finally, $W$ represents no vectors, because $w[0]$ is the empty set. 

However, the number of vectors that can be represented by a single factored vector that has a $g2^b$ sized representation is $2^{bg}$. While $g2^b$ is sub-polynomial, $2^{bg}$ can be super polynomial (e.g. if $b=g=\lg(n)^{3/2}$)!

\subsubsection{Definitions for \texorpdfstring{F$k$-$\mathfrak{f}$}{Lg}, \texorpdfstring{F$k$-SUM}{Lg}, \texorpdfstring{F$k$-XOR}{Lg}, and FZT}

\begin{definition}
Let $f:(\{0,1\}^b)^{\times k} \rightarrow \{0,1\}$ be a function taking $k$ $b$-dimensional binary vectors to $\{0,1\}$. We can view $f$ as a Boolean function.

Let us define an operator for $f$, $\circ_f$, that takes $k$ factored vectors $a_1,\ldots,a_k$ and computes the number of $k$-tuples of vectors, one in each $a_i$, that $f$ accepts:
$$\circ_f(a_1,\ldots,a_k) =  \sum_{\vec{w_1}\in a_1 \ldots \vec{w_k} \in a_k} f(\vec{w_1},\ldots,\vec{w_k}).$$

If $v$ is a $(g,b)$-factored vector let, for $i\in [g]$, $v[i]$ be the $i^{th}$ set of vectors in $v$.

Given $(g,b)$-factored vectors $v_1,\ldots,v_k$ the number of $k$-tuples of vectors accepted by $f$ within those factored vectors is $\circledcirc_f(v_1,\ldots,v_k) = \Pi_{i=0}^{g-1}\circ_f(v_1[i],\ldots,v_k[i]).$

For each $f$, we define a problem \ckfunc[].
The input to \ckfunc[]~is $k$ sets, $V_1,\ldots, V_k$, of $n$ $(g,b)$-factored vectors each, where $g = o(\lg(n)/\lg\lg(n))$ and $b= o(\lg(n))$.  

The total number $k$-tuples of vectors accepted by $f$ in a given \ckfunc[]~instance is 
$$\ckfunct(V_1,\ldots,V_k):=\sum_{v_1,\ldots,v_k \in V_1,\ldots, V_k}\circledcirc_f(v_1,\ldots,v_k).$$

The \ckfunc[]~problem returns true iff $\ckfunct(V_1,\ldots,V_k)>0$.
More generally, the counting version $\#$\ckfunc[]~of \ckfunc[]~asks to compute the quantity $\ckfunct(V_1,\ldots,V_k)$. 

\end{definition}

\begin{definition}
\ckxor[]~is the problem \ckfunc[]~where $f$ is $1$ if the componentwise XOR of the $k$ given vectors is the $0$ vector:
$$f(v_1, \ldots, v_k) =
\begin{cases}
1, & \text{if } v_1 \oplus \ldots \oplus v_k = \vec{0} \\
0, & \text{else}
\end{cases}.$$
\end{definition}

\begin{definition}
\cksum[]~is the problem \ckfunc[]~where $f$ that checks if the sum of the $k$ vectors is the $0$ vector:
$$f(v_1, \ldots, v_k) =
\begin{cases}
1, & \text{if } v_1 + \ldots + v_k = 0\\
0, & \text{else}
\end{cases}.$$
\end{definition}

\begin{definition} For an integer $k$, $\ell=\binom{k}{2}$ and a given function $f:\{0,1\}^{b\ell}\rightarrow \{0,1\}$, construed as taking $\ell$-tuples of $b$-length binary vectors to $\{0,1\}$,
let $\#$\cfkc~be the problem of counting cliques in a graph whose edges are labeled with factored vectors, where a clique is counted with multiplicity the number of $\ell$-tuples of vectors that $f$ accepts and that appear in the $\ell$ factored vectors labeling the edges.

More formally, we change the definition of the operation $\circledcirc_f(\cdot)$ to take as input $k$ vertices $v_1,\ldots,v_k$ of a given graph $G=(V,E)$ whose edges $(x,y)\in E$ are labeled by $(g,b)$-factored vectors $e_{x,y}$:
$$
\circledcirc'_{f}(v_1,\ldots,v_k) = isClique(v_1,\ldots,v_k)\cdot\Pi_{i=0}^{g-1}\circ_f(e_{v_1,v_2}[i],e_{v_1,v_3}[i], \ldots,e_{v_{k-1},v_k}[i]).
$$

Above $isClique(v_1,\ldots,v_k)$ outputs $1$ if $v_1,\ldots,v_k$ form a $k$-clique in $G$, and otherwise outputs $0$. 

We keep the definition of $\circ_f(\cdot)$ the same as before, but now its input is a list of $\ell$ sets of vectors that are the $i$th group of vectors of the factored vectors labeling the clique edges: $$\circ_f(e_1[i],\ldots,e_{\ell}[i]) =  \sum_{\vec{w_1}\in e_1[i]~\ldots \vec{w_\ell} \in e_{\ell}[i]} f(\vec{w_1},\ldots,\vec{w_\ell}).$$

Finally, we let $\#$\cfkc be the problem of computing 
$$\textrm{F}fk\textrm{C}(G):=\sum_{v_1,\ldots,v_k\in V} \circledcirc'_f(v_1,\ldots,v_k).$$

Here, unlike for $\#$F$\ell$-f, we are only counting the sums of factored vectors when those factored vectors are on a set of $\ell=\binom{k}{2}$ edges that form a $k$ clique. Let \cfkc be the detection version of the problem that returns $1$ if $\textrm{F}fk\textrm{C}(G)>0$ and $0$ otherwise.  
\end{definition}

\begin{definition}
Factored Zero $k$-Clique, \czkc~is the \cfkc~problem~where $f$ is the sum function for $\binom{k}{2}$ variables defined in the definition of \cksum[]~. 
\end{definition}

\begin{definition}
Factored Zero Triangle, \czt~is \czkc[3]. 
\end{definition}

\subsubsection{Hypotheses for Factored Problems}

First we will define the hypotheses for our factored list problems. 

In many lemma, theorem and definition statements we will use a structure where we put $(\#)$ before several problem or hypothesis names. This structure means that the statement is true for all non counting versions, or for all counting versions.  For example, in the first line below the two implies statements are:\\
``The  \ckov[]~hypothesis (i.e.\ckovh[]) states that \ckov[]~requires $n^{k-o(1)}$ time.''
 \\
 and ``The  \#\ckov[]~hypothesis (i.e.\#\ckovh[]) states that  \#\ckov[]~requires $n^{k-o(1)}$ time.''.

\begin{definition}
The  (\#)\ckov[]~hypothesis (i.e.(\#)\ckovh[]) states that  (\#) \ckov[]~requires $n^{k-o(1)}$ time. 

The  (\#)\cksum[]~hypothesis (i.e.(\#)\cksumh[]) states that  (\#) \cksum[]~requires $n^{k-o(1)}$ time.

The  (\#)\ckxor[]~hypothesis (i.e.(\#)\ckxorh[]) states that  (\#)\ckxor[]~requires $n^{k-o(1)}$ time.

The  (\#)\ckfunc[]~hypothesis (i.e.(\#)\ckfunch[]) states that  (\#)\ckfunc[]~requires $n^{k-o(1)}$ time. 
\end{definition}

Now we will define the hypotheses for our factored clique problems. 

\begin{definition}
The  (\#)\czkc~hypothesis (i.e.(\#)\czkch) states that (\#) \czkc~requires $n^{k-o(1)}$ time. 

The  (\#)\cfkc~hypothesis (i.e.(\#)\cfkch) states that  (\#) \cfkc~requires $n^{k-o(1)}$ time. 
\end{definition}


\subsubsection{Average-Case for Factored Problems}

We will separate the average-case distribution of factored problems into the normal case and a more-general parameterized case. 

\begin{definition}[More General Average-Case]
Let $S_{b,\mu}$ be a distribution over sets of vectors from $\{0,1\}^b$. A set drawn from $S_{b,\mu}$ includes every vector $w \in \{0,1\}^b$ with probability $\mu$. 

Let $D_{g,b,\mu}$ be a distribution over factored vectors $v$ where all $g$ sets of $v[i]$ are sampled iid from $S_{b,\mu}$.

The average-case distribution for \#\ckfunc[]$^{\mu}$ samples every factored vector in its input iid from $D_{g,b,\mu}$.

The average-case distribution for \#\cfkc[]$^{\mu}$ samples every factored vector in its input iid from $D_{g,b,\mu}$.
\label{def:muavgCasefactored}
\end{definition}

For the average-case we use in this paper we use $\mu=1/2$. We feel this is the most natural distribution for our problem. We will occasionally call this the ``uniform average-case'' to emphasize that every set $v[i]$ in every factored vector is chosen uniformly at random from all possible subsets of $\{0,1\}^b$.

\begin{definition}
The average-case distribution for \#\ckfunc[]~samples every factored vector in its input iid from $D_{g,b,1/2}$.

The average-case distribution for \#\cfkc[]~samples every factored vector in its input iid from $D_{g,b,1/2}$.
\label{def:avgCaseFactored}
\end{definition}

\subsection{Problems harder than factored problems}
Here we define problems that later are shown to be hard via reductions from the factored problems. We state the known results for each, and a simple algorithm for each is given in Appendix \ref{app:algorithms} that matches the lower bound we prove later.

\begin{definition}
The \textbf{Partitioned Matching Triangles (\NDMT)} problem takes as input $g=O(\log{n}/\log\log{n})$ disjoint $n$-node graphs with node colors, and asks if there is a triple of colors with a triangle of that color triple in each of the $g$ graphs. The counting version of the problem, \textit{$\#$PMT}, asks for the number of such $g$-tuples of colored triangle.
\end{definition}

This problem is very similar to the \textit{$\Delta$ Matching Triangles} problem defined in \cite{matchingTriangles}, where given an $n$-node graph $G$ with node colors, the problems asks if there is a triple color with $\Delta$ triangles of that color triple in $G$.

In \cite{matchingTriangles}, $3$SUM, APSP and SETH are reduced to $\Delta$ Matching Triangles where the instances produced can be represented as instances of Partitioned Matching Triangles instance for $g=\Delta$. So Partitioned Matching Triangles is hard from $3$SUM, APSP and SETH. A related problem to PMT is the \textit{node disjoint triangle packing} problem which asks to find a maximum size node-disjoint triangle packing in a given graph (see for example \cite{caprara2002packing}). PMT is a natural mix of the $\Delta$-matching-triangle and the node disjoint triangle packing problems.

\begin{definition}
\textbf{Node Labeled $k$-Color $st$ Connectivity (\kNLstC)} takes as input a directed graph $G$ with edge set $E$ and vertex set $V$, two special nodes $s$ and $t$, and a proper coloring of the vertices $c:~V\setminus \{s,t\}\rightarrow C$, where $C$ is a set of colors, so that the endpoints of every edge have different colors and $s$ and $t$ have all their neighbors colored distinctly.
The input graph $G$ is a layered graph, the vertex set $V$ is partitioned into $V_0, V_1,\ldots,V_\ell$, such that every directed edge goes from a node in set $V_i$ to a node in set $V_{i+1}$ for some $i\in \{0,\ldots,\ell-1\}$.
The \kNLstC~problem asks if there is a path from $s$ to $t$ that uses only $k$ colors of nodes (where $s$ and $t$ are not counted for colors).
\end{definition}

We also consider the problem of Counting \kNLstC~mod $R$, in which we ask for the total number of paths from $s$ to $t$ that use at most $k$ colors of nodes. We will generally use values of $R$ such that $\lg(R)$ is subpolynomial, as this allows us to represent the count with a subpolynomial number of bits.

\begin{definition}
The \textbf{$k$ Edge Labeled (directed/undirected) $st$ Connectivity (\kELstC[])} problem takes as input a directed acyclic graph $G=(V,E)$, two special vertices $s$ and $t$ and a coloring of the edges $c:E\rightarrow C$, where $C$ is a set of colors. 


\kELstC[]~asks, given this input can you pick $k$ colors such that there is a path from $s$ to $t$ using only edges that are colored by one of those $k$ colors? The counting version of \kELstC[], $\#$\kELstC[]~asks for the number of paths from $s$ to $t$ mod $R$ that use only $k$ colors, where $\lg(R) = n^{o(1)}$.
\end{definition}

\begin{definition}
The \textbf{Bounded Labeled Maximum Flow (BL-MF)} problem \cite{mfml} takes as input a directed, capacitated, and edge-labeled graph $G=(V,E)$ with a source node $s\in V$, a sink node $t\in V$, and a positive integer $k$, and asks if there is a maximum flow $x$ from $s$ to $t$ in $G$ such that the
total number of different labels corresponding to arcs $(i,j)\in E$ with non-zero flow is less than or equal to $k$. For fixed constant $k$, we refer to the problem as $k$L-MF.
\end{definition}

BL-MF is the decision version of the maximum flow with the \textit{minimum number of labels (MF-ML)} problem where we seek a maximum flow from $s$ to $t$ that uses the minimum number of labels. \cite{mfml} uses this problem to model the purification of water during the distribution process. They show that BL-MF is NP-complete. Let \textit{BL-MF*} be a slightly more restricted version of BL-MF where the number of edges of each label is $o(n)$ and the edges attached to the sink and source have a special label $l^*$. 
We show a lower bound of $O(n^{k-1})$ for $k$L-MF* (and thus $k$L-MF) for fixed $k$, and show that it has a matching algorithm as well.

\begin{definition}
The \textbf{Regular Expression Matching} problem \cite{regularexpression} takes as input a regular expression (pattern) $p$ of size $m$ and a sequence of symbols (text) $t$ of length $n$, and asks if there is a substring of $t$ that can be derived from $p$.
The counting version of the problem, \#Regular Expression Matching asks for the number of subset alignments of the pattern of the pattern in the text mod an integer $R$, where $R=n^{o(1)}$.
\end{definition}

A classic algorithm constructs and simulates a non-deterministic finite automaton corresponding to the expression, resulting in the rectangular $O(mn)$ running time. 

\begin{definition}
The (counting) \textbf{$k$-Longest Common Subsequence ((\#)\klcs)} problem (see for example \cite{LCSdef}) takes as input $k$  sequences $P_1, \ldots, P_k$ of
length $n$ over an alphabet $\Sigma$. Let $\ell$ be the length of the longest sequence $X$ such that $X$ appears in all of $P_1, \ldots, P_k$ (in the same order). 

\klcs~ asks for the value of $\ell$, while \#\klcs~asks to compute $\ell$ and also the total {\em number} of common subsequences of length $\ell$. 

More formally, define $C_{\#}(X_i)$ to be the total number $k$-tuples of $\ell$ sequence locations in each of our $k$ strings such that those locations map onto the sequence $X_i$ for all $k$ strings when $X_i$ is of length $\ell$. Let $X_1, X_2, \ldots X_j$ be all possible sequences of length $\ell$ that appear in all of $P_1, \ldots, P_k$. For the \#\klcs~problem we ask for the value of $\ell$ and the value of $\Pi_{i=1}^j C_{\#}(X_i)$.
\end{definition}



\begin{definition}
The \textbf{Edit Distance} problem (see for example \cite{BackursI15}) takes as input two sequences $x$ and $y$ over an alphabet $\Sigma$, and asks to output the edit distance $EDIT(x, y)$ which is equal to the minimum number of symbol insertions, symbol deletions or symbol substitutions needed to transform $x$ into $y$.
\end{definition}

\section{Factored Problems are Hard}
\label{sec:compressedVariants}
In this section we will first show the simple result that \ckov, \cksum[], \ckxor[], and \czt~are all at least as hard as their non-factored variants. 
Second, we will show a worst-case to average-case reduction from \ckov[]~to itself. We will also show the corresponding worst-case to average-case reductions for \cksum, \ckxor[], and \czt.
Third, we will show many worst-case reductions between these factored problems. Notably, \ckov[], \cksum[], and \ckxor[]~are all equivalent up to sub-polynomial factors. Additionally, \czt~is $n^{3-o(1)}$ hard from \ckov[3]~(and thus equivalently hard from \cksum[3], and \ckxor[3]). Notably this means that the \cfkch[3]~is implied by SETH, the $3$-SUM hypothesis, and the APSP hypothesis. 
Figure \ref{fig:reductions1} summarized the reductions of this section. 

Remember that algorithms for these problems are given in Appendix \ref{app:algorithms}. We give $O(n^{k+o(1)})$ algorithms for \ckov[], \cksum[], \czkc[]~and \ckxor[].

\begin{figure}[h]
  \centering
  \includegraphics[width=.6\linewidth]{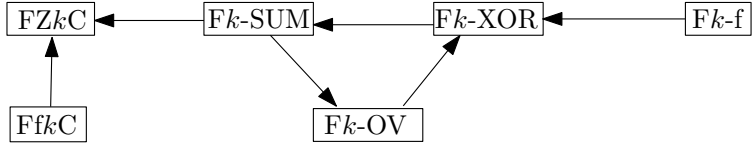}
  \caption{Map of the reductions}
  \label{fig:reductions1}
\end{figure}


\subsection{Factored Versions are Harder}
\label{subsec:compressedVsUncompressed}

Consider any problem where we have $k$ sets of vectors $V_1,\ldots,V_k$ and we want to compute the number of $k$-tuples of vectors $v_1\in V_1,\ldots, v_k\in V_k$ of length $l=bg$, such that $\hat{f}(v_1,\ldots,v_k)=1$, for some function $\hat{f}:\{0,1\}^{l\times k}\rightarrow \{0,1\}$. Call this problem $k$-$\hat{f}$. Note that \ksum, $k$XOR, and $k$OV are examples of such problems. We show that these problems can be solved using their factored version. 

For any vector $v$ of length $bg$ and for any $j=1,\ldots,g$, let $v^j$ be the subvector of $v$ that starts at the $(j-1)b+1$th bit and ends at the $jb$th bit. Suppose that there is a function $f:\{0,1\}^{b\times k}\rightarrow \{0,1\}$ such that $\hat{f}(v_1,\ldots,v_k)=1$ if and only if $\Pi_{j=1}^g f(v_1^j,\ldots,v_k^j)=1$. We call $f$ the factored version of $\hat{f}$. In other words, the function $\hat{f}$ can be applied more locally, on subvectors of length $g$. Note that for most problems including the problems we work with, this property holds. 

Now we can easily reduce $k$-$\hat{f}$ to \ckfunc[]. Let the resulting \ckfunc[]~instance be the following: For any $bg$ length vector $v$ in the $k$-$\hat{f}$ instance, let the factored version of $v$ have sets $v[j]=\{v^j\}$. By the property mentioned, it is straightforward to see that this instance of \ckfunc[]~is equivalent to $k$-$\hat{f}$.

For the \ksum~problem it is less obvious how to solve it with \cksum[]. For the \ksum~problem we can use the nearly linear hash functions to reduce all numbers to the range $[-n^k,n^k]$ \cite{Patrascu10}. Additionally, we can reduce \ksum~in the range $[-n^k,n^k]$ to a version where every number is instead a vector with $g$ numbers with $b$ bits each \cite{losingWeight}, where $g \cdot b = k \lg(n)$. We consider a sum of $k$ vectors to be a zero sum if the vectors sum to the zero vector. To ask if $k$ numbers of length $k\lg(n)$ sum to zero, we can instead ask if $k$ vectors sum to the zero vector where the vectors have length $g$ and the numbers are each $b$ bits. But, we need to guess the $g-1$ carries, this is a total of $O(k^g)$ guesses. If $g = o(\lg(n))$ this is sub-polynomial, and so we can go through all these guesses. This vectorized version of the \ksum~problem can be directly solved by \cksum[]~as mentioned above.

Similar to the approach for solving $k$-$\hat{f}$ problems using \ckfunc[], we can reduce Z$k$C to \czkc. Here the function $\hat{f}$ (which is the sum function) gets $\binom{k}{2}$ vectors as input instead of $k$ vectors, and these vectors should have the property that they form the edges of a $k$-clique in the graph. If $f$ is the factored version of $\hat{f}$, then we have that $\hat{f}(e_1,\ldots,e_{\binom{k}{2}})=1$ and $e_1,\ldots ,e_{\binom{k}{2}}$ are the edges of a $k$-clique if and only if $\Pi_{j=1}^g f(e_1^j,\ldots,e_{\binom{k}{2}}^j)=1$ and $e_1,\ldots e_{\binom{k}{2}}$ are edges of a $k$-clique. So again, the \czkc~instance that is equivalent to the Z$k$C instance is that for each vector $e$ of an edge, we let $e[j]=\{e^j\}$. 
So we have the following theorem. Note that our reductions form a one-to-one correspondence between each solution in a $k$-$\hat{f}$ instance and the corresponding \ckfunc[]~instance, and hence they work for the counging version of our problems as well.

\begin{reminder}{Theorem \ref{thm:unfactoredToFactored}}
In $O(n)$ time, one can reduce an instance of size $n$ of $k$-OV, $k$-XOR, $k$-SUM and Z$k$C to a single call to an instance of size $\tO(n)$  of \ckov[], \ckxor[], \cksum[]~and \czkc~respectively.
\end{reminder}
\begin{proof}
We can split a number or vector in the original problem into a vector with $g$ numbers of length $b$. This reduction step is trivial for \kov, \zkc, and \kxor. To be explicit:
\begin{itemize}
    \item \kov: Let $d = o(\lg^2(n))$ be the dimension. Given $k$ lists of $n$ vectors $L_1,\ldots,L_k$ we will produce $k$ lists of factored vectors $L_1',\ldots,L'_k$. Let $v[x:y]$ be a vector formed by taking all the bits from $x^{th}$ bit to the $y^{th}$ bit. For every vector $v_i \in L_i$ take the vector $v_i[bj +1, b(j+1)] = v_i^j$ where $b = \sqrt{d}$ and $j \in [1,g]$ where $g = \sqrt{d}$. We create a factored vector $v_i'$ from $v_i$ by creating a vector where the $j^{th}$ subset of $\{0,1\}^b$ is just a set with the single vector $v_i^j$.
    \item \kxor: There is a random reduction for \kxor~which shrinks the vectors to length $d=k\lg(n)$ bits. Given two lists of $n$ vectors $L_1$ and $L_2$ we will produce two lists of factored vectors $L_1'$ and $L_2'$. Let $v[x:y]$ be a vector formed by taking all the bits from $x^{th}$ bit to the $y^{th}$ bit. For every vector $v_i \in L_i$ take the vector $v_i[bj +1, b(j+1)] = v_i^j$ where $b = \sqrt{d}$ and $j \in [1,g]$ where $g = \sqrt{d}$. We create a factored vector $v_i'$ from $v_i$ by creating a vector where the $j^{th}$ subset of $\{0,1\}^b$ is just a set with the single vector $v_i^j$.
    \item \zkc~and \ksum: We reduce the range of numbers with linear hash functions to the range $[-n^k,n^k]$. We want to split each $k\lg(n)$ bit number into $g=\sqrt{k\lg(n)}$ numbers length $b = \sqrt{k\lg(n)}$ bits. If we guess all $g$ caries then we can replace the question of if $k$ (${\binom{k}{2}}$) numbers sum to zero to if $g$ sets of $k$ (${\binom{k}{2}}$) numbers each sum to zero (see \cite{losingWeight}). So, for all $O(k^{g})$ possible guesses of carries we form a factored vector for an edge by having $g$ subsets of $\{0,1\}^b$ that each have one number. The $j^{th}$ set has the $j^{th}$ number created by splitting the original number (possibly updated by our guess of the carry). 
\end{itemize}
\end{proof}


\subsection{Worst-Case to Average-Case Reductions for Factored Problems}
\label{subsec:WCtoACforCompressed}

We will use our framework from Section \ref{sec:Framework}
 to show that these factored versions are as hard on average as they are in the worst case. 
 
  \paragraph{\#\ckfunc[]}
 
We give a polynomial for \#\ckfunc[]. We represent every factored vector $v$ with $g2^b$ variables. The variable $x_{v[i]}(\vec{s})$ is a $1$ if $s \in v[i]$ and $0$ otherwise. We create such a variable for all $i \in [0,g-1]$ and all $s \in \{0,1\}^b$.
 Let $S_f$ be the subset of $k$ tuples of vectors in  $\{0,1\}^b$ such that $f(s_1,\ldots,s_k)=1$.
 
 $$f_{ckfunc}(\vec{X}) = \sum_{v_1 \in V_1, \ldots, v_k \in V_k} \left( \prod_{i \in [0,g-1]} \left( \sum_{(s_1,\ldots, s_k)\in S_f}  x_{v_1[i]}(\vec{s_1}) \cdots x_{v_k[i]}(\vec{s_k}) \right) \right).$$
 
 \begin{lemma}
 $f_{ckfunc}(\vec{X})$ is a \gPol{\#\ckfunc[]}
 (see Definition \ref{def:goodPoly})
 \label{lem:funcIsGoodPol}
 \end{lemma}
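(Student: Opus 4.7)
The plan is to verify the three bullets in Definition~\ref{def:goodPoly} in order: correctness on $0/1$ inputs, low degree, and the strongly $d$-partite structure.

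For correctness, I would fix an input $\vec{I}$ to \#\ckfunc[]~and map it to variable assignments via $x_{v[i]}(\vec{s}) = 1$ iff $\vec{s} \in v[i]$. Then I would read off the inner sum: for fixed $v_1,\ldots,v_k$ and fixed coordinate $i$,
\[
\sum_{(\vec{s_1},\ldots,\vec{s_k})\in S_f} x_{v_1[i]}(\vec{s_1}) \cdots x_{v_k[i]}(\vec{s_k})
\]
is a sum of products of indicators, and the $(\vec{s_1},\ldots,\vec{s_k})$-summand contributes $1$ precisely when each $\vec{s_j} \in v_j[i]$ and $f$ accepts the tuple. That is the definition of $\circ_f(v_1[i],\ldots,v_k[i])$. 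Taking the product over $i \in \{0,\ldots,g-1\}$ rebuilds $\circledcirc_f(v_1,\ldots,v_k)$ by definition, and the outer sum over $(v_1,\ldots,v_k)\in V_1\times\cdots\times V_k$ yields $\ckfunct(V_1,\ldots,V_k)$, so $f_{ckfunc}(\vec{I}) = \#\ckfunc[](\vec{I})$ in $F_p$ for any prime $p > \#\ckfunc[](\vec{I})$ (and, in the modular version of the problem, directly mod~$p$).

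For degree and the partite structure, I would fully expand the product of $g$ inner sums. Each resulting monomial has the shape
\[
\prod_{i=0}^{g-1} x_{v_1[i]}(\vec{s}_{1,i}) \, x_{v_2[i]}(\vec{s}_{2,i}) \cdots x_{v_k[i]}(\vec{s}_{k,i})
\]
for some choice of $v_j \in V_j$ and $\vec{s}_{j,i} \in \{0,1\}^b$. The number of factors is exactly $kg$, so $\deg(f_{ckfunc}) \le kg$. Since $k$ is a constant and $g = o(\lg n / \lg\lg n)$, this is $o(\lg n / \lg\lg n)$ as required.

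For strong $d$-partiteness with $d = kg$, I would take the partition indexed by $(j,i)\in[k]\times[g]$, placing into block $S_{(j,i)}$ every variable $x_{v[i]}(\vec{s})$ with $v \in V_j$ and $\vec{s}\in\{0,1\}^b$. The monomial displayed above then contributes exactly one variable to each block $(j,i)$, namely $x_{v_j[i]}(\vec{s}_{j,i})$, so each monomial picks one variable from each of the $d$ blocks and no variable appears with exponent greater than one. This matches the Definition's requirement that every monomial be a product $x_{1,i}\cdots x_{d,i}$ with one factor per partition and all exponents equal to $1$. The main thing to be careful about is that the variable $x_{v_j[i]}(\vec{s}_{j,i})$ lives in the block determined by which input list $V_j$ the factored vector $v_j$ comes from (not by $v_j$ itself), which is unambiguous because $V_1,\ldots,V_k$ are disjoint by problem convention; this is the only subtlety, and once it is stated explicitly all three properties follow.
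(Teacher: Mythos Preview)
Your proposal is correct and follows essentially the same approach as the paper: verify that the inner sum equals $\circ_f$, the product equals $\circledcirc_f$, and the outer sum gives the full count; observe the degree is $kg = o(\lg n/\lg\lg n)$; and partition the variables by the pair (list index $j$, coordinate $i$) to witness strong $kg$-partiteness. Your write-up is slightly more explicit about the partition blocks and the disjointness of the $V_j$, but the argument is the same.
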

 \begin{proof}
 We will show that each property of a good polynomial is met by $f_{ckfunc}$. 
 \begin{itemize}
    
    \item \textit{If $\vec{I} = b_1,\ldots, b_n$, then 
    $f_{ckfunc}(b_1,\ldots, b_n) = f_{ckfunc}(\vec{I})=P(\vec{I})$ where $b_i$ maps to either a zero or a one in the prime finite field}: $f_{ckfunc}$ and \#\ckfunc[]~count the same thing. Note that the inner summation is computing $\circ_f$, the product is computing $\circledcirc_f$. Thus the overall sum is computing \#\ckfunc[].
    \item \textit{The function $f_{ckfunc}$ has degree $d = o\left( \lg(n)/\lg\lg(n) \right)$:} $f_{ckfunc}$ has degree $kg$ which, when $k$ is constant is $o(\lg(n)/\lg\lg(n))$ by the definition of $g$.
    \item \textit{The function $f_{ckfunc}$ is strongly $d$-partite:} Every monomial is formed by exactly one copy of a $x_{v_j[i]}(\vec{s})$ variable for every $j \in [0,g-1]$ and $i\in[1,k]$. These form our partitions and make the function strongly $kg$ partite. 
\end{itemize}
\end{proof}
 
 Now we can say that the average case version of \#\ckfunc[]~is as hard as the worst case version. 
 
\begin{reminder}{Theorem \ref{thm:worstToAvgckf}}
Let $\mu$ be a constant such that $0 < \mu <1$. 
If average-case \#\ckfunc[]$^{\mu}$~(see Definition \ref{def:muavgCasefactored}) can be solved in time $T(n)$ with probability $1-1/(lg(n)^{kg}\lg\lg(n)^{kg})$  then worst-case \#\ckfunc[]~can be solved in time $\tO(T(n))$ \footnote{Note that given that $g= o(\lg(n)/\lg\lg(n)) $ then a probability of $1-1/n^{\epsilon}$ will be high enough for any $\epsilon>0$.}.

When $\mu =1/2$ average-case \#\ckfunc[]$^{\mu}$ is average-case \#\ckfunc[]. 
\end{reminder}
\begin{proof}
This follows from Theorem \ref{thm:framework} and Lemma \ref{lem:funcIsGoodPol}. The dimension of the \gPol{\#\ckfunc[]} is $kg$. By our construction of $f_{ckfunc}$ every set has every possible string as a variable. By the construction of the framework from theorem \ref{thm:framework}, every bit will be selected as a $1$ uniformly at random with probability $\mu$. So, given the construction of $f_{ckfunc}$ every set will have every possible string included with probability $\mu$. So the distribution induced by our framework matches our defined average-case distribution. 

Finally by definition \ref{def:avgCaseFactored} when $\mu =1/2$ average-case \#\ckfunc[]$^{\mu}$ is average-case \#\ckfunc[]. 
\end{proof}
 
 
\begin{reminder}{Corollary \ref{cor:worstToAvg}}
By Theorem \ref{thm:worstToAvgckf}, we have the following result: 

If average-case \#\ckov[]~can be solved in time $T(n)$ with probability $1-1/(lg(n)^{gk}\lg\lg(n)^{gk})$  then worst-case \#\ckov[]~ can be solved in time $\tO(T(n))$ \footnotemark[\value{footnote}].

If average-case \# \cksum[]~can be solved in time $T(n)$ with probability $1-1/(lg(n)^{gk}\lg\lg(n)^{gk})$  then worst-case \# \cksum[]~can be solved in time $\tO(T(n))$ \footnotemark[\value{footnote}].

If average-case \# \ckxor[]~can be solved in time $T(n)$ with probability $1-1/(lg(n)^{gk}\lg\lg(n)^{gk})$  then worst-case \# \ckxor[]~can be solved in time $\tO(T(n))$ \footnotemark[\value{footnote}].

\end{reminder}

  \paragraph{\#\cfkc[]}
  
Now we will give the \#\cfkc[]~polynomial. Once again we will represent every factored vector $v$ with $g2^b$ variables. The variable $x_{v[i]}(\vec{s})$ is a $1$ if $s \in v[i]$ and $0$ otherwise. We create such a variable for all $i \in [0,g-1]$ and all $s \in \{0,1\}^b$.
Once again, let $S_f$ be the subset of $\binom{k}{2}$ tuples of vectors in  $\{0,1\}^b$ such that $f(s_1,\ldots,s_{\binom{k}{2}})=1$.
Finally for convenience let $E_1,\ldots, E_{\binom{k}{2}}$ be the $\binom{k}{2}$ partitions of edges in the input of \cfkc[]~and let $\ell = \binom{k}{2}$ to make notation easier to read. Let $S_{E}$ be the set of all $\ell$ tuples of edges $e_1,\ldots, e_\ell$ that form a clique. In an abuse of notation we will also use $e_i$ to represent the factored vector associated with the edge $e_i$.
 
 $$f_{ffkc}(\vec{X}) = \sum_{e_1, \ldots, e_\ell \in S_E} \left( \prod_{i \in [0,g-1]} \left( \sum_{(s_1,\ldots, s_\ell)\in S_f}  x_{e_1[i]}(\vec{s_1}) \cdots x_{e_\ell[i]}(\vec{s_\ell}) \right) \right).$$
  
\begin{lemma}
 $f_{ffkc}(\vec{X})$ is a \gPol{\#\cfkc[]} (see Definition \ref{def:goodPoly}). 
 \label{lem:funcIsGoodPolclique}
 \end{lemma}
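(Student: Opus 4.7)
The plan is to mirror the proof of Lemma \ref{lem:funcIsGoodPol}, verifying each of the three properties from Definition \ref{def:goodPoly} in turn. The polynomial $f_{ffkc}$ is structurally identical to $f_{ckfunc}$ except that the outer sum now ranges over edge $\ell$-tuples forming a $k$-clique (captured by $S_E$) rather than over $k$-tuples of factored vectors from $k$ input sets, and each factored vector is attached to an edge rather than to a vertex position.

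First I would check the correctness property, namely that evaluating $f_{ffkc}$ on an indicator assignment $\vec{I}$ of the input produces $\textrm{F}fk\textrm{C}(G)$. On such a Boolean assignment, the variable $x_{e[i]}(\vec{s})$ is $1$ precisely when $\vec{s} \in e[i]$, so for any fixed $\ell$-tuple $e_1,\ldots,e_\ell$ the inner sum $\sum_{(s_1,\ldots,s_\ell)\in S_f} x_{e_1[i]}(\vec{s_1})\cdots x_{e_\ell[i]}(\vec{s_\ell})$ counts exactly the $\ell$-tuples of vectors drawn from $e_1[i],\ldots,e_\ell[i]$ that $f$ accepts, which is $\circ_f(e_1[i],\ldots,e_\ell[i])$. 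Taking the product over $i\in[0,g-1]$ yields $\Pi_{i=0}^{g-1} \circ_f(e_1[i],\ldots,e_\ell[i])$, and since the outer sum is restricted to $\ell$-tuples that actually form a clique (the $isClique$ factor is built into $S_E$), this coincides with $\circledcirc'_f(v_1,\ldots,v_k)$ summed over all $k$-tuples $v_1,\ldots,v_k$, which is $\textrm{F}fk\textrm{C}(G)$ by definition.

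Next I would bound the degree. Every monomial arising after expanding the product over $i$ and the inner sums is of the form $\prod_{i\in[0,g-1]} x_{e_1[i]}(\vec{s}^{(i)}_1)\cdots x_{e_\ell[i]}(\vec{s}^{(i)}_\ell)$, which has degree exactly $\ell g = \binom{k}{2} g$. Since $k$ is a fixed constant and $g = o(\lg n / \lg\lg n)$, this remains $o(\lg n / \lg\lg n)$, meeting the degree requirement.

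Finally I would verify strong $d$-partiteness, which I expect to be the most delicate bookkeeping step. I would partition the variable set into $\ell g$ classes indexed by pairs $(j,i)\in[1,\ell]\times[0,g-1]$, placing every variable of the form $x_{e_j[i]}(\vec{s})$ (for all $e_j \in E_j$ and all $\vec{s}\in\{0,1\}^b$) into the class $(j,i)$. Inspecting a generic monomial as above, it contains exactly one variable from each class $(j,i)$, so each monomial respects the partition in the sense required by the definition. With all three bullets verified, the lemma follows by the same line of reasoning as Lemma \ref{lem:funcIsGoodPol}; the one point to double-check is that our indexing of edges via $E_1,\ldots,E_\ell$ is consistent with the clique structure encoded by $S_E$, i.e.\ that the variables in distinct partitions really correspond to vectors on distinct edges of the same clique rather than colliding on a single edge.
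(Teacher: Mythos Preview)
Your proposal is correct and follows essentially the same approach as the paper: both verify the three properties of Definition \ref{def:goodPoly} by observing that the inner sum computes $\circ_f$, the product over $i$ yields $\circledcirc'_f$ (with the clique constraint absorbed into $S_E$), the degree is $\ell g = \binom{k}{2}g = o(\lg n/\lg\lg n)$, and the $\ell g$ partitions indexed by (edge-slot, group-index) witness strong $d$-partiteness. Your version is in fact slightly more explicit than the paper's, and your closing caveat about the edge indexing is a harmless extra sanity check.
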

 \begin{proof}
 We will show that each property of a good polynomial is met by $f_{ffkc}$. 
 \begin{itemize}
    
    \item \textit{If $\vec{I} = b_1,\ldots, b_n$, then 
    $f_{ffk}(b_1,\ldots, b_n) = f_{ffkc}(\vec{I})=P(\vec{I})$ where $b_i$ maps to either a zero or a one in the prime finite field}: $f_{ffkc}$ and \#\ckfunc[]~count the same thing. Note that the inner summation is computing $\circ_f$, the product is computing $\circledcirc'_f$. Thus the overall sum is computing \#\ckfunc[].
    \item \textit{The function $f_{ffkc}$ has degree $d = o\left( \lg(n)/\lg\lg(n) \right)$:} $f_{ffkc}$ has degree $\ell g< k^2g$ which, when $k$ is constant is $o(\lg(n)/\lg\lg(n))$ by the definition of $g$.
    \item \textit{The function $f_{ffkc}$ is strongly $d$-partite:} Every monomial is formed by exactly one copy of a $x_{e_j[i]}(\vec{s})$ variable for every $j \in [0,g-1]$ and $i\in[1,\ell]$. These form our partitions and make the function strongly $\ell g$ partite. 
\end{itemize}

\end{proof}

\begin{reminder}{Theorem \ref{thm:worstToAvgclique}}
Let $\mu$ be a constant and $0<\mu<1$.
If average-case \#\cfkc[]$^{\mu}$~(see Definition \ref{def:muavgCasefactored}) can be solved in time $T(n)$ with probability $1-1/(lg(n)^{k^2g}\lg\lg(n)^{k^2g})$  then worst-case \#\cfkc[]~can be solved in time $\tO(T(n))$ \footnotemark[\value{footnote}].

When $\mu=1/2$ average-case \#\cfkc[]$^{\mu}$ is average-case \#\cfkc[].
\end{reminder}
\begin{proof}
This follows from Theorem \ref{thm:framework} and Lemma \ref{lem:funcIsGoodPolclique}. The dimension of the \gPol{\#\cfkc[]} is $\binom{k}{2}g<k^2g$. By our construction of $f_{ffkc}$ every set has every possible string as a variable. By the construction of the framework from theorem \ref{thm:framework}, every bit will be selected as a $1$ uniformly at random with probability $\mu$. So, given the construction of $f_{ffkc}$ every set will have every possible string included with probability $\mu$. So the distribution induced by our framework matches our defined average-case distribution. 

Finally by definition \ref{def:avgCaseFactored} when $\mu =1/2$ average-case \#\cfkc[]$^{mu}$ is average-case \#\cfkc[]. 
\end{proof}
 
 
\begin{reminder}{Corollary \ref{cor:worstToAvgClique}}
By Theorem \ref{thm:worstToAvgclique}, we have the following result: 

If average-case \# \czkc[]~can be solved in time $T(n)$ with probability $1-1/(lg(n)^{k^2g}\lg\lg(n)^{k^2g})$ then worst-case \#\czkc[]~can be solved in time $\tO(T(n))$ \footnotemark[\value{footnote}] .
\end{reminder} 








\paragraph{Reductions to Counting Factored Problems Imply Average Case Hardness Over Some Distribution}

Assume a problem $\# P$ exists such that an algorithm for it running in $T(n)^{1-\epsilon}$ implies a violation of \#\ckfunch[]~or \#\cfkch[]. We will evince examples of such problems in Section \ref{sec:harderProblem}. Further imagine that there is an explicit reduction that turns instances of \#\ckfunc[]~or \#\cfkc[]~into instances of $\# P$. In that case we can describe a distribution $D$ over which problem $\# P$ is $T(n)^{1-o(1)}$ hard on average from \#\ckfunch[]~or \#\cfkch[]. We can generate this distribution $D$ by taking the uniform distribution (the average-case distribution) over \#\ckfunc[]~or \#\cfkc[]~and running this distribution through our reduction. 

Thus, reductions from problems $\# P$ to \#\ckfunc[]~or \#\cfkc[]~give explicit hard average-case distributions for problems $\# P$.

\subsection{Factoring is Expressive: Worst-Case Reductions}
\label{subsec:WCReductionsCompressed}

Our factored versions of these problems are very expressive. This allows us to show hardness from these factored problems.

\subsubsection{Completeness}

We will now show that \ckov[], \ckxor[], and \cksum[]~are all complete for \ckfunc[]~for all functions $\mathfrak{f}$. We do this by showing \ckxor[]~solves \ckfunc[]. Then, the equivalence between \ckov[], \ckxor[], and \cksum[]~implies they are all complete for \ckfunc[].
We will also show, using similar techniques, that \czkc[]~is complete for \cfkc[]~for all functions $\mathfrak{f}$. 

This is a reminder of Theorem \ref{thm:ckxorIscomplete}, however, we add an additional statement to the theorem. We give an explicit function that we use to build this reduction. 

\begin{reminder}{Theorem \ref{thm:ckxorIscomplete}}
If we can solve  \#\ckxor[]~with $g$ sets of $k^3b$ length vectors in time $T(n)$ then we can solve \#\ckfunc[]~instance with $g$ sets of $b$ length vectors in time $T(n)+\tilde{O}(n)$. 

Additionally, 
Let $v_1, \ldots, v_k$ be $k$ factored vectors each with $g$ subsets of $\{0,1\}^b$. Let $f_{XOR}$ be the function that returns $1$ if the $k$ input vectors xor to zero and otherwise returns $0$. 
There is a function $\gamma_{\mathfrak{f} \rightarrow XOR,k}(\cdot, \cdot)$ that takes as input a factored vector with $g$ subsets of $\{0,1\}^b$ and an index and returns a new factored vector  with $g$ subsets of $\{0,1\}^{O(b)}$. This function $\gamma_{\mathfrak{f} \rightarrow XOR,k}$ 
runs in $\tO(2^b \cdot g)$ time for each vector and:
$$\circledcirc_{f_{XOR}}(\gamma_{\mathfrak{f} \rightarrow XOR,k}(v_1, 1), \ldots, \gamma_{\mathfrak{f} \rightarrow XOR,k}(v_k,k)) = \circledcirc_{\mathfrak{f}}(v_1, \ldots, v_k).$$
\end{reminder}
\begin{proof}
Consider a \ckfunc[]~instance and let $v_i$ be a factored vector from the $i^{th}$ list of it.
Given the factored vector $v_i$ from \ckfunc[]~we will describe how to make the factored vector $v_i'$ for our \ckxor[]~instance. 
This transformation will be $\gamma_{\mathfrak{f} \rightarrow XOR,k}(\cdot, \cdot)$. We will describe the transformation for $\gamma_{\mathfrak{f} \rightarrow XOR,k}(v_i, i)$.
We do this by doing the same transformation on each set $v_i[j]$ where $j\in[1,g]$. We transform each set by performing the same transformation on every vector $u_i \in v_i[j]$. We describe this transformation in the next paragraph. 

Given a vector $u_i$ of length $b$ we produce at most $2^{b (k-1)}$ new vectors of length $k^3b$. These vectors represent all possible $k$ tuples which include $u_i$ as as the $i^{th}$ vector. We want to include a $k$ tuple vector only if $f$ of that $k$ tuple evaluates to $1$. And we want our new long vectors to to return true if we are comparing vectors in \ckfunc[]~instance that do indeed have exactly that $k$ tuple of vectors. 

More formally, let one possible $k$ tuple that includes $u_i$ as the $i^{th}$ vector be $(w_1, \ldots, w_{i-1},u_i,w_{i+1} \ldots , w_k)$.
If 
$f(w_1, \ldots, w_{i-1},u_i,w_{i+1} \ldots , w_k)=1,$
then we create a $k^3b$-length vector for $u_i$ with this $kb$ length vector by considering every possible tuple $(x,y,z)$ where $x,y,z\in[1,k]$: We set aside $b$ bits for every possible tuple (in sorted order by the tuple). We want to use these to check if the $x^{th}$ vector and $y^{th}$ vector agree about what tuple they are considering as follows:
\begin{itemize}
\item If the tuple is $(x,x,z)$ we write the all zeros string, for the rest of the cases assume the first two indices are non-equal.
\item If the tuple is $(x,i,z)$ we write $w_z$ (or $u_i$ if $z=i$) in the $b$ bits.
\item If the tuple is $(i,y,z)$ we write $w_z$ (or $u_i$ if $z=i$) in the $b$ bits.
\item If the tuple is $(x,y,z)$ and $x,y \ne i$ then we write the all zeros vector of length $b$.
\end{itemize}

If we are comparing $k$ of these new vectors each of which representing the same tuple $(w_1, \ldots , w_k)$ then the new vector xors to zero. Consider a given group of $b$ bits that corresponds to $(x,y,z)$. Only two of our vectors have non-zero entries here, the $x^{th}$ and $y^{th}$ vectors. Both wrote down $w_z$ if they were representing the same $k$-tuple. A vector xored to itself produces the zero vector, so we get the zero vector. 

If we are comparing $k$ of these new vectors and not all of the vectors agree about what tuples they are comparing then we will not xor to the zero vector. Say the $x^{th}$ and $y^{th}$ vectors disagree about what the $z^{th}$ element of the tuple is. Then the $b$ bits corresponding to $(x,y,z)$ will still have only two vectors with non-zero contributions. We will be xoring two vectors which are not equal, this will xor to some non-zero string. 

Thus there is a one-to-one correspondence between $k$-tuples of vectors that evaluate to one in the \ckfunc[]~version and $k$-tuples of vectors that xor to the zero vector in the new \ckxor[]~version. Thus, the counts both give as output are equal. 

As a result, we can transform an instance of \ckfunc[]~with $g$ groups of $b$ length vectors into an instance of \ckxor[]~with $g$ groups of $k^3 b$ length vectors in time $O(n\cdot g \cdot 2^b \cdot 2^{(k-1)b} \cdot k^3b )$. We restrict $k$ to be constant and $b= o(\lg(n))$, thus the time for the conversion is $\tilde{O}(n)$. 
In the new version the count of the number of \ckxor[]~vectors that xor to zero is the same as the count of the number of \ckfunc[]~vectors that evaluate to $1$ on the function. So, a $T(n)$ algorithm for \ckxor[]~with $g$ groups and $k^3b$ bits implies a $T(n) +\tilde{O}(n)$ algorithm for \ckfunc[].
\end{proof}

\subsubsection{\texorpdfstring{f$k$-OV}{Lg}, \texorpdfstring{f$k$-SUM}{Lg}, and \texorpdfstring{\ckxor[]}{Lg} are Equivalent and Complete}
Intuitively, we can use our subsets of $\{0,1\}^b$ to do guesses that reduce from one problem to another. 

\begin{lemma}
If  (\#)\ckov[]~can be solved in time $T(n)$ then  (\#)\ckxor[]~can be solved in $\tO(T(n))$ time.
\label{lem:ckov>cksum}
\end{lemma}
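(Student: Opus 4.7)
My plan is to give a simple gadget that, for each $b$-bit vector appearing inside a factored \ckxor[]~vector, produces one new vector of length $b\cdot 2^{k-1}$, in such a way that a $k$-tuple of the new vectors is orthogonal if and only if the corresponding $k$-tuple of the original vectors XORs to $\vec 0$. Applying this gadget element-by-element inside each subset $v_j[i]$ of each input factored vector will convert a \ckxor[]~instance into a \ckov[]~instance with the same count (and in particular the same yes/no answer for the decision version).

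The gadget is the standard ``odd-parity pattern'' encoding. For each bit position $p\in[b]$ and each odd-weight pattern $\pi\in\{0,1\}^k$ (there are $2^{k-1}$ such patterns), create one coordinate in the OV vectors indexed by $(p,\pi)$. For the $j$-th input vector $\vec w_j\in\{0,1\}^b$, define $\phi_j(\vec w_j)\in\{0,1\}^{b\cdot 2^{k-1}}$ by
\[
\phi_j(\vec w_j)[p,\pi]\;=\;\mathbf 1\bigl[\,(\vec w_j)_p=\pi_j\,\bigr].
\]
Then for a $k$-tuple $\vec w_1,\dots,\vec w_k$, all of $\phi_1(\vec w_1),\dots,\phi_k(\vec w_k)$ carry a $1$ at coordinate $(p,\pi)$ iff the $k$-bit column at position $p$ equals $\pi$. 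Since $\pi$ ranges over all odd-weight patterns, the tuple is non-orthogonal iff some column has odd parity, i.e.\ iff $\vec w_1\oplus\cdots\oplus\vec w_k\neq\vec 0$. This is the key correctness step; it is a short verification and is the only place where any real combinatorics happens, so I do not expect it to be the main obstacle.

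Now I will lift the gadget to the factored setting: given the input \ckxor[]~instance, replace each factored vector $v_j$ by a new factored vector $v_j'$ with the same group structure, where $v_j'[i]:=\{\phi_j(\vec w)\,:\,\vec w\in v_j[i]\}$. Because $\phi_j$ is injective on $\{0,1\}^b$, $|v_j'[i]|=|v_j[i]|$, so the transformation preserves multiplicities. Writing $f_{\oplus}$ and $f_{\perp}$ for the $k$-ary XOR-is-zero and orthogonality predicates on $b$-bit and $(b\cdot 2^{k-1})$-bit vectors respectively, the gadget gives $f_{\oplus}(\vec w_1,\dots,\vec w_k)=f_{\perp}(\phi_1(\vec w_1),\dots,\phi_k(\vec w_k))$, hence $\circ_{f_{\oplus}}(v_1[i],\dots,v_k[i])=\circ_{f_{\perp}}(v_1'[i],\dots,v_k'[i])$ for each $i\in[g]$. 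Taking the product over $i$ and the sum over $k$-tuples of factored vectors yields that the \ckxor[]~count of the original instance equals the \ckov[]~count of the new instance, which handles the $(\#)$-version uniformly.

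Finally, I will check parameters. The new vectors have length $b\cdot 2^{k-1}=o(\lg n)$ since $b=o(\lg n)$ and $k$ is constant, so the image is a legal \ckov[]~instance with the same parameter $g$. Each set $v_j[i]$ has size at most $2^b=n^{o(1)}$, so building all of $v_j'$ takes $\tilde O(1)$ time per original factored vector and $\tilde O(n)$ overall. An $O(T(n))$-time algorithm for (\#)\ckov[]~run on the new instance then answers the (\#)\ckxor[]~query in time $T(n)+\tilde O(n)=\tilde O(T(n))$, which is exactly the conclusion of the lemma. The only step that needs more than routine bookkeeping is the parity-pattern verification above; everything else is just propagating this through the $\circ$ and $\circledcirc$ definitions.
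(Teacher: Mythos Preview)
Your proof is correct, and it is genuinely different from the paper's argument. The paper reduces \ckxor[]\ to \ckov[]\ via an enumeration gadget: for each $u_i\in v_i[j]$ it generates up to $2^{b(k-1)}$ new vectors, one for every $k$-tuple $(w_1,\dots,u_i,\dots,w_k)$ that XORs to $\vec 0$, and then uses $k^3$ blocks of length $2b$ (writing $w_z\Vert\overline{w_z}$ and $\overline{w_z}\Vert w_z$ in appropriate positions) so that orthogonality forces every pair of lists to agree on the guessed tuple. The correctness is then ``orthogonal iff all $k$ guesses coincide on a single XOR-zero tuple,'' and the count matches because each XOR-zero tuple yields exactly one consistent guess.

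Your parity-pattern encoding is cleaner: you map each $b$-bit vector to a single $(b\cdot 2^{k-1})$-bit vector, with one coordinate per (position, odd-weight column pattern), so orthogonality is equivalent to ``no column has odd parity,'' i.e.\ XOR is zero. This avoids the blow-up of the subsets by a $2^{b(k-1)}$ factor and gives shorter vectors ($b\cdot 2^{k-1}$ rather than $2k^3b$). Both approaches preserve the count exactly and stay within the $b=o(\lg n)$, $g=o(\lg n/\lg\lg n)$ regime, so both yield the $\tilde O(T(n))$ conclusion. The paper's route has the advantage that it recycles the same ``guess-and-check via pairwise equality'' template already used in Theorem~\ref{thm:ckxorIscomplete}, so a single idea handles both the \ckfunc[]$\to$\ckxor[]\ and \ckxor[]$\to$\ckov[]\ directions; your route is more direct and self-contained for this particular lemma.
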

\begin{proof}
Say we are given an \ckxor[]~instance, with $k$ lists of factored vectors, each with $g$ subsets of $b$-bit vectors. We will follow the structure of Theorem \ref{thm:ckxorIscomplete}. Say we are given a factored vector from list $i$, $v_i$. Consider the $j^{th}$ subset $v_i[j]$ of it. Consider a particular vector $u_i \in v_i[j]$. 
We will produce a new vector for every possible $k$ tuple of vectors $(w_1, \ldots, u_i, \ldots, w_k)$ such that $w_1 \oplus \ldots \oplus u_i\oplus \ldots \oplus w_k = \vec{0}$. This vector will have $k^3$ sections each of length $2b$, for a total length of vector $2k^3b$. These $k^3$ sections will correspond to every possible tuple $(x,y,z)$ where $x,y,z \in [1,k]$. The $2b$ bits will be used to check if the vector from list $x$ and the vector from list $y$ agree about the vector $w_z$. We want to only accept if there are $k$ vectors, one from each list that xor to the zero vector. Let $\bar{s}$ be the bitwise bit flip of every bit in $s$. We will use the fact that if both $s_1$ and $\bar{s_2}$ are orthogonal and $\bar{s_1}$ and $s_2$ are orthogonal then $s_1 = s_2$. This allows us to check equality.  Let $\mathbin\Vert$ be the concatenation operator (e.g $00\mathbin\Vert01 = 0001$). The $2b$ bits that correspond to $(x,y,z)$ are determined as follows:
\begin{itemize}
    \item If the tuple is $(x,x,z)$ we write the $2b$ bit all zeros string. For the rest of these assume the first two indices are not equal.
\item If the tuple is $(i,x,z)$ then write $w_z\mathbin\Vert\bar{w_z}$ (for convenience let $w_i = u_i$).
\item If the tuple is $(x,i,z)$ then write $\bar{w_z}\mathbin\Vert w_z$ (for convenience let $w_i = u_i$).
\item If the tuple is $(x,y,z)$ and $x \ne i$ and $y \ne i$ then we put the all ones string. 
\end{itemize}

Now $k$ of these constructed vectors will be orthogonal only if $w_1 \oplus \ldots \oplus w_k = \vec{0}$, all the vectors $w_i$ existed in the original lists, and the constructed vectors all agree on the tuple $(w_1, \ldots, w_k)$.

So, with our constructed vectors the count of the number of vectors that are orthogonal will remain the same. The new instance will have the same number of factored vectors, $n$, but the vectors will have $g$ subsets of $\{0,1\}^{2k^3b}$. An algorithm which runs in $T(n)$ on this \ckov~instance will run in $T(n)$ on the \ckxor~instance. 
\end{proof}

Next we reduce (\#)\ckxor~to (\#)\cksum. In \ckxor~we want to know if $k$ vectors xor to zero, which is very similar to asking if $k$ numbers sum to zero. The difference is entirely carries. So, we can pad the instance, and then guess carries.

We will use the \cksum[]~variant where we ask if $k-1$ numbers sum to equal exactly the last number. 

\begin{lemma}
If (\#)\cksum[]~can be solved in time $T(n)$ then (\#)\ckxor[]~can be solved in $\tO(T(n))$ time.

Additionally, 
Let $v_1, \ldots, v_k$ be $k$ factored vectors each with $g$ subsets of $\{0,1\}^b$. Let $f_{XOR}$ be the function that returns $1$ if the $k$ input vectors xor to zero and otherwise returns $0$. Let $f_{SUM}$ be the function that returns $1$ if the $k$ input vectors sum to zero and otherwise returns $0$.
There is a function $\gamma_{XOR \rightarrow SUM,k}(\cdot, \cdot)$ that takes as input a factored vector with $g$ subsets of $\{0,1\}^b$ and an index and returns a new factored vector  with $g$ subsets of $\{0,1\}^{O(b)}$. This function $\gamma_{XOR \rightarrow SUM,k}$ 
runs in $\tO(2^b \cdot g)$ time for each vector and:
$$\circledcirc_{f_{SUM}}(\gamma_{XOR \rightarrow SUM,k}(v_1, 1), \ldots, \gamma_{XOR \rightarrow SUM,k}(v_k,k)) = 
\circledcirc_{\mathfrak{f}}(v_1, \ldots, v_k).$$
\label{lem:cksum>ckxor}
\end{lemma}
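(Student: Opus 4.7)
The plan is to construct $\gamma_{XOR\rightarrow SUM,k}$ per subset of the factored vector, so that the count identity holds subset-wise and the factored product $\circledcirc$ then lifts the equality globally. The obstruction driving the construction is that $\bigoplus_i u_i = \vec 0$ is a per-bit parity condition while $\sum_i u_i = 0$ is a per-bit zero-count condition, so the reduction must turn parity into exact equality.

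First I would pad each bit of a vector $u \in \{0,1\}^b$ into its own slot of $\lceil \log_2 k\rceil + 1$ bits, with the original bit placed in the low position of the slot. This no-carry padding guarantees that the integer sum of $k$ padded vectors decomposes into $b$ independent per-slot popcounts each in $\{0,\ldots,k\}$, so the per-bit popcount can be read off the slot directly. Second, for every index $i \neq k$ I would let $\gamma(v,i)$ simply replace each vector in each subset of $v$ by its padded version. For the distinguished index $i = k$, I would additionally enumerate, per vector $u$ in each subset, all admissible even popcount profiles $(c_0,\ldots,c_{b-1}) \in \{0,2,\ldots,2\lfloor k/2\rfloor\}^b$ and include, for each profile, one augmented vector whose slot values encode $-c_\ell$ via two's-complement within the slot (equivalently, by a fixed global additive shift absorbed into the reduction). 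Since $k$ is constant, the number of profiles per input vector is $(k/2+1)^b = 2^{O(b)}$, matching the advertised $\tO(2^b \cdot g)$ construction time and giving $O(b)$ output dimension.

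Third I would verify the bijective subset identity $\circ_{f_{SUM}}(\gamma(v_1,1)[j],\ldots,\gamma(v_k,k)[j]) = \circ_{f_{XOR}}(v_1[j],\ldots,v_k[j])$. In one direction, any $k$-tuple $(u_1,\ldots,u_k)$ with XOR zero has a unique even popcount profile and exactly one augmented vector in the $k$-th list cancels it to an integer-zero sum. In the other direction, the no-carry property together with uniqueness of the matching profile forces every integer-zero tuple to come from an XOR-zero tuple in the original instance. Multiplying over $j \in [g]$ yields the required $\circledcirc$ identity, and the counting reduction (with the decision reduction as a special case) follows.

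The main obstacle is the offset encoding: \cksum[] vectors are nominally nonnegative bit-strings, so the ``negation'' $-c_\ell$ must be realised through a signed interpretation. The cleanest fix is two's-complement within each slot, which fits since a slot of $\lceil \log_2 k\rceil + 1$ bits can carry $[-k,k]$; alternately one can add a fixed constant vector to every $\gamma(v,i)$ so that the effective SUM target shifts from $0$ to a known nonzero value. Either way I would check that the bijection between even popcount profiles and offset vectors is preserved so that the counts agree exactly; this is where most of the technical care will live.
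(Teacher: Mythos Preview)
Your approach is essentially the paper's: pad each bit into its own $\lceil\log_2 k\rceil+1$-bit slot so that the integer sum decomposes into per-bit popcounts, leave the first $k-1$ lists as their padded versions, and blow up each vector in the $k$-th list by enumerating the possible popcounts. The bijection argument you sketch is the right one.

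There is, however, a bookkeeping slip in the encoding. With your construction the per-slot sum of the $k$ transformed vectors is $\big(\sum_{i=1}^{k-1} u_i[\ell]\big) + (-c_\ell)$, since the $k$-th slot carries only $-c_\ell$ and not $u_k[\ell]$. Setting this to zero forces the popcount of the \emph{first $k-1$} bits to equal $c_\ell$, which you have ranging over even values; this is not the XOR-zero condition on all $k$ vectors. The fix is to make the $k$-th slot encode $u_k[\ell]-c_\ell$, or equivalently to enumerate, per bit $\ell$, the values in $[0,k-1]$ whose parity matches $u_k[\ell]$ (these are exactly the first-$(k-1)$ popcounts consistent with even total popcount). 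This also makes the augmented vector genuinely depend on $u_k$, so distinct $u_k$'s do not collapse in the output set.

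The paper does exactly this last version: if $u_k[\ell]=0$ the $k$-th slot ranges over the even values in $[0,k-1]$, and if $u_k[\ell]=1$ over the odd values. It then sidesteps your negation worry entirely by passing to the equivalent ``target'' formulation of \cksum\ (check whether $v_1+\cdots+v_{k-1}=v_k$), so all slot entries stay nonnegative and no two's-complement or global shift is needed. Once you correct the slot value as above, your bijection and count-preservation arguments go through verbatim.
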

\begin{proof}
We will describe the transformation $\gamma_{XOR \rightarrow SUM,k}$ below. 
Let $v_1,\ldots, v_k$ be $k$ factored vectors from a \ckxor[]~instance. Let $v_1[j], \ldots, v_k[j]$ be the $j^{th}$ subset of $b$-bit vectors from each of the factored vectors. Let $v_i[j][h][\ell]$ be the $\ell^{th}$ bit of the $h^{th}$ vector in the set $v_i[j]$.

We will turn every bit from $v_i[j][h][\ell]$ into $\lceil \lg(k) \rceil +1$ bits in a new number. If $i<k$ then this new longer string is  $\lceil \lg(k) \rceil$ zeros followed by the bit $v_i[j][h][\ell]$. If $i=k$, then every vector $v_k[j][h]$ turns into many vectors in the $k^{th}$ set of the \cksum instance: If $v_k[j][h][\ell]=0$ then we use our $\lceil \lg(k) \rceil +1$ bits to represent all of the \textbf{even} numbers in $[0,k-1]$. If $v_k[j][h][\ell]=1$ then we use our $\lceil \lg(k) \rceil +1$ bits to represent all of the \textbf{odd} numbers in $[0,k-1]$. So we produce $O(k^b)$ vectors for $v_k[j][h]$.

If a k-tuple of vectors forms a zero vector in k-xor then we get exactly one k-sum. The number of sets stays the same but the length of vectors in those sets grows from $b$ to $(\lceil \lg(k) \rceil +1)b$ length vectors. This is a constant and so if \cksum[]~can be solved in time $T(n)$ then \ckxor[]~can be solved in $\tO(T(n))$ time.
\end{proof}

\begin{reminder}
{Theorem \ref{thm:compressedKVarEquiv}}
If any of \#\ckov[], \#\cksum[], or \#\ckxor[]~can be solved in $T(n)$ time then all of \#\ckov[], \#\cksum[], and \#\ckxor[]~can be solved in $\tO(T(n))$ time.
\end{reminder}
\begin{proof}
This follows from Lemmas \ref{lem:ckov>cksum}, \ref{lem:cksum>ckxor}, and Theorem \ref{thm:ckxorIscomplete}.
\end{proof}

\begin{reminder}{Theorem \ref{thm:allCompressedAreComplete}}
If any of \#\ckov[], \#\cksum[], or \#\ckxor[]~can be solved in $T(n)$ time then \#\ckfunc[]~can be solved in $\tO(T(n))$ time.
\end{reminder}
\begin{proof}
Use Theorem \ref{thm:ckxorIscomplete} and Theorem \ref{thm:compressedKVarEquiv}.
\end{proof}

By Theorem \ref{thm:compressedKVarEquiv} and Theorem 
\ref{thm:allCompressedAreComplete} we get the following corollary.

\begin{corollary}
\#\ckovh[], \#\cksumh[], and \#\ckxorh[]~are all equivalent. Moreover, \#\ckov[]~is implied by \#\ckfunc[]~for any function $\mathfrak{f}$.
\label{cor:equivAndComplete}
\end{corollary}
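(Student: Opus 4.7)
The plan is to derive the corollary directly from Theorems~\ref{thm:compressedKVarEquiv} and~\ref{thm:allCompressedAreComplete} by translating algorithmic reductions into hypothesis implications. Both parts of the corollary should follow by contrapositive, using the standard observation that a factor of $\tO(1)=n^{o(1)}$ is absorbed by the $o(1)$ in the $n^{k-o(1)}$ exponent appearing in each hypothesis.

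For the first part, the equivalence of \#\ckovh[], \#\cksumh[], and \#\ckxorh[], I would invoke Theorem~\ref{thm:compressedKVarEquiv}: a $T(n)$-time algorithm for any one of \#\ckov[], \#\cksum[], \#\ckxor[]~yields a $\tO(T(n))$-time algorithm for each of the others. If any one of the three hypotheses fails, so that there is a constant $\epsilon>0$ and an $O(n^{k-\epsilon})$ algorithm for the corresponding problem, then the theorem yields an $\tO(n^{k-\epsilon}) = O(n^{k-\epsilon/2})$ algorithm for both of the other two problems, falsifying those hypotheses as well. Running the argument starting from whichever hypothesis fails, in all three directions, gives mutual equivalence.

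For the second part, that \#\ckovh[] is implied by \#\ckfunch[] for any $\mathfrak{f}$, I would apply Theorem~\ref{thm:allCompressedAreComplete}, which hands us a $T(n) \to \tO(T(n))$ reduction from \#\ckfunc[] to \#\ckov[]. Contrapositively, if \#\ckovh[] is false and admits an $O(n^{k-\epsilon})$ algorithm, then \#\ckfunc[] admits a $\tO(n^{k-\epsilon}) = O(n^{k-\epsilon/2})$ algorithm, so \#\ckfunch[] fails for every $\mathfrak{f}$. Hence, if \#\ckfunch[] holds for even a single $\mathfrak{f}$, then \#\ckovh[] must hold.

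There is no real obstacle here; the corollary is essentially a repackaging of the two cited theorems. The only thing to double-check when writing it up is that the $\tO$ overhead in the reductions really does preserve the $n^{k-o(1)}$ lower bound in the hypotheses, which it does because $\tO(n^{k-\epsilon}) \le O(n^{k-\epsilon/2})$ for all sufficiently large $n$ and any constant $\epsilon>0$.
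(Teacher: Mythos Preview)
Your proposal is correct and matches the paper's approach: the paper simply states that the corollary follows from Theorem~\ref{thm:compressedKVarEquiv} and Theorem~\ref{thm:allCompressedAreComplete}, and you have spelled out exactly that contrapositive argument with the standard observation that $\tO(\cdot)$ overhead is absorbed into the $n^{k-o(1)}$ bound.
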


\subsubsection{Factored \texorpdfstring{zero-$k$-clique}{Lg} is hard from \texorpdfstring{f$k$-OV}{Lg}, \texorpdfstring{f$k$-SUM}{Lg}, \texorpdfstring{f$k$-XOR}{Lg}, and \texorpdfstring{\cfkc[]}{Lg}}

\begin{lemma}
If (\#)\czkc~is solved in time $T(n)$ then (\#)\cksum~is solved in time $\tO(T(n))$.
\label{lem:czt>cksum}
\end{lemma}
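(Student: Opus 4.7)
The plan is to mimic the standard $k$-SUM $\to$ zero-$k$-clique reduction, lifted to the factored setting. Given an \cksum~instance $V_1,\ldots,V_k$, I would build a complete $k$-partite graph $G$: partition $P_i$ has one node $v_{i,a}$ per factored vector $u_{i,a}\in V_i$. The only non-trivial part is choosing the factored-vector labels on the $\binom{k}{2}$ kinds of edges so that, for every $k$-clique, the $\binom{k}{2}$-variable factored sum evaluates to the $k$-variable factored sum of the corresponding tuple $(u_1,\ldots,u_k)$.

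I would use the cycle labeling. Define the degenerate factored vector $z$ with $z[i]=\{0^b\}$ for every $i\in[g]$. On the edge between $v_{i,a}\in P_i$ and $v_{i+1,b}\in P_{i+1}$ (for $i=1,\ldots,k-1$) place a copy of $u_{i,a}$; on the edge between $v_{1,a}\in P_1$ and $v_{k,b}\in P_k$ place a copy of $u_{k,b}$; on every remaining edge place $z$. The assignment depends only on the two endpoints, so it is a legitimate edge labeling. Along any $k$-clique, the $k$ "active" edges carry, exactly once, the factored vectors $u_1,\ldots,u_k$ of its endpoints, while the other $\binom{k}{2}-k$ edges carry $z$.

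Next I would verify the counting identity. Fix a $k$-clique on $(u_1,\ldots,u_k)$. For every index $i\in[g]$,
\[
\circ_{f_{SUM,\binom{k}{2}}}\bigl(e_{1,2}[i],\ldots,e_{k-1,k}[i]\bigr)=\circ_{f_{SUM,k}}\bigl(u_1[i],\ldots,u_k[i]\bigr),
\]
because the $z$-labeled coordinates contribute only the zero summand and hence cannot change the sum; the only choices are picking one $b$-bit vector from each of the $k$ active subsets and requiring them to sum to $0$. Taking the product over $i\in[g]$ gives $\circledcirc'_{f_{SUM,\binom{k}{2}}}(v_1,\ldots,v_k)=\circledcirc_{f_{SUM,k}}(u_1,\ldots,u_k)$. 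Since the complete $k$-partite structure puts the $k$-tuples in $V_1\times\cdots\times V_k$ in bijection with the $k$-cliques of $G$, summing over cliques yields exactly the \cksum~quantity, which handles both the decision and counting versions simultaneously.

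Finally, I would bound the running time. Each factored vector has size $O(g\cdot 2^b)=n^{o(1)}$, so constructing $G$ with its $O(n^2)$ edges takes $\tilde O(n^2)$ time, which is absorbed into $\tilde O(T(n))$ (any nontrivial \czkc~algorithm is at least quadratic). Invoking the assumed \czkc~algorithm once and returning its output proves the lemma. I do not expect a substantive obstacle: the only subtle point is choosing the labels so that each of the $k$ partitioned vectors appears exactly once on every clique while all remaining edges carry the "neutral" $z$, and the cycle labeling accomplishes this cleanly.
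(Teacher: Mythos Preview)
Your proposal is correct and follows essentially the same approach as the paper: build a complete $k$-partite graph with one node per factored vector, place the $k$ input factored vectors on the cycle edges $P_1\!-\!P_2\!-\!\cdots\!-\!P_k\!-\!P_1$, and put the all-zeros factored vector on every non-cycle edge. Your write-up is in fact more careful than the paper's (you explicitly verify the per-coordinate identity $\circ_{f_{SUM,\binom{k}{2}}}=\circ_{f_{SUM,k}}$ and justify absorbing the $\tilde O(n^2)$ construction cost), but the underlying reduction is identical.
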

\begin{proof}
Consider a \cksum instance with lists $L_1,\ldots,L_k$. We will build the $k$-partite graph of our \czkc~instance to have vertex sets $V_1,\ldots,V_k$. 
For every factored number $x_i \in L_i$ from the \cksum instance we create a node $v_i \in V_i$ where all edges going from $v_i$ to any vertex in $V_{i+1}$ have the value $x_i$ on them. All edges going from $v_i$ to nodes in $V_j$ where $j \ne i-1$ and $j\ne i+1$ are given the special factored vector where every set contains only the all zeros string. 

Now, when three nodes are selected $v_1, v_2, \ldots, v_k$ the corresponding edges have a zero sum iff the corresponding $x_1,x_2,\ldots, x_k$ are a zero sum. 

\end{proof}

\begin{theorem}
If (\#)\czkc~can be solved in $T(n)$ time then all of (\#)\ckov, (\#)\cksum, and (\#)\ckxor~can be solved in $\tO(T(n))$ time.
\label{thm:cztIsAUnit}
\end{theorem}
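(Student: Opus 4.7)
The plan is to chain together two results already established in this section. By Lemma~\ref{lem:czt>cksum}, a $T(n)$-time algorithm for (\#)\czkc~yields a $\tO(T(n))$-time algorithm for (\#)\cksum: given a (\#)\cksum~instance with lists $L_1,\dots,L_k$, one builds a $k$-partite graph on vertex sets $V_1,\dots,V_k$ with one vertex per input factored number, places the factored number $x_i\in L_i$ on every edge from its corresponding vertex in $V_i$ to $V_{i+1}$, and fills all remaining non-consecutive edges with the degenerate factored vector whose $g$ sets each contain only the all-zeros $b$-bit string. The edge-weight sum around any $k$-clique equals the sum of the chosen $k$-tuple of numbers, so the map from zero-sum $k$-tuples to zero $k$-cliques is a bijection, which gives the counting version essentially for free.

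Next I would invoke Theorem~\ref{thm:compressedKVarEquiv}, which states that (\#)\ckov, (\#)\cksum, and (\#)\ckxor~are all equivalent up to polylogarithmic overhead. That theorem was itself obtained by composing Lemma~\ref{lem:ckov>cksum} ((\#)\ckxor~to (\#)\ckov), Lemma~\ref{lem:cksum>ckxor} ((\#)\ckxor~to (\#)\cksum), and the \ckfunc-completeness of (\#)\ckxor~from Theorem~\ref{thm:ckxorIscomplete}, each incurring only $\polylog(n)$ overhead and constant-factor blow-up in vector length. Composing the (\#)\cksum~$\to$ (\#)\czkc~reduction with these equivalences immediately yields $\tO(T(n))$-time algorithms for each of (\#)\ckov, (\#)\cksum, and (\#)\ckxor, as claimed.

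There is no substantive obstacle; the theorem is essentially a one-line corollary. The only routine checks are that Lemma~\ref{lem:czt>cksum} applies for all $k\geq 3$ rather than just $k=3$ (it does, as written), and that the reductions respect the parameter regime $g=o(\lg n/\lg\lg n)$ and $b=o(\lg n)$ required by the factored model. The latter is immediate: padding with an all-zeros factored vector changes neither $g$ nor $b$, only adds edges to a graph that is already $k$-partite on $\Theta(n)$ vertices, and is performed in $\tO(n^2)$ time, so the composed reduction genuinely runs in $\tO(T(n))$ time.
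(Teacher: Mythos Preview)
Your proposal is correct and follows exactly the same approach as the paper: the paper's proof is the single line ``This follows from Lemma~\ref{lem:czt>cksum} and Theorem~\ref{thm:compressedKVarEquiv},'' and you have simply unpacked that citation with accurate additional detail. Your routine checks about the parameter regime and the $k\geq 3$ case are valid and match the lemma as stated.
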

\begin{proof}
This follows from Lemma \ref{lem:czt>cksum} and Theorem \ref{thm:compressedKVarEquiv}.
\end{proof}

Now we will show that \czkc[]~is complete for the set of all problems \cfkc[]~for all functions $\mathfrak{f}$.

\begin{reminder}{Theorem \ref{thm:fzkcIsComplete}}
If (\#)\czkc~can be solved in $T(n)$ time then  (\#)\cfkc[]~for any $f$, can be solved in $\tO(T(n)+n^2)$ time.
\end{reminder}
\begin{proof}
We will use the transforms $\gamma_{\mathfrak{f} \rightarrow XOR,\binom{k}{2}}(\cdot, \cdot)$ and $\gamma_{XOR \rightarrow SUM,\binom{k}{2}}(\cdot, \cdot)$ from Theorem \ref{thm:ckxorIscomplete} and Lemma \ref{lem:cksum>ckxor} respectively.

Let $G$ be the $k$-partite graph we take as input from (\#)\cfkc[], now label the $\ell = \binom{k}{2}$ edge sets as $E_1,\ldots, E_\ell$. Now, for every factored vector $e_i \in E_i$ run the following transform:
$\hat{\gamma}_{\ell}(e_i,i) = \gamma_{XOR \rightarrow SUM,\binom{k}{2}}(\gamma_{\mathfrak{f} \rightarrow XOR,\ell}(e_i, i), i).$ This causes the output factored vector to have $g$ subsets of $\{0,1\}^{O(b)}$. The transformation takes $\tO(2^{O(b)}g)$ time per vector, which is $\tO(1)$ time per vector. The output vectors have the property that $$\circledcirc_{f_{SUM}}(\hat{\gamma}_{\ell}(e_1,1),\ldots, \hat{\gamma}_{\ell}(e_\ell,\ell)) = \circledcirc_{\mathfrak{f}}(e_1,\ldots, e_\ell).$$

Because $\circledcirc'_{f}$ (the function used in our factored clique problem definition) is equal to $\circledcirc_{f} \cdot isClique(e_1,\ldots,e_{\ell})$, by running this transformation we will have that:
$$\circledcirc'_{f_{SUM}}(\hat{\gamma}_{\ell}(e_1,1),\ldots, \hat{\gamma}_{\ell}(e_\ell,\ell)) = \circledcirc'_{\mathfrak{f}}(e_1,\ldots, e_\ell).$$

Thus, we can run the transformation $\hat{\gamma}_{\ell}$ in time $\tO(n^2)$ (because $n^2$ is the input size). Additionally, the output of the counting or detection variants of the \czkc[]~on the transformed input will be exactly equal to the output of \cfkc[]~on the original input. Thus, if we can solve (\#)\czkc[]~in time $T(n)$ we can solve (\#) \cfkc[]~in time $\tO(T(n) +n^2)$. 
\end{proof}

\subsubsection{Factored Zero Triangle is hard from SETH, 3-SUM and APSP}

\begin{lemma}
The \czkch[3]~is implied by any one of SETH, the $3$-SUM hypothesis, or the APSP hypothesis. 
\label{lem:czkch3Imp}
\end{lemma}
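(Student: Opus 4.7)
The plan is to establish each of the three implications by chaining Theorem~\ref{thm:unfactoredToFactored} (which says a factored problem is at least as hard as its unfactored counterpart) with either the known hardness of Zero Triangle or with Theorem~\ref{thm:cztIsAUnit} (which says \czkc~is at least as hard as \ckov~for the same $k$), and then with known hardness of $3$-OV.

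For the $3$-SUM and APSP hypotheses, the approach is immediate. It is a classical result (Vassilevska~Williams--Williams) that the Zero Triangle problem Z$3$C on $n$-node graphs requires $n^{3-o(1)}$ time under either hypothesis. Theorem~\ref{thm:unfactoredToFactored} produces, from any Z$3$C instance of size $n$, a single \czkc[3] instance of size $\tO(n)$ in $O(n)$ time. Hence, an $O(n^{3-\eps})$ algorithm for \czkc[3] for some constant $\eps>0$ would immediately give an $\tO(n^{3-\eps})$ algorithm for Z$3$C, contradicting both the $3$-SUM and APSP hypotheses.

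For SETH, I would go through $3$-OV. The standard OV hypothesis (implied by SETH) asserts that $3$-OV requires $n^{3-o(1)}$ time. By Theorem~\ref{thm:unfactoredToFactored} applied with $k=3$, a $3$-OV instance of size $n$ reduces in $O(n)$ time to a \ckov[3] instance of size $\tO(n)$, so \ckov[3] also requires $n^{3-o(1)}$ time under SETH. Finally, Theorem~\ref{thm:cztIsAUnit} with $k=3$ gives that a $T(n)$-time algorithm for \czkc[3] yields a $\tO(T(n))$-time algorithm for \ckov[3]. Composing, any $O(n^{3-\eps})$ algorithm for \czkc[3] would give an $\tO(n^{3-\eps})$ algorithm for $3$-OV, contradicting SETH.

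The only mild subtlety to verify is that the factored instances produced in the above reductions respect the parameter constraints on \czkc[3], namely $b=o(\log n)$ and $g=o(\log n/\log\log n)$. This is routine: the proof of Theorem~\ref{thm:unfactoredToFactored} first hashes the integer edge weights of Z$3$C down to range $[-n^3,n^3]$ and then splits each $O(\log n)$-bit weight into $g=\Theta(\sqrt{\log n})$ chunks of $b=\Theta(\sqrt{\log n})$ bits, both of which are well within the required bounds; the same holds for the $3$-OV reduction, where $d=o(\log^2 n)$ is split analogously. Thus there is no real obstacle; the lemma follows by purely stitching together Theorem~\ref{thm:unfactoredToFactored}, Theorem~\ref{thm:cztIsAUnit}, and the known worst-case hardness of Z$3$C and $3$-OV.
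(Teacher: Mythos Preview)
Your proposal is correct and follows essentially the same route as the paper's own proof: for SETH you go through Theorem~\ref{thm:cztIsAUnit} to reduce \ckov[3] (hence $3$-OV) to \czkc[3], and for the $3$-SUM and APSP hypotheses you go through Theorem~\ref{thm:unfactoredToFactored} to reduce Zero Triangle to \czkc[3] and invoke the known hardness of Zero Triangle under both hypotheses. Your additional verification that the produced instances satisfy the $b,g$ parameter constraints is a nice extra check that the paper leaves implicit.
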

\begin{proof}
A violation of \czkch[3]~implies a violation of \ckovh[3]~by Lemma \ref{thm:cztIsAUnit}. A violation of \ckovh[3]~implies a violation of SETH \cite{ryanThesis}.

A violation of \czkch[3]~implies that a $O(n^{1-\epsilon})$ time algorithm exists for the zero triangle problem for some $\epsilon>0$. A  $O(n^{1-\epsilon})$ time algorithm for zero triangle implies a violation of the $3$-SUM hypothesis and the APSP hypothesis \cite{williams2010subcubic}.

So if any one of the three core hypotheses of fine-grained complexity (SETH, $3$-SUM, and APSP) are true then \czkch[3]~is true. 
\end{proof}

\begin{reminder}{Theorem \ref{thm:fzkcIsHardFromAll}}
If \czkc[3]~(even for $b=o(\log n)$ and $g=o(\log(n)/\log\log(n))$) can be solved in $O(n^{3-\eps})$ time for some constant $\eps>0$, then SETH is false, and there exists a constant $\eps'>0$ such that $3$-SUM can be solved in $O(n^{2-\eps'})$ time and APSP can be solved in $O(n^{3-\eps'})$ time.
\end{reminder}
\begin{proof}
This follows from Lemma \ref{lem:czkch3Imp}. A $O(n^{3-\eps})$ time algorithm implies a violation of \czkch[3]. If \czkch[3]~is false then all of SETH, the APSP hypothesis, the $3$-SUM hypothesis are false. 
\end{proof}

\section{Implications from Factored Variants}
\label{sec:harderProblem}

In this section we will show that a series of problems reduce from both counting and decision versions of \fzkc[3], \fzkc[], and \ckfunc[]. 

The reductions from the counting variant of \fzkc[3]~generate counting problems that are hard in the average-case from SETH, 3-SUM, and APSP. The reductions from the counting variant of \fzkc[]~or \ckfunc[]~generate counting problems that are hard in the average-case from SETH. As a result, in this section we produce a suite of problems that are fine-grained hard from the most popular hypotheses of fine-grained complexity. 

In this section we give explicit tight fine-grained reductions from factored problems to many other problems. We will quickly summarize the results of this section. 

We give four tight fine-grained reductions from counting versions of our factored problems. 
We reduce \#\NDMT~from \#\czkc[3], \#\kNLstC[]~from \#\czkc[]~and \#\kELstC[]~from \#\ckfunc[]. Finally, we reduce counting regular expression matching to \#\ckov[2]. 

We also give three tight fine-grained reductions that only work from the detection versions of our factored problems.
We reduce $(k+1)$L-MF to \czkc[]. We reduce Edit Distance to \ckov[2]~and \klcs~from \ckov[].


\subsection{The Partitioned Matching Triangles Problem solves the Factored Zero Triangle Problem}



\begin{reminder}
{Theorem \ref{thm:czkc-to-ndmt}}
If (\#)Partitioned Matching Triangles (\NDMT) can be solved in $T(n)$ time, then we can solve (\#) \czkc[3]~in time $\tilde{O}(T(n)+n^2)$. 
\end{reminder}
\begin{proof}
Let $G=(U,V,W)$ be an instance of \czkc[3], where each edge $e$ is a factored vector. For notation convenience let $uv[j]$ refer to the $j^{th}$ set of the factored vector on the edge from $u$ to $v$.

We define an instance of \NDMT~as a set of $g$ graphs $G_j$ for $j=1,\ldots,g$. We define $G_j$ as follows. For every $u\in U$ add a copy of $u$ in $G_j$ with color $u$. For every $v\in V$ and $w\in W$, add vertices $v_x$ and $w_x$ for all $x=-2^{b+1},\ldots,2^{b+1}$, with color $v$ and $w$ respectively. Note that since $b=o(\log{n})$, $G_j$ has $\tilde{O}(n)$ nodes. 


Now we attach $u\in G_j$ to $v_x\in G_j$ if $x\in uv[j]$. We connect $v_x$ to $w_y$ if $y-x\in vw[j]$ and we connect $w_y$ to $u$ if $-y\in wu[j]$. 

We prove that the \czkc[3]~instance and the \NDMT~instance are equivalent. For this, consider a zero triangle $uvw$, where the vectors $x_j\in uv[j]$, $y_j\in vw[j]$ and $z_j\in wu[j]$ are picked to have sum zero for each $j$. This corresponds to the triangles $uv_{x_j}w_{y_j+x_j}$ in $G_j$ for each $j$, where all these triangles are of color $(u,v,w)$. Conversely, any set of $g$ triangles of color $(u,v,w)$ in $G_j$s should be of the form $uv_{x_j}w_{y_j}$, and hence from the definition of the \NDMT~ instance we have that for each $j$, $x_j\in uv[j]$, $y_j-x_j\in vw[j]$ and $-y_j\in wu[j]$ and so they correspond to a zero $uvw$ triangle. 
\end{proof}

\subsection{\texorpdfstring{$k$}{Lg}-Node Labeled st Connectivity is hard from Factored \texorpdfstring{Zero-$k$-Clique}{Lg}}
We will show that counting \kNLstC~mod $2^{2k\lg^2(n)}$ is hard from $\#$\czkch. The generated graph will be a dense DAG. Recall that this implies an explicit average-case distribution over which counting \kNLstC~mod $2^{2k\lg^2(n)}$ is hard from worst case \czkch, SETH, the $3$-SUM hypothesis, and the APSP hypothesis. 

\begin{reminder}{Theorem \ref{thm:knlstc-hardness}}
If a $O(|C|^{k-2}|E|^{1-\epsilon/2})$ or $O(|C|^{k-2-\epsilon}|E|)$ time algorithm exists for (counting mod $2^{2k\lg^2(n)}$) \kNLstC~then a $O(n^{k-\epsilon})$ algorithm exists for (\#)\czkc.
\end{reminder}
\begin{proof}

Let $G=(V_1,\ldots,V_k)$ be an instance of \czkc. We reduce this instance to an instance of \kNLstC~ as follows. We begin by adding the special node $s$ and the special node $t$. 
We will build $g$ gadgets and put them after each other serially. The nodes in the gadgets will be assigned colors associated to the nodes of $G$. Each gadget will be designed to check if given $k$ colors (and thus $k$ nodes in $G$) whether the $i^{th}$ subset of the $\binom{k}{2}$ factored vectors represented do have a zero sum. See Figure \ref{fig:gadgetNLstC} for a representation of our construction.

\begin{figure}[ht]
    \centering
    \includegraphics[width=\textwidth]{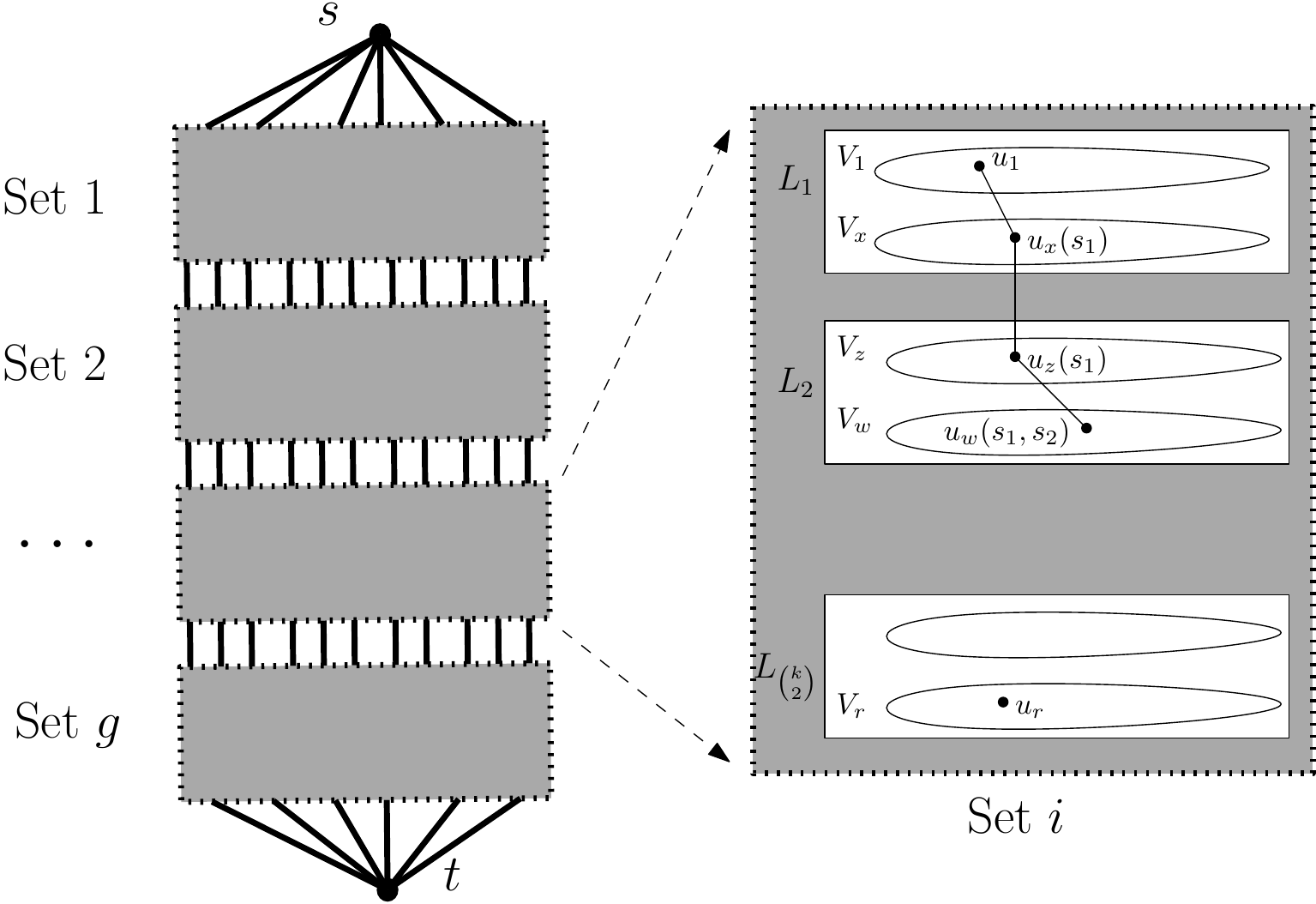}
    \caption{Left: The gadget structure for \kNLstC. Right: Inside of a set gadget.}
    \label{fig:gadgetNLstC}
\end{figure}
The gadget for set $i$ consists of 
$\binom{k}{2}$ layers $L_1,\ldots,L_{\binom{k}{2}}$. Each layer $L_j$ represents the edges from $V_x$ to $V_y$ for some $x,y\in \{1,\ldots,k\}$ as follows: $L_j$ consists of two layers itself, one for $V_x$ and one for $V_y$. For each vertex $u_x\in V_x$, we add $2^{b(j-1)}$ nodes $u_x(s_1,\ldots, s_{j-1})$ where $s_1,\ldots,s_{j-1} \in \{0,1\}^b$, and we color these nodes with the color $u_x$. So we have a total of $n2^{b(j-1)}$ vertices. These vertices represent that we have chosen a particular node $u_x \in V_x$ and that in the first $j$ edges we have chosen the $j-1$ vectors $s_1,\ldots,s_{j-1}$ from the sets of the previous $j-1$ edges. 
For the second layer of $L_j$, for each $u_y\in V_y$ we add $2^{bj}$ nodes $u_y(s_1,\ldots, s_{j})$ where $s_1,\ldots,s_j \in \{0,1\}^b$, and we color these nodes with the color $u_y$. 
We add edges between nodes  $u_x(s_1,\ldots, s_{j-1})$ and $u_y(s_1,\ldots, s_{j-1},s_j)$ iff the $i^{th}$ set of the factored vector of edge $(u_x,u_y)$ contains the string $s_j$, i.e. $s_j\in u_xu_y[i]$.

Now to specify the edges between layers, suppose that layer $L_{j+1}$ deals with the edges between $V_z$ and $V_w$.
For every $j$-tuple $(s_1,\ldots, s_j)$, add an edge from $u_y(s_1,\ldots, s_j)$ in $L_j$ to $u_z(s_1,\ldots,s_j)$ in $L_{j+1}$ for every $u_y\in V_y$ and $u_z\in V_z$. If $z=y$ we can skip this step and just use the same set nodes. 

Finally, we do something special for layer $L_{\binom{k}{2}}$. Lets say that layer $L_{\binom{k}{2}}$ summarizes the edges between $V_x$ and $V_y$. For the nodes associated to $V_y$, instead of having vertices $u_y(s_1,\ldots,s_{\binom{k}{2}})$, we put only one vertex $u_y$. 
We connect $u_x(s_1, \ldots, s_{\binom{k}{2}-1})$ to $u_y$ if and only if the set $i$ of the edge $(u_x,u_y)$ has a vector $s_{\binom{k}{2}}$ such that the vectors $s_1, \ldots, s_{\binom{k}{2}-1},s_{\binom{k}{2}}$ sum to zero. Note that given a fixed choice of $s_1, \ldots, s_{\binom{k}{2}-1}$  there is a single vector $s_{\binom{k}{2}}$ such that they all sum to zero together. 

This forms a layered directed graph, where edges go from layer $L_i$ to layer $L_{i+1}$. We also assume that $L_1$ represents the edges from $V_1$ to $V_x$ for some $x$. We add an edge from $s$ to all vertices of the first layer of $L_1$ and an edge from all vertices in the last layer of $L_{\binom{k}{2}}$ to $t$. A representation of the layers is represented in Figure \ref{fig:gadgetNLstC}.

By this construction, a path with colors $u_1, u_2, \ldots, u_k$ that goes through the $j^{th}$ gadget represents a zero sum within the $j^{th}$ sets on the $\binom{k}{2}$ edges between $u_1,\ldots,u_k$. 

In our graph the number of colors $|C|$ is $O(n)$ and $|E|=O(n^2)$ so a $O(|C|^{k-2}|E|^{1-\epsilon/2})$ algorithm and a $O(|C|^{k-2-\epsilon}|E|)$ algorithm both run in $O(n^{k-\epsilon})$ time. The number of solutions to both problems is the same, thus the counts are the same. The maximum count of \czkc[]~is $2^{gb}n^k = O(2^{\lg^2(n)+k\lg(n)})$. Notably, this is less than $2^{2k\lg^2(n)}$, so the count from the \kNLstC~instance will be less than the count for the \czkc[]~instance.
\end{proof}

Note that we can count \kNLstC~mod $2^{2k\lg(n)^2}$ with $|C| =n$ and $|E|=n^2$ in time $\tilde{O}\left(n^k\right)$. 

\begin{corollary}
If  \#\czkch[]~is true then \#\kNLstC~(mod $R$) takes $|C|^{k-2 \pm o(1)}|E|^{1 \pm o(1)}$ (where $\lg(R) =n^{o(1)}$).
\label{cor:knlstcWithHypo}
\end{corollary}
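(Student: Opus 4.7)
The plan is to pair the hardness reduction of Theorem~\ref{thm:knlstc-hardness} with the matching algorithm for \#\kNLstC~from Appendix~\ref{app:algorithms}, so that the two together pin down the exponent on $|C|$ and $|E|$.

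First, for the lower bound I would argue by contrapositive of Theorem~\ref{thm:knlstc-hardness}. Suppose there is an algorithm that counts \kNLstC~paths mod $R$ (with $\lg R = n^{o(1)}$) in time $O(|C|^{k-2}|E|^{1-\epsilon/2})$ or $O(|C|^{k-2-\epsilon}|E|)$ for some constant $\epsilon>0$. The instance produced by the reduction in Theorem~\ref{thm:knlstc-hardness} satisfies $|C|=O(n)$ and $|E|=O(n^2)$, so substituting these bounds turns such an algorithm into an $O(n^{k-\epsilon'})$ algorithm for \#\czkc[]~for some $\epsilon'>0$. This contradicts $\#$\czkch[]. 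Thus under $\#$\czkch[]~no such algorithm exists, giving the lower bound $|C|^{k-2-o(1)}|E|^{1-o(1)}$.

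Second, for the upper bound I would invoke the algorithm from Appendix~\ref{app:algorithms} that solves \#\kNLstC~mod $R$ in $\tilde{O}(|C|^{k-2}|E|)$ time; this is the natural layered dynamic program that, for each ordered $(k-2)$-tuple of intermediate colors, relaxes along each edge once. On the hard instances with $|C|=n$ and $|E|=n^2$ this gives the stated $\tilde{O}(n^k)$ running time.

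The one subtlety to handle carefully is the choice of modulus. Theorem~\ref{thm:knlstc-hardness} is stated mod $2^{2k\lg^2 n}$, whereas the corollary allows any $R$ with $\lg R = n^{o(1)}$. Since the true count in the reduced \kNLstC~instance is bounded by the maximum \czkc[]~count $n^k \cdot 2^{gb} = 2^{O(\lg^2 n)}$, any modulus with $\lg R \geq 2k\lg^2 n$ preserves the count exactly; for smaller $R$ with $\lg R = n^{o(1)}$ one can simply run the hypothetical fast mod-$R$ algorithm for several values of $R$ and recover the count by Chinese Remaindering, or alternatively note that the reduction works just as well mod any $R$ since each \czkc[]~solution contributes exactly one path. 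Once this bookkeeping is in place, combining the two bounds gives the claimed $|C|^{k-2\pm o(1)}|E|^{1\pm o(1)}$ characterization, and no real obstacle remains beyond this modulus check.
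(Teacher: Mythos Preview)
Your proposal is correct and follows essentially the same approach as the paper: invoke Theorem~\ref{thm:knlstc-hardness} for the lower bound and the algorithm in Appendix~\ref{app:algorithms} (Theorem~\ref{thm:algForknlstc}) for the upper bound. The paper's proof is just these two citations; your additional discussion of the modulus is more careful than the paper bothers to be, but note that the specific $R=2^{2k\lg^2 n}$ used in Theorem~\ref{thm:knlstc-hardness} already satisfies $\lg R = n^{o(1)}$, so no CRT patch is really needed to establish the corollary as stated.
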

\begin{proof}
By Theorem \ref{thm:knlstc-hardness} if  \#\czkch[]~is true then \# \kNLstC~(mod $R$) takes at least $|C|^{k-2 - o(1)}|E|^{1 - o(1)}$ time. 

By Theorem \ref{thm:algForknlstc} there is a $|C|^{k-2 + o(1)}|E|^{1 + o(1)}$ time algorithm for counting \kNLstC~mod $R$.
\end{proof}

\subsection{\texorpdfstring{$k$}{Lg}-Edge Labeled st Connectivity is hard from Factored \texorpdfstring{$k$}{Lg} Function Problems \texorpdfstring{(F$k$-f)}{Lg}}
In this subsection we will show hardness from 
The edge labeled version of $st$ connectivity. This reduction will get hardness from \#\ckfunc[]. Note that while \kNLstC[]~has a $\tO(C^{k-2}E)$ algorithm, \kELstC[]~has a more expensive $\tO(C^{k-1}E)$ algorithm. In this section we will show that the \kELstC[]~algorithm is optimal up to sublinear factors if \ckfunch[]~is true (note that this algorithm is thus also implied to be tight by SETH). 

While our reduction to \kNLstC[]~generated a dense graph, our reduction to \kELstC[]~generates a \emph{sparse} graph. The sparsity allows for a tight reduction to the \ckfunch[]~problem. However, because that our reduction requires sparsity to be tight, we have not been able to reduce \czkc~to \kELstC[].

\begin{reminder}{Theorem \ref{kelstcCKFUNC}}
If a $\tilde{O}(|E||C|^{k-1-\epsilon})$ or $\tilde{O}(|E|^{1-\epsilon}|C|^{k-1})$ time algorithm exists for (counting ~mod $ 2^{2k\lg^2(n)}$) \kELstC[], then a $\tilde{O}(n^{k-\epsilon})$ algorithm exists for (\#)\ckfunc[].
\end{reminder}
\begin{proof}
Given an instance of \ckfunc[]~which takes $k$ lists $V_1,\ldots,V_k$ of factored vectors, we produce an instance of \kELstC[]~with $\tilde{O}(n)$ colors and $\tO(n)$ edges. In the \ckfunc[]~instance, let $u_j[i]$ be the $i^{th}$ subset of the vector $u_j\in V_j$. We use the vectors in the \ckfunc[]~instance as colors in the \kELstC[]~instance.

We start by adding two nodes $s$ and $t$. We will make $g$ gadgets, $G_1, \ldots, G_g$, where $G_i$ handles the $i^{th}$ set of the factored vectors, i.e. $u_j[i]$ for all $u_j\in V_j$ for all $j$. In each gadget $G_i$ we have $k$ layers of vertices $L^i_1, \ldots, L^i_k$, where the vertex set $L^i_j$ represents the factored vectors in $V_j$. Finally we attach these gadgets one after the other serially.

For $j\le k$ the layer $L_j^i$ has two layers itself, one with $n2^{bj}$ nodes and one with $2^{bj}$ nodes. For each node $u_j\in V_j$, we add the nodes $u_j(s_1,\ldots,s_j)$ to the first layer of $L_j^i$ for all $s_1,\ldots,s_j\in \{0,1\}^b$, so adding $n2^{bj}$ nodes in total. For the second layer of $L_j^i$, we add nodes $l_i(s_1,\ldots,s_j)$ for all $s_1,\ldots,s_j\in \{0,1\}^b$. For each vertex $u_j$, we add a matching from the $2^{bj}$ nodes associated to vector $u_j$ to the nodes in the second layer, connecting $u_j(s_1,\ldots,s_j)$ to $l_i(s_1,\ldots,s_j)$. We color these edges with $u_j$. Note that this is how we achieve sparsity. Every other layer has $\tO(1)$ nodes in it. So every node (other than $s$ and $t$) has an out-degree of $\tO(1)$. 



We add edges from the second layer of $L^i_j$ to the first layer of $L^i_{j+1}$. For $u_{j+1}\in V_{j+1}$, we connect $l_i(s_1,\ldots, s_j)\in L^i_j$ to $u_{j+1}(s_1,\ldots, s_j, s_{j+1})\in L^i_{j+1}$ if, and only if, $s_{j+1} \in u_{j+1}[i]$. We color this edge with $u_{j+1}$.

The full effect of this means that by layer $L^i_{k}$ a path from the beginning to the end of the gadget with the colors of a given set of $k$ vectors implies those vectors have the corresponding set of vectors in their $i^{th}$ sets.  We will only add outgoing edges from nodes in the second layer of $L^i_{k}$ only if $f(s_1,\ldots, s_k)=1$.

We add edges between gadgets $G_i$ and $G_{i+1}$ by adding edges between the second layer of $L^i_{k}$ and the first layer of $L^{i+1}_1$ as follows. We connect the node $l_i(s_1,\ldots,s_k)\in G_i$ to $u_1(s'_1)\in G_{i+1}$ for some $u_1\in V_1$ if and only if $f(s_1,\ldots,s_k)=1$ and $s'_1\in u_1[i+1]$. We color this edge with $u_1$.


Now we deal with $s$ and $t$. We add edges from 
$s$ to all nodes $u_1\in L_1^1$ if $s_1 \in u_{1}[1]$. These edges are colored with $u_1$. Further, 
we add edges from the first layer of $L^g_k$ to $t$ directly, removing the second layer of $L^g_k$. We only add edges from $u_k(s_1,\ldots,s_k)$ for $u_k\in V_k$ to $t$ iff $f(s_1,\ldots,s_k)=1$. We color this edge with $u_k$.

First, note that we always add edges between two layers of size $o(n)$ and $\tO(n)$, so adding at most $\tO(n)$ edges between them. Since we have $O(1)$ layers, our graph has $\tO(n)$ edges in total. 

Given this graph setup, if we pick $k$ colors for example associated with $u_1(1), u_2(2), \ldots, u_k(k)$ then the number of paths from $s$ to $t$ using only those colors of edges will correspond to the outcome of $$\circledcirc(u_1(1), u_2(2), \ldots, u_k(k))$$ as defined in the preliminaries. As a result, the sum over all $k$ tuples of colors will be the count of the output of the \ckfunc[]~instance. The count of a \ckfunc[]~instance is at most $n^k 2^{bg} = o( 2^{2k\lg^2(n)})$. So if $R = \Omega( 2^{2k\lg^2(n)})$ the count mod $R$ and the count are the same. 
\end{proof}

\begin{corollary}
If \ckfunch[]~(\#\ckfunc[]) is true then \kELstC(\#\kELstC~mod $R$) takes $|C|^{k-1\pm o(1)}|E|^{1\pm o(1)}$ (when $\lg(R) =n^{o(1)}$).
\label{cor:kelstcWithHypo}
\end{corollary}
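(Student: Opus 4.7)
The plan is to prove the corollary as a two-sided bound, pairing Theorem \ref{kelstcCKFUNC} for the conditional lower bound with a matching algorithmic upper bound from Appendix \ref{app:algorithms}. First, I would apply Theorem \ref{kelstcCKFUNC} in its contrapositive form: if the (counting) \ckfunch[]~hypothesis holds, then for every constant $\epsilon>0$ no algorithm for (counting mod $R$) \kELstC~can run in time $\tO(|E||C|^{k-1-\epsilon})$ nor in $\tO(|E|^{1-\epsilon}|C|^{k-1})$, since either would yield an $\tO(n^{k-\epsilon})$ algorithm for (\#)\ckfunc[]~and thereby violate the hypothesis. Packaging the two shaved-exponent impossibilities into a single $\pm o(1)$ exponent statement gives the lower bound $|C|^{k-1-o(1)}|E|^{1-o(1)}$ on the running time of (\#)\kELstC~(mod $R$).

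For the matching upper bound I would invoke the \kELstC~algorithm stated in Appendix \ref{app:algorithms}, which the paragraph following Theorem \ref{kelstcCKFUNC} advertises as running in time $\tO(|C|^{k-1}|E|)$ and as being optimal. Morally, the algorithm enumerates all size-$k$ color subsets but avoids the naive $|C|^k$ cost by iterating over only $k-1$ colors and folding the last color into an $\tO(|E|)$ reachability or path-counting computation on the restricted subgraph. For the counting variant modulo $R$, each arithmetic operation costs at most $\polylog(R) = n^{o(1)}$ because $\lg(R) = n^{o(1)}$, so the overhead is absorbed by the $\tO(\cdot)$.

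Combining the two bounds pins the running time of (\#)\kELstC~(mod $R$) under \ckfunch[]~to $|C|^{k-1\pm o(1)}|E|^{1\pm o(1)}$, exactly as claimed. The main (and only real) subtlety is verifying that the polylogarithmic factors hidden in both the $\tO(\cdot)$ of Theorem \ref{kelstcCKFUNC} and the $\tO(\cdot)$ of the algorithm are absorbed into the $\pm o(1)$ in the exponents; this is immediate because $k = O(1)$ and $\lg(R) = n^{o(1)}$, so every polylog factor is bounded by $(|C||E|)^{o(1)}$. The proof itself should mirror the structure of Corollary \ref{cor:knlstcWithHypo} almost verbatim, with the single change being the exponent shift from $k-2$ (node-labeled) to $k-1$ (edge-labeled) reflecting the extra $|C|$ factor noted in the algorithmic discussion.
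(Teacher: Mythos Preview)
Your proposal is correct and follows essentially the same approach as the paper: invoke Theorem~\ref{kelstcCKFUNC} in contrapositive form for the lower bound and the $\tilde{O}(|C|^{k-1}|E|)$ algorithm from Appendix~\ref{app:algorithms} (Theorem~\ref{thm:kelstcALG}) for the upper bound, combining them into the $\pm o(1)$ statement. Your observation that this mirrors Corollary~\ref{cor:knlstcWithHypo} with the exponent shifted from $k-2$ to $k-1$ is exactly right.
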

\begin{proof}
By Theorem \ref{kelstcCKFUNC} if \ckfunch[]~(\#\ckfunch[])~is true then \kELstC ~(\#\kELstC~mod $R$) takes at least $|C|^{k-1-o(1)}|E|^{1-o(1)}$ time. 

By Theorem \ref{thm:kelstcALG} there is a $|C|^{k-1+o(1)}|E|^{1+o(1)}$ time algorithm for counting mod $R$ \kELstC.
\end{proof}

\subsection{\texorpdfstring{$(k+1)$}{Lg} Labeled Max Flow solves Factored Zero-\texorpdfstring{$k$}{Lg}-Clique}
\begin{reminder}{Theorem \ref{klmf}}
If $(k+1)$L-MF can be solved in $T(n)$ time, then we can solve \czkc~in time $\tilde{O}(T(n)+n^2)$.
\end{reminder}
\begin{proof}
We use the set gadgets from Theorem \ref{thm:knlstc-hardness} and instead of placing them serially, we make a parallel network $G'$ as shown in Figure \ref{fig:flow}. More particularly, let $SG_1,\ldots,SG_g$ be the set gadgets from Theorem \ref{thm:knlstc-hardness}. Add $s_1,\ldots,s_g$ with a source node $s$ to the graph. Add $t_1,\ldots,t_g$ with a sink node $t$ to the graph. This completes the definition of the vertices of $G'$.

We attach $s$ to all $s_j$ and all $t_j$ to $t$ with label $l^*$ for $j=1,\ldots, g$. For each $j$, we attach $s_j$ to all the nodes in the first layer of $SG_j$, which is a copy of $V_1$. Let the label of any $s_ju$ edge be $u$ where $u\in V_1$. 
Connect all the nodes in the last layer of $SG_j$ to $t_j$ for all $j$. Suppose that the last layer of $SG_j$ corresponds to $V_y$. Let the label of any edge $u_yt_j$ be $u_y$. Let the label of any edge $(r,z)$ in any set gadget $SG_j$ be the same as the color of $r$, since $G'$ is supposed to be an edge-labeled graph. All the edges are unit capacitated. This completes the definition of $G'$.

\begin{figure}[ht]
    \centering
    \includegraphics[width=0.6\textwidth]{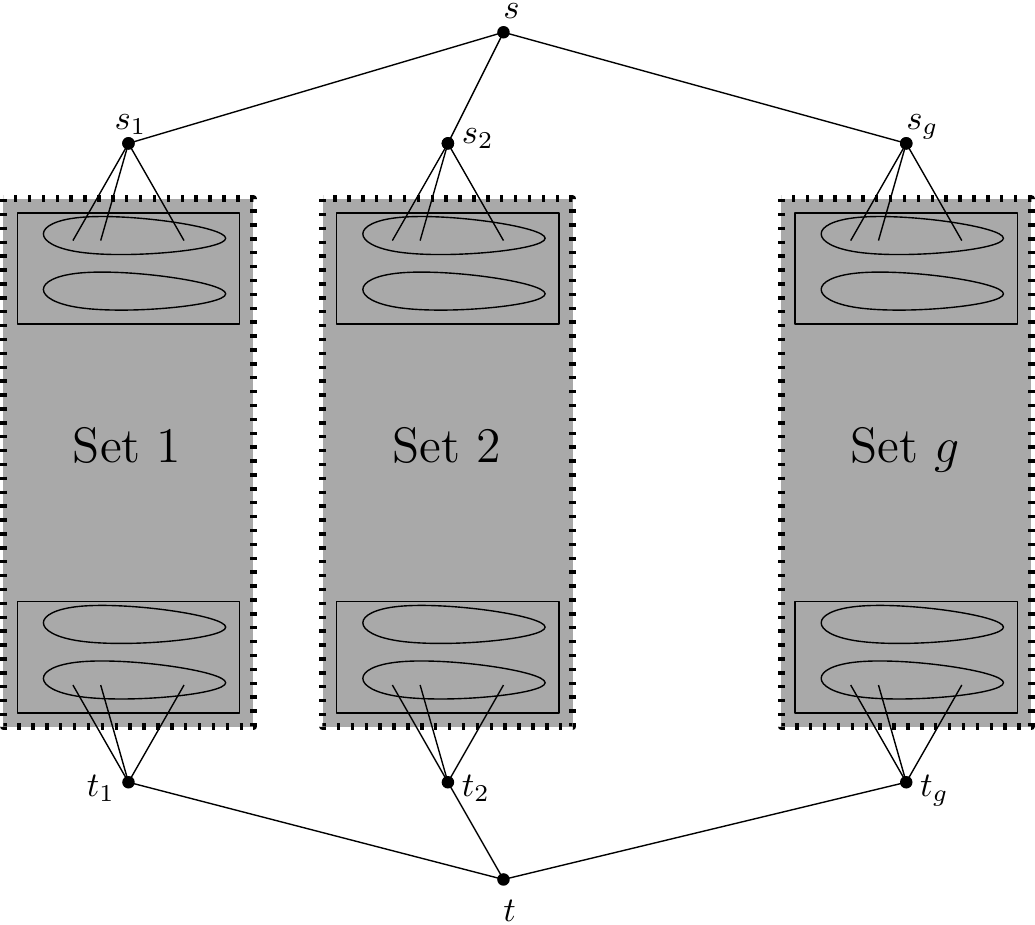}
    \caption{$(k+1)$ labeled max flow instance structure.}
    \label{fig:flow}
\end{figure}

First note that the maximum flow is at most $g$ since the outdegree of $s$ is $g$ and the graph is unit-capacitated. So the flow going through each set gadget is at most $1$, which means that there is at least one path from $s_j$ to $t_j$ through $SG_j$. From Theorem \ref{thm:knlstc-hardness} any zero weight $k$-clique corresponds to $g$ paths, one in each set gadget, using $k$ labels corresponding to the $k$ vertices of the clique. So any zero weight $k$-clique corresponds to a $(k+1)$ labeled flow of size $g$ from $s$ to $t$. Conversely, if there is a $(k+1)$ labeled flow of size $g$ from $s$ to $t$, it must correspond to (at least) one path from $s_j$ to $t_j$ in $SG_j$ for each $j$ with all the $g$ paths having the same $k$ labels, which corresponds to a zero $k$-clique by Theorem \ref{thm:knlstc-hardness}.
\end{proof}

\subsection{Regular Expression Matching is hard from Factored OV}

We are going to reduce \ckov[2]~to regular expression matching. First, we define type and depth of a regular expression. Intuitively, the structure of the operations in a regular expression is called its \textit{type}, which is represented by a tree with nodes labeled with operations. Let $\bullet$ be an arbitrary operator. A tree $T$ with root node $\bullet$ means that all the first level operations of a regular expression $E$ of are $\bullet$, i.e. $E=A_1\bullet A_2\bullet\ldots\bullet A_\ell$, where $A_i$s are regular expressions. The type of each $A_i$ can be the subtree with any of the children of the root node as its root. The \textit{depth} of a regular expression is the longest root-leaf path in the type tree of the regular expression.

We reduce (\#)\ckov[2]~to (\#)regular expression matching where the pattern is a depth 5 pattern of type $T_0$ shown in Figure \ref{fig:regex-type}, and we give an $O(mn)$ algorithm for counting such patterns in $O(mn)$ time (Theorem \ref{thm:regexcounting} in the Appendix).


\begin{figure}[ht]
    \centering
    \includegraphics[width=0.3\textwidth]{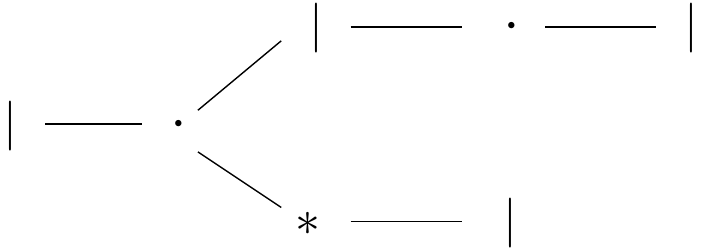}
    \caption{Type $T_0$ of the regular expression of Theorem \ref{thm:regex}. Where ``$|$" is the OR operator, ``$\cdot$" is the concatenation operator, and $*$ is the Kleen star operator.}
    \label{fig:regex-type}
\end{figure}

\begin{reminder}{Theorem \ref{thm:regex}}
Let $R$ be an integer where $\lg{(R)}$ is subpolynomial. If you can solve (\# mod ${R}$) regular expression matching in $T(n)$ time, then you can solve (\# mod $R$) \ckov[2]~in $\tO(T(n)+n)$ time. 
\end{reminder}

\begin{proof}
We use the proof of Theorem $1$ in \cite{regularexpression} that shows hardness for patterns of type ``$|\cdot|$", where ``$|$" is the OR operator, ``$\cdot$" is the concatenation operator and the type tree is a path of length two with node $|,\cdot$ and $|$ respectively. In \cite{regularexpression} authors start with any $2OV$ instances $(A,C)$ where $A=\{a^1,\ldots,a^n\}$ and $C=\{c^1,\ldots,c^n\}$ are sets of $n$ vectors of dimension $d$ and reduce it to an instance of regular expression matching with pattern $p$ constructed from $A$ (and independent from $C$) and text $t$ constructed from $C$ (and independent from $A$) both of $O(dn)$ size, where any orthogonal pair $(a,c)$ with $a\in A$ and $c\in C$ corresponds to an alignment of $p$ on a substring of $t$, and conversely any alignment of $p$ on $t$ corresponds to an orthogonal pair $(a,c)$. More particularly, pattern $p=VG(a^1)|\ldots|VG(a^n)$ consists of the OR of vector gadgets $VG(a)=CG(a_1)\cdot \ldots \cdot CG(a_d)$ where $CG$ is a coordinate gadget and $a_i$ is the $i$th bit of vector $a$. Each coordinate gadget is aligned on a single bit. The text $t=VG'(c^1)2\ldots 2VG'(c^n)$ consists of vector gadgets $VG'(c)=c_1c_2\ldots c_d$ which is the bit representation of the vector $c$. We have that $a^i\cdot c^j=0$ iff there is an alignment of $VG(a^i)$ on $VG'(c^j)$. As a result, the number of orthogonal pairs in $(A,C)$ is the number of subset alignments of $p$ on $t$.

We use the above construction for our factored vectors. Note that if $w$ is a factored vector, for any $j\in \{1,\ldots,g\}$ we can construct a pattern (or a text) of length $O(|w[j]|b)$ using the vectors in $w[j]$ which have length $b$.

Consider an instance $(U,V)$ of \ckov[2]~where $U=\{u^1,\ldots,u^n\}$ and $V=\{v^1,\ldots,v^n\}$ are sets of $n$ factored vectors. We construct the pattern $P$ using $U$ and the text $T$ using $V$. We first construct the pattern. Let $p_i[j]$ be the pattern corresponding to $u^i[j]$ using the construction of \cite{regularexpression} for  $i=1,\ldots,n$ and $j=1,\ldots,g$. Note that the symbols used in $p_i[j]$ are $0,1$. Let $p'_i[j]=[0|1|2]^*\cdot p_i[j]\cdot [0|1|2]^*$, where ``$*$" is the Kleen star operator.
Define the ``pattern factored vector gadget" $PVG(i)$ for $u^i$ as follows:
$$
PVG(i)=p'_i[1]\cdot 3\cdot p'_i[2]\cdot 3\cdot \ldots \cdot 3 \cdot p'_i[g]
$$
Let the pattern $P$ be the following:
$$
P=PVG(1)|PVG(2)|\ldots|PVG(n)
$$
Note that the length of $P$ is $\tO(n)$, since we have that  $|p_i[j]|=O(|u^i[j]|)=O(2^b.b)=o(n)$, $|p'_i[j]|=O(p_i[j])$ and the number of occurrence of the symbol $3$ is $(g-1)n$. The number of symbols in a $*$ expression is also $O(gn)$ since the number of $*$ expressions is $2ng$. So the total number of symbols in $P$ is $\tO(n)$, and as a result the length of $P$ is $\tO(n)$.

Now we construct the text. Let $t_i[j]$ be the text corresponding to $v^i[j]$ using the construction in \cite{regularexpression} for $i=1,\ldots,n$ and $j=1,\ldots,g$.  Note that the symbols used in $t_i[j]$ are $0,1,2$. Define the ``text factored vector gadget" $TVG(i)$ for $v_i$ as follows:
$$
TVG(i)=t_i[1]3t_i[2]3\ldots 3t_i[g]
$$
Let the pattern $T$ be the following:
$$
T=TVG(1)4TVG(2)4\ldots 4TVG(n)
$$
Similar to $P$, the length of $T$ is $\tO(n)$.

Now we have to show that there is a one to one correspondence between orthogonal vectors in the \ckov[2]~instance and the number of subset alignments of $P$ on $T$. First consider an orthogonal pair $(u_i,v_j)$, where for each $k=1,\ldots,g$ the vector chosen from $u^i[k]$ is the $r_k$th vector $u^i[k][r_k]$, and the vector chosen from $v^j[k]$ is the $s_k$th vector $v^j[k][s_k]$. So we have that for each $k$, $u^i[k][r_k].v^j[k][s_k]=0$. This means that there is a subset alignment of $p_i[k]$ on $t_j[k]$ corresponding to $u^i[k][r_k]$ and $v_j[k][s_k]$ for all $k$. We use the $[0|1|2]^*$ parts of $p'_i[k]$ to cover the rest of $t_j[k]$ and thus we get a full alignment of $p'_i[k]$ on $t_j[k]$. Having these alignments for each $k$, they extend uniquely to a full alignment of $PVG(i)$ on $TVG(i)$.

Conversely, suppose that there is a subset alignment of $P$ on $T$. Note that the first level of the pattern consists of ORs, so any alignment should choose some $i\in \{1,\ldots,n\}$ and align $PSG(i)$ on $T$. On the other hand, symbol $4$ is not used in the pattern $P$. So $P$ should be aligned on $TSG(j)$ for some $j$. So assume that in this subset alignment, $PSG(i)$ is aligned on $TSG(j)$. Since there are exactly $g-1$ symbols ``$3$" that are concatenated in $PSG(i)$ and there are exactly $g-1$ symbols ``$3$" in $TSG(j)$, the $3$s should be aligned to each other. So $p'_i[k]$ is fully aligned to $t_j[k]$. Recall that $p'_i[k]=[0|1|2]^*\cdot p_i[k]\cdot [0|1|2]^*$. So $p_i[k]$ should be aligned to $t_j[k]$. So by the construction of \cite{regularexpression} there are unique $r_k$ and $s_k$ where the vector gadget for $u^i[k][r_k]$ is fully aligned to the vector gadget for $v^j[k][s_k]$, which means that these two vectors are orthogonal. Since this is true for every $k$, $u^i$ and $v^j$ are orthogonal, and hence the number of factored orthogonal vectors in the \ckov[2]~instance equals to the number of subset alignments of $P$ on $T$.
\end{proof}

\subsection{Longest Common Subsequence and Edit Distance}
\label{subsec:lcs}

We are going to look at the \klcs~problem in this subsection. We show that \klcs~is hard from \ckov[]. We note that this sort of proof should also work for other string similarity measures. In a work of Bringmann and K{\"{u}}nnemann they show a general framework for proving hardness for string comparisons on two strings from 2-OV \cite{alignmentGadget}. Presumably this framework can be expanded to work for \ckov[], however, generating this framework is out of the scope of this paper. We will note however that the only additional gadget you seem to need to solve \ckov[]~is a selector gadget for at most $\tO(1)$ strings each of length $\tO(1)$. This means even expensive gadgets are acceptable.

We will first show that weighted $k$-LCS is hard from \ckov[]. 
\begin{definition}[Weighted Longest Common Subsequence (WLCS) \cite{LCSisHard}]
For $k$ sequences $P_1, \ldots, P_k$ of
length $n$ over an alphabet $\Sigma$ and a weight function $w : \Sigma \rightarrow [K]$, let $X$ be the sequence that appears
in all of $P_1, \ldots, P_k$ as a subsequence and maximizes the expression $W(X) = \sum_{i=1}^{|X|} w(X[i])$. We say that
$X$ is the WLCS of $P_1, \ldots, P_k$ and write WLCS$(P_1,\ldots, P_k) = W(X)$. The Weighted Longest Common
Subsequence problem asks to output WLCS$(P_1, \ldots, P_k)$.
\end{definition}

We will then use this lemma from a previous work to show that $k$-LCS is hard from $k$-WLCS if the weights are small enough \cite{LCSisHard}.

\begin{lemma}
If the \klcs~of $k$ sequences of length $O(Kn)$ over $\Sigma$ can be computed in time $T(n)$ then the \kwlcs[]~of $k$ sequences of length $n$ over $\Sigma$ with weights $w : \Sigma \rightarrow [K]$
can be computed in $\tO(T(n) K)$ time \cite{LCSisHard}.
\label{lem:WLCSandLCS}
\end{lemma}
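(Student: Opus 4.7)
The plan is to reduce \kwlcs~to \klcs~by a character-replication construction. Given sequences $P_1,\ldots,P_k$ of length $n$ over $\Sigma$ with weights $w:\Sigma\to[K]$, build $P_1',\ldots,P_k'$ by replacing each occurrence of a symbol $c$ in $P_i$ with $w(c)$ consecutive copies of $c$. Since $w(c)\le K$, each $P_i'$ has length at most $Kn$, matching the stated input regime of the LCS algorithm.

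The heart of the proof is the identity $\mathrm{LCS}(P_1',\ldots,P_k')=\mathrm{WLCS}(P_1,\ldots,P_k)$. I would prove this in two directions. For the easy direction, take a WLCS $X=X[1]\cdots X[\ell]$ of weight $W$ together with witnessing positions $p_j^i$ in each $P_i$. In $P_i'$, the symbol $X[j]$ sits as a contiguous block of $w(X[j])$ copies; aligning these blocks across $i\in[k]$ (they are in the same left-to-right order as the $p_j^i$) yields a common subsequence of total length $\sum_j w(X[j])=W$. For the reverse direction, start from any common subsequence $Y$ of the $P_i'$ and partition its matched copies into ``groups'': two matched copies belong to the same group when they are copies of the same symbol $c$ and they originate, in every $P_i$, from the same original position. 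Within each group at most $w(c)$ copies can be matched because that is how many copies exist at any single original position; by a simple exchange argument (the copies are contiguous in every $P_i'$, so extending a partial match to the full block never invalidates the earlier or later matches), we may assume each group matches exactly $w(c)$ copies. Reading off one representative symbol per group, in the induced order, produces a common subsequence of the original $P_i$ whose weight equals $|Y|$, proving $\mathrm{WLCS}\ge\mathrm{LCS}$.

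The main subtlety is the reverse direction: one must verify that the ``grouping'' is well-defined, i.e.\ that strict monotonicity in each $P_i'$ induces strict monotonicity of the group's underlying positions in each original $P_i$. This holds because copies from different original positions are disjoint blocks in $P_i'$, so if two matched copies lie in different groups, their original positions in each $P_i$ are comparable and respect the same order. The remaining bookkeeping -- constructing the $P_i'$, running the \klcs~oracle, and decoding the weight -- takes time $\tilde{O}(nK)$ for construction/decoding plus $T(n)$ for the oracle call; collecting all these bounds yields the claimed $\tilde{O}(T(n)\,K)$ running time.
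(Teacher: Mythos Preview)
The paper does not prove this lemma; it simply quotes it from \cite{LCSisHard}, so there is no in-paper argument to compare against. Your character-replication construction is the standard reduction used there, and the direction $\mathrm{WLCS}\le \mathrm{LCS}(P_1',\ldots,P_k')$ is exactly as you describe.

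However, your argument for the reverse inequality has a genuine gap. The exchange step (``extending a partial match to the full block never invalidates the earlier or later matches'') and the conclusion that distinct groups have strictly increasing original positions in \emph{every} $P_i$ are both false for an arbitrary embedding. Take $k=2$, $w(a)=2$, $P_1=a$, $P_2=aa$, so $P_1'=aa$, $P_2'=aaaa$. Embed $Y=aa$ at positions $(1,2)$ in $P_1'$ and $(1,3)$ in $P_2'$. The two matches have signatures $(1,1)$ and $(1,2)$, hence two groups; extending the first group to the whole block in $P_1'$ collides with the second match, and the two group representatives share original position $1$ in $P_1$, so they do not yield a valid common subsequence of $P_1$.

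The fix is to commit to the \emph{leftmost} (greedy) embedding of $Y$ in each $P_i'$. One then checks that the within-block index of each match is determined by $Y$ alone: it resets to $1$ whenever $Y[j]\neq Y[j-1]$ (the first $Y[j]$-position after a non-$Y[j]$ position is always a block start), and otherwise increments modulo $w(c)$. Hence two consecutive same-character matches lie in the same block in one $P_i'$ iff they do in every $P_i'$, so consecutive groups have strictly increasing signatures in all coordinates. With this choice of embedding your grouping argument goes through and gives $W(X)=\sum_g w(c_g)\ge\sum_g s_g=|Y|$, completing the proof.
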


We want to use the ideas and gadgets of Abboud, Backurs and Vassilevska Williams \cite{LCSisHard}. We basically want to ask, given sets $V_1,\ldots,V_k$ of factored vectors, are there $k$ factored vectors $v_1\in V_1, \ldots, v_k\in V_k$ such that for all $i$ there exist vectors $u_1\in v_1[i],\ldots, u_k\in v_k[i]$ such that those vectors are orthogonal. 
Notably, once you have specified the factored vectors and the index $i$ what remains is a (small) orthogonal vectors instance. The construction from \cite{LCSisHard} produces a fixed longest common sub-sequence value if there is an orthogonal $k$ tuple. So, if we can construct a setup where the output of the WLCS is basically a concatenation of gadgets for each index $i$ then we will get the value we want for any given pair of vectors. We will need to add some gadgets to force the WLCS to ``pick'' a set of vectors. 

Notably, ``selector gadgets" from \cite{LCSisHard} serve the purpose of forcing the WLCS to choose which $k$ factored vectors to compare. And, if we have a gadget for every subset (so every $i\in[1,g]$) and put a high value set of symbols between them it forces the gadgets to not interact or loose that value. For this we want a ``parallel gadget". Thus, we get a WLCS that is roughly the concatenation of the WLCS of each of the $g$ gadgets. This gives us the desired result. 

\paragraph{Gadgets of General Use}

First we will describe the \textbf{selector gadget}.
\begin{lemma}
As input we are given $k$ lists $L_1, \ldots, L_k$ each of which contain $n$ strings of length at most $\ell$  (e.g. $s_{i,1}, \ldots, s_{i,n} \in L_i$ and $|s_{i,j}| \leq \ell$) with an alphabet $\Sigma$ and weights in the range $[K]$. Let $M$ be the maximum value of WLCS$(s_{1,j_1}, \ldots, s_{k,j_k})$ over all choices of $j_1,\ldots,j_k \in [1,n]$.

We can generate a $k$-WLCS instance $P_1, \ldots, P_k$ with $k$ symbols added to $\Sigma$, the new range of weights being $[2 \ell Kn]$ and length $|P_i| = O(n^2+n\ell)$, such that WLCS$(P_1, \ldots, P_k)=C_{sel}+M$ for $C_{sel} = (2kn) (2\ell Kn)$.
\label{lem:selectorGadget}
\end{lemma}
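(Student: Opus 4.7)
The plan is to construct each $P_i$ by interleaving the strings of $L_i$ with anchor blocks built from $k$ new high-weight symbols, so that (i) the anchors always contribute a fixed weight $C_{sel}$ to the WLCS, independent of which strings are ``selected,'' and (ii) on top of that, the contribution of the $\Sigma$-symbols is exactly the maximum $M$ over all choices $(j_1,\dots,j_k)$.

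I introduce $k$ fresh symbols $\alpha_1,\dots,\alpha_k$, each of weight $W := 2\ell K n$ (so the overall weight range is $[2\ell K n]$, as required, since the original weights are at most $K \le W$). For each $i \in [k]$, I define an anchor block $B_{i}$ of length $O(n)$ consisting of the $\alpha_j$'s in some specified order and multiplicity (in total contributing $2kn$ ``matchable'' anchor positions across the $n+1$ occurrences per $P_i$), and set
\[
P_i \;=\; B_{i,0}\; s_{i,\pi_i(1)}\; B_{i,1}\; s_{i,\pi_i(2)}\; B_{i,2}\cdots B_{i,n-1}\; s_{i,\pi_i(n)}\; B_{i,n},
\]
where $\pi_i$ is a list-specific ordering of $[n]$ chosen so that every tuple $(j_1,\dots,j_k)\in [n]^k$ can be realized by a consistent common-subsequence embedding. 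The total length is $(n+1)\cdot O(n) + n\ell = O(n^2+n\ell)$, and the number of new symbols is exactly $k$.

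The argument splits into two directions. For the \emph{upper bound}, I would show that any common subsequence $T$ of $P_1,\dots,P_k$ has weight at most $C_{sel}+M$: the total anchor contribution is at most $C_{sel}=(2kn)\cdot W$ by a counting argument on how many $\alpha_j$'s can simultaneously appear in $T$, and the $\Sigma$-contribution is at most $M$ because using $s$-symbols from two distinct indices $j\ne j'$ of the same list would require ``skipping'' at least one anchor block in some other $P_{i'}$, losing at least one match of weight $W$, which strictly dominates any possible $\Sigma$-gain (since a single string contributes at most $\ell K < W$). For the \emph{lower bound}, given the optimal tuple $(j_1^{\star},\dots,j_k^{\star})$ witnessing $M$, I would construct an explicit common subsequence that matches all $2kn$ anchors (contributing exactly $C_{sel}$) together with the WLCS witness on $(s_{1,j_1^{\star}},\dots,s_{k,j_k^{\star}})$ (contributing exactly $M$); the orderings $\pi_i$ are chosen so that such an insertion between anchor blocks is always possible.

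The main obstacle is designing the anchor blocks $B_{i,j}$ and the orderings $\pi_i$ so that both directions succeed simultaneously: the anchors must be ``fully matchable'' for \emph{every} tuple choice (which is needed in the lower bound), yet also rigid enough that any inconsistent or multi-index selection costs at least one anchor match of weight $W$ (which is needed in the upper bound). The crucial quantitative choice is setting $W = 2\ell K n$ strictly larger than the total $\Sigma$-weight achievable from any single list ($\le n\ell K$), which forces the optimal WLCS to preserve \emph{all} $2kn$ anchor matches; the combinatorial design of the $B_{i,j}$ and $\pi_i$ then reduces to guaranteeing that any single tuple $(j_1,\dots,j_k)$ admits a consistent embedding, which can be arranged because the free slots between anchor blocks in $P_i$ can be indexed independently across $i$ via the permutations $\pi_i$.
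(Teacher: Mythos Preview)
Your high-level plan is right: introduce $k$ heavy anchor symbols so that any optimal WLCS must match all of them, forcing the $\Sigma$-contribution to come from a single tuple $(j_1,\dots,j_k)$. The weight choice $W=2\ell Kn$ and the accounting $C_{sel}=(2kn)W$ are fine. But the concrete mechanism you propose --- anchor blocks together with list-specific permutations $\pi_i$ --- cannot deliver the selector property, and this is the heart of the lemma.

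Here is the obstruction. If each $P_i$ has the form $B_{i,0}\,s_{i,\pi_i(1)}\,B_{i,1}\cdots B_{i,n}$ and matching all $2kn$ anchors means (as your upper bound needs) that the $r$-th block of every $P_i$ is aligned with the $r$-th block of every other, then the only $\Sigma$-symbols that can sit between two aligned anchor walls are those coming from the \emph{same slot index} $r$ in every $P_i$. So you realize only the $n$ ``diagonal'' tuples $(\pi_1(r),\dots,\pi_k(r))$ for $r\in[n]$, not all $n^k$ tuples. No choice of permutations fixes this; and if you weaken the anchors so that blocks need not align rigidly, your upper-bound argument (mixing two indices costs an anchor) breaks. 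The two requirements you identify as the ``main obstacle'' are genuinely in tension for any flat interleaving, and the proposal does not resolve that tension --- it only names it.

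The paper's construction works by making the anchor structure \emph{asymmetric and nested across $i$}. With new symbols $@_1,\dots,@_k$ of weight $W$ it sets
\[
STG_i(s)=@_i\,@_{i-1}^{(2n)}\cdots@_1^{(2n)}\,s\,@_1^{(2n)}\cdots@_{i-1}^{(2n)}\,@_i,\quad
P_i=@_k^{(2n)}\cdots@_{i+1}^{(2n)}\,STG_i(s_{i,1})\cdots STG_i(s_{i,n})\,@_{i+1}^{(2n)}\cdots@_k^{(2n)}.
\]
The key invariant is that $@_i$ occurs exactly $2n$ times in $P_i$ but $\Theta(n^2)$ times in $P_j$ for $j>i$. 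Matching all $2n$ copies of $@_k$ forces the common subsequence to live inside a single $STG_k$-cell of $P_k$; inside that cell there are abundant $@_{k-1}$'s, so matching all $2n$ of the scarce $@_{k-1}$'s in $P_{k-1}$ independently forces a single $STG_{k-1}$-cell there, and so on down to $i=1$. This hierarchy is exactly what lets each $P_i$ choose its index $j_i$ independently of the others while still matching every anchor --- the effect your permutation scheme was meant to produce but cannot. Once you adopt this layout, your two directions (upper bound via ``missing one anchor costs more than any $\Sigma$-gain'', lower bound via an explicit witness) go through essentially as you outlined.
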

\begin{proof}
We introduce $k$ symbols $@_1, @_2, \ldots, @_k$ for this selector gadget. We assign a weight of $2\ell K n$ to all the symbols $@_j$ (note this is larger than the total weight of any given string $s_{i,j}$). 
For convenience by $@_j^{(x)}$ we mean $x$ copies of the symbol $@_1$. 

We first define a helper gadget for separating our strings
$$STG_i(s) = @_i @_{i-1}^{(2n)} \ldots @_{1}^{(2n)} s @^{(2n)}_1 \ldots @_{i-1}^{(2n)} @_i.$$

We can now define our output strings:
$$P_i =@_{k}^{(2n)} \ldots @_{i+1}^{(2n)} STG_i(s_{i,1})\ldots STG_i(s_{i,n})@_{i+1}^{(2n)} \ldots @_{k}^{(2n)} .$$

Note that every $STG_i()$ gadget is of length $O(n^2+ \ell)$ and $O(n)$ $STG_i()$ gadgets are used. The additional $@$ symbols make up at most $kn$ symbols on every string. So the total length of each string $P_i$ is at most $O(n^2+n\ell)$. The largest weight we use is for the $@$ symbols, they have a weight of $2 \ell K n$. There are $k$ $@$ symbols, so we increase the alphabet by $k$.

Let $?$ be some string made of symbols from the original alphabet (so no $@$ symbols). We will use this to make arguing easier. 
We claim that the weighted longest common subsequence will look like this:
$$@_{k}^{(y_k)} \ldots @_{2}^{(y_2)}  @_{1}^{(y_1)}?  @_{1}^{(2n-y_1)} @_{2}^{(2n-y_2)} \ldots @_{k}^{(2n-y_k)}.$$

Let us argue for this claim. First, every symbol $@_i$ appears only $2n$ times in string $P_i$ so it can not appear more often. Second, in the string $P_1$ the only place that symbols $@_k,\ldots, @_2$ appear is at the start and end of the string in the order presented above. 
For $@_k$ it appears only $2n$ times in the string $P_k$. In every other string $@_k$ only appear at the start and end of strings. To match all $2n$ copies of the $@_k$ symbol we must align a single $STG_k(s)$ gadget with the other strings $P_i$ for $i<k$. Given that we are matching a single $STG_k(s)$ string note that the only locations that $@_{k-1},\ldots, @_1$ symbols appear are around the string $s$ in decreasing and then increasing order. So, if we do try to match all $2n$ copies of every symbol $@_i$ we must get \kwlcs[]~that looks like the above. 

Now we will argue that you can match $2n$ copies of every symbol $@_i$. Consider the string $P_i$, if you pick any single $STG_i(s_{i,j})$, all the symbols $@_i$ and the ``intro'' and ``outro'' strings of $@_{k}^{(2n)} \ldots @_{i+1}^{(2n)}$ and $@_{i+1}^{(2n)} \ldots @_{k}^{(2n)}$ together make a string of the form:
$$@_{k}^{(2n)} \ldots @_{i}^{(2j-1)} \ldots  @_{1}^{(2n)}?  @_{1}^{(2n)} \ldots @_{i}^{(2n-2j+1)} \ldots @_{k}^{(2n)}.$$
If we match $k$ of these we get our claimed string where $y_i = 2j-1$. \\
Now we must argue that one wants to match all $@_i$ symbols possible. Note that every $@_i$ symbol is worth more than all non $@$ symbols in the entire string. Given this, we must prefer matching all $@$ symbols to any other goal. 

So, given that the \kwlcs[]~will have the described form we can now note the following. The $?$ that appears must be the \kwlcs[]~of $k$ strings $s_{i,j}$. In every $P_i$ in order to match all $2n$ symbols $@_i$  and only have non-$@$ symbols in the middle of the string one must select a single $STG()$ gadget to be included in the \kwlcs[]. 

So the \kwlcs[]~will include $2n$ copies of $@_i$ symbols and the \kwlcs[]~of the $k$ strings that have the largest \kwlcs[]. 
\end{proof}

Now we will describe the \textbf{parallel gadget}.  

\begin{lemma}
As input we are given $k$ lists $L_1, \ldots, L_k$ each containing $g$ strings of length at most $\ell$  (e.g. $s_{i,1}, \ldots, s_{i,g} \in L_i$ and $|s_{i,j}| \leq \ell$) with an alphabet $\Sigma$ and weights in the range $[K]$. Let  
$$M = \sum_{j=1}^{g} WLCS(s_{1,j}, \ldots, s_{k,j}).$$

We can generate a $k$-WLCS instance $P_1, \ldots, P_k$ with $1$ symbol added to the alphabet $\Sigma$, the new range of weights being $[2 \ell K g]$ and length $|P_i|=O(n\ell)$, such that WLCS$(P_1, \ldots, P_k)=C_{par}+M$ for $C_{par} = 2\ell K g (g-1)$. 

The count of \# WLCS$(P_1, \ldots, P_k)$ will be 
$$\Pi_{j=1}^g  \# WLCS(s_{1,j}, \ldots, s_{k,j}).$$
So the multiplication of all the matched $k$ tuple counts. 
\label{lem:parallelGadget}
\end{lemma}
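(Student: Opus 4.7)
The plan is to construct the parallel strings by inserting a single high-weight separator symbol $\$$ between consecutive string gadgets. Specifically, I would add one new symbol $\$$ to $\Sigma$ with weight $w(\$) = 2\ell K g$ (justifying the new weight range $[2\ell K g]$), and define
$$P_i = s_{i,1}\, \$\, s_{i,2}\, \$ \cdots \$\, s_{i,g}$$
for each $i$. Each $P_i$ has length at most $g\ell + (g-1) = O(g\ell)$, and we have added exactly one symbol to the alphabet.

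The first key step is to argue that any WLCS of $P_1, \ldots, P_k$ matches all $g-1$ copies of $\$$ in lockstep across the $k$ strings. Omitting even one $\$$ from a WLCS costs weight $2\ell K g$. The total weight of the non-$\$$ contribution to any WLCS, however, is at most $K \cdot g\ell$ (each $P_i$ contains at most $g\ell$ non-$\$$ characters, each of weight at most $K$). Since $K g \ell < 2 K g \ell$, the loss from dropping a $\$$ always exceeds any possible gain from the non-separator portion, so every WLCS must match all $g-1$ separators in each $P_i$, contributing exactly $C_{par} = (g-1) \cdot 2\ell K g = 2\ell K g(g-1)$ in separator weight. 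Because the $\$$ symbols in each $P_i$ occupy fixed positions splitting $P_i$ into the $g$ segments $s_{i,j}$, and all $\$$s are aligned across the $P_i$'s, the remaining non-$\$$ portion of any WLCS decomposes into $g$ independent WLCS subproblems, one per slot $j$ on $(s_{1,j}, \ldots, s_{k,j})$. Its total weight is therefore $M = \sum_{j=1}^g \mathrm{WLCS}(s_{1,j}, \ldots, s_{k,j})$, yielding overall WLCS value $C_{par} + M$.

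For the counting formula, since the separator positions in the alignment are uniquely forced, every maximum-weight common subsequence of $P_1, \ldots, P_k$ is a concatenation of the $g-1$ aligned $\$$ symbols together with independent choices of WLCS-realizing subsequences in each of the $g$ slots. Any position-tuple alignment of a WLCS of $(P_1, \ldots, P_k)$ thus decomposes uniquely into a tuple of position-tuple alignments of WLCS's on the slots, and conversely any such tuple of slot alignments assembles into a WLCS alignment of $(P_1, \ldots, P_k)$. Following the definition of $\#\mathrm{WLCS}$ used earlier for $\#$\klcs, multiplying across the $g$ independent slots gives $\prod_{j=1}^{g} \#\mathrm{WLCS}(s_{1,j}, \ldots, s_{k,j})$.

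The main technical subtlety is ensuring the separator-domination bound is strict enough: we need $w(\$) > K g \ell$, so that not only is matching all separators optimal, but any WLCS with fewer matched separators is strictly suboptimal. Setting $w(\$) = 2\ell K g$ gives this with room to spare, which both forces the separator structure of every WLCS to be identical and makes the counting decomposition clean.
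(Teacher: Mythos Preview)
Your proposal is correct and follows essentially the same approach as the paper: introduce a single high-weight separator $\$$, define $P_i = s_{i,1}\,\$\,s_{i,2}\,\$\cdots\$\,s_{i,g}$, and use the domination of $w(\$)$ over the total non-separator weight to force all $g-1$ separators to align, decomposing the problem into $g$ independent slots. Your argument is in fact more carefully spelled out than the paper's own proof, which gives the same construction but only sketches the domination and counting arguments in a couple of sentences.
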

\begin{proof}
We create a new character $\$$ with weight $2 \ell k n$ which is larger than $M$. Create each string $P_i$ as follows
$$P_i=s_{i,1} \$ s_{i,2} \$ \ldots \$ s_{i,g}.$$

Now aligning the $g-1$ symbols $\$$ has such impact it swamps everything else. So the WLCS will force comparisons of the first $k$ tuple ($s_{1,1},\ldots,s_{(k,1)}$), then the next $k$ tuple and so on. Given this, the count of the number of longest weighted subsequences is simply the multiplication of how many ways to achieve the longest subsequence for each of our $g$ $k$-tuples. 
\end{proof}

\paragraph{Building Factored Vector Gadgets}
\begin{lemma}
Let $w_1, \ldots, w_k$ be vectors of length $b = o(\lg(n))$. There are $k$ vector gadgets $VG_1(\cdot), \ldots, VG_k(\cdot)$ such that WLCS$\left(VG_1(w_1), \ldots, VG_k(w_k) \right)$ is some constant $C_{VG}$ if the $k$ vectors $w_1,\ldots, w_k$ are orthogonal and is $C_{VG}-1$ otherwise.  

This uses an alphabet of size $2k+2$, weights of size $\tO(1)$ and the length of each $w_i$ is $\tO(1)$.

\label{lem:vecGadget}
\end{lemma}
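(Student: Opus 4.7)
The plan is to construct the vector gadgets as concatenations of $b$ coordinate gadgets interleaved with common separator blocks, extending the alignment-gadget paradigm used in the $2$-string LCS lower bounds of Abboud-Backurs-Vassilevska Williams and Bringmann-K\"unnemann from $2$ strings to $k$ strings.

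First I would design a $k$-string coordinate gadget $CG_i(x)$ for string index $i \in [k]$ and bit $x \in \{0,1\}$ with the property that the $k$-WLCS of $CG_1(x_1), \ldots, CG_k(x_k)$ equals a fixed constant $C$ whenever at least one $x_i = 0$, and equals $C-1$ when all $x_i = 1$. A natural design uses $k$ string-specific filler symbols together with two shared bit characters; the weights are set so that the unique way to attain the maximum is to use a ``$0$-witness'' from some string whose bit is $0$, which becomes impossible exactly when the bits are all $1$. I would verify the per-coordinate claim by a direct case analysis on how the bit characters can align.

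Next I would define $VG_i(w) = S \cdot CG_i(w[1]) \cdot S \cdot CG_i(w[2]) \cdots S \cdot CG_i(w[b]) \cdot S$, where $S$ is a fixed separator string shared across all $k$ gadgets whose total weight strictly exceeds the cumulative weight of all coordinate gadgets. This forces any optimal WLCS to match every copy of $S$ and hence to compare the $j$-th coordinate gadget of each string against the $j$-th coordinate gadget of every other string, so that the WLCS decomposes as a separator weight plus $\sum_{j=1}^{b} \mathrm{WLCS}(CG_1(w_1[j]), \ldots, CG_k(w_k[j]))$. By the coordinate-gadget property this sum is maximized, giving the orthogonal value $C_{VG}$, exactly when every coordinate has at least one $0$; otherwise it drops by at least $1$ per ``bad'' coordinate.

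To upgrade this ``at least $1$ loss'' behavior to the strict dichotomy claimed in the lemma (exactly $C_{VG}$ or exactly $C_{VG}-1$), I would add a single global escape sub-structure to each $VG_i$: a short path guarded by a fresh escape character and carrying total weight exactly $C_{VG}-1$, which alignments can take only by abandoning all separator matches. This floors the WLCS at $C_{VG}-1$ in every case while leaving the clean orthogonal value $C_{VG}$ attainable via the direct path. The main obstacle will be arranging the escape characters so that escape and the direct coordinate path cannot be partially combined to produce a value strictly between $C_{VG}-1$ and $C_{VG}$ in the non-orthogonal case; concretely, placing each escape copy so that using it consumes all the per-string separator characters forces a hard switch between the two regimes. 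Finally I would verify that the alphabet has size $2k+2$ (two bit characters, $k$ string-specific fillers, plus one separator and one escape character per role), that all weights fit in $\tO(1)$, and that $|VG_i(w)| = \tO(1)$ because $b = o(\log n)$.
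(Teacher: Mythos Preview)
Your route is genuinely different from the paper's. You build per-coordinate gadgets $CG_i(x)$ and concatenate them with heavy separators, following the classical $k$-OV $\to$ $k$-LCS template. The paper bypasses coordinate gadgets entirely: for each $w_i$ it lists all $kb$-bit $0/1$ strings of the form $\{0,1\}^{(i-1)b}\,w_i\,\{0,1\}^{(k-i)b}$ that encode an orthogonal $k$-tuple (at most $2^{(k-1)b}=\tO(1)$ of them), pads these lists to equal length, and wraps them in the selector gadget of Lemma~\ref{lem:selectorGadget}. Two equal-length $0/1$ strings have LCS equal to their length iff they are identical, so a full-length match across the $k$ selected strings exists iff $w_1,\dots,w_k$ are orthogonal. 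The floor at $C_{VG}-1$ is then obtained by a \emph{second} selector over the two-element list $\{VG'_i(w_i),\,0^{y-1}\}$; the two nested selectors contribute $k$ new symbols each, which together with $\{0,1\}$ is where the $2k+2$ alphabet comes from. The trade is that the paper avoids designing a $k$-ary coordinate gadget at the cost of enumerating $\tO(1)$ candidate strings per vector.

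The concrete gap in your plan is the escape step. A single high-weight escape symbol ``placed so that using it consumes all the per-string separator characters'' is not yet a mechanism: for almost any placement of one copy per string, a common subsequence can include the escape \emph{and} a nontrivial portion of the normal path, overshooting $C_{VG}$ even in the orthogonal case. One placement that does force exclusivity is to put the escape as the last symbol of $VG_1$ and the first symbol of $VG_2$ (then any common subsequence containing it can contain nothing else), but this asymmetry between the $k$ strings is awkward once the $VG_i$ are themselves nested inside further selectors as in Lemma~\ref{lem:setGadget}. The paper's second-selector-over-$\{VG',\,0^{y-1}\}$ trick handles the floor uniformly and composes cleanly with the outer layers. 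Minor point: your alphabet tally (two bit symbols, $k$ fillers, one separator, one escape) is $k+4$, not $2k+2$.
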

\begin{proof}
We introduce two symbols $0$ and $1$ where $w(0)=w(1)=1$. For each vector $w_i$ we construct all possible $kb$ length zero-one strings of the following form: $\{0,1\}^{(i-1)b}w_i\{1,0\}^{(k-i)b}$. That is, we generate all possible $bk$ length zero-one strings where the bits from position $b(i-1)+1$ to position $bi$ form $w_i$. Call this set of strings $S'_i(w_i)$. Now we generate the set $S_i(w_i) \subseteq S'_i(w_i)$ by including only strings where the vectors formed by the first $b$ bits, the second $b$ bits, the third $b$ bits, etc form a $k$ tuple of vectors that are $k$-orthogonal. So $S_i(w_i)$ is a representation of all tuples of $k$ vectors of length $b$ where $w_i$ is the $i^{th}$ vector and the $k$ vectors are $k$-orthogonal. 

Now note that the only way that there is one string from each set $S_1(w_1),\ldots,S_k(w_k)$ such that the weighted longest common subsequence of those strings is $kb$ is that if those strings match perfectly. The only way for there to be $k$ perfectly matching strings is if vectors $w_1,\ldots,w_k$ are orthogonal (the string would otherwise be excluded).

So we have generated $k$ lists of at most $2^{b(k-1)}$ strings of  length $bk$. To ensure all the lists are the same length we will pad all the lists to length $2^{b(k-1)}$ with empty strings. 
We can now use the selector gadget (Lemma \ref{lem:selectorGadget}) to wrap around these lists. This will add $k$ new symbols and make the gadgets have length $O(2^{2b(k-1)}+2^{b(k-1)}bk)$ with weights in range $[2bk2^{b(k-1)}]$. We note that this length is $\tO(1)$ and this weight is also $\tO(1)$ because $b$ is constrained to be $b=o(\lg(n))$. Call this construction $VG'_i(\cdot)$ for the $i^{th}$ vector. 

So right now if we have an orthogonal vector $k$ tuple we get a weighted longest common subsequence of weight $y =2k2^{b(k-1)}(2 bk 2^{b(k-1)}) + bk$. But, for some inputs the optimal weight could be much lower (like $2k2^{b(k-1)}(2 bk 2^{b(k-1)})$). 

So, we will add another layer of a selector (from Lemma \ref{lem:selectorGadget}) around $VG'_i(w_i)$ as follows: Our lists will be of length two. The $i^{th}$ list will be $VG'_i(w_i)$ and $0^{y-1}$. So if the vectors aren't orthogonal the second option will lower bound the weight of the longest subsequence. This layer of selector adds another $k$ symbols to our alphabet. It multiplies our weight by $\tO(1)$. Our weight remains $\tO(1)$. 

Now if the vectors are orthogonal we get weight $C_{sel}+y$ and $C_{sel}+y-1$ otherwise where $C_{sel}$ is set by our selector gadget.
\end{proof}

\begin{lemma}
Let $Z_1,\ldots, Z_k$ each be a subset of $\{0,1\}^b$. 
There are $k$ set vector gadgets $SVG_1(\cdot), \ldots, SVG_k(\cdot)$ such that WLCS$\left(SvG_1(Z_1), \ldots, SVG_k(Z_k) \right)$ is some constant $C_{SVG}$ if the $k$ sets of vectors have $\circ(Z_1,\ldots, Z_k)>0$ and is $C_{SVG}-1$ otherwise.  


This uses an alphabet of size $3k+2$, weights of size $\tO(1)$ and has a length of $\tO(1)$.
\label{lem:setGadget}
\end{lemma}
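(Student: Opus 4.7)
The plan is to build \(SVG_i(Z_i)\) by combining the vector gadgets of Lemma \ref{lem:vecGadget} with one application of the selector gadget of Lemma \ref{lem:selectorGadget}. For each \(i\) and each \(w \in Z_i\), form the string \(VG_i(w)\). Collect these into a list \(L_i = \{VG_i(w) : w \in Z_i\}\); since \(|Z_i| \le 2^b\) and each \(|VG_i(w)|\) is \(\tO(1)\), every \(L_i\) has at most \(\tO(1)\) strings of length \(\tO(1)\). To feed these into Lemma \ref{lem:selectorGadget}, pad each \(L_i\) up to a common size \(N \le 2^b\) using dummy strings (e.g.\ a fixed string made of fresh "garbage" symbols, or simply the all-zero string of length \(kb\)) whose pairwise WLCS across the lists is strictly less than the \(C_{VG}-1\) guaranteed by Lemma \ref{lem:vecGadget}; the point is only to equalize the list sizes without introducing a new maximizer. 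Then apply Lemma \ref{lem:selectorGadget} to \(L_1,\dots,L_k\) to produce the strings \(SVG_1(Z_1),\dots,SVG_k(Z_k)\).

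The correctness follows directly from the two ingredients. By Lemma \ref{lem:selectorGadget} we have
\[
\mathrm{WLCS}(SVG_1(Z_1),\dots,SVG_k(Z_k)) = C_{sel} + \max_{w_1\in Z_1,\dots,w_k\in Z_k} \mathrm{WLCS}(VG_1(w_1),\dots,VG_k(w_k)),
\]
where the max is taken over the padded lists. By Lemma \ref{lem:vecGadget} the inner WLCS equals \(C_{VG}\) whenever \((w_1,\dots,w_k)\) is an orthogonal tuple and \(C_{VG}-1\) otherwise. Provided the dummy padding strings are chosen so that any tuple involving at least one dummy has WLCS strictly less than \(C_{VG}-1\), the maximum is \(C_{VG}\) exactly when some orthogonal \(k\)-tuple exists in \(Z_1\times\cdots\times Z_k\), i.e.\ when \(\circ(Z_1,\dots,Z_k) > 0\), and is \(C_{VG}-1\) otherwise. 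Setting \(C_{SVG} := C_{sel} + C_{VG}\) gives the claimed dichotomy.

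For the parameter bookkeeping: Lemma \ref{lem:vecGadget} uses \(2k+2\) symbols, and Lemma \ref{lem:selectorGadget} introduces \(k\) additional fresh \(@_i\) symbols, for a total alphabet of \(3k+2\) symbols as claimed. Both lemmas blow up weights only by \(\tO(1)\) factors, and since \(b = o(\lg n)\) the inner gadgets have length and weights that are \(\tO(1)\); the selector then produces strings of length \(O(N^2 + N\ell) = \tO(1)\) and weights \(O(N\ell K) = \tO(1)\).

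The only real subtlety, and the step I would be most careful with, is the padding: I need to guarantee that no tuple involving a dummy can match or exceed the \(C_{VG}\) threshold and create a false positive, while at the same time all lists must have equal length for Lemma \ref{lem:selectorGadget} to apply cleanly. The cleanest way is to pad with a short string over a symbol disjoint from the \(VG_i\) alphabet (incurring one more symbol if needed, but this can be absorbed into the \(3k+2\) count by reusing one of the selector's fresh symbols), so that its WLCS with any \(VG_j(w_j)\) is \(0\); then Lemma \ref{lem:vecGadget}'s value \(C_{VG}-1\) strictly dominates, and the analysis above goes through verbatim.
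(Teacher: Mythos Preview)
Your approach is essentially identical to the paper's: form the lists \(L_i = \{VG_i(w) : w \in Z_i\}\) from Lemma~\ref{lem:vecGadget} and wrap them with one more layer of the selector gadget from Lemma~\ref{lem:selectorGadget}, yielding \(C_{SVG} = C_{sel} + C_{VG}\) and the \(3k+2\) alphabet count. The paper's proof is terser and does not explicitly discuss padding the lists to equal length; your extra care there is appropriate, though note that padding with empty strings (as the paper does inside Lemma~\ref{lem:vecGadget}) is simpler than introducing a fresh symbol and avoids any risk of interfering with the selector's \(@_i\) analysis.
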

\begin{proof}
We first construct vector gadgets $VG_i(w_i)$ of Lemma \ref{lem:vecGadget} for all $w_i\in Z_i$. Let the list $L_i$ consist of all the vector gadgets $VG_i(w_i)$ for all $w_i\in Z_i$. We use these lists to make the selector gadget of Lemma \ref{lem:selectorGadget}.

Note that $Z_i$ is a set of at most $2^b$ zero-one vectors of length $b$. So the expense of the selector gadget is polynomial in the length, weight, and number of input strings. All these numbers are $\tO(1)$ so the cost of this selector gadget is $\tO(1)$. 
If there is an orthogonal $k$ tuple within these sets then the optimal weight will be $C_{sel}+C_{VG}$, if there are not then the optimal weight will be $C_{sel}+C_{VG}-1$. 

This adds another set of $k$ symbols for a total of $3k+2$. 
\end{proof}

\begin{lemma}
Let $v_1,\ldots, v_k$ each be a factored vector with $g$ sets containing $b$-bit vectors. 
There are $k$ set vector gadgets $FVG_1(\cdot), \ldots, FVG_k(\cdot)$ such that WLCS$\left(FVG_1(v_1), \ldots, FVG_k(v_k) \right)$ is some constant $C_{FVG}$ if the $k$ sets of vectors have $\circledcirc(v_1,\ldots, v_k)>0$ and is $C_{FVG}-1$ otherwise.  

This gadget uses an alphabet of size $3k+3$ has weights of $\tO(1)$ and has a length of $\tO(1)$.
\label{lem:factoredVecGadget}
\end{lemma}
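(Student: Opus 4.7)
The plan is to combine the set-vector gadgets from Lemma~\ref{lem:setGadget} using the parallel gadget from Lemma~\ref{lem:parallelGadget}, so that the resulting WLCS decomposes into a sum over the $g$ coordinates of the factored vectors, mirroring the product structure of $\circledcirc(v_1,\ldots,v_k)=\prod_{i=1}^g \circ(v_1[i],\ldots,v_k[i])$.

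First, for each coordinate $i\in\{1,\ldots,g\}$, I would build the set-vector gadgets $SVG_1(v_1[i]),\ldots,SVG_k(v_k[i])$ from Lemma~\ref{lem:setGadget}. Each such tuple has WLCS equal to $C_{SVG}$ when $\circ(v_1[i],\ldots,v_k[i])>0$ and at most $C_{SVG}-1$ otherwise, with length and weights $\tilde O(1)$ and alphabet of size $3k+2$. Next, for each $j\in\{1,\ldots,k\}$, I form the length-$g$ list $L_j=(SVG_j(v_j[1]),\ldots,SVG_j(v_j[g]))$ and feed the $k$ lists $L_1,\ldots,L_k$ into Lemma~\ref{lem:parallelGadget} to produce the final strings $FVG_1(v_1),\ldots,FVG_k(v_k)$. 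This adds one extra symbol (the separator $\$$) for a total alphabet of $3k+3$, scales weights by $\tilde O(1)$, and keeps the overall length $\tilde O(1)$ since $g=o(\log n/\log\log n)$ and each $SVG_j$ has length $\tilde O(1)$.

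By the guarantee of Lemma~\ref{lem:parallelGadget}, we obtain
\[
\mathrm{WLCS}(FVG_1(v_1),\ldots,FVG_k(v_k)) \;=\; C_{par}+\sum_{i=1}^g \mathrm{WLCS}(SVG_1(v_1[i]),\ldots,SVG_k(v_k[i])).
\]
Let $m$ be the number of indices $i$ with $\circ(v_1[i],\ldots,v_k[i])>0$. Then the inner sum equals $m\cdot C_{SVG}+(g-m)(C_{SVG}-1)=g\cdot C_{SVG}-(g-m)$. Setting $C_{FVG}:=C_{par}+g\cdot C_{SVG}$, we see that the WLCS equals $C_{FVG}$ precisely when $m=g$, i.e.\ precisely when $\circledcirc(v_1,\ldots,v_k)>0$, and is at most $C_{FVG}-1$ otherwise, matching the lemma statement.

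The only subtle step is verifying that the weights inside the parallel gadget remain large enough to force the separators $\$$ to align (so that the WLCS really does decompose additively across coordinates) while staying polynomial in $\tilde O(1)$; this is immediate from the parameters of Lemmas~\ref{lem:setGadget} and~\ref{lem:parallelGadget}, since all of $k$, $b$, and $g$ give $\tilde O(1)$ quantities under the assumed parameter regime. The remaining bookkeeping for alphabet size ($3k+3$), weight range, and total length is routine.
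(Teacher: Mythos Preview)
Your proposal is correct and follows essentially the same approach as the paper: build $SVG_j(v_j[i])$ for each coordinate $i$ via Lemma~\ref{lem:setGadget}, then feed the $g$ resulting $k$-tuples into the parallel gadget of Lemma~\ref{lem:parallelGadget} to obtain an additive decomposition of the WLCS across coordinates, so that the total hits $C_{FVG}$ exactly when every coordinate contributes $C_{SVG}$, i.e.\ when $\circledcirc(v_1,\ldots,v_k)>0$. Your explicit computation via the count $m$ is slightly more detailed than the paper's argument, but the construction and reasoning are the same; one small note is that your conclusion gives ``at most $C_{FVG}-1$'' in the non-orthogonal case, whereas the lemma as stated claims exactly $C_{FVG}-1$, but this discrepancy is present in the paper's own proof as well and does not affect the downstream application.
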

\begin{proof}
Let $v$ be a factored vector with $g$ sets called $v[1],\ldots, v[g]$.

To build $FVG_i(v)$ we want to concatenate the gadgets $SVG_i(v[1]),\ldots,SVG_i(v[g])$. We will use the parallel gadget for this (Lemma \ref{lem:parallelGadget}).

We have that $\circledcirc(v_1,\ldots, v_k)>0$ only if for all $j\in[1,g]$ we have $\circ(v_1[j], \ldots, v_k[j])>0$. So, we want to know if the sum of all the \kwlcs[]~of all $k$ tuples of string $SVG_1(v_1[j]),\ldots,SVG_k(v_k[j])$ are $C_{SVG}$. If they are all $C_{SVG}$, then $\circledcirc(v_1,\ldots, v_k)>0$.

There are $g$ set vector gadgets each of length $\tO(1)$ and with symbols of weight $\tO(1)$. The parallel gadgets weights and length depend polynomialy on $g$ and the weights and length of the input strings. Notably, all these values are $\tO(1)$ so the length and weight of the $FVG_i(\cdot)$ will both be $\tO(1)$. 

The number of symbols increases by $1$ over the symbols in $SVG_i$. So we have $3k+3$ symbols.
\end{proof}

\paragraph{WLCS and LCS}

We now give a reduction from \ckov[]~to \kwlcs[]~in the worst case.

\begin{theorem}
A $T(n)$ time algorithm for $k$-WLCS~with alphabet size $O(k)$ and weights in the range $[\tO(1)]$ implies a $\tO(T(n))$ algorithm for \ckov[].
\label{thm:WLCSisWorstCaseHard}
\end{theorem}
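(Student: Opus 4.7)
The plan is to reduce \ckov[]~to \kwlcs[]~by assembling two gadgets already in hand: the factored vector gadget $FVG_i(\cdot)$ from Lemma \ref{lem:factoredVecGadget}, which acts as a ``probe'' that detects whether a single $k$-tuple of factored vectors contains a witness of $\circledcirc(v_1,\ldots,v_k)>0$, and the selector gadget from Lemma \ref{lem:selectorGadget}, which forces the \kwlcs[]~to commit to a single factored vector from each of the $k$ input lists. Compose these in the natural way: given the \ckov[]~input $V_1,\ldots,V_k$ with $|V_i|=n$, build, for each $i\in[k]$, a list of $n$ strings $FVG_i(v)$ for $v\in V_i$, and feed these $k$ lists into the selector gadget to obtain strings $P_1,\ldots,P_k$.

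The correctness argument proceeds as follows. By Lemma \ref{lem:selectorGadget}, the \kwlcs[]~of $P_1,\ldots,P_k$ equals $C_{sel}+M$, where $M$ is the maximum, over all choices $v_1\in V_1,\ldots,v_k\in V_k$, of WLCS$(FVG_1(v_1),\ldots,FVG_k(v_k))$. By Lemma \ref{lem:factoredVecGadget}, that inner WLCS is $C_{FVG}$ when $\circledcirc(v_1,\ldots,v_k)>0$ and $C_{FVG}-1$ otherwise. Hence a single \kwlcs[]~computation on $P_1,\ldots,P_k$, compared against the threshold $C_{sel}+C_{FVG}$, decides \ckov[].

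For the size and resource accounting: each $FVG_i(v)$ has length $\tO(1)$ and uses weights in $\tO(1)$ over an alphabet of size $3k+3$ by Lemma \ref{lem:factoredVecGadget}. Wrapping $n$ such strings in Lemma \ref{lem:selectorGadget}'s selector increases each string's length to $O(n^2+n\cdot \tO(1))=\tO(n^2)$ and adds only $k$ more symbols, keeping the alphabet at $O(k)$ and the weights at $\tO(1)$. A priori this gives a length blowup to $\tO(n^2)$; however, this is easily avoided by partitioning each $V_i$ into $\sqrt{n}$ blocks of size $\sqrt{n}$ and making $n^{k/2}$ calls to a $k$-WLCS instance of size $\tO(\sqrt{n}\cdot\sqrt{n})=\tO(n)$, which preserves the $n^{k-o(1)}$ reduction. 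Alternatively, one can replace the quadratic selector by an inductive ``tree of selectors'' to keep the string length $\tO(n)$ directly; either way one obtains a $\tO(T(n))$-time algorithm for \ckov[].

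The only subtlety I expect is handling this length overhead cleanly while keeping the alphabet and weights inside the hypothesized bounds $O(k)$ and $\tO(1)$; the rest is an immediate plug-and-play of the gadget lemmas, since the ``orthogonal $k$-tuple'' witness structure of \ckov[]~is exactly what $FVG_i$ was designed to detect and what the selector was designed to choose over.
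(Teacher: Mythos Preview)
Your high-level plan---wrap the $FVG_i$ gadgets in an outer gadget that ranges over the $n$ factored vectors in each list---is exactly the paper's plan. The gap is in which outer gadget you use and in your length accounting.

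The selector gadget of Lemma~\ref{lem:selectorGadget} on $n$ strings of length $\ell=\tO(1)$ produces strings of length $O(n^2)$, as you note. Neither of your fixes recovers the theorem as stated. For the partitioning fix: splitting each $V_i$ into $\sqrt{n}$ blocks of size $\sqrt{n}$ yields $n^{k/2}$ sub-instances, each reducing to a \kwlcs[]~instance of length $\tO(n)$. That costs $n^{k/2}\cdot T(\tO(n))$, not $\tO(T(n))$; in lower-bound language it only shows \kwlcs[]~requires $n^{k/2-o(1)}$, not $n^{k-o(1)}$. (More generally, with block size $s$ you pay $(n/s)^k\,T(\tO(s^2))$, which for $T(N)=N^{k-\epsilon}$ is $n^k s^{k-2\epsilon}$ and beats $n^k$ only when $\epsilon>k/2$.) For the tree-of-selectors fix: each nesting level must introduce $k$ fresh $@$-symbols whose weight dominates everything inside, and since WLCS has a single weight function you cannot reuse symbols across levels. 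A depth-$\Theta(\log n)$ tree therefore needs $\Theta(k\log n)$ alphabet symbols, violating the $O(k)$ alphabet hypothesis.

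The paper avoids this entirely by using a different outer gadget: the \emph{alignment gadget} of Abboud--Backurs--Vassilevska~Williams (Lemma~14 of \cite{LCSisHard}), not the selector. That construction introduces $k+1$ new symbols $8,9,3_2,\ldots,3_k$ with carefully staged weights and produces strings $P_1,\ldots,P_k$ of length $\tO(n)$ directly. Its semantics are also different from the selector's: rather than forcing one choice per list, it forces an alignment of exactly $n$ $k$-tuples of $FVG$'s, so the optimal WLCS equals a fixed constant plus $n(C_{FVG}-1)$ if no $k$-tuple has $\circledcirc>0$, and at least one more otherwise. This suffices for detection and keeps both the alphabet at $O(k)$ and the weights at $\tO(1)$. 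The paper in fact remarks explicitly (in the Future Work discussion of \klcs) that the selector route would give length-$n^2$ strings, which is precisely the obstacle you ran into.
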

\begin{proof}

Let the \ckov[]~instance be given as $k$ lists $V_1,\ldots,V_k$ each containing $n$ factored vectors $v \in V_i$. Every factored vector has $g$ subsets of $\{0,1\}^b$. Recall that $g=o(\lg(n)/\lg\lg(n))$ and $b= o(\lg(n))$. 

We will be reducing this to an instance of \kwlcs[]~where we have $k$ strings $P_1,\ldots,P_k$. These strings will have length $\tO(n)$ and weights that range from $1$ to a number that is $\tO(1)$.   

We will produce our \kwlcs[]~instance by wrapping an alignment gadget around our factored vector gadgets from Lemma \ref{lem:factoredVecGadget}. 
We are going to use the alignment gadget from \cite{LCSisHard} (see the proof of Lemma 14 in that paper) as follows. 

We introduce $k+1$ new symbols: $8,9,3_2, 3_3, \ldots, 3_k$. Let $Q= |P_k|$. For the weights of these symbols, we set  $w(3_i) = B_i$ and we set $B= B_k>D$ where $D$ is the largest possible weight of a factored vector gadget $FVG$ which we defined in Lemma \ref{lem:factoredVecGadget}. The length of a $FVG(v)$ is $\tO(1)$ and the weight of every symbol is $\tO(1)$ so $D$ is $\tO(1)$. We set $B= B_k= (10kD)^2$. We set $B_i = (2k)^{k-i} B$. We set $w(8)=w(9) = 10k^2B_2$. So 
$$w(8)=w(9)\gg w(3_2)\gg w(3_3)\gg \ldots\gg w(3_k)\gg D.$$

We will use parentheses bellow. They do not represent symbols, they are there to assist in readability and to help convey repetitions for example $(\$\#)^3$ means $\$\#\$\#\$\#$. 

Now we produce gadgets to wrap our factored vector gadgets. Let
$$FVG_1'(s) = 8 FVG_1(s) 9 $$
and
$$FVG_i'(s) = 8 FVG_i(s) 9 (3_2 \ldots 3_i)^{Q}.$$

Define the factored vector $\vec{e}$ to be the vector formed by $g$ empty sets (so a vector that gets the worst match possible). We define the concatenation operator. Let $|_{v\in V_i} FVG'_i(v)$ be the concatenation of the $FVG'(\cdot)$ applied to every factored vector $v$ in the input list $V_i$. 

Now we will define the strings $P_i$:
$$P_i =  (3_{i+1}\ldots3_k)^Q (3_2\ldots 3_i)
(FVG_i'(\vec{e}))^{(i-1)n}
\left( |_{v\in V_i} FVG'_i(v) \right)
(FVG_i'(\vec{e}))^{(i-1)n}
(3_{i+1} \ldots 3_k)^Q $$

Given the choices of weights for the symbols $8,9,3_2,\ldots, 3_k$, a weighted longest common subsequence must contain the maximum possible number of each symbol. Given the construction of the alignment gadgets, there are weighted longest common subsequences that contain the maximum possible number of each symbol individually, simultaneously. 
For a more formal treatment see Lemma 14 in \cite{LCSisHard}. The length of these strings is $\tO(n)$ and the weights are of size $\tO(1)$. Recall the $FVG(\cdot)$ constructions have length $\tO(1)$ each. 

Our alphabet use for $FVG$ is $O(k)$ symbols and we have added $k+1$ symbols so the total number of symbols is $O(k)$.

Further note that the optimal $k$-WLCS will align exactly $n$ $k$-tuples of $FVG(\cdot)$s. This means the length of the optimal $k$-WLCS will be some constant $C_{tot}$, plus $n(X-1)$ if there are no $k$-tuples $(v_1,\ldots, v_k)$ such that $\circledcirc(v_1,\ldots, v_k)>0$. Otherwise the optimal  $k$-WLCS will be at least $C_{tot}+n(X-1)+1$. This allows us to solve the detection problem for \ckov[]~with one call to \kwlcs[]~on strings of length $\tO(n)$ and weights in the range $\tO(1)$. 
\end{proof}





\begin{reminder}{Theorem \ref{thm:kLCSisWorstCaseHard}}
A $T(n)$ time algorithm for \klcs~with alphabet size $O(k)$ implies a $\tO(T(n))$ algorithm for \ckov[].
\end{reminder}
\begin{proof}
We use Theorem \ref{thm:WLCSisWorstCaseHard} and Lemma \ref{lem:WLCSandLCS}. The weights of the instance produced in Theorem \ref{thm:WLCSisWorstCaseHard}  are $\tO(1)$ and the length of strings is $\tO(n)$. So we can reduce \ckov[]~to \kwlcs[]~and then reduce that instance of \kwlcs[]~to \klcs. 




\end{proof}

\paragraph{Edit Distance} 

We will use the following Lemma to obtain hardness for Edit Distance.
\begin{lemma}[Restated from Theorem C.2 from \cite{EditDistLCS}]
An algorithm for WLCS (\kwlcs~where $k=2$) that runs in $O(n^{2-\epsilon})$ time for some constant $\epsilon>0$ implies a $O(n^{2-\delta})$ time algorithm for Edit Distance for some constant $\delta>0$. 
\cite{EditDistLCS}
\label{lem:editDistandLCS}
\end{lemma}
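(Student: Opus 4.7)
The plan is to reduce Edit Distance on length-$n$ strings to a WLCS instance on strings of length $O(n)$ over a small alphabet with weights of polynomial magnitude. A truly subquadratic WLCS algorithm then yields a truly subquadratic Edit Distance algorithm with $\delta = \epsilon$ (up to subpolynomial factors from the linear length blowup).

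First I would start from the classical fact that for two strings $x, y$ of length $n$, the "indel" edit distance (insertions and deletions only, no substitutions) satisfies $ED_{\text{indel}}(x,y) = |x| + |y| - 2 \cdot LCS(x,y)$. This already gives the reduction when only insertions and deletions are allowed. The difficulty is that the general Edit Distance also allows unit-cost substitutions, which LCS cannot reward directly.

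To encode substitutions, I would replace each character $c \in \Sigma$ of $x$ and $y$ with a small gadget built from a few auxiliary symbols, and use WLCS weights to assign different rewards to ``matching identical characters,'' ``matching two characters from the same position that differ,'' and ``leaving a character unmatched.'' Specifically, I would pad each character position with a unique high-weight ``anchor'' symbol $\#_i$ so that any optimal WLCS must align $\#_i$ in $x'$ to $\#_i$ in $y'$; this forces the alignment to respect the original character positions. Between consecutive anchors, a carefully chosen constant-weight gadget can be designed so that the WLCS contribution equals $+M$ when the two characters match, $+(M-1)$ when they differ but are ``substituted'' (i.e., both appear between corresponding anchors), and $0$ when one character is ``deleted/inserted.'' Then the total WLCS value equals a linear combination of the counts of matches, substitutions, and gaps, which can be arranged so it equals $C - ED(x,y)$ for an explicit constant $C$ depending only on $|x|, |y|$ and the weight parameters.

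The main obstacle will be the weight-balancing and gadget-atomicity argument: I must ensure the optimizer cannot achieve a larger WLCS by breaking the intended gadget structure (e.g., matching an anchor to the wrong partner, or splitting a character-gadget across two positions). The standard remedy, analogous to the selector and parallel gadgets of Lemmas \ref{lem:selectorGadget} and \ref{lem:parallelGadget}, is to make the anchor weights polynomially larger than any gadget's internal weight, so that missing even one anchor alignment is strictly suboptimal. Once atomicity is established, verifying that $x', y'$ have length $O(n)$, weights $\mathrm{poly}(n)$, and alphabet size $O(1)$ is routine, and the running time transfers as claimed, yielding $\delta = \epsilon - o(1)$.
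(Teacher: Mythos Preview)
The paper does not prove this lemma; it is quoted verbatim from \cite{EditDistLCS} and used as a black box. So there is no in-paper argument to compare against, but your proposal itself has a real gap.

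Your construction introduces position-indexed anchors $\#_i$ and then argues that ``any optimal WLCS must align $\#_i$ in $x'$ to $\#_i$ in $y'$; this forces the alignment to respect the original character positions.'' That is precisely what Edit Distance does \emph{not} allow you to assume: after an insertion or deletion the optimal edit alignment pairs $x_i$ with some $y_j$ for $j\neq i$. If every $\#_i$ must match its same-index partner, insertions and deletions are impossible and your WLCS value encodes a Hamming-type quantity, not $ED(x,y)$. A second, smaller issue is that unique per-position anchors blow the alphabet up to $\Theta(n)$, contradicting your own ``alphabet size $O(1)$'' claim at the end.

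The correct reduction uses a \emph{single} fresh separator symbol rather than position-specific ones, and in fact needs no weights at all. Setting $x'=\$\,x_1\,\$\,x_2\,\$\cdots\$\,x_n\,\$$ and $y'=\$\,y_1\,\$\,y_2\,\$\cdots\$\,y_m\,\$$ over $\Sigma\cup\{\$\}$ gives $ED(x,y)=|x|+|y|+1-LCS(x',y')$: matching two $\$$'s while skipping the flanked characters costs exactly one lost character match (a substitution), and skipping a whole $\$\,c$ block in one string models an insertion/deletion. This keeps the alphabet constant, the lengths $O(n)$, and transfers a truly subquadratic (W)LCS algorithm to Edit Distance with $\delta=\epsilon$.
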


Thus Theorem \ref{thm:WLCSisWorstCaseHard} and Lemma \ref{lem:editDistandLCS} give us the following theorem.

\begin{reminder}{Theorem \ref{thm:editDistisWorstCaseHard}}
A $T(n)$ time algorithm for Edit Distance implies a $\tO(T(n))$ algorithm for \ckov[2].
\end{reminder}

\paragraph{\ckovh[]~and LCS and Edit Distance}

Theorem \ref{thm:kLCSisWorstCaseHard} and  Theorem \ref{thm:editDistisWorstCaseHard} give us the following corollary.
\begin{corollary}
If \ckovh[]~is true then \klcs~requires $n^{k-o(1)}$ time. 
If \ckovh[2]~is true then Edit Distance requires $n^{2-o(1)}$ time. 
\end{corollary}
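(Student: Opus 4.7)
The plan is to derive both statements by a straightforward contrapositive argument from the two preceding theorems, since the corollary is essentially a restatement of those theorems in hypothesis form.

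First I would handle the \klcs~part. Suppose, for contradiction, that \klcs~(with alphabet size $O(k)$, which is what Theorem~\ref{thm:kLCSisWorstCaseHard} delivers) can be solved in $O(n^{k-\epsilon})$ time for some constant $\epsilon > 0$. Then by Theorem~\ref{thm:kLCSisWorstCaseHard}, there is a $\tO(n^{k-\epsilon})$ time algorithm for \ckov[]. The $\tO$ overhead only hides polylogarithmic factors, so this is still $O(n^{k-\epsilon'})$ for any $0 < \epsilon' < \epsilon$, contradicting \ckovh[]~(which asserts that \ckov[]~requires $n^{k-o(1)}$ time). Hence no such $\epsilon$ exists, i.e.\ \klcs~requires $n^{k-o(1)}$ time.

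The Edit Distance case is entirely analogous: assuming an $O(n^{2-\epsilon})$ time algorithm for Edit Distance, Theorem~\ref{thm:editDistisWorstCaseHard} yields a $\tO(n^{2-\epsilon})$ time algorithm for \ckov[2], and absorbing the polylogarithmic factors into a slightly smaller constant contradicts \ckovh[2]. I do not foresee any obstacle here — the only thing to verify is that the $\tO$ in both theorems hides only $n^{o(1)}$ factors (so the contradiction with the $n^{k-o(1)}$ lower bound implicit in the hypothesis goes through), which is the standard convention used throughout the paper.
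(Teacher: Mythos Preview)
Your proposal is correct and matches the paper's approach: the paper does not even write out a proof, merely noting that the corollary follows directly from Theorem~\ref{thm:kLCSisWorstCaseHard} and Theorem~\ref{thm:editDistisWorstCaseHard}, and your contrapositive argument is the standard unpacking of that implication.
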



\section{Average Case Hardness for Subgraph Counting}
\label{sec:SubgraphsWithKNodes}

Here we demonstrate the power of the framework in Section \ref{sec:Framework} to show average case hardness for counting subgraphs $H$ with $k$ vertices, where $k=o(\sqrt{\lg(n)/\lg\lg(n)})$. 
 If the sub-graph $H$ is sufficiently sparse then some larger $k$ can be tolerated. Notably, for this section, as long as the number of edges is $e = o(\lg(n)/\lg\lg(n))$ then our worst case to average case reduction has sub-polynomial overhead. 

Using the framework we can immediately show that counting subgraphs $H$ in what are roughly $H$-partite Erd{\H{o}}s-R{\'{e}}nyi graphs (see Definition \ref{def:HpartiteRandomGraph}) is hard.
We use our Inclusion/Edgesculsion Lemma from Section \ref{sec:edgesclusion} to extend this result to counting subgraphs $H$ in Erd{\H{o}}s-R{\'{e}}nyi graphs, and show that this problem is average case hard as well. We start by a few definitions.


\begin{definition}
The \textbf{counting $H$ sub-graphs in a $H$-partite fashion (\chpgh)} problem takes as input a $k$-node graph $H$ and a $H$-partite $n$-node graph $G$ with vertex set partition $V_1, \ldots, V_k$, and asks for the count of the number of sub-graphs of $G$ that have exactly one node from each of the $k$ partitions and contain the graph $H$.
\label{def:HpartiteRandomGraph}
\end{definition}

\newcommand{\uchpg}{UCHGHP}
\newcommand{\uchpgh}{UCHGHP}
\begin{definition}
The \textbf{uniform counting $H$ sub-graphs in a $H$-partite fashion (\uchpgh)} problem takes as input a $k$-node graph $H$ and an $H$-partite $n$-node graph $G$ with vertex set partition $V_1, \ldots, V_k$, where every edge between partitions that have edges in $H$ is chosen to exist iid with probability $\mu$. The problem asks for the count of the number of sub-graphs of $G$ that have exactly one node from each of the $k$ partitions and contain the graph $H$.
\end{definition}

Note that \chpgh~is a worst-case problem whereas \uchpg~is an average-case problem. Notably, \uchpg~is the uniform distribution over inputs to \chpgh. 

\subsection{Reducing counting \texorpdfstring{$H$}{Lg} subgraphs in \texorpdfstring{$H$-partite}{Lg} fashion to uniform counting}

We start by reducing \chpgh~to \uchpgh. Our ultimate goal is to reduce \chpgh~to counting $H$ subgraphs in an Erd{\H{o}}s-R{\'{e}}nyi graph. 


\begin{lemma}
Let $H$ be a $k$-node graph with vertices $V_H=\{x_1,\ldots,x_k\}$ and $G$ a $H$-partite $n$-node graph with vertex set partition $V_1,\ldots,V_k$. 
Let $\vec{E}$ be the set of variables $\{e(v_i,v_j)| i\neq j, v_i\in V_i, v_j\ \in V_j \}$ when an edge variable is a $1$ if that edge exists and $0$ if the edge is absent in $G$. 
Let $h(v_1, \ldots ,v_k)$ be a function that multiples $e(v_i,v_j)$ if $x_ix_j$ is an edge in $H$ for all $i,j \in [1,k]$ where $i \ne j$. 
If $p$ is a prime in $[2n^k, n^{2k}]$, 
the following function returns the output of \chpgh~on $G$:
$$f(\vec{E}) = \sum_{v_1 \in V_1,\ldots,v_k \in V_k} h(v_1, \ldots ,v_k) \pmod{p}.$$
\label{lem:smallKSugraphPoly}
\end{lemma}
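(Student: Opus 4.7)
The plan is to verify the three claims required: that $h$ is an indicator for an $H$-subgraph on a chosen $k$-tuple, that the sum therefore counts what \chpgh~asks for, and that the modulus $p$ does not distort the answer.

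First I would check pointwise behavior of $h$. Since every variable $e(v_i,v_j)$ takes values in $\{0,1\}$, the product $h(v_1,\ldots,v_k) = \prod_{x_ix_j \in E(H)} e(v_i,v_j)$ equals $1$ when all edges of $H$ are realized between the chosen representatives $v_i,v_j$, and equals $0$ otherwise. In particular, $h(v_1,\ldots,v_k) = 1$ iff the map $x_i \mapsto v_i$ embeds $H$ as a subgraph of $G$ (respecting the partition, since $v_i \in V_i$).

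Next I would sum over all $k$-tuples with $v_i \in V_i$. By the previous observation, $\sum_{v_1 \in V_1, \ldots, v_k \in V_k} h(v_1,\ldots,v_k)$, computed over the integers, equals the number of ordered $k$-tuples $(v_1,\ldots,v_k) \in V_1 \times \cdots \times V_k$ for which the chosen representatives span a subgraph containing $H$ in the $H$-partite fashion. This is exactly the \chpgh~output on input $(H,G)$.

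Finally I would handle the modulus. The integer-valued sum is a count of $k$-tuples taken from a product of sets of size at most $n$, so it is bounded above by $n^k$. Since $p \ge 2n^k > n^k$, the value of the sum is strictly less than $p$, so reducing modulo $p$ does not change it. Hence $f(\vec{E}) = \sum_{v_1 \in V_1,\ldots,v_k \in V_k} h(v_1,\ldots,v_k) \pmod{p}$ agrees with the \chpgh~output on $G$, as required.

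The argument is essentially a bookkeeping check with a size bound; no step is a real obstacle. The only thing worth being careful about is the size bound $n^k < p$, which is why the hypothesis $p \in [2n^k, n^{2k}]$ is stated rather than some smaller range — the upper bound $n^{2k}$ is there so that arithmetic in $\mathbb{F}_p$ still fits in $O(\log n)$-bit words (so this polynomial will be usable as a \gPol{\chpgh}).
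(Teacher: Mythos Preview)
Your proof is correct and follows essentially the same approach as the paper's: verify that $h$ is the indicator of an $H$-subgraph on the chosen tuple, then sum. Your write-up is in fact more complete than the paper's, which leaves the modulus step implicit; you make explicit that the count is at most $n^k < 2n^k \le p$, so reduction mod $p$ is harmless.
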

\begin{proof}

Consider the function $h$: If $v_1, \ldots ,v_k$ in that particular order contain the graph $H$ it returns $1$, otherwise it returns $0$. Specifically, we are checking if our particular permutation of these variables completely covers the (arbitrary) permutation of variables associated with the input sub-graph $H$. 

Now $f$ sums over all choices of $k$ nodes from each partition and counts how many instances of the sub graph appear in each. There is no double counting because every set of $k$ nodes differs by at least one node. 
\end{proof}

\begin{lemma}
The function $f$ defined in Lemma \ref{lem:smallKSugraphPoly} is a \goodPoly~for \chpgh~if the number of edges in $H$ is $o(\lg(n)/\lg\lg(n))$. 
\label{lem:gpolForSmallK}
\end{lemma}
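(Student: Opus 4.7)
The plan is to verify the three conditions in Definition~\ref{def:goodPoly} one by one. The first condition, that $f(\vec{I}) = \chpgh(\vec{I})$ on any Boolean input, was already established in Lemma~\ref{lem:smallKSugraphPoly}; the only subtlety is the reduction mod $p$, but since the true count is at most $n^k$ and we chose $p \geq 2n^k$, the residue mod $p$ equals the count exactly.

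For the degree bound, I would observe that each $h(v_1,\ldots,v_k)$ is a product, over the edges $x_ix_j \in E_H$, of a single edge variable $e(v_i,v_j)$. So every monomial of $f$ has degree exactly $e := |E_H|$. By hypothesis $e = o(\lg(n)/\lg\lg(n))$, so the degree of $f$ satisfies the good-polynomial bound with $d = e$.

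The most interesting step is verifying that $f$ is strongly $d$-partite with $d = e$. I would partition the variables $\vec{E}$ by the $H$-edge they implement: for each edge $\{x_i, x_j\} \in E_H$, form the block
\[
S_{\{x_i,x_j\}} \;=\; \{\, e(v_i, v_j) : v_i \in V_i,\; v_j \in V_j \,\}.
\]
These blocks are pairwise disjoint since each variable $e(v_i,v_j)$ belongs to the unique block indexed by $\{x_i,x_j\}$, the edge of $H$ whose endpoints' partitions house $v_i,v_j$. This yields $e$ blocks. Every monomial of $f$ is of the form $h(v_1,\ldots,v_k) = \prod_{\{x_i,x_j\}\in E_H} e(v_i,v_j)$, which contains exactly one variable from each block $S_{\{x_i,x_j\}}$, and each variable appears to the first power. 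This is exactly the strongly $d$-partite condition.

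I do not foresee a real obstacle; each condition is essentially bookkeeping on the combinatorial structure of $h$. The only potentially delicate point is confirming that the blocks $S_{\{x_i,x_j\}}$ are genuinely disjoint across distinct edges of $H$, which follows from the $H$-partite structure of $G$ (a variable $e(v_i,v_j)$ with $v_i \in V_i$, $v_j \in V_j$ is assigned to the unique edge of $H$ between $x_i$ and $x_j$, if any), and that the number of monomials (bounded by $n^k$) fits below $p$ so that the mod-$p$ reduction loses no information.
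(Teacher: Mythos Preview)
Your proposal is correct and follows essentially the same approach as the paper's proof: both verify the three conditions of Definition~\ref{def:goodPoly} directly, using Lemma~\ref{lem:smallKSugraphPoly} for correctness on Boolean inputs, observing that each monomial $h(v_1,\ldots,v_k)$ has degree $|E_H|$, and partitioning the edge variables according to which edge of $H$ they realize. Your treatment of the strongly $d$-partite condition is in fact more explicit than the paper's (you spell out the blocks $S_{\{x_i,x_j\}}$ and argue their disjointness), but the underlying argument is identical.
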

\begin{proof}
To prove the lemma, first note that $f$ is a polynomial over a prime finite field $F_{p}$ for some prime $p \in [2n^k, n^{2k}]$,
and the number of monomials in $f$ is $O(n^k \cdot k!)$, which is polynomial. By Lemma \ref{lem:smallKSugraphPoly} the function $f$ returns the same value as \chpgh~ when it is given zero-one inputs. 

Let $|E_H|$ be the number of edges in $H$. The function $f$ has degree $|E_H| = o\left( \lg(n)/\lg\lg(n) \right)$. In fact given constant $k$, $f$ has constant degree. This is because $f$ is formed with a sum over monomials $h(v_1,\ldots,v_k)$, which have degree $|E_H|\le \binom{k}{2}$.

Finally, the function $f$ is strongly $|E_H|$-partite. 
There are $|E_H|$ partitions of edges. The function $f$ is a sum over calls to $h$ where $h$ takes as input one variable from each of those edge partitions and multiplies all of them. 
\end{proof}

\begin{corollary}
Let $d=\binom{k}{2}$ and $k=o(\sqrt{\lg(n)/\lg\lg(n)})$.
 If an algorithm exists to solve \uchpgh~in time $T(n)$ with probability $1-1/\omega\left( \lg^{d}(n)\lg\lg^d(n) \right)$, then an algorithm exists to solve \chpgh~in time $\tO(T(n) +n^2)$ with probability at least $1-O\left(2^{-\lg^2(n)} \right)$.
\end{corollary}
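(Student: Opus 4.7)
The plan is to apply the worst-case to average-case framework (Theorem \ref{thm:framework}) directly to the good polynomial constructed in the preceding two lemmas, reading off the resulting parameters.

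First, I would recall that by Lemma \ref{lem:gpolForSmallK} the polynomial $f$ defined in Lemma \ref{lem:smallKSugraphPoly} is a \gPol{\chpgh}, and its degree equals $|E_H| \leq \binom{k}{2} = d$. Since we assume $k = o(\sqrt{\lg(n)/\lg\lg(n)})$, we have $d = o(\lg(n)/\lg\lg(n))$, so the degree requirement of Definition \ref{def:goodPoly} is satisfied. The variables of $f$ are the edge indicators $e(v_i, v_j)$ for $v_i \in V_i$, $v_j \in V_j$ ranging over pairs $i,j$ with $x_i x_j \in E_H$, and there are $O(n^2)$ such indicator bits in total.

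Next, I would verify that the input distribution appearing in the hypothesis of Theorem \ref{thm:framework} (each input bit drawn iid from $Ber[\mu]$) coincides with the input distribution of \uchpgh. In \uchpgh~each admissible edge between partitions is drawn iid with probability $\mu$, and these are exactly the variables on which $f$ depends, so the two distributions are identical. Then I would invoke Theorem \ref{thm:framework}: from an algorithm $A$ that solves \uchpgh~in time $T(n)$ with probability at least $1 - 1/\omega(\lg^d(n)\lg\lg^d(n))$, the framework produces a randomized algorithm $B$ that on \emph{any} input computes $f$ (and hence \chpgh) in time $\tO(m + T(n))$ with success probability at least $1 - O(2^{-\lg^2(n)})$, where $m$ is the number of input bits. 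Since $m = O(n^2)$ for an $H$-partite $n$-node graph, this gives the claimed $\tO(T(n) + n^2)$ bound.

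The main technical obstacle in the underlying proof lies in Theorem \ref{thm:framework} itself, namely the polynomial self-correction and interpolation argument adapted from Boix-Adser\`a et al.; this corollary is essentially a direct instantiation once one checks (a) the degree bound, which is immediate from $k = o(\sqrt{\lg(n)/\lg\lg(n)})$, (b) the distribution match, which holds because $f$'s variables are precisely the random edges of \uchpgh, and (c) the input-size accounting $m = O(n^2)$. No further combinatorial argument is needed beyond what is already packaged in the preceding lemmas and the framework.
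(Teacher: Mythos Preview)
Your proposal is correct and follows essentially the same approach as the paper: invoke Lemma \ref{lem:gpolForSmallK} to get a \gPol{\chpgh}, check the degree bound via $k = o(\sqrt{\lg(n)/\lg\lg(n)})$, and apply Theorem \ref{thm:framework}. You are more explicit than the paper about two points it leaves implicit---that the $Ber[\mu]$ input distribution of the framework coincides with the \uchpgh\ distribution, and that the $n$ in the framework's $\tO(n+T(n))$ is the number of input bits, here $O(n^2)$---but these are exactly the right details to spell out.
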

\begin{proof}
If $k=o(\sqrt{\lg(n)/\lg\lg(n)})$ then the size of the edge set in $H$, $E_H$ is at most $\binom{k}{2}=d = o(\lg(n)/\lg\lg(n))$.
Using Theorem \ref{thm:framework} we simply need that a \goodPoly~for \chpgh~exists.
By Lemma \ref{lem:gpolForSmallK}, the function $f$ from Lemma \ref{lem:smallKSugraphPoly} is a \gPol{\chpgh}. 
\end{proof}

\begin{corollary}
Let $H$ be a sub-graph with an edge set $E_H$ where $|E_H| = o(\lg(n)/\lg\lg(n))$. Let $d=|E_H|$.
 If an algorithm exists to solve \uchpgh~in time $T(n)$ with probability $1-1/\omega\left( \lg^{d}(n)\lg\lg^d(n) \right)$, then an algorithm exists to solve \chpgh~in time $\tO(T(n) +n^2)$ with probability at least $1-O\left(2^{-\lg^2(n)} \right)$.
 \label{cor:uchpghImpliescpgh}
\end{corollary}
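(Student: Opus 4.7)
The plan is to instantiate Theorem \ref{thm:framework} on the very same polynomial $f$ from Lemma \ref{lem:smallKSugraphPoly} that was used in the previous corollary, now exploiting the sharper degree bound. The only difference from the proof of the previous corollary is that the degree of $f$ is taken to be $d = |E_H|$ rather than the looser upper bound $\binom{k}{2}$. This is in fact what Lemma \ref{lem:gpolForSmallK} already proves: the monomials of $f$ are the terms of $h(v_1, \ldots, v_k)$, which multiply exactly one variable $e(v_i, v_j)$ per edge $x_i x_j$ of $H$. So as soon as $|E_H| = o(\lg(n)/\lg\lg(n))$, the ``low-degree'' requirement of a \gPol{\chpgh} is met, the strong $|E_H|$-partite structure is inherited edge-by-edge from $H$, and the evaluation identity $f(\vec{E}) = \chpgh(G) \pmod p$ is immediate.

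The next step is to check that the iid Bernoulli distribution on inputs assumed by Theorem \ref{thm:framework} coincides with the \uchpgh~distribution. Since $f$ depends only on the variables $e(v_i, v_j)$ indexed by pairs $(i,j)$ with $x_i x_j \in E(H)$, sampling each of those input bits independently from $\mathrm{Ber}[\mu]$ reproduces precisely the \uchpgh~average case: each potentially $H$-respecting edge is present independently with probability $\mu$, while edges between partitions corresponding to non-edges of $H$ never enter the polynomial and can be treated as irrelevant. Consequently, an average-case algorithm for \uchpgh~that succeeds with probability $1 - 1/\omega(\lg^d(n)\lg\lg^d(n))$ is exactly what the framework consumes.

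Applying Theorem \ref{thm:framework} then produces, for any worst-case $H$-partite input, a randomized algorithm running in time $\tilde{O}(n^2 + T(n))$ (accounting for the $O(n^2)$ input size of an $H$-partite graph) that outputs $f(\vec{E}) = \chpgh(G)$ with probability at least $1 - O(2^{-\lg^2 n})$, which is the desired conclusion. The only real obstacle is the compatibility check between the two distributions and the degree accounting above; once Lemma \ref{lem:gpolForSmallK} is in hand, there is no additional combinatorial work, and the generalization from the $k = o(\sqrt{\lg(n)/\lg\lg(n)})$ regime to the sparser-graph regime $|E_H| = o(\lg(n)/\lg\lg(n))$ is purely a matter of reading off the correct degree parameter from $f$.
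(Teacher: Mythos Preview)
Your proof is correct and follows essentially the same approach as the paper: invoke Lemma \ref{lem:gpolForSmallK} to certify that $f$ is a \gPol{\chpgh} with degree $d=|E_H|$, then apply Theorem \ref{thm:framework}. Your version is in fact more explicit than the paper's, since you spell out the distribution-matching step (that the Bernoulli input to the framework coincides with the \uchpgh\ distribution on the $H$-respecting edge variables) and the degree accounting, both of which the paper leaves implicit.
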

\begin{proof}
We have that $E_H$ is at most $ \binom{k}{2}=d = o(\lg(n)/\lg\lg(n))$.
Using Theorem \ref{thm:framework} we simply need that a \goodPoly~for \chpgh~exists.
By Lemma \ref{lem:gpolForSmallK}, the function $f$ from Lemma \ref{lem:smallKSugraphPoly} is a \gPol{\chpgh}. 
\end{proof}

\subsection{Inclusion-Edgesculsion}
\label{sec:edgesclusion}
In Corollary \ref{cor:uchpghImpliescpgh} we show that counting subgraphs $H$ in Erd{\H{o}}s-R{\'{e}}nyi $H$-partite graphs quickly with a high enough probability implies fast algorithms for counting $H$-subgraphs in the worst case. 
We now want to extend this to fully Erd{\H{o}}s-R{\'{e}}nyi graphs. Specifically, we want to show that counting $H$-subgraphs in Erd{\H{o}}s-R{\'{e}}nyi quickly with a high enough probability implies a fast algorithm for counting $H$-subgraphs in the worst case. 
To acheive this goal we introduce our Inclusion-Edgesclusion technique.  
We begin with a few definitions. 

\begin{definition}
Let $G$ be a $k$-partite Erd{\H{o}}s-R{\'{e}}nyi graph with every edge included with probability $1/b$ where $b$ is a constant integer. Let the vertex partitions of $G$ be $V_1,\ldots,V_k$ and the edge partitions be $E_{i,j}$ $\forall i,j \in [1,k]$ where $i<j$.

Label all $|V_i|\cdot |V_j|$ edges with numbers in $[1,b]$ as follows. Edges that exist in $G$ are labeled $1$. The rest of the edges are uniformly at random assigned labels from $[2,b]$. 
For $\ell \in [1,b]$, let $E_{i,j}^{\ell}$ be the set of all edges of label $\ell$.

Let $G^{(\ell_1)(\ell_2)\ldots(\ell_{\binom{k}{2}})}$ be the graph formed by choosing edge sets $E_{1,2}^{\ell_1}$, $E_{1,3}^{\ell_2},\ldots, E_{k-1,k}^{\ell_{\binom{k}{2}}}$.
Let $S_G$ be the set of all possible $b^{\binom{k}{2}}$ graphs $G^{(\ell_1)(\ell_2)\ldots(\ell_{\binom{k}{2}})}$.
\label{def:graphInverses}
\end{definition}
Note when $b=2$ these sets of edges are $E_{i,j}^{(1)} = E_{i,j}$ and $E_{i,j}^{(2)} = \bar{E}_{i,j}$.

\begin{definition}
Let $G$ be a $k$-partite Erd{\H{o}}s-R{\'{e}}nyi graph with every edge included with probability $1/b$ where $b$ is a constant integer. Let the vertex partitions be $V_1,\ldots,V_k$. Let the edge partitions be $E_{i,j}$ $\forall i,j \in [1,k]$ where $i<j$.

Let a labeled subgraph $L$ of $H$ in $G$ be a subgraph of $H$ where every vertex is assigned a unique label from $[1,k]$. 

Define the count of the number of labeled subgraphs $L$ in $G$ to be the number of not-necessarily induced subgraphs $L$ where every vertex in $L$ with label $\ell$ comes from $V_{\ell}$ in the original graph. 
\label{def:labeledSubgraphs}
\end{definition}

We want to reduce \uchpgh~to counting subgraphs $H$ in Erd{\H{o}}s-R{\'{e}}nyi graphs. A uniformly random $H$-partite graph only has edges between partitions corresponding to edges in $H$. However, an Erd{\H{o}}s-R{\'{e}}nyi graph would have edges within partitions and between partitions that don't correspond to edges in $H$. So, if we add these random edges we will over count subgraphs $H$, including subgraphs $H$ that appear outside of the original $H$-partite graph. 

We solve this problem by creating multiple graphs. Each graph individually looks like it is sampled from the Erd{\H{o}}s-R{\'{e}}nyi distribution. However, these graphs are correlated. We use a variant of an inclusion-exclusion argument (hence the name ``inclusion-edgesclusion'') to count the subgraphs $H$ that appear in the original $H$-partite graph. 

We will start with a warm up lemma.

\begin{lemma}[Warm Up Lemma]
Let $C_G$ be the count of the number of $k$-node subgraphs $H$ in a complete $k$-partite graph with the same edge partitioning as $G$ where exactly one node of the subgraph is in each partition in $G$.

Let $C_{S_G}$ be the sum of the subgraphs $H$ in all graphs $G^{(\ell_1)(\ell_2)\ldots(\ell_{\binom{k}{2}})}$ in $S_G$ where each of the partitions of $G^{(\ell_1)(\ell_2)\ldots(\ell_{\binom{k}{2}})}$ has exactly one vertex of the subgraph. 

Then, $C_G = C_{S_G}$.
\label{lem:warmUp}
\end{lemma}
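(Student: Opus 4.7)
The plan is to prove the identity by establishing a correspondence between labeled $H$-subgraphs in the complete $k$-partite graph and their occurrences among the graphs comprising $S_G$. The central structural fact I would use is that the labeling procedure of Definition \ref{def:graphInverses} assigns each edge of the complete $k$-partite graph exactly one label in $[1,b]$, and this label determines precisely in which elements of $S_G$ the edge appears.

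I would start by fixing an arbitrary labeled $H$-subgraph $S$ in the complete $k$-partite graph, specified by a tuple $(v_1,\ldots,v_k) \in V_1 \times \cdots \times V_k$ together with the required edges $\{(v_i,v_j) : (i,j) \in E_H\}$. Each required edge carries a unique label $L_{ij} \in [1,b]$, and $S$ sits inside $G^{(\ell_1)(\ell_2)\ldots(\ell_{\binom{k}{2}})} \in S_G$ precisely when $\ell_{ij} = L_{ij}$ for every $(i,j) \in E_H$; the values of $\ell_{ij}$ for edge partitions $(i,j) \notin E_H$ are irrelevant to whether $S$ is present as a subgraph.

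Next, I would verify both directions of the identity. In one direction, every $H$-subgraph appearing in some $G' \in S_G$ is a subgraph of $G'$, hence of the complete $k$-partite graph, and so contributes to $C_G$. Conversely, every labeled $H$-subgraph $S$ in the complete $k$-partite graph appears in at least one graph in $S_G$: just set $\ell_{ij} = L_{ij}$ for $(i,j) \in E_H$ and $\ell_{ij}$ arbitrarily for $(i,j) \notin E_H$. Because the labels on the edges of $S$ uniquely pin down the relevant coordinates of $(\ell_1,\ldots,\ell_{\binom{k}{2}})$, each $H$-subgraph can be placed in a consistent correspondence with the members of $S_G$ that contain it, so the total count $C_{S_G}$ faithfully reproduces $C_G$.

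The main obstacle is the bookkeeping around the $\binom{k}{2}-|E_H|$ edge partitions that $S$ does not constrain, since a naive sum of $H$-subgraph counts across all $b^{\binom{k}{2}}$ graphs in $S_G$ would multiply each subgraph by $b^{\binom{k}{2}-|E_H|}$. The warm-up nature of the lemma suggests that the counting convention for $C_{S_G}$ is set up precisely so that this overcounting factor is absorbed, exposing the structural decomposition that the inclusion-edgesclusion technique of the next subsection will exploit to recover the count in the original $H$-partite graph.
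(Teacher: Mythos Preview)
Your correspondence argument is exactly the paper's approach: each labeled copy of $H$ in the complete $k$-partite graph appears in those graphs $G^{(\ell_1)\ldots(\ell_{\binom{k}{2}})}$ whose coordinates match the labels carried by its edges, and conversely every such $H$ inside a member of $S_G$ is a subgraph of the complete $k$-partite graph. Where you diverge from the paper is in your last paragraph.

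The multiplicity concern you raise is real, but your resolution of it is the gap. There is no special counting convention in the definition of $C_{S_G}$ that cancels the factor $b^{\binom{k}{2}-|E_H|}$; $C_{S_G}$ really is the plain sum, over all $b^{\binom{k}{2}}$ graphs in $S_G$, of the $H$-count in each. The paper's one-line proof asserts that each $H_0$ lies in \emph{exactly one} graph of $S_G$, and this is correct precisely when $|E_H|=\binom{k}{2}$, i.e., when $H$ is the $k$-clique. In that case every edge partition $(i,j)$ carries exactly one edge of $H_0$, all $\binom{k}{2}$ coordinates $(\ell_1,\ldots,\ell_{\binom{k}{2}})$ are pinned down, and your multiplicity factor collapses to $b^0=1$. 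The warm-up is this clique case; the subsequent inductive machinery (Lemma~\ref{lem:inductiveStep1}) is exactly where the $b^{\binom{k}{2}-e-1-\ell}$ multiplicities for proper subgraphs get tracked explicitly. So rather than positing an absorbing convention, you should close the argument by noting that for $H=K_k$ the correspondence you set up is a bijection, which gives $C_G=C_{S_G}$ directly.
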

\begin{proof}
If a subgraph $H_0$ exists and has one vertex in each partition, then there is exactly one choice of $G^{(\ell_1)(\ell_2)\ldots(\ell_{\binom{k}{2}})} \in S_G$ that will contain it. The choice of $G^{(\ell_1)(\ell_2)\ldots(\ell_{\binom{k}{2}})}$ that picks the edge sets that $H_0$'s edges lay in. Every $H$ that exists in the complete graph will appear in exactly one of these $G^{(\ell_1)(\ell_2)\ldots(\ell_{\binom{k}{2}})}$, so the counts of both are the same. 
\end{proof}

What should you get out of this lemma intuitively? Consider what happens if we sum all $H$ that involve exactly one edge from $E_{i,j}^{1}, E_{i,j}^{2},\ldots$ or$, E_{i,j}^{b}$ (as defined in Definition \ref{def:graphInverses}). Then, we are getting the sum of all $H$ that would exist if $E_{i,j}$ were complete. 
We can use this idea to count the subgraphs that use \emph{particular} edge partitions, while every $E_{i,j}^{(\ell)}$ looks uniformly random. 
To do this count, we develop a few lemmas and then we proceed to our main counting result in Lemma \ref{lem:inductiveStep1}.

\paragraph{Counting Small Subgraphs}
We will argue that we can count labeled subgraphs $H$ recursively. We start by arguing the base cases. Below are give fast algorithms for counting small labled subgraphs. By counting labeled subgraphs $H$ in a graph $G$ with partitions $V_1,\ldots,V_k$, we mean that the vertex set of $H$ is labeled with $1,\ldots,k$, and we want every copy of $H$ to have a copy of $x_i$ in $V_i$ where $x_i$ is the vertex with label $i$ in $H$.

\begin{lemma}
Let $G$ be a graph with $n$ nodes, $m$ edges and $k$ labeled partitions of the vertices $V_1, \ldots, V_k$ ($G$ is not necessarily $k$ partite). 

Given a labeled $k$-node tree $H$ with vertices, counting the number of such labeled trees in $G$ takes $O(m+n)$ time. 
\label{lem:treesEasy}
\end{lemma}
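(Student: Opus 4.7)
The plan is a standard tree dynamic program rooted at an arbitrary vertex of $H$. Let $\ell(x)\in [k]$ denote the label of vertex $x\in V(H)$. Because the partitions $V_1,\ldots,V_k$ are disjoint and the $k$ labels on the vertices of $H$ are all distinct (by Definition \ref{def:labeledSubgraphs}), any map $\phi:V(H)\to V(G)$ with $\phi(x)\in V_{\ell(x)}$ for every $x$ is automatically an injection, so we do not need to enforce non-collision separately and can count such label-respecting homomorphisms directly.

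Pick an arbitrary vertex $r\in V(H)$ and root $H$ at $r$. For each $x\in V(H)$ let $T_x$ denote the subtree rooted at $x$. For each $x$ and each $v\in V_{\ell(x)}$, define $f(x,v)$ to be the number of label-respecting copies of $T_x$ in $G$ that send $x$ to $v$. We compute $f$ bottom-up on $H$:
\begin{itemize}
\item If $x$ is a leaf of the rooted tree, then $f(x,v)=1$ for every $v\in V_{\ell(x)}$.
\item If $x$ has children $y_1,\ldots,y_d$ in the rooted tree, then
\[
f(x,v) \;=\; \prod_{j=1}^{d}\Bigl(\sum_{u\in N_G(v)\cap V_{\ell(y_j)}} f(y_j,u)\Bigr),
\]
where the product factorizes because the $T_{y_j}$ are vertex-disjoint subtrees of $T_x$ attached only at $x$, and injectivity is automatic from the label partition.
\end{itemize}
The final count of labeled copies of $H$ in $G$ is $\sum_{v\in V_{\ell(r)}} f(r,v)$.

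For the running time, observe that for each tree edge $(x,y)$ of $H$ with $y$ a child of $x$, the DP needs, for every $v\in V_{\ell(x)}$, the quantity $\sum_{u\in N_G(v)\cap V_{\ell(y)}} f(y,u)$. A single scan over the adjacency list of each $v$ suffices, so the work charged to the edge $(x,y)$ is $O\bigl(\sum_{v\in V_{\ell(x)}}\deg_G(v)\bigr)=O(m)$; initializing leaves and the final sum costs $O(n)$. Since $H$ has exactly $k-1$ edges, the total running time is $O(k(m+n))=O(m+n)$ for constant $k$.

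There is no real obstacle: the only thing to check is that using non-induced subgraph counts is consistent with the definition, and that distinctness of labels removes any need for inclusion–exclusion over collisions of tree vertices, both of which follow immediately from $V_1,\ldots,V_k$ being disjoint.
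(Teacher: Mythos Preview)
Your proof is correct and takes essentially the same approach as the paper: root $H$ arbitrarily and run a bottom-up dynamic program computing, for each tree vertex $x$ and each $v\in V_{\ell(x)}$, the number of label-respecting copies of the subtree $T_x$ mapping $x$ to $v$. The paper phrases this as processing the tree level by level rather than in post-order, but the recurrence and the overall argument are identical; your runtime accounting is slightly more explicit than the paper's, and both arrive at $O(m+n)$ up to a factor of $k$ (which is subpolynomial in the paper's setting).
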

\begin{proof}
Pick a root of the tree $H$. Let $(u_0,p_0)$ be the root and its label $p_0$. Let $U_i$ be the set of all tuples of vertices and their labels in the tree at level $i$. 
Let $h$ be the height of the tree. 

Thus, the set $U_h$ only contains leaves, and every node in $U_h$ has one sub-tree that includes it and no nodes below it. 

For all $(u_{h-1},p_{h-1}) \in U_{h-1}$, where $p_{h-1}$ is the label of $u_{h-1}$, we look at the vertex set $V_{p_{h-1}}$. For all nodes in $V_{p_{h-1}}$ we are going to count the number of labeled sub-trees that include it and the nodes below it. We can do this in linear time over the edges between the relevant partitions. Save all the computed values. 

Now, we can do this for level $h-2$, using our pre-existing counts. We can propagate these up the tree until we reach our root and count the total number of labeled trees $H$ in the graph. 
\end{proof}


\begin{lemma}
Let $G$ be a graph with $n$ nodes, $m$ edges and $k$ labeled partitions of the vertices $V_1, \ldots, V_k$ ($G$ is not necessarily $k$ partite). 

If we have the counts of all labeled subgraphs of $H$ in $G$ of size less than $s$ vertices, we can compute the number of labeled subgraphs in $G$ that are the union of two disconnected labeled subgraphs of $H$ of size $s$ or less.
\label{lem:disconnectedSubgraphsEasy}
\end{lemma}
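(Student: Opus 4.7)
The plan is to observe that the labeling makes the count multiplicative, so no inclusion/exclusion is needed for this particular lemma. First I would unpack Definition~\ref{def:labeledSubgraphs}: a labeled subgraph $L$ of $H$ assigns a \emph{unique} label in $[1,k]$ to each of its vertices, and a copy of $L$ in $G$ must send the vertex with label $\ell$ into the partition $V_\ell$. If the target subgraph $H' = H_1 \sqcup H_2$ is the disjoint union of two labeled subgraphs of $H$ and $|V(H')|\leq s$, then $|V(H_1)|,|V(H_2)| < s$ and, because labels on $H'$ are unique overall, the label sets $L(H_1), L(H_2) \subseteq [1,k]$ are disjoint.

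Next I would argue that a copy of $H'$ in $G$ factors as a pair (copy of $H_1$, copy of $H_2$). Since $H_1$ and $H_2$ share no edge in $H'$, an edge-preserving label-respecting map on $H'$ is precisely an independent choice of such a map on $H_1$ and on $H_2$, provided the two images are vertex-disjoint. But any copy of $H_i$ has its vertex image contained in $\bigcup_{\ell \in L(H_i)} V_\ell$, and since the $V_\ell$ are disjoint and $L(H_1)\cap L(H_2)=\emptyset$, vertex-disjointness of the two images is automatic. Hence
\[
\#\{\text{copies of } H'\text{ in }G\} \;=\; \#\{\text{copies of } H_1\text{ in }G\}\cdot\#\{\text{copies of } H_2\text{ in }G\},
\]
and both factors on the right are available by hypothesis because $|V(H_1)|,|V(H_2)| < s$. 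The algorithm is then a single multiplication per pair $(H_1,H_2)$.

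There is no real obstacle here beyond being careful about bookkeeping: the only subtlety is making explicit that unique labels plus disjoint partitions of $G$ remove any need to subtract configurations in which the two components would otherwise share a vertex, which is exactly why this step stays free in the inductive counting scheme leading up to Lemma~\ref{lem:inductiveStep1}.
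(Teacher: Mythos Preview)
Your proposal is correct and follows exactly the paper's approach: the paper's proof is a single sentence observing that since the two labeled pieces share no vertices one can simply multiply their counts. Your write-up adds the useful explicit justification (unique labels force disjoint label sets, hence disjoint images in the partitioned $G$) that the paper leaves implicit.
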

\begin{proof}
Let one be labeled subgraph $L$ and the other be labeled subgraph $L'$. Given that they share no vertices, we can simply multiply the number of subgraphs $L$ and $L'$. 
\end{proof}

\begin{lemma}
Let $G$ be a graph with $n$ nodes, $m$ edges and $k$ labeled partitions of the vertices $V_1, \ldots, V_k$ ($G$ is not necessarily $k$ partite). 

We can compute all counts of subgraphs in $G$ with $2$ vertices or fewer in $\tilde{O}(m+n)$ time.
\label{lem:baseCase}
\end{lemma}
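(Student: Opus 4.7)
The plan is to enumerate all possible labeled subgraphs of size at most two and handle each case by direct counting, using the fact that $k$ (the number of partitions) contributes only a subpolynomial overhead. Recall from Definition \ref{def:labeledSubgraphs} that a labeled subgraph assigns a unique label in $[1,k]$ to each vertex, and a copy is counted only when the vertex with label $\ell$ is placed in $V_\ell$.

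First I would handle the single-vertex subgraphs. There are exactly $k$ labeled one-vertex subgraphs (one for each label $\ell\in[1,k]$), and the count of copies of the subgraph with label $\ell$ is simply $|V_\ell|$. These counts can be obtained in one pass over the vertex set in $O(n)$ time, provided the partition membership of each vertex is given. Store the values $|V_1|,\dots,|V_k|$ in a table.

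Next I would process the two-vertex labeled subgraphs. Up to the choice of which two labels $i\ne j$ appear, there are two shapes: the disconnected pair (two isolated vertices) and the edge. For the disconnected pair with labels $i$ and $j$, Lemma~\ref{lem:disconnectedSubgraphsEasy} applied to the two size-$1$ subgraphs already counted gives the number of copies as $|V_i|\cdot |V_j|$, computable in $O(1)$ per pair once the single-vertex counts are tabulated; summing over the $O(k^2)$ label-pairs takes $O(k^2)$ total time. For the edge subgraph with endpoint labels $(i,j)$, I would initialize a $k\times k$ counter array to zero and then make one pass over all edges of $G$: for each edge $(u,v)\in E$, look up the labels $i,j$ of the partitions containing $u$ and $v$ and increment the corresponding counter (taking care to increment in the appropriate ordered/unordered way according to the convention for labeled subgraphs). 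This pass takes $O(m)$ time.

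Adding everything up, the total running time is $O(n+m+k^2)$. Since $k$ is subpolynomial (in the regime where this framework is applied, e.g.\ $k=o(\sqrt{\lg n/\lg\lg n})$), the $k^2$ term is absorbed into the $\tilde O(\cdot)$, yielding $\tilde O(m+n)$ as claimed. There is no real obstacle here: the lemma is a pure base-case bookkeeping step whose only subtlety is being careful about label orderings and about the ``not necessarily induced'' convention from Definition~\ref{def:labeledSubgraphs}, so that disconnected two-vertex subgraphs are correctly counted as ordered/unordered products of the single-vertex totals and edges are tallied exactly once per label pair.
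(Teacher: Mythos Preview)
Your proposal is correct and is essentially a more explicit version of the paper's argument. The paper's proof is a single line: it observes that any connected subgraph on at most two vertices is a tree (either a single vertex or an edge) and invokes Lemma~\ref{lem:treesEasy} to count those in $\tilde O(m+n)$; it leaves the disconnected two-vertex case implicit. Your version unpacks all three cases by hand (single vertex, disconnected pair via Lemma~\ref{lem:disconnectedSubgraphsEasy}, and the edge via a pass over $E$), which amounts to re-deriving the special cases of Lemma~\ref{lem:treesEasy} and Lemma~\ref{lem:disconnectedSubgraphsEasy} directly rather than citing them. The content is the same; your write-up is just more self-contained.
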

\begin{proof}
All subgraphs with $1$ edge are trees. So by Lemma \ref{lem:treesEasy} we can compute all subgraphs with $2$ edges or fewer in $\tilde{O}(m)$ time. 
\end{proof}

\paragraph{The Recursive Step of Inclusion-Edgesclusion}

This next lemma is the core step. We will use all counts of subgraphs with a small number of edges to count those with more edges. At its core this relies on the fact that if we sum together the counts of the number of subgraphs $H$ with all possible combinations of complimentary edge sets this roughly gives us a count of the number of subgraphs when that edge partition is a complete bipartite edge set. 

\begin{lemma}
Let $G$ be a labeled $k$-partite graph with $n$ nodes per partition. 

Say we are given the counts of the number of subgraphs $H$ in all graphs $S_G$ (see Definition \ref{def:graphInverses}). 

Additionally, say we are given the counts of all less than or equal to $v$ vertex labeled subgraphs of $H$ with $[0,e]$ edges.

Let $L$ be a labeled subgraph of $H$ with $v$ vertices and $e+1$ edges. 

Using both of these counts we can count the number of not-necessarily induced subgraphs $L$ in $G$ in time $O(k! \cdot 2^{k^2} + b^{k^2})$.
\label{lem:inductiveStep1}
\end{lemma}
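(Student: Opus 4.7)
The plan is to use inclusion-edgesclusion: sum the given $H$-counts over a carefully chosen slice of $S_G$ to isolate the number of $k$-tuples whose $E_L$-edges all lie in $G$, then divide by an explicit combinatorial factor to recover $N(L)$.

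Write $F(\vec{\ell})$ for the supplied count of labeled $H$-copies in $G^{(\ell_1)\cdots(\ell_{\binom{k}{2}})}$, indexed by $\vec{\ell}\in[b]^{\binom{k}{2}}$. Setting $S=E_L$, define
\[
A_S \;=\; \sum_{\vec{\ell}\,:\,\ell_{(i,j)}=1\ \forall (i,j)\in S} F(\vec{\ell}).
\]
The identity I would prove is
\[
A_S \;=\; b^{\binom{k}{2}-|E_H|}\cdot n^{k-v}\cdot N(L),
\]
which instantly yields $N(L)$. To establish it, fix any $k$-tuple $(v_1,\dots,v_k)\in V_1\times\cdots\times V_k$. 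Each edge of the complete $k$-partite graph carries exactly one label, so this tuple contributes to $F(\vec{\ell})$ iff $\ell_{(i,j)}$ equals the label of $(v_i,v_j)$ for every $(i,j)\in E_H$; coordinates of $\vec{\ell}$ indexed by non-$E_H$ pairs play no role in $F$. Therefore the constraint $\ell_{(i,j)}=1$ for $(i,j)\in S\subseteq E_H$ counts a tuple iff its $S$-edges carry label $1$ (equivalently, lie in $G$), with multiplicity $b^{\binom{k}{2}-|E_H|}$ coming from the free non-$E_H$ coordinates. Hence $A_S = b^{\binom{k}{2}-|E_H|}\cdot T_S$, where $T_S$ is the number of $k$-tuples with $S$-edges in $G$. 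Since $S=E_L$ only touches the $v$ labels in $V_L$, each labeled copy of $L$ in $G$ extends to exactly $n^{k-v}$ such tuples by freely choosing the remaining $k-v$ partition-coordinates, giving $T_S = n^{k-v}\cdot N(L)$.

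For the running time, computing $A_S$ sums $O(b^{\binom{k}{2}})=O(b^{k^2})$ terms. The $O(k!\cdot 2^{k^2})$ summand covers the overall inductive bookkeeping: if $L$ is disconnected, one first applies Lemma~\ref{lem:disconnectedSubgraphsEasy} to write $N(L)$ as a product of counts of its components, using the given counts of labeled subgraphs on $\le v$ vertices and $\le e$ edges, and one may have to iterate over the at most $k!\cdot 2^{\binom{k}{2}}$ labeled subgraphs of $H$ that can arise as components. The identity above then handles each connected piece.

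The main obstacle is getting the two combinatorial factors exactly right. The multiplicity $b^{\binom{k}{2}-|E_H|}$ requires a careful split of the coordinates of $\vec{\ell}$ into those indexed by $E_H$ (pinned either by the tuple's labels or by the constraint $\ell=1$ on $S$) and those indexed by non-$E_H$ pairs (genuinely free in the summation); the factor $n^{k-v}$ requires distinguishing isolated vertices of $L$ that lie inside $V_L$ from the unfilled coordinates in $V_H\setminus V_L$. Once these two bookkeeping points are nailed down, the reduction from $H$-counts in $S_G$ to $N(L)$ is a single algebraic manipulation.
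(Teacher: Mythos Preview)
Your identity $A_S = b^{\binom{k}{2}-|E_H|}\cdot n^{k-v}\cdot N(L)$ rests on the step ``a tuple $(v_1,\dots,v_k)$ contributes to $F(\vec\ell)$ iff $\ell_{(i,j)}$ equals the label of $(v_i,v_j)$ for every $(i,j)\in E_H$; coordinates of $\vec\ell$ indexed by non-$E_H$ pairs play no role in $F$.'' This is correct only if $F(\vec\ell)$ counts \emph{labeled} $H$-copies with the fixed embedding $x_i\mapsto V_i$. But the hypothesis supplies \emph{unlabeled} subgraph-$H$ counts (one vertex per partition): this is how the lemma is invoked in Lemma~\ref{lem:ERtoKpartite}, where the counts come from the black-box $A'$, and it is why the paper's proof speaks of ``at least one labeling of $H$'' matching $L$ and computes $c_{G_{L,L'}}$ by brute-forcing $H$-copies in a $k$-vertex graph. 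For an unlabeled copy the set of partition-pairs it occupies depends on the orientation, so ``non-$E_H$ coordinates play no role'' is false.

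Concretely, take $k=3$, $b=2$, $H=P_3$, and let $L$ be the single labeled edge between partitions $1$ and $2$. Pinning $\ell_{(1,2)}=1$ and summing over the remaining two coordinates gives
\[
A_S \;=\; 4n\,|E_{1,2}(G)| \;+\; n^3,
\]
since paths with center in $V_3$ use only the pairs $(1,3)$ and $(2,3)$ and therefore contribute no matter what happens at $(v_1,v_2)$. Your formula predicts $A_S=2n\,|E_{1,2}(G)|$. The $n^3$ term (and the doubled coefficient on $|E_{1,2}|$, coming from the two orientations of $P_3$ that do use the $(1,2)$ pair) is exactly what the paper peels off by subtracting the contributions indexed by proper labeled subgraphs $L'\subsetneq L$ and then dividing by the combinatorial factor $c_{G_{L,L}}$. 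That correction step is the whole reason the lemma assumes you already know the counts $c_{L'}$ for all labeled $L'$ on $\le v$ vertices with at most $e$ edges; in your argument this hypothesis is used only tangentially (for disconnected $L$), which is a signal that the main mechanism is missing.
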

\begin{proof}
Let $H$ have $v_H=k$ vertices and $e_H$ edges. Let the subgraph $L$ be given as a list of $v$ vertices labeled as being in partitions  $i_1,\ldots,i_v$ and $e+1$ edges between partitions $i_x$ and $i_y$ where $x,y \in [1,v]$. Let $S_E$ be the set of all such pairs $(x,y)$.  

Consider $\bar{S}_E$, the set of all pairs of partitions not in $S_E$. Then consider the subset of instances in $S_G$ where the edges between partitions in $S_E$ (for example $E_{i_1,i_2}$) are all set to be the version labeled $(1)$ ($E_{i_1,i_2}^{(1)}$).  Call this subset $S_G[L]$. 

Take the counts of the number of subgraphs $H$ that appear in all graphs in $S_G[L]$ and sum them together, call this count $c_{S_G[L]}$. What will this count contain? 
It will count the number of subgraphs $H$ that appear if the graph $G$ were to have complete bipartite graphs between all pairs of partitions in $\bar{S}_E$, weighted by how many edges in $S_E$ that subgraph uses. 
If a specific subgraph $H$ appears in the graph $G$ where $\ell$ of its edges are in the $\bar{S}_E$ partitions then it is counted $b^{\binom{k}{2}-e-1-\ell}$ times. We include that many copies of graphs in $S_G[L]$ that include this particular $H$. 

Given that $L$ is a labeled subgraph of $H$, at least one labeling of $H$ will share all $e+1$ edges and $v$ vertices of $L$. There may be many valid labelings for the  $e_H-e-1$ unaccounted for edges and $k-v$ unaccounted for vertices. 

We want to count all $H$ that happen to have a labeling that matches the $e+1$ edges of $L$, and not count those that share only some of these edges. Luckily, given the counts of all small subgraphs we can count how many subgraphs $H$ exist that match up only partially with $L$ and remove these from the count $c_{S_G[L]}$.

For a subgraph to match up only partially with $L$, it must match up with some labeled subgraph of $L$, $L'$. $L'$ must have $v$ vertices and at most $e$ edges. We have the counts of all labeled subgraphs with  $v$ vertices and at most $e$ edges. We want to remove from $c_{S_G[L]}$ the count of all subgraphs $H$ that overlap with $L'$ and share no edges with $L - L'$. 

Let $G_{L,L'}$ be a graph on $k$ vertices where all edges in $L'$ are included, all edges in $L-L'$ are excluded and all other edges are included. Let $c_{G_{L,L'}}$ be the count of the number of subgraphs $H$ that exist in this graph. Note we can compute this in $O(k!)$ and we do this computation on at most $O(2^{k^2})$ graphs. 

Let $L'$ have $e_{L'}$ edges and $v_{L'}$ vertices. Let $c_{L'}$ be the count of all labeled subgraphs $L'$ that exist in $G$. 
The count of all subgraphs $H$ which overlap exactly with $L'$ (sharing no edges with $L-L'$) that are counted in $c_{S_G[L]}$ is 
$$c_{L'} \cdot c_{G_{L,L'}} \cdot n^{k-v_{L'}} \cdot b^{\binom{k}{2}-e-e_H+e_{L'}}.$$
Lets break down this value. First, of course the number of labeled subgraphs $L'$ that appear in the original graph each contribute proportionally. A choice of a particular labeled subgraph $L'$ fixes $v_{L'}$ of the $k$ vertices, but the rest of the vertices could be any of the available $n$ vertices per partition. Now, given a fixed choice of $k$ vertices and $e_H$ edges this subgraph may still appear in multiple graphs in $S_{G}[L]$. Specifically, it will appear in all graphs where we haven't ``fixed'' the edge set. This is a total of $b^{\binom{k}{2}-e-e_H+e_{L'}}$ graphs. 

So, for all $O(2^{k^2})$ labeled subgraphs of $L$ we can compute their contribution to $c_{S_G[L]}$ and subtract out this contribution. This leaves only a count of subgraphs $H$ that overlap with $L$ exactly. 
To compute the number of subgraphs $L$ we simply divide this number by $c_{G_{L,L}} \cdot n^{k-v} \cdot b^{\binom{k}{2}-e_H}$.

The total time for this computation is, at most  $O(2^{k^2}\cdot k! + b^{k^2})$. If $k = o(\sqrt{\lg(n)})$ and $b$ is a constant, then this term is sub-polynomial.
\end{proof}

\begin{lemma}
Let $G$ be a graph with $n$ nodes, $m$ edges and $k$ labeled partitions of vertices $V_1,\ldots,V_k$. Given the count of all labeled subgraphs of $H$ in $G$ with less than $v$ vertices, we can count all labeled sub-graphs with $v$ vertices and at most $v-1$ edges in $\tilde{O}(m)$ time. 
\label{lem:inductiveStep2}
\end{lemma}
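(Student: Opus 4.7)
The plan is to exploit the edge budget: any graph on $v$ vertices with at most $v-1$ edges is either connected, in which case it must be a spanning tree (a connected graph on $v$ vertices has at least $v-1$ edges, with equality iff it is a tree), or disconnected. I would handle these two cases separately, and then argue that the per-$L$ cost times the number of distinct labeled $L$'s fits inside the stated $\tilde{O}(m)$ bound.

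For the tree case, I would invoke Lemma \ref{lem:treesEasy} directly, which already counts an arbitrary labeled $k$-node tree in $G$ in $O(m+n)$ time. Since $L$ here is a labeled tree on $v \le k$ vertices whose labels are unique elements of $[1,k]$, the lemma applies verbatim (trivially extend $L$ by ignoring the partitions that do not appear).

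For the disconnected case, let $L_1,\ldots,L_c$ be the connected components of $L$. Each $L_i$ has strictly fewer than $v$ vertices, so by hypothesis its count in $G$ as a labeled subgraph is already available. The key observation is that because every vertex of $L$ carries a distinct label from $[1,k]$, the label sets of the components are pairwise disjoint; in any embedding, vertices of $L_i$ are drawn from the partitions $\{V_\ell : \ell \text{ labels a vertex of } L_i\}$, and these partition families are disjoint across $i$. Hence any choice of embeddings of the $L_i$'s gives an embedding of $L$ with no vertex collisions, and the number of embeddings of $L$ is exactly the product $\prod_{i=1}^c (\text{count of }L_i)$. Computing this product takes $O(c)$ time after the lookups.

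The main (mild) subtlety is the vertex-disjointness argument just above; without the unique labeling and the $k$-partite structure it could fail (two components could otherwise pick the same vertex). Finally, the number of labeled subgraphs of $H$ with $v$ vertices and at most $v-1$ edges is bounded by $2^{\binom{k}{2}} \binom{k}{v}$, which is subpolynomial under the assumption $k = o(\sqrt{\lg n/\lg\lg n})$ governing this section, so summing $O(m+n)$ for each tree and $O(1)$ lookups for each disconnected $L$ still gives total time $\tilde{O}(m)$, as claimed.
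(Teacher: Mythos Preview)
Your approach is essentially identical to the paper's: split into the connected (tree) case handled by Lemma~\ref{lem:treesEasy} and the disconnected case handled by multiplying the already-known component counts (the paper phrases this as repeated applications of Lemma~\ref{lem:disconnectedSubgraphsEasy}). Your added care about why the unique labeling forces vertex-disjointness across components, and why the number of distinct $L$'s is subpolynomial, only makes the argument more explicit than the paper's version.
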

\begin{proof}
There are two cases. The subgraph is connected (only possible when we have exactly $v-1$ edges), or it is disconnected.

If the subgraph is connected then it is a tree, by Lemma \ref{lem:treesEasy}~we have can count this labeled tree in $\tilde{O}(m)$ time. 

If the subgraph is disconnected then it is made up of disconnected labeled subgraphs with less than $v$ vertices. We have the count of each of these on their own, thus by repeated applications of Lemma \ref{lem:disconnectedSubgraphsEasy} we can count these with overhead the number of subgraphs which is at most $v$, and thus also $\tilde{O}(m)$.
\end{proof}

\paragraph{Reducing to \uchpgh}

First we reduce  counting  labeled copies of $H$ in a $k$-partite Erd{\H{o}}s-R{\'{e}}nyi graph to counting $H$ in Erd{\H{o}}s-R{\'{e}}nyi graphs. We then note that by picking a particular labeling this solves the problem of \uchpgh. Finally, we use our previous reduction from \chpgh~to \uchpgh~to get our desired result: a reduction from \chpgh~to counting $H$ subgraphs in Erd{\H{o}}s-R{\'{e}}nyi graphs. 

\begin{lemma}
Let $H$ have $e$ edges and $k$ vertices.
Let $A$ be an average-case algorithm for counting ``\emph{unlabeled}" subgraphs $H$ in $k$-partite Erd{\H{o}}s-R{\'{e}}nyi graphs with edge probability $1/b$ which takes $T(n)$ time with probability $1-\epsilon/\left(2^k \cdot b^{k^2}\right)$. 

The number of ``labeled" copies of subgraph $H$ in $k$-partite Erd{\H{o}}s-R{\'{e}}nyi graphs with edge probability $1/b$ can be computed in time $\tilde{O}(2^{k^2} \cdot m+2^k \cdot b^{k^2} \cdot T(n))$ with probability at least $1-\epsilon$. 
\label{lem:ERtoKpartite}
\end{lemma}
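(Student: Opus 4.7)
The plan is to count labeled copies of $H$ in the input $k$-partite Erd\H{o}s--R\'enyi graph $G$ by combining $A$ as a black-box source of unlabeled $H$-counts on related ER inputs with the Inclusion-Edgesclusion machinery already established in the section.

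First, from $G$ I would construct the family $S_G$ of $b^{\binom{k}{2}} \le b^{k^2}$ graphs $G^{(\ell_1)\cdots(\ell_{\binom{k}{2}})}$ as in Definition~\ref{def:graphInverses}. The critical observation is that each $G^{\vec\ell} \in S_G$ is itself marginally distributed as a $k$-partite Erd\H{o}s--R\'enyi graph with edge probability $1/b$: every edge independently falls in its $\ell$-th labeled class with probability exactly $1/b$ (edges present in $G$ get label $1$ with probability $1/b$, and edges absent from $G$ are uniformly re-assigned among labels $2,\dots,b$, giving probability $(1-1/b)\cdot 1/(b-1) = 1/b$), and labels across distinct partition pairs $E_{i,j}$ are mutually independent. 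Hence $A$ can be validly invoked on each $G^{\vec\ell}$ to obtain an unlabeled $H$-count, with individual success probability at least $1-\epsilon/(2^k b^{k^2})$.

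Second, I would run $A$ on all $b^{k^2}$ graphs in $S_G$, and on at most $2^k$ auxiliary copies indexed by vertex subsets $S \subseteq V(H)$ (or an analogous combinatorial indexing) that provides the extra counts required to disentangle unlabeled from labeled quantities via inclusion-exclusion over $V(H)$. The total number of $A$-invocations is at most $2^k \cdot b^{k^2}$, so a union bound yields overall failure probability at most $\epsilon$, matching the claimed guarantee.

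Third, I would combine the returned unlabeled counts into the desired labeled count of $H$ through the Inclusion-Edgesclusion recursion. The base cases for labeled subgraphs of $H$ with at most two vertices, trees, and disconnected unions are computed directly in $\tilde O(m)$ time via Lemmas~\ref{lem:baseCase},~\ref{lem:treesEasy}, and~\ref{lem:disconnectedSubgraphsEasy}. Then Lemmas~\ref{lem:inductiveStep2} and~\ref{lem:inductiveStep1} bootstrap labeled counts of larger subgraphs of $H$ from smaller ones, with each invocation of Lemma~\ref{lem:inductiveStep1} costing $O(k!\cdot 2^{k^2}+b^{k^2})$ time and using the unlabeled $H$-counts from $S_G$ as its primary input. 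Iterating across the $O(2^{k^2})$ relevant labeled subgraphs of $H$ contributes the $\tilde O(2^{k^2}\cdot m)$ runtime term, and the recursion terminates at the labeled count of $H$ itself.

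The main obstacle is the somewhat delicate two-step conversion: the graphs in $S_G$ are \emph{mutually} correlated yet each is \emph{individually} $k$-partite ER, so $A$'s average-case guarantee applies per graph and the union bound handles the aggregate. On top of this, the unlabeled counts returned by $A$ must be converted into labeled counts for Lemma~\ref{lem:inductiveStep1} to consume; carrying out this vertex-subset inclusion-exclusion correctly is where the $2^k$ factor in the number of $A$-calls enters, and bookkeeping its coefficients alongside the edgesclusion accounting in Lemma~\ref{lem:inductiveStep1} is the most technically fussy piece of the proof.
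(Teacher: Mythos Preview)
Your proposal is essentially correct and follows the same approach as the paper: build $S_G$, observe each graph in it is marginally $k$-partite ER, feed these to $A$, and then run the Inclusion-Edgesclusion recursion (Lemmas~\ref{lem:baseCase}--\ref{lem:inductiveStep1}) to extract the labeled count of $H$.

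One point of imprecision worth flagging: the $2^k$ factor is not ``auxiliary copies'' alongside the $b^{k^2}$ graphs, nor is it what converts unlabeled to labeled. In the paper's argument, $A$ is first wrapped into an algorithm $A'$ that, via inclusion-exclusion over subsets of the $k$ partitions of $G$, counts only those copies of $H$ using exactly one vertex per partition; this costs $2^k$ calls to $A$ per invocation of $A'$. Then $A'$ is applied to each of the (at most) $b^{k^2}$ graphs in $S_G$, giving the multiplicative $2^k\cdot b^{k^2}$ total. The actual unlabeled-to-labeled conversion is handled entirely inside the Inclusion-Edgesclusion recursion of Lemma~\ref{lem:inductiveStep1}. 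Your overall call count, runtime, and union-bound probability are all correct, so this is a clarification of mechanism rather than a gap.
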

\begin{proof}
We want to count only subgraphs that use exactly one vertex from each partition. We can make $2^k$ calls to $A$ using standard inclusion/exclusion to count only subgraphs with exactly one edge in each partition. Call this algorithm $A'$.

Let $C(v,\ell)$ be a list of tuples of all labeled subgraphs $J$ with $v$ vertices and $\ell$ edges with the associated count of the number of labeled subgraphs $J$ in $G$. 

By Lemma \ref{lem:inductiveStep2} we can compute $C(v,\ell)$ in time $|C(v,\ell)|\cdot \Tilde{O}(m)$ if $\ell \leq v-1$. 

By Lemma \ref{lem:inductiveStep1} if we can compute $C(v,\ell)$ for all $\ell \leq \ell^{\star}$ then we can compute $C(v,\ell^{\star}+1)$ given calls to $A'$ on all graphs in $S_G$. Note each of these steps uses the \emph{same} set of calls to  $A'$ on all graphs in $S_G$.

We can bound $|S_G|\leq b^{k^2}$. With this we can say that we make at most $b^{k^2}$ calls to $A'$, meaning we make at most $2^k \cdot b^{k^2}$ calls to $A$. \\
We can bound the total sum of all $|C(v,\ell)|$ by $2^{k^2}$ (every possible choice of a subset of edges in the complete graph on $k$ vertices).
This gives a time bound of $\tilde{O}(2^{k^2} \cdot m+2^k \cdot b^{k^2} \cdot T(n))$.

We make $2^k \cdot b^{k^2}$ calls to $A$, if they are all correct then we give the correct answer to the labeled $H$ question. If $A$ succeeds with probability at least $1-\epsilon/\left(2^k \cdot b^{k^2}\right)$, then, by the union bound  $2^k \cdot b^{k^2}$ calls to $A$ will all succeed with probability at least $1-\epsilon$.
\end{proof}

\begin{lemma}
Let $H$ have $e$ edges and $k$ vertices where $k =o(\lg(n)/\lg\lg(n))$.
Let $A$ be an average-case algorithm for counting subgraphs $H$ in Erd{\H{o}}s-R{\'{e}}nyi graphs with edge probability $1/b$ which takes $T(n)$ time with probability $1-2^{-2k} \cdot b^{-k^2} \cdot (\lg(e)\lg\lg(e))^{-\omega(1)}$ 

Then an algorithm exists to count subgraphs $H$ in $H$-partite graphs (\chpgh)~in time $\tilde{O}(T(n))$ with probability at least $1-O(2^{-\lg^2(n)})$. 
\label{lem:WChp-ACer}
\end{lemma}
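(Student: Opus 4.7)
The plan is to chain two already-established reductions: Lemma~\ref{lem:ERtoKpartite}, which moves from counting $H$ in a fully Erd\H{o}s--R\'enyi graph to counting labeled copies of $H$ in a $k$-partite Erd\H{o}s--R\'enyi graph, and Corollary~\ref{cor:uchpghImpliescpgh}, which moves from \uchpgh{} (average-case) to \chpgh{} (worst-case). The composition takes an average-case algorithm $A$ for subgraph counting in Erd\H{o}s--R\'enyi graphs straight to an algorithm for worst-case \chpgh.

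First, I would invoke Lemma~\ref{lem:ERtoKpartite} on $A$. Its hypothesis requires success probability at least $1-\varepsilon/(2^k\cdot b^{k^2})$ for the error target $\varepsilon$ we want in the $k$-partite setting. Choosing $\varepsilon = 1/\omega(\lg^e(n)\lg\lg^e(n))$ as needed by the downstream step, the required per-call success probability is at least $1 - 2^{-2k}\cdot b^{-k^2}\cdot (\lg(e)\lg\lg(e))^{-\omega(1)}$, which matches the hypothesis of the lemma I am proving. This yields an algorithm $A'$ that, with probability at least $1-\varepsilon$, counts labeled copies of $H$ in a $k$-partite Erd\H{o}s--R\'enyi graph with edge probability $1/b$ in time $\tilde{O}(T(n))$ (the $2^{k^2}m$ and $2^k b^{k^2}$ overhead are subpolynomial since $k = o(\lg(n)/\lg\lg(n))$ and $b$ is constant).

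Second, observe that counting labeled copies of $H$ in a $k$-partite Erd\H{o}s--R\'enyi graph, under a fixed labeling that identifies each vertex class $V_i$ with a vertex of $H$, is exactly \uchpgh{} with parameter $\mu = 1/b$. (If one wants the $\mu = 1/2$ default, take $b=2$; the argument works verbatim for any constant $b$, corresponding to edge probability $\mu = 1/b$ in the natural generalization of Definition~\ref{def:HpartiteRandomGraph}.) So $A'$ is an algorithm for \uchpgh{} meeting the success probability $1 - 1/\omega(\lg^{d}(n)\lg\lg^{d}(n))$ required by Corollary~\ref{cor:uchpghImpliescpgh} with $d = e$, since $e = o(\lg(n)/\lg\lg(n))$ is automatic from $k = o(\lg(n)/\lg\lg(n))$ and $e\le\binom{k}{2}$.

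Third, applying Corollary~\ref{cor:uchpghImpliescpgh} promotes $A'$ to a worst-case algorithm for \chpgh{} running in $\tilde{O}(T(n) + n^2)$ time and succeeding with probability at least $1 - O(2^{-\lg^2(n)})$. Since the $n^2$ term is absorbed into $\tilde{O}(T(n))$ whenever $T(n) = \Omega(n^2)$ (and any faster average-case algorithm would already give a stronger bound), we obtain the claimed $\tilde{O}(T(n))$ running time and $1 - O(2^{-\lg^2(n)})$ success probability. The only bookkeeping obstacle is aligning the error budgets so that the union bound over the $2^k\cdot b^{k^2}$ calls to $A$ inside Lemma~\ref{lem:ERtoKpartite} and the downstream framework call in Corollary~\ref{cor:uchpghImpliescpgh} both fit into the stated per-call success probability; this is exactly why the hypothesis is written with the extra $2^{-2k}\cdot b^{-k^2}$ slack factor.
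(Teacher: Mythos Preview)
Your high-level plan---chain Lemma~\ref{lem:ERtoKpartite} with Corollary~\ref{cor:uchpghImpliescpgh}---is the same as the paper's, but as written your argument skips two bridging steps that the paper's proof supplies explicitly.

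First, Lemma~\ref{lem:ERtoKpartite} assumes an algorithm for counting $H$ in \emph{$k$-partite} Erd\H{o}s--R\'enyi graphs, whereas the given $A$ works on \emph{full} Erd\H{o}s--R\'enyi graphs (which also have intra-partition edges). The paper closes this gap: given a $k$-partite random instance, add random intra-partition edges with probability $1/b$ to obtain a full Erd\H{o}s--R\'enyi graph, call $A$, and then use inclusion--exclusion over the $2^k$ subsets of partitions to isolate the count of $H$-copies touching all $k$ partitions (which, since $|V(H)|=k$, is exactly the count with one vertex per partition, and such copies use no intra-partition edges). This extra factor of $2^k$ calls is precisely why the hypothesis carries $2^{-2k}$ rather than the $2^{-k}$ already accounted for inside Lemma~\ref{lem:ERtoKpartite}.

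Second, your claim that ``counting labeled copies of $H$ in a $k$-partite Erd\H{o}s--R\'enyi graph \dots\ is exactly \uchpgh'' is not correct: a \uchpgh\ instance is $H$-partite, meaning edges are present only between partition pairs corresponding to edges of $H$, whereas a $k$-partite Erd\H{o}s--R\'enyi graph has random edges between \emph{all} pairs of partitions. The paper fixes this by taking the \uchpgh\ instance and adding random edges (with probability $1/b$) between the partition pairs \emph{not} adjacent in $H$; the labeled count is unchanged because a labeled copy of $H$ uses only edges in the $H$-adjacent slabs. Both gaps are easy to patch, but without them the composition you describe does not typecheck.
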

\begin{proof}
By Lemma \ref{lem:ERtoKpartite}, $A$ implies a $\tilde{O}( T(n))$ algorithm for counting the number of labeled copies of subgraph $H$ in $k$-partite Erd{\H{o}}s-R{\'{e}}nyi graphs with edge probability $1/b$ with probability $1- 2^{-k}(\lg(e)\lg\lg(e))^{-\omega(1)}$.

We need to add random edges within each partition to get a truly Erd{\H{o}}s-R{\'{e}}nyi graph. Luckily, we can use traditional inclusion-exclusion to count how many subgraphs don't include exactly one vertex in each partition. This introduces another $2^k$ calls.  By the union bound this causes the probability of success to be at least $1- (\lg(e)\lg\lg(e))^{-\omega(1)}$.

Now note that counting labeled copies of subgraph $H$ in $k$-partite Erd{\H{o}}s-R{\'{e}}nyi graphs solves \uchpgh~with edge probability $1/b$ with a single call. Given an instance of \uchpgh~label the vertices of the subgraph $H$ in the input instance, between all other partitions add random edges with probability $1/b$.

Now apply Lemma \ref{cor:uchpghImpliescpgh}. An algorithm for  \uchpgh~that succeeds with probability $1- (\lg(e)\lg\lg(e))^{-\omega(1)}$ in time $T(n)$ implies an algorithm for \chpgh~that runs in time $\tO(T(n)+n^2)$ and succeeds with probability $1-O(2^{-\lg^2(n)})$.
\end{proof}

\begin{reminder}{Theorem \ref{thm:ACSubgraphCountToWCSubgraphCount}}
Let $H$ have $e$ edges and $k$ vertices where $k =o(\sqrt{\lg(n)})$. 
Let $A$ be an average-case algorithm for counting subgraphs $H$ in Erd{\H{o}}s-R{\'{e}}nyi graphs with edge probability $1/b$ which takes $T(n)$ time with probability $1-2^{-2k} \cdot b^{-k^2} \cdot (\lg(e)\lg\lg(e))^{-\omega(1)}$.

Then an algorithm exists to count subgraphs $H$ in $k$-partite graphs in time $\tilde{O}(T(n))$ with probability at least $1-\tO(2^{-\lg^2(n)})$. 
\end{reminder}
\begin{proof}
From Lemma \ref{lem:WChp-ACer} we know that $A$ implies a $\tilde{O}(T(n))$ time algorithm for counting $H$ in $H$-partite graphs. 

Now, given an input of a graph $G$ that is $k$-partite graph we can produce all $e$ choose $\binom{k}{2}$ graphs that have only $e$ sets of edges between the partitions. From these we can select only those that are $H$ partite (the number of these will vary based on $H$'s shape). The number of these graphs is at most $2^{k^2}$, which by our restriction on $k$ is $\tO(1)$. Call the set of these $H$ partite graphs $S_H$.

We use the result from Lemma \ref{lem:WChp-ACer} to count the results on each of these graphs. By the union bound we will get the correct answer on every graph with probability at least $1-\tO(2^{-\lg^2(n)})$. The sum these counts over all $G' \in S_H$ is equal to the number of $H$ in the original graph. 
\end{proof}

\section{Counting OV is Easy on Average}
\label{sec:OV}

Previous work has shown that detecting if there is at least one orthogonal vector in a set of $n$ vectors is possible in sub-quadratic time \cite{ryanAvgCaseOV}. So, the next natural candidate problem that we might hope to show hard with our framework would be the \emph{counting} version of average-case Orthogonal Vectors problem (OV).
However, even the counting version of orthogonal vectors has truly subquadratic algorithm, as we will prove below. 

\begin{definition}
The counting $\mu$-uniform d-dimensional Orthogonal Vectors problem (\dOVm) takes as input two lists of $n$ zero-one vectors, where each vector is $d$-dimensional. All $2 \cdot n \cdot d$ bits are chosen iid where a one is selected with probability $\mu$. The output is the \emph{count} of the number of vectors that are orthogonal (whose dot product is zero). 
\end{definition}

We will consider constant $\mu$ for this section. We built up a few lemmas to prove the following theorem. 

\begin{reminder}{Theorem \ref{thm:avgCaseCountOVisEasy}}
For all constant values of $\mu$ and all values of $d$ there exists constants $\epsilon>0$ and $\delta>0$ such that there is an algorithm for \dOVm~that runs in time $\tO(n^{2-\delta})$ with probability at least $1-n^{-\epsilon}$.
\end{reminder}

We start by showing that if vectors are very long we are unlikely to have an orthogonal vector pair. 

\begin{lemma}
A \dOVm~instance has at most a $n^2 \cdot e^{-\mu^2 \cdot d}$ probability of having at least one pair of orthogonal vectors.
\label{lem:ovProb}
\end{lemma}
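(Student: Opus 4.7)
The plan is to reduce the statement to a per-pair calculation and then apply a union bound. Fix any pair of vectors $u, v$ drawn independently from the specified distribution. They are orthogonal exactly when there is no coordinate $i \in [d]$ at which $u_i = v_i = 1$. Since each bit is an independent $\mathrm{Ber}(\mu)$, the probability that $u_i \cdot v_i = 1$ at a given coordinate is $\mu^2$, and these events are independent across coordinates. Hence
\[
\Pr[u \cdot v = 0] = (1 - \mu^2)^d.
\]

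Next I would use the standard inequality $1 - x \le e^{-x}$ (valid for all real $x$) with $x = \mu^2$ to obtain
\[
\Pr[u \cdot v = 0] \le e^{-\mu^2 d}.
\]

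Finally, I apply a union bound over the $n^2$ pairs consisting of one vector from each list. The probability that at least one such pair is orthogonal is at most
\[
n^2 \cdot e^{-\mu^2 d},
\]
which is the claimed bound. There is no real obstacle here: the two lists and the independence of all $2nd$ bits make the per-pair orthogonality probability exactly $(1-\mu^2)^d$, and the rest is the exponential inequality plus a union bound. The lemma will be useful later because, whenever $d$ is large enough that $n^2 e^{-\mu^2 d}$ is small (for instance $d = \omega(\log n)$ with $\mu$ constant), one can simply output a count of $0$ and be correct with high probability; the interesting regime for the main theorem is therefore $d = O(\log n)$.
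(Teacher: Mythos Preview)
Your proof is correct and follows exactly the same approach as the paper: compute the per-pair orthogonality probability $(1-\mu^2)^d$, bound it by $e^{-\mu^2 d}$ via $1-x \le e^{-x}$, and union bound over the $n^2$ pairs.
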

\begin{proof}
Any given pair of vectors has a probability of $(1-\mu^2)^d$ of being an orthogonal pair. 
The probability that some vector is an orthogonal pair is at most $n^2 \cdot (1-\mu^2)^d$ which is at most $ n^2 \cdot e^{-\mu^2 \cdot d}$.
\end{proof}

\begin{lemma}
If $d>(1+\delta) 2\lg(e)\lg(n)$ for some constant $\delta>0$ then there is a constant $\mu = (1+\delta)^{-1/4}$ such that \dOVm~instance has at least a $1-1/n^\epsilon$ probability of having no orthogonal vectors for some constant $\epsilon$.
\label{lem:NoOVprob}
\end{lemma}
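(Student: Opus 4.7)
The plan is to invoke Lemma \ref{lem:ovProb} directly. That lemma already upper bounds the probability that the \dOVm~instance contains at least one orthogonal pair by $n^2 \cdot e^{-\mu^2 d}$, so the entire proof reduces to plugging in the chosen $\mu$ and the hypothesized lower bound on $d$, and checking that the resulting quantity is at most $n^{-\epsilon}$ for some constant $\epsilon > 0$.

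First I would substitute $\mu^2 = (1+\delta)^{-1/2}$ and $d > 2(1+\delta)\lg(e)\lg(n)$ into the exponent, obtaining $\mu^2 d > 2(1+\delta)^{1/2}\lg(e)\lg(n)$. Then I would convert $e^{-\mu^2 d}$ to a power of $n$ via the identity $e^{-x} = 2^{-x\lg e}$, turning the bound from Lemma \ref{lem:ovProb} into $n^{2 - 2(1+\delta)^{1/2}\lg^2(e)}$. Since $\lg^2(e) > 2$, the exponent is strictly less than $-2$ even at $\delta = 0$, and strictly more negative once $\delta > 0$, so one can set $\epsilon := 2(1+\delta)^{1/2}\lg^2(e) - 2 > 0$ and conclude.

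There is no real obstacle here; this is essentially a one-line substitution into Lemma \ref{lem:ovProb} followed by a base change between $e$ and $2$. The only thing worth verifying is that the choice $\mu = (1+\delta)^{-1/4}$ is tight enough: specifically, the $\mu^2$ factor in the exponent must cancel enough of the $(1+\delta)$ in the lower bound on $d$ that, after absorbing the $n^2$ union bound, a positive slack remains. The constant $\lg^2(e) \approx 2.08 > 2$ is exactly what provides that slack, which is why the choice of $\mu$ works for every $\delta > 0$.
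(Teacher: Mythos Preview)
Your proposal is correct and follows essentially the same route as the paper: both proofs simply substitute the given $\mu$ and the lower bound on $d$ into the $n^2 e^{-\mu^2 d}$ bound from Lemma~\ref{lem:ovProb} and verify that the resulting exponent of $n$ is a negative constant. Your arithmetic is in fact slightly more careful than the paper's---you retain the $\lg^2(e)$ factor that arises from the base change $e^{-x}=2^{-x\lg e}$, whereas the paper writes the bound as $n^{2}(n^{2})^{-(1+\delta)\mu^{2}}$, silently dropping that factor; since $\lg^2(e)>1$ this only weakens the paper's stated bound, so both versions reach the same conclusion.
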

\begin{proof}
Using Lemma \ref{lem:ovProb} and plugging in our value of $d$ we have that the probability of an \dOVm~instance having an orthogonal vector is at most
$n^2\cdot (n^2)^{-(1+\delta)\mu^2}$. If $\mu = (1+\delta)^{-1/4}$ then we can bound the probability by $n^{2(1-(1+\delta)^{1/2})}$. For $\delta>0$ we have that $(1+\delta)^{1/2}>1$, and so  $1-(1+\delta)^{1/2}$ is a negative constant. Thus there is some positive constant $\epsilon$ (for example $\epsilon = -2(1-(1+\delta)^{1/2})$) such that the probability there are no orthogonal vectors in a \dOVm~instance is at least  $1-1/n^\epsilon$. 
\end{proof}

A straightforward Corollary of Lemma \ref{lem:NoOVprob} is the following.

\begin{corollary}
For all constant $\mu$ there is a constant $\delta = 1/\mu^4-1$ such that for $d>(1+\delta) 2\lg(e)\lg(n)$ a \dOVm~instance has at least a $1-1/n^\epsilon$ probability of having no orthogonal vectors for some constant $\epsilon$.
\label{cor:largeDOV}
\end{corollary}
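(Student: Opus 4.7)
The plan is to directly invert the relationship between $\mu$ and $\delta$ that is already established in Lemma \ref{lem:NoOVprob}. That lemma is phrased ``$\delta$-first'': it starts with a constant $\delta > 0$, derives a specific vector density $\mu = (1+\delta)^{-1/4}$, and concludes that for $d > (1+\delta)\cdot 2\lg(e)\lg(n)$ the instance has no orthogonal pair with probability $1 - 1/n^\epsilon$. The corollary is simply the ``$\mu$-first'' reparametrization: fix the constant $\mu$ and read off the matching $\delta$.

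First I would solve $\mu = (1+\delta)^{-1/4}$ for $\delta$. Raising both sides to the $-4$ power gives $1 + \delta = 1/\mu^4$, i.e., $\delta = 1/\mu^4 - 1$, which is precisely the value asserted in the corollary. Because every $\mu$ we care about is a constant in $(0,1)$, we have $\mu^4 < 1$ and hence $\delta > 0$, so the hypothesis ``$\delta > 0$'' required to apply Lemma \ref{lem:NoOVprob} is automatically satisfied and $\delta$ is a positive constant depending only on $\mu$.

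Next I would substitute back. With $\delta = 1/\mu^4 - 1$ the dimension bound $d > (1+\delta)\cdot 2\lg(e)\lg(n)$ in the corollary coincides exactly with the hypothesis of Lemma \ref{lem:NoOVprob}, and the constant $\mu$ in the corollary coincides with the constant $(1+\delta)^{-1/4}$ produced by that lemma. Invoking Lemma \ref{lem:NoOVprob} then yields some constant $\epsilon > 0$ such that a \dOVm~instance has no orthogonal pair with probability at least $1 - 1/n^\epsilon$, which is exactly the conclusion we need.

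Since this is a restatement rather than a new argument, there is no real hard step; the only thing to check is that the map $\mu \mapsto \delta = 1/\mu^4 - 1$ lands in the range of $\delta$ values to which Lemma \ref{lem:NoOVprob} applies. That is immediate from $\mu \in (0,1)$. I would explicitly note this sign check in the write-up so the reader sees that the parametrizations in Lemma \ref{lem:NoOVprob} and Corollary \ref{cor:largeDOV} are genuinely equivalent, and then conclude by citing the lemma.
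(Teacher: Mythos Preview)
Your proposal is correct and matches the paper's approach exactly: the paper simply labels this ``a straightforward corollary of Lemma~\ref{lem:NoOVprob}'' without further argument, and the inversion $\mu = (1+\delta)^{-1/4} \Leftrightarrow \delta = 1/\mu^4 - 1$ together with the observation that $\mu \in (0,1)$ forces $\delta > 0$ is precisely the intended derivation.
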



We use the following theorem appearing in \cite{countSmallDimOV} in the proof of Theorem \ref{thm:avgCaseCountOVisEasy}.
\begin{theorem}
Given a vector of dimension $d = c\lg(n)$ there is a $\tO(n^{2-1/O(\lg(c))})$ time algorithm that succeeds with probability $1$ on instances of \dOVm~in returning the count of the number of orthogonal vector pairs for every vector if one exists, regardless of $\mu$. \cite{countSmallDimOV}
\label{thm:smallDimOV}
\end{theorem}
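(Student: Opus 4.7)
The plan is to adapt the Williams-style polynomial-method algorithm for detecting an orthogonal pair in $d = c\lg n$ dimensions (Abboud-Williams-Yu; Chan-Williams) to the task of exactly counting orthogonal pairs. First I would partition each of the two input lists into $n/s$ blocks of $s$ vectors each; the total count is then the sum over the $(n/s)^2$ ordered block pairs of the OV count within that pair, so if I can process each block pair in time $\tO(s^{2-\gamma})$ for some $\gamma = \Omega(1/\lg c)$, balancing $s$ against the per-pair cost produces the claimed overall bound of $\tO(n^{2 - 1/O(\lg c)})$.

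The core subroutine is counting orthogonal pairs between two lists $A, B$ of $s$ Boolean vectors in dimension $d$. Write the OV indicator as the degree-$d$ integer polynomial
\[
\mathbf{1}[x \cdot y = 0] \;=\; \prod_{i=1}^{d}\bigl(1 - x_i y_i\bigr),
\]
so the quantity we want is $\sum_{(x,y)\in A\times B}\prod_i (1-x_iy_i)$. I would then apply a chunk decomposition: split the $d$ coordinates into $k = \Theta(\lg n/\lg c)$ blocks of width $d/k$ and factor the indicator as $\prod_{j=1}^k Q_j(x^{(j)}, y^{(j)})$, where each $Q_j$ is the per-chunk disjointness kernel on $\{0,1\}^{d/k}$. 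After collapsing repeated monomials this lets one represent each vector $x\in A$ (resp.\ $y\in B$) by a feature vector $\phi(x)\in\mathbb{Z}^{T}$ of polynomially bounded dimension $T$, so that $\sum_{(x,y)}\prod_j Q_j(x^{(j)},y^{(j)}) = \sum_{(x,y)} \langle \phi(x), \phi(y)\rangle$ is computable as a $\lvert A\rvert\times T$ by $T\times \lvert B\rvert$ rectangular matrix product over $\mathbb{Z}$, evaluable in $\tO(s^{2-\gamma})$ time using fast rectangular matrix multiplication at the chosen value of $T$.

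The main obstacle is the exactness requirement: the Razborov-Smolensky probabilistic-polynomial shortcut used for OV detection introduces two-sided error, which is incompatible with returning a count that is zero precisely when no orthogonal pair exists. One therefore cannot approximate the degree-$d$ indicator by a low-degree random polynomial, and the entire saving in the effective matrix dimension $T$ must come from structural reorganization (chunking and merging equal monomials) rather than from degree reduction. The technical heart of the proof is choosing $k$ carefully so that the aggregate feature count $T$ after collapsing is still polynomially bounded while keeping the rectangular FMM exponent strictly below $2$, and then balancing $s$ against that exponent to hit $\tO(n^{2-1/O(\lg c)})$. Correctness holds deterministically (probability $1$), and the bound is independent of $\mu$ because the subroutine treats each block pair as a worst-case Boolean instance, never using any property of the \dOVm{} distribution.
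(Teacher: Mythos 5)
You should first note that the paper does not actually prove this statement: Theorem~\ref{thm:smallDimOV} is imported as a black box from \cite{countSmallDimOV} (a Chan--Williams-style derandomized polynomial-method result) and is only \emph{used} in Section~\ref{sec:OV}, so your proposal has to be measured against that cited argument. Measured that way, the core step of your sketch does not hold up. You claim that after splitting the $d=c\lg n$ coordinates into $k=\Theta(\lg n/\lg c)$ chunks and ``collapsing repeated monomials,'' the exact indicator $\prod_{i=1}^{d}(1-x_iy_i)$ becomes an inner product $\langle\phi(x),\phi(y)\rangle$ with feature dimension $T=\mathrm{poly}(n)$. But the exact expansion has one monomial $\prod_{i\in S}x_iy_i$ for each $S\subseteq[d]$, and these are all distinct, so there is nothing to collapse; the chunk factorization is a tensor product whose per-chunk dimensions multiply, giving $(2^{d/k})^{k}=2^{d}=n^{c}$ features. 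Worse, the $2^{d}\times 2^{d}$ disjointness matrix is a tensor power of $\bigl(\begin{smallmatrix}1&1\\1&0\end{smallmatrix}\bigr)$ and hence has full real rank $2^{d}$, so no input-independent exact factorization of polynomial dimension exists at all --- and ``worst-case block pairs,'' which you explicitly insist on, is exactly the regime where this matters. Without that subroutine the outer blocking is also circular: with per-block-pair cost $\tO(s^{2-\gamma})$ and $\gamma=\Theta(1/\lg c)$ you would need $s=n^{\Theta(1)}$, at which point a single block pair is just the original problem again. In the actual algorithms the groups are small (size about $n^{1/O(\lg c)}$), and the speedup comes from evaluating all $(n/s)^2$ group pairs at once via one rectangular matrix product whose inner dimension --- the number of monomials --- is at most $(n/s)^{0.1}$, not from beating $s^{2}$ within a single pair.

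Your diagnosis that two-sided-error Razborov--Smolensky polynomials are incompatible with returning an exact count is correct, but the cited proof does not respond by abandoning degree reduction; it derandomizes it. Chan and Williams replace the random low-degree polynomial with an exact deterministic one over small groups, using small-bias sample spaces together with modulus-amplifying (Beigel--Tarui-type) polynomials, so that an integer-valued polynomial with a subpolynomial number of monomials determines the relevant Boolean quantity correctly on \emph{every} input; summing these exact values over group pairs, evaluated by fast rectangular matrix multiplication, is what yields the zero-error count in $\tO(n^{2-1/O(\lg c)})$ time, independently of $\mu$. That derandomized degree-reduction machinery is the actual content of the theorem, and it is precisely the ingredient your ``chunk and merge'' route tries to do without; as argued above, a purely structural replacement of it cannot work.
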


Finally, we return to the proof of Theorem \ref{thm:avgCaseCountOVisEasy}. We show that even the counting version of the uniform average-case OV has a subquadratic algorithm. 

\begin{proofof}{Theorem \ref{thm:avgCaseCountOVisEasy}}
Let the dimension be $d = c\lg(n)$.
By Corollary \ref{cor:largeDOV} if $c > 2\lg(e)/\mu^4$ then there is some $\epsilon>0$ such that there are no orthogonal vectors with probability at least $1-n^{-\epsilon}$. Notably, this gives us an $\tO(d)$ time algorithm where we return a count of zero if the dimension is larger than $2\lg(e)\lg(n)/\mu^4$ that succeeds with probability at least $1-n^{-\epsilon}$.

When $c\leq 2\lg(e)/\mu^4$ we will run the algorithm from Theorem \ref{thm:smallDimOV}. This runs in $\tO(n^{2-1/\lg(c)})$ time and is correct with probability $1$. This is at its worst a run time of $\tO(n^{2-1/\lg(2\lg(e)\mu^{-4})})$. So $\delta = \mu^4/\lg(2\lg(e)\mu^{-4})$, $\mu$ is a constant so $\delta$ is also a constant. 
\end{proofof}






\section{Counting to Detection Reduction for Average-Case \texorpdfstring{\zkc}{Lg}}
\label{sec:countToDetect}
In fine-grained complexity the primary technique used for worst-case to average-case reductions has used the technique described by \cite{BallWorstToAvg}. This technique produces average-case hardness for computing the output of functions over a finite field. These problems are fundamentally counting problems. The issue with counting problems is that they are much harder to build cryptographic objects out of. 

Here we give a reduction from Counting to Detection for \zkc~in the average case (\aczkc). Notably, such a reduction \textbf{does not exist} in the worst case in fine-grained complexity. This makes the assumption that average case \zkc~detection with high probability requires $n^{k-o(1)}$ time more plausible. The assumption that \zkc~detection is hard with probability $1/100$  can be used to make fine-grained public-key cryptography \cite{fgCrytpo} (though the assumption that average-case \zkc~is hard with probability $1-n^{-o(1)}$ should be sufficient). There is a gap here between the probabilities we are describing, $1-1/\Omega( n^k)$, and the probabilities used for fine-grained cryptography, $1 - 1/n^{o(1)}$. However, this makes a step forward in closing the gap between the problems we can show are average-hard from worst-case assumptions and those we can build cryptography from.

Let us define average-case \zkc.

\begin{definition}
An average case instance of \zkc~(\aczkc) with range $R$ takes as input a complete $k$-partite graph with $n$ nodes in each partition. Every edge has a weight chosen iid from $[0,R-1]$. A clique is considered a zero $k$ clique if the sum of the edges is zero mod $R$.
\end{definition}

The idea of our reduction from counting to detection uses the fact that average-case \zkc~is easy when $R$ is small and there are very few solutions when $R$ is large. In the worst-case we can reduce detecting \aczkc~to counting $n^{k-\epsilon}$ \aczkc s when $\epsilon>0$. So, intuitively we are using the fact that when $R$ is small we can use a fast algorithm for counting. When $R$ is larger there are $n^{k-\epsilon}$ solutions, so we can use a reduction to show that faster detection solves counting those small number of solutions. 

First we will prove that when the range is small there is a fast algorithm. Then, we will show that a search algorithm counts very well when the range is exactly $R= n^k$. We will then show that this gives a generic counting to search reduction. Next, we will provide a search to decision reduction. Finally, we will give the counting to detection statement. 

Note that throughout this section we assume the function $p(n)$ is a monotonically \emph{non-increasing function}.
Additionally, when we say an algorithm succeeds in the average case with probability $p$, this is randomness over \emph{both} the input and the random coins flipped in the algorithm. 

\paragraph{Small Range is Easy}

\begin{lemma}
There is a $\tO(R^2 n^{\omega \lceil k/3 \rceil})$ time algorithm for \aczkc. 
\label{lem:smallRangeZKC}
\end{lemma}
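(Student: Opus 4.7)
The plan is to reduce \aczkc{} (or indeed any worst-case instance of $k$-partite zero-$k$-clique) with weight range $R$ to a zero-triangle problem on $N = n^{\lceil k/3 \rceil}$ ``super-vertices'' with weights in $[0, R-1]$, and then solve that with the standard matrix-multiplication approach for small-weight zero-triangle.

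First, partition the $k$ vertex parts $V_1, \ldots, V_k$ into three groups $G_1, G_2, G_3$, each consisting of $\lceil k/3 \rceil$ (or fewer) consecutive parts. For each $i \in \{1,2,3\}$ form the set $\mathcal{S}_i$ of all tuples picking one vertex from each $V_j$ with $V_j \subseteq G_i$; thus $|\mathcal{S}_i| \le n^{\lceil k/3 \rceil}$. For every super-vertex $s \in \mathcal{S}_i$, precompute in $\tilde O(1)$ time per tuple its \emph{internal weight} $I(s) = \sum w(u,v) \bmod R$ summed over edges with both endpoints in $s$; this costs a total of $\tilde O(n^{\lceil k/3 \rceil})$ time. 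For each pair $i \ne j$ and each pair of super-vertices $(s_i, s_j) \in \mathcal{S}_i \times \mathcal{S}_j$, compute the \emph{super-edge weight} $W(s_i, s_j) = \sum_{u \in s_i, v \in s_j} w(u,v) \bmod R$; this is $\tilde O(N^2)$ total, which is $\tilde O(n^{2\lceil k/3 \rceil}) \le \tilde O(n^{\omega \lceil k/3 \rceil})$ since $\omega \ge 2$.

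By construction, a zero-$k$-clique in the original graph corresponds bijectively to a triple $(s_1, s_2, s_3) \in \mathcal{S}_1 \times \mathcal{S}_2 \times \mathcal{S}_3$ with
\[
W(s_1,s_2) + W(s_2,s_3) + W(s_3,s_1) + I(s_1) + I(s_2) + I(s_3) \equiv 0 \pmod{R}.
\]
Fold the internal weights into one of the super-edge weights, e.g.\ set $\widetilde W(s_i, s_j) := W(s_i, s_j) + I(s_i) \bmod R$ for $(i,j) = (1,2), (2,3), (3,1)$, so the condition becomes the standard zero-triangle condition on the tripartite weighted graph with super-vertex parts $\mathcal{S}_1, \mathcal{S}_2, \mathcal{S}_3$ and weights in $[0,R-1]$.

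Now apply the classical small-range zero-triangle algorithm: for each residue $r \in [0, R-1]$, define the $0/1$ matrix $A_r[s_1, s_2] = \mathbb{1}[\widetilde W(s_1,s_2) = r]$, and likewise $B_{r'}$ indexed over $(s_2, s_3)$. For every pair $(r, r') \in [0,R-1]^2$ compute the matrix product $C_{r,r'} = A_r \cdot B_{r'}$ in $O(N^\omega)$ time; $C_{r,r'}[s_1, s_3]$ gives the number of $s_2$'s completing a path of super-edge weights $(r, r')$. For each $(s_1, s_3)$ and each $(r, r')$, check whether $\widetilde W(s_3, s_1) \equiv -(r+r') \pmod R$ to detect a zero super-triangle (and, if one wants the count rather than detection, sum the $C_{r,r'}[s_1,s_3]$ values over the appropriate $(r,r')$). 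The total running time is $R^2 \cdot N^\omega = \tilde O(R^2 n^{\omega \lceil k/3 \rceil})$, proving the lemma. The main bookkeeping point — and the only place one has to be slightly careful — is the folding-in of the internal weights $I(s_i)$ so that the problem reduces cleanly to the standard zero-triangle instance; everything else is uniformly bounded by the final time budget since $\omega \ge 2$.
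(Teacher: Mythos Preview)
Your proposal is correct and follows essentially the same approach as the paper: group the $k$ parts into three super-parts of size $\lceil k/3\rceil$, reduce to a zero-triangle instance on $N=n^{\lceil k/3\rceil}$ super-vertices with weights in $[0,R-1]$, and solve that by enumerating $R^2$ pairs of edge-weight residues and using matrix multiplication. The only cosmetic difference is in how the internal weights are absorbed into the super-edge weights: the paper puts \emph{half} of each super-vertex's internal weight on each of its two incident super-edges, whereas you fold the full $I(s_i)$ cyclically into a single designated super-edge; your version is arguably cleaner since it avoids any divisibility-by-$2$ issue modulo $R$.
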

\begin{proof}
Take the graph as a $k$-partite graph. 
Group together $k/3$ partitions of nodes. If $k$ is not a multiple of $3$ then make groups of $\lceil k/3 \rceil$ partitions and $\lfloor k/3 \rfloor$ partitions. Then, in each group of partitions create a node for every possible set of $\lceil k/3 \rceil$ or $\lfloor k/3 \rfloor$ nodes one from each partition. The total number of nodes is $O(n^{\lceil k/3 \rceil})$. 

Consider two nodes $v$ and $u$ where $v$ represents $x$ nodes and $u$ represents $y$ nodes. 
Add an edge between $u$ and $v$ only if all $x+y$ represented nodes form a clique. The weight on the edge between $u$ and $v$ is the sum of half the weight of all edges within the clique of $x$ nodes represented by $v$, half of the weight of all the edges within the clique of $y$ nodes represented by $u$, and the weight of all edges going between the $x$ nodes in $v$ and the $y$ nodes in $u$.

Now, the weights of the edges are still in the range $[0, R]$. We want to find a zero triangle in this new graph. We can guess the edge weights of two of the edges in the triangle, which forces the third value. Then, we produce a graph with only the edges of the guessed weights, then use matrix multiplication. All told this takes
$O(R^2 \cdot (n^{\lceil k/3 \rceil})^\omega )$ time. This can be simplified to $\tO(R^2 n^{\omega \lceil k/3 \rceil})$ time. 
\end{proof}

We can have a slight improvement in the running time of Lemma \ref{lem:smallRangeZKCExtend}.

\begin{lemma}
There is a $\tO(R^2 n^{(\omega (k-2)/3)+2})$ time algorithm for \aczkc. 
\label{lem:smallRangeZKCExtend}
\end{lemma}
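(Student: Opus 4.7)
The plan is to tighten the previous Lemma~\ref{lem:smallRangeZKC} by removing the cost of the $\lceil \cdot \rceil$ in the grouping. Instead of grouping all $k$ partitions into triples, I would single out two of the $k$ partitions and handle their vertices by brute-force enumeration, grouping the remaining $k-2$ partitions into three equal super-groups of $(k-2)/3$ partitions each (assume for clarity that $3\mid k-2$; otherwise use $\lceil\cdot\rceil$ with only a constant blowup absorbed in $\tO$).

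Concretely, I would iterate over all $n^2$ pairs $(u_1,u_2)$ with $u_1$ in partition $1$ and $u_2$ in partition $2$. For each such fixed pair, finding a zero $k$-clique extending $(u_1,u_2)$ reduces to finding a zero-triangle on a super-graph with three super-node sets $A,B,C$, where each super-node corresponds to a choice of one vertex from each of the $(k-2)/3$ partitions assigned to that super-group; hence $|A|=|B|=|C|=n^{(k-2)/3}$. I would define the super-edge weights so that for any super-triangle $(a,b,c)\in A\times B\times C$ the sum of the three super-edge weights (mod $R$) equals the sum of all $\binom{k}{2}$ original edge weights of the resulting $k$-clique on $\{u_1,u_2\}\cup a\cup b\cup c$. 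Concretely, I would assign each edge to exactly one super-edge: the intra-$A$ edges, the $u_1$-to-$A$ edges, and the $u_2$-to-$A$ edges all go into the $AB$ super-edge weight, intra-$B$ and $u_i$-$B$ edges into $BC$, intra-$C$ and $u_i$-$C$ edges into $CA$, and the edge $(u_1,u_2)$ into $AB$. This avoids any halving and keeps all weights in $\{0,\dots,R-1\}$ mod $R$.

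With this setup I would invoke the zero-triangle-via-matrix-multiplication subroutine from Lemma~\ref{lem:smallRangeZKC}: enumerate each of the $R^2$ weight pairs $(\alpha,\beta)$ for two of the three super-edges (the third is then forced mod $R$), build the two $n^{(k-2)/3}\times n^{(k-2)/3}$ indicator matrices and multiply. This gives $\tO(R^2\cdot (n^{(k-2)/3})^\omega)$ per fixed pair, so the total running time is
\[
\tO\!\left(n^2 \cdot R^2 \cdot n^{\omega(k-2)/3}\right) \;=\; \tO\!\left(R^2\, n^{\omega(k-2)/3 + 2}\right),
\]
as claimed.

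The main thing to be careful about will be bookkeeping in the super-edge weights so that no original edge is double-counted or missed, and so that the super-edge weight between two super-nodes can actually be computed from the specific vertex choices (this takes $\tO(n^{2(k-2)/3})$ time as a one-time preprocessing per $(u_1,u_2)$, which is dominated by the $n^{\omega(k-2)/3}$ matrix-multiplication cost). The per-edge assignment above makes both checks immediate, so no genuinely difficult obstacle is expected; the improvement really just trades two of the ``ceiling'' partitions for an explicit $n^2$ enumeration, which is a win whenever $\omega\cdot\lceil k/3\rceil > \omega(k-2)/3 + 2$.
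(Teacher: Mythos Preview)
Your argument is correct and matches the paper's proof exactly in the case $3\mid k-2$: fix two nodes, group the remaining $k-2$ partitions into three equal super-groups, and run the $R^2$-weighted zero-triangle via matrix multiplication. (You forgot to list the cross-$A$-$B$, cross-$B$-$C$, cross-$C$-$A$ edges in your weight assignment, but those obviously go into the corresponding super-edge and the bookkeeping is fine.)

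The gap is your parenthetical ``otherwise use $\lceil\cdot\rceil$ with only a constant blowup absorbed in $\tO$.'' That is false: the ceiling sits in the \emph{exponent}, so replacing $(k-2)/3$ by $\lceil (k-2)/3\rceil$ costs a polynomial factor, not a constant. For instance with $k=7$ your bound becomes $\tO(R^2 n^{2\omega+2})$, which is strictly larger than the claimed $\tO(R^2 n^{5\omega/3+2})$. The paper handles this by casing on $k\bmod 3$: fix $0$, $1$, or $2$ nodes so that the number of remaining partitions is divisible by $3$, and then apply Lemma~\ref{lem:smallRangeZKC}. Each case meets the stated bound because $\omega\le 3$ (e.g.\ for $3\mid k$ one gets exponent $\omega k/3\le \omega(k-2)/3+2$; for $3\mid k-1$ one gets $\omega(k-1)/3+1\le \omega(k-2)/3+2$). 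So the fix is easy, but you do need the case split rather than always fixing two nodes.
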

\begin{proof}
Let $g$ be the largest integer such that $3g \leq k$. Note that $3g\geq k-2$. 

If $3g=k$ then by Lemma \ref{lem:smallRangeZKC} an algorithm exists which runs in time  $\tO(R^2 n^{\omega \lceil k/3 \rceil})$ time, which is $\tO(R^2 n^{(\omega (k-2)/3)+2})$.

If $3g=k-1$ then pick one partition, for every node in this partition we create a zero $k-1$ clique instance and use Lemma \ref{lem:smallRangeZKC} to get a $\tO(R^2 n^{\omega \lceil (k-1)/3  +1\rceil})$ time algorithm, which is $\tO(R^2 n^{(\omega (k-2)/3)+2})$.

If $3g=k-2$ then pick one partition, for every node in this partition we create a zero $k-2$ clique instance and use Lemma \ref{lem:smallRangeZKC} to get a $\tO(R^2 n^{(\omega (k-2)/3)+2})$ time algorithm.
\end{proof}
Lemma \ref{lem:smallRangeZKCExtend} gives the following corollary.
\begin{corollary}
\label{cor:ZKCeasySmallRange}
If $R = O(n^{(k-2-\omega (k-2)/3 -\epsilon)/2})$ then there is a $\tO(n^{k - \epsilon})$ time algorithm for \aczkc~with range $R$. 
\end{corollary}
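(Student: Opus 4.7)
The plan is to simply substitute the given bound on $R$ into the running time of Lemma \ref{lem:smallRangeZKCExtend} and check that the exponents cancel to give $k-\eps$. Lemma \ref{lem:smallRangeZKCExtend} already supplies an algorithm running in time $\tO(R^2 n^{\omega(k-2)/3 + 2})$ for \aczkc~with range $R$, so this is the only ingredient needed; the corollary is essentially a one-line arithmetic consequence.

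Concretely, I would start by squaring the hypothesized bound on $R$ to obtain $R^2 = O\!\left(n^{k-2-\omega(k-2)/3 - \eps}\right)$. Multiplying this by the factor $n^{\omega(k-2)/3 + 2}$ from Lemma \ref{lem:smallRangeZKCExtend}, the $-\omega(k-2)/3$ term in the exponent of $R^2$ cancels the $+\omega(k-2)/3$ term in the second factor, and the $-2$ from squaring cancels the $+2$ from the second factor. What remains in the exponent is exactly $k - \eps$, yielding a running time of $\tO(n^{k-\eps})$.

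The proof proposal thus reduces to two sentences: invoke Lemma \ref{lem:smallRangeZKCExtend}, and carry out the cancellation above. There is no real obstacle here; the only subtlety worth flagging is that the hypothesis is stated with a big-$O$, so the final bound is also with a big-$O$, and the polylogarithmic factors absorbed in the $\tO(\cdot)$ from Lemma \ref{lem:smallRangeZKCExtend} carry through unchanged. In particular, no new assumption on $\omega$ (the matrix multiplication exponent) or on $k$ is needed beyond those already implicit in Lemma \ref{lem:smallRangeZKCExtend}.
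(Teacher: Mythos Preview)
Your proposal is correct and matches the paper's approach exactly: the paper simply states that the corollary follows from Lemma~\ref{lem:smallRangeZKCExtend}, and your arithmetic verification of the exponent cancellation is precisely what is needed.
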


\paragraph{High Probability Counting for $R= n^k$}
When the range is $n^k$ we want to count efficiently with very high probability. We will do this by first proving two helper lemmas.

\begin{lemma}
\label{lem:numOfSol}
The probability that an instance of \aczkc~with range $R=n^k$ has at least $k^k\lg^{2k}(n)$ solutions is  $2^{-\Omega(\lg^2(n))}$. 
\end{lemma}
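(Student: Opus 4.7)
The plan is to prove the tail bound by the factorial moment method. Let $X$ denote the number of zero-sum $k$-cliques, write $T_0 := k^k\lg^{2k}(n)$, and set $R := n^k$ throughout.

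First I would compute $E[X]$: there are $n^k$ $k$-cliques (one choice of vertex per partition), and each clique $C$ has edge-sum $s(C)=\sum_{i<j}w(v_i,v_j)$ equal to a sum of $\binom{k}{2}$ i.i.d.\ $\mathrm{Unif}(\mathbb{Z}_R)$ weights; such a sum is uniform on $\mathbb{Z}_R$, so the zero-sum indicator $Y_C$ has mean $1/R$ and $E[X]=n^k/R=1$. Next I would establish pairwise independence of the $Y_C$: for distinct cliques $C\neq C'$ the vertex $k$-tuples differ in at least one coordinate $i$, so any edge incident to the $i$-th vertex of $C$ lies in $C\setminus C'$, and symmetrically there is an edge $e'\in C'\setminus C$; conditioning on every other weight, $Y_C=1$ and $Y_{C'}=1$ become two independent uniform-value conditions on the fresh variables $w(e)$ and $w(e')$, yielding $\Pr[Y_C=Y_{C'}=1]=1/R^2$.

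Then I would use the inequality $\Pr[X\geq T_0]\leq E[\binom{X}{T_0}]$ and expand
\[
E\!\left[\tbinom{X}{T_0}\right]=\sum_{|S|=T_0}\Pr\!\left[\forall C\in S,\ s(C)\equiv 0\pmod R\right]=\sum_{|S|=T_0}R^{-r(S)},
\]
where $r(S)$ is the $\mathbb{Z}_R$-rank of the matrix of characteristic vectors (over the set of edges) of the cliques in $S$. If every $T_0$-subset of distinct cliques had full rank $r(S)=T_0$, I would get $E[\binom{X}{T_0}]\leq \binom{n^k}{T_0}R^{-T_0}\leq (e/T_0)^{T_0}=2^{-\Omega(T_0\log T_0)}$, which with $T_0=k^k\lg^{2k}(n)$ is $2^{-\Omega(\lg^{2k}(n)\lg\lg n)}\subseteq 2^{-\Omega(\lg^2(n))}$, giving the claim.

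The main obstacle is the contribution of rank-deficient subsets: although a generic pair or triple of distinct cliques has linearly independent characteristic vectors, highly symmetric configurations such as the ``cube'' $\{a_1,a_2\}\times\{b_1,b_2\}\times\{c_1,c_2\}$ for $k=3$ produce nontrivial dependencies. To handle these I would show that for each $r<T_0$, the number of $T_0$-subsets of distinct cliques whose characteristic matrix has rank exactly $r$ is at most $\binom{n^k}{T_0}\cdot R^{-(T_0-r)}$; this follows by parameterizing such a subset by a rank-$r$ ``basis'' of $r$ cliques (which costs $n^{kr}$ choices, i.e.\ $R^r$) together with the constraint that each of the remaining $T_0-r$ cliques lies in the span of this basis (each such event has probability at most $R^{-1}$ relative to the uniform choice of a clique). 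Summing the resulting bounds over $r\in\{1,\ldots,T_0\}$ contributes only an extra factor of $T_0$, preserving the final estimate $\Pr[X\geq T_0]\leq 2^{-\Omega(\lg^2(n))}$ and completing the proof.
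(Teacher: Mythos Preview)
Your factorial-moment setup and the pairwise-independence observation are both correct, but the treatment of rank-deficient subsets has a real gap. The assertion that a uniformly chosen clique lies in the span of a fixed rank-$r$ basis ``with probability at most $R^{-1}$'' says that at most one clique lies in that span; this is already false because the $r$ basis cliques themselves lie there, and your own cube example shows that further cliques can too. Without a correct bound on how many cliques can lie in a given span, your count of rank-$r$ $T_0$-subsets is uncontrolled and the sum $\sum_{|S|=T_0}R^{-r(S)}$ does not close. There is also a secondary technicality: since $R=n^k$ need not be prime, $\Pr[\forall_{C\in S}\,s(C)\equiv 0]$ equals $\prod_i\gcd(d_i,R)/R$ for the Smith invariants $d_i$ of the characteristic matrix, which in general is only $\ge R^{-r(S)}$, so the rank formula as written \emph{underestimates} the probability you are trying to upper-bound.

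The paper sidesteps the rank bookkeeping entirely with a short structural step. Any $T_0=(k\lg^2 n)^k$ distinct $k$-cliques must use at least $T_0^{1/k}=k\lg^2 n$ distinct vertices in some single partition (since the number of cliques is at most the product of the per-partition vertex counts), so one can select $t:=\lg^2 n$ zero cliques with pairwise distinct vertices in that partition. Each of these $t$ cliques then owns a private edge, their zero-indicators are genuinely mutually independent with joint probability exactly $R^{-t}$ (no Smith-form subtlety), and a union bound over the at most $\binom{n^k}{t}$ candidate $t$-sets gives
\[
\Pr[X\ge T_0]\ \le\ \binom{n^k}{t}\,R^{-t}\ \le\ (e/t)^{t}\ =\ 2^{-\Omega(\lg^2 n)}.
\]
The step you are missing is exactly this reduction from $T_0$ arbitrary cliques to $t=T_0^{1/k}$ cliques with private edges; once you have it, the factorial-moment/union-bound computation at the smaller threshold $t$ goes through with no dependency analysis at all.
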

\begin{proof}
If there are at least $\left(k\lg^2(n)\right)^k$ zero cliques then there is at least one set of $\lg^2(n)$ cliques such that each zero clique has at least one node not shared by any other zero clique. After all at least $k\lg^2(n)$ distinct nodes must be involved in these $\left(k\lg^2(n)\right)^k$ zero cliques. 

If a zero clique has a node not shared with the other cliques then whether or not it is a zero clique is uncorrelated with the other zero cliques. So, the probability that there are $\left(k\lg^2(n)\right)^k$ zero cliques is at most the probability that out of $n^k$ independent trials $\lg^2(n)$ return true when the probability of a trial returning true is $1/R = 1/n^k$. By the Chernoff bound we get the probability of this event is less than $2^{- lg^2(n)/3}$ which is $2^{- \Omega(lg^2(n))}$. 
\end{proof}

\begin{lemma}
Using a search algorithm, $\mathcal{A}$,  that succeeds with probability $1-p$ on an instances of \aczkc~with $n/(k \lg^2(n))$ nodes per partition and edge weights in the range $R=\theta{(n^k)}$ in time $T(n)$ we can count the number of solutions (or list all those solutions) in a \aczkc~instance in time $\tO(T(n)+n^{k-1})$ with probability at least $1-pk^k\lg^{2k+2}(n)-2^{-\Omega(\lg^2(n))}$.
\label{lem:countFromSearch}
\end{lemma}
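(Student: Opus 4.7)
The plan is to reduce counting to repeated search by partitioning the instance into a subpolynomial number of sub-instances, each a valid uniform \aczkc~input for $\mathcal{A}$, and then iteratively pulling all zero cliques out of each sub-instance. First, partition each of the $k$ vertex classes into $k\lg^2(n)$ equal groups of $n' = n/(k\lg^2(n))$ vertices; any tuple of groups (one per class) defines a sub-instance on $n'$ nodes per partition, for a total of $M = (k\lg^2(n))^k = k^k\lg^{2k}(n)$ sub-instances. Since the edge weights of the original input are iid uniform on $[0,R-1]$, each sub-instance is distributed as a valid uniform \aczkc~input, and every zero $k$-clique of the original graph lies in exactly one sub-instance.

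\textbf{Extraction loop.} For each sub-instance $S$, repeatedly invoke $\mathcal{A}(S)$ with fresh coins: each time $\mathcal{A}$ returns a candidate, verify in $O(k^2)$ time that its $\binom{k}{2}$ edges sum to $0 \pmod{R}$; if so, record the clique, delete one of its vertices (and all incident edges) from $S$, and re-invoke; stop on this sub-instance the first time $\mathcal{A}$ reports no zero clique. Removing a vertex yields a smaller sub-instance whose remaining edges are still iid uniform on $[0,R-1]$, so it is a valid (smaller) uniform \aczkc~input on which $\mathcal{A}$'s $(1-p)$-success guarantee still applies.

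\textbf{Probability and time accounting.} Invoke Lemma \ref{lem:numOfSol}: with probability at least $1 - 2^{-\Omega(\lg^2 n)}$ the total number $Z$ of zero $k$-cliques in the original instance is at most $k^k\lg^{2k}(n)$. On this event, the total number of calls to $\mathcal{A}$ across all sub-instances is at most $Z + M \leq 2 k^k \lg^{2k}(n) \leq k^k \lg^{2k+2}(n)$ (one call per clique discovered, plus a terminating ``no'' call per sub-instance). A union bound over those calls gives that all of them succeed with probability at least $1 - p k^k \lg^{2k+2}(n)$, so the overall failure probability is at most $p k^k \lg^{2k+2}(n) + 2^{-\Omega(\lg^2 n)}$. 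The runtime is dominated by the $\tO(1)$ invocations of $\mathcal{A}$ costing $\tO(T(n))$ in total, plus $\tO(n^{k-1})$ for iterating over and passing the edges of each sub-instance and performing the $O(k^2)$ verifications, which matches $\tO(T(n) + n^{k-1})$.

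\textbf{Main obstacle.} The delicate point is justifying that after deleting a vertex $v$ chosen \emph{adaptively} from $\mathcal{A}$'s previous output $C$, the residual sub-instance is still marginally a uniform \aczkc~input so that $\mathcal{A}$'s guarantee chains across calls. This holds because edges of $S \setminus \{v\}$ that are outside $C$ are independent of the edges inside $C$ (and hence of the observation $\mathcal{A}(S)=C$), while the $\binom{k-1}{2}$ remaining edges inside $C$ become marginally uniform once the $k-1$ deleted edges touching $v$ are integrated out: their sum modulo $R$ is uniform for $k\geq 2$, so the zero-sum constraint on $C$ imposes no residual constraint on the surviving edges. Combined with fresh coin flips per call, this lets the $(1-p)$ bound apply uniformly to every call in the sequence.
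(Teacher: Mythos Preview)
Your extraction loop has a genuine gap: after calling $\mathcal{A}$ on a sub-instance $S$ and observing $\mathcal{A}(S)=C$, the residual graph $S\setminus\{v\}$ is \emph{not} marginally a uniform \aczkc\ input, so $\mathcal{A}$'s $(1-p)$ guarantee does not transfer to the next call. Your justification says ``edges of $S\setminus\{v\}$ that are outside $C$ are independent of the edges inside $C$ (and hence of the observation $\mathcal{A}(S)=C$)'', but the parenthetical does not follow. The search algorithm $\mathcal{A}$ reads \emph{all} of $S$, so the event $\{\mathcal{A}(S)=C\}$ can depend arbitrarily on edges outside $C$. For a concrete counterexample, let $\mathcal{A}$ return the lexicographically smallest zero $k$-clique; then $\mathcal{A}(S)=C$ forces every lex-smaller $k$-tuple to be nonzero, a nontrivial condition on edges disjoint from $C$. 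The same issue afflicts the terminating ``no'' call: once you have conditioned on prior outputs of $\mathcal{A}$, the current instance is no longer drawn from the uniform \aczkc\ distribution. Your argument about integrating out the $k-1$ deleted edges addresses only the zero-sum constraint on $C$, not the full conditioning on $\mathcal{A}$'s behavior.

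The paper's proof sidesteps this adaptivity problem entirely by never re-invoking $\mathcal{A}$ on a conditioned instance. Instead, it repeatedly draws a \emph{fresh} random sub-instance $G'$ (a uniformly random choice of $n/(k\lg^2 n)$ nodes per partition), calls $\mathcal{A}$ once on $G'$, and if $\mathcal{A}$ returns a clique $c$, it brute-forces in $O((n/x)^{k-1})$ time all $k$-tuples sharing a vertex with $c$ to collect every zero clique in $G'$ that intersects $c$. Each $G'$ is marginally a uniform \aczkc\ instance, so $\mathcal{A}$'s guarantee applies to each of the $4s\lg^2(n)$ calls individually, and a union bound finishes. The $n^{k-1}$ term in the running time comes precisely from this brute-force step, which replaces your (unsound) iterated-search-with-deletion.
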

\begin{proof}
Let the input \aczkc~instance be the graph $G$ with edge set $E$ and vertex set $V$.
First, note that with probability $1-1/2^{\Omega(\lg^2(n))}$ there are at most $s =k^k\lg^{2k}(n)$ zero $k$-cliques (ZKCs). 

Now consider a given ZKC $c$ in $G$. Imagine creating a new instance $G'$ that is a subset of $G$ by selecting a random subset of $n/x$ nodes from each partition. The ZKC $c$ is in $G'$ with probability $x^{-k}$. Now consider a clique $c'$ which is in $G$ and shares no nodes with $c$. Given that $c$ is in $G'$ the probability that $c'$ is also in $G'$ is at most $x^{-k}$. If there are at most $\ell$ cliques in $G$ then the probability that a given clique $c$ is in $G'$ and no disjoint cliques (cliques that share no vertices with $c$) are in $G'$ is at least:
$x^{-k}(1-x^{-k})^{\ell}.$ Further note that the sub-graph $G'$ has total variation distance $0$ from \aczkc~instances with $n/x$ nodes per partition and range $n^k$. 

Consider the algorithm $\mathcal{B}_x$. It creates an empty set $S_B$ that it will fill with cliques it finds. It generates $G'$ at random by selecting a random set of $n/x$ nodes from each partition. Then it runs $\mathcal{A}$ on $G'$. If $\mathcal{A}$ returns a clique $c$, check that it is a ZKC. If it is, further exhaustively check that there is no clique that shares a node with $c$ this takes $O(k(n/x)^{k-1})$ time (you can simply check all sets of $k$ nodes involving one node in the clique). Any cliques it finds in this search are added to $S_B$ and $S_B$ is returned. $\mathcal{B}_x$ takes $O(T(n)+n^{k-1})$ time. If:
\begin{enumerate}
    \item a ZKC $c$ is in $G'$,
    \item $\mathcal{A}$ returns correctly, and
    \item there are no ZKCs in $G'$ which share no vertices with $c$
\end{enumerate} 
then $\mathcal{B}_x$ will include $c$ in $S_B$. Because $\mathcal{A}$ returned a ZKC and it was either $c$ or a clique that shared a node with $c$. In the later case our exhaustive search would find it. 
Given a specific clique $c$ and $\mathcal{A}$ returning correctly $c \in S_B$ with probability $x^{-k}(1-x^{-k})^{\ell}$.

Consider the case where $\ell \leq s$ and $x = k \lg^2(n) =s^{1/k}$. Then $\mathcal{B}_{k \lg^2(n)}$ returns a given $c$ with probability at least $s^{-1}(1-s^{-1})^{s} \geq \frac{1}{4s}$. If $\mathcal{A}$ is returning correctly every trial is independent. Thus if we run $\mathcal{B}_{k \lg^2(n)}$ $4s\lg^2(n)$ times we will find the clique $c$ with probability at least $1-(1-1/(4s))^{4s\lg^2(n)} \geq 1-2^{-\Omega(\lg^2(n))}$. The probability we find all the ZKCs (given that there are at most $s$ ZKCs) is, by union bound at least $1-s2^{-\Omega(\lg^2(n))} = 1-2^{-\Omega(\lg^2(n))}$.

After making $4s\lg^2(n)$ calls to $\mathcal{B}_{k \lg^2(n)}$ we will have made $4s\lg^2(n)$ calls to $\mathcal{A}$. Using union bound all of these will succeed with probability at least $1-4s\lg^2(n)p$. 

So the time we take is $O(4s\lg^2(n)T(n)+4s\lg^2(n)n^{k-1})$ which is $\tO(T(n)+n^{k-1})$. Our success probability requires the union of the number of cliques being less than $s$, $\mathcal{A}$ returning correctly on all calls, and the randomness in $\mathcal{B}_{k\lg^2(n)}$ allowing us to return all cliques. Thus our probability of success is at least $1-4s\lg^2(n)p-2^{-\Omega(\lg^2(n))}-2^{-\Omega(\lg^2(n))}$. This can be simplified to a success probability of $1-4k^k\lg^{2k+2}(n)p-2^{-\Omega(\lg^2(n))}$. 

\end{proof}

\paragraph{Counting to Search}

We will start by describing the self reduction for \aczkc. This is a folklore self-reduction in the worst case and was analyzed in the average case in \cite{fgCrytpo}.

\begin{lemma}
Given an instance, $I$, of average case \zkc~with range $R$ with $kn$ nodes it can be split into $(n/x)^k$ instances $I_1, \ldots, I_{(n/x)^k}$ each with $kx$ nodes such that:
\begin{enumerate}
    \item The distribution over each $I_i$ is the average case distribution with $kx$ nodes and range $R$. (Though two instances $I_i$ and $I_{i'}$ may be correlated.)
    \item The number of solutions in instance $I$ ($\#Solutions(I)$) is equal to the sum of solutions in all the instances $I_1, \ldots, I_{(n/x)^k}$ ($\sum_{i=1}^{(n/x)^k} \#Solutions(I_i)$).
\end{enumerate}
\label{lem:selfREductionACzkc}
\end{lemma}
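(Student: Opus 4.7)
The plan is to use the natural block decomposition of the $k$-partite vertex set. First I would partition each vertex set $V_j$ (for $j = 1, \ldots, k$) into $n/x$ disjoint blocks $V_j^{(1)}, \ldots, V_j^{(n/x)}$, each of size $x$ (assume $x \mid n$, otherwise pad). Then for each tuple $\vec{b} = (b_1, \ldots, b_k) \in [n/x]^k$, define the sub-instance $I_{\vec{b}}$ to be the complete $k$-partite graph on $V_1^{(b_1)}, \ldots, V_k^{(b_k)}$ with edge weights inherited from $I$. This yields exactly $(n/x)^k$ sub-instances, each on $kx$ nodes.

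For property (1), observe that the edges appearing in $I_{\vec{b}}$ form a subset of the edges of $I$. In the average-case distribution on $I$, every edge weight is drawn iid and uniformly from $[0, R-1]$, so any fixed subset of edges inherits the iid uniform distribution on $[0, R-1]$. This coincides exactly with the average-case distribution of \aczkc{} on a complete $k$-partite graph with $x$ nodes per partition and range $R$. For property (2), any zero $k$-clique of $I$ picks one vertex from each partition $V_j$, and that vertex belongs to a unique block $V_j^{(b_j)}$; hence the clique lies entirely inside the unique sub-instance indexed by $\vec{b} = (b_1, \ldots, b_k)$. Conversely, every zero $k$-clique in some $I_{\vec{b}}$ is a zero $k$-clique in $I$ since edge weights are identical. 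Summing over $\vec{b}$ gives $\#\mathrm{Solutions}(I) = \sum_{\vec{b}} \#\mathrm{Solutions}(I_{\vec{b}})$.

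There is essentially no obstacle here; the lemma is a standard folklore self-reduction whose main content is the probabilistic observation in property (1). The parenthetical remark about correlation between $I_i$ and $I_{i'}$ is in fact unnecessary for this construction: since the chosen blocks partition the vertex sets, distinct sub-instances use disjoint edge sets, so by the iid assumption on edge weights the sub-instances are actually mutually \emph{independent}. The statement merely leaves room for alternative (e.g.\ overlapping) partitioning schemes if one desired.
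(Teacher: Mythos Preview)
Your proof of properties (1) and (2) is correct and is essentially the same block-decomposition argument the paper gives.

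However, your final paragraph contains a genuine error. It is \emph{not} true that distinct sub-instances use disjoint edge sets. For $k\ge 3$, take two tuples $\vec{b}\neq\vec{b'}$ that agree in at least two coordinates, say $b_i=b_i'$ and $b_j=b_j'$ for some $i\neq j$. Then both $I_{\vec{b}}$ and $I_{\vec{b'}}$ contain \emph{all} edges between $V_i^{(b_i)}$ and $V_j^{(b_j)}$; those edge weights are shared, so the two sub-instances are correlated. (Only for $k=2$ does your disjointness claim hold.) The paper explicitly points this out with the example of fixing the first two blocks and varying the rest. Thus the parenthetical remark in the lemma statement is not optional slack for alternative constructions; it is a necessary caveat for this very construction, and downstream arguments must use a union bound rather than independence across the sub-instances.
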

\begin{proof}
Note the $k$-partite graph in the instance $I$ and note each partition of vertices $V_1, \ldots, V_k$. We create a random partition of each vertex set into $n/x$ sets of $x$ vertices. Name the subsets of $V_i$, $V_i[j]$ where $j\in [1,n/x]$.
The $(n/x)^k$ subinstances are formed by taking the intersection of $k$ subsets one from each of the $k$ partitions: $V_1[j_1] \cup \ldots \cup V_k[j_k]$ for all possible $k$ tuples $(j_1,\ldots,j_k) \in [n/x]^{k}$.  

For the first claim, note that for any given instance $I_i$ we simply have a random selection of $n/x$ nodes from an average case instance. So every edge is chosen iid from $[R]$. This is indeed the distribution of an average case \zkc~instance. We will note that two separate instances may be correlated. For example the instance formed by $V_1[j_1] \cup V_2[j_2] \cup V_3[j_3] \ldots \cup V_k[j_k]$ and the instance formed by $V_1[j_1] \cup V_2[j_2] \cup V_3[j_3'] \ldots \cup V_k[j_k']$ will share all edges between sections $V_1[j_1]$ and $V_2[j_2]$. Of course union bounds can still be used to bound error between these $(n/x)^k$ instances. 

For the second claim, any \aczkc~witness has $k$ nodes one from each partition: $v_1\in V_1, \ldots,  v_k\in V_k$. Every witness appears in exactly one sub-instance. A given witness $(v_1, \ldots, v_k)$ will appear only in the instance formed by a union of the subsets $V_1[j_1] \cup \ldots \cup V_k[j_k]$  where $v_i \in V_i[j_i]$ in every subset. 
\end{proof}

\begin{lemma}
Let $p(n)$ be a monotonically non-increasing function. 

Assume an algorithm exists for the search version of \aczkc~with range $R \in [k^k\lg^{2k}(n)n^k, 2k^k\lg^{2k}(n)n^k]$ that succeeds with probability at least $1-p(n)$ and runs in time $O(n^{k-\epsilon})$ where $\eps>0$. Let $k'=2+\omega(k-2)/3$. Then there is an algorithm for counting the number of \zkc~in an average case instance for any positive integer $R$ with probability at least $1-2^{-\Omega(\lg^2(n))}-p\left(n^{(k-k' -\delta)/(2k)}/(k\lg^2(n))\right)\cdot n^{(k' +\delta)/2} \cdot k^k\lg^{k2}(n)$ that runs in time $\tO(n^{k -\delta})$ for some constant $\delta>0$.
\label{lem:countToSearch}
\end{lemma}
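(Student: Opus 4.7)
The plan is to reduce counting in an arbitrary-range \aczkc~instance to either (a) direct evaluation by the small-range algorithm of Lemma \ref{lem:smallRangeZKCExtend}, or (b) the search-to-count reduction of Lemma \ref{lem:countFromSearch} applied to sub-instances produced by the average-case self-reduction of Lemma \ref{lem:selfREductionACzkc}. Fix a small constant $\delta > 0$ and let $R^\star := n^{(k-k'-\delta)/2}$ be the threshold separating the two regimes.

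\textbf{Small range ($R \le R^\star$).} Invoke Lemma \ref{lem:smallRangeZKCExtend}, which runs in time $\tO(R^{2} n^{k'}) \le \tO(n^{k-\delta})$. The matrix-multiplication subroutine in that lemma, applied to weight-indicator matrices and summed over the $O(R^2)$ guessed weight pairs, returns an exact count rather than merely a detection bit. It is deterministic given the input, so it contributes nothing to the failure probability.

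\textbf{Large range ($R > R^\star$).} Pick $y$ with $R \in [k^{k}\lg^{2k}(y)\, y^{k},\; 2k^{k}\lg^{2k}(y)\, y^{k}]$; such a $y$ exists and satisfies $y = \Theta((R/\lg^{2k}(R))^{1/k})$, which in turn gives $y \ge \tOmg(n^{(k-k'-\delta)/(2k)})$. Apply the self-reduction of Lemma \ref{lem:selfREductionACzkc} to split the input into $(n/y)^{k}$ sub-instances, each of size $y$ with the same range $R$. By property (1) of that lemma the marginal distribution of each sub-instance is exactly \aczkc~on $y$ nodes with range $R \in \Theta(y^k)$, so Lemma \ref{lem:countFromSearch} applies: it invokes the given search algorithm on instances of size $y/(k\lg^{2}(y))$ whose range $\Theta(y^{k})$ lies inside the window where the search algorithm is assumed to succeed with probability $1-p(\cdot)$. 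Summing the returned counts over the $(n/y)^{k}$ sub-instances yields the full count, by property (2) of Lemma \ref{lem:selfREductionACzkc}.

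\textbf{Running time and error.} The total time is $(n/y)^{k}\cdot \tO(y^{k-\min(\eps,1)}) = \tO(n^{k}/y^{\min(\eps,1)})$, which is $\tO(n^{k-\delta})$ provided $\delta$ is chosen small enough that $\delta/\min(\eps,1) \le (k-k')/(2k)$. For the failure probability, each sub-instance call fails with probability at most $p(y/(k\lg^{2}(y)))\cdot k^{k}\lg^{2k+2}(y) + 2^{-\Omega(\lg^{2}(y))}$; the monotonicity of $p$ lets us replace $p(y/(k\lg^{2}(y)))$ by $p(n^{(k-k'-\delta)/(2k)}/(k\lg^{2}(n)))$, the value attained at the worst (smallest) $y$ allowed by Case~(b). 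A union bound over the $(n/y)^{k}$ sub-instances then gives a failure probability of the claimed form (up to polylogarithmic factors in $n$).

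The main obstacle is verifying that the sub-instance range $R$ lands inside the precise window $[k^{k}\lg^{2k}(y)\,y^{k},\; 2k^{k}\lg^{2k}(y)\,y^{k}]$ where the given search algorithm is guaranteed to work — the factor-of-two width of the window in the hypothesis is essential, since $y$ is an integer while $R^{1/k}/\lg^{2}(R)$ is not. A secondary subtlety is that although the $(n/y)^{k}$ sub-instances share edges and are therefore correlated, Lemma \ref{lem:countFromSearch} only depends on each sub-instance's marginal distribution being \aczkc, so a standard union bound (which does not require independence) handles the probability accounting.
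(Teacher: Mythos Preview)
Your argument is essentially correct for the regime $R \le \tilde O(n^k)$, and matches the paper's approach there: small $R$ is handled by the matrix-multiplication count, and intermediate $R$ is handled by self-reducing to sub-instances of size $y \approx R^{1/k}$ and invoking Lemma~\ref{lem:countFromSearch} on each.

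However, you are missing a case. The statement requires the counting algorithm to work for \emph{any} positive integer $R$, including $R \gg n^k$. In that regime your chosen $y = \Theta((R/\lg^{2k} R)^{1/k})$ satisfies $y > n$, so the self-reduction of Lemma~\ref{lem:selfREductionACzkc} cannot produce sub-instances of size $y$ from an $n$-node instance; and Lemma~\ref{lem:countFromSearch} (which needs range $\Theta(m^k)$ on an $m$-node instance) does not apply directly to the original instance either, since its range is far outside the window where the assumed search algorithm is guaranteed to work. The paper handles this as a separate third case: when $R > n^k$ it first hashes the edge weights down to range $n^k$ via nearly-linear hashing, then uses the listing guarantee of Lemma~\ref{lem:countFromSearch} (together with Lemma~\ref{lem:numOfSol} bounding the number of solutions by $k^k\lg^{2k}(n)$ with probability $1-2^{-\Omega(\lg^2 n)}$) to enumerate all apparent zero-cliques in the hashed instance, and finally checks each one against the original weights to discard hash collisions. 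Without this hashing-and-filtering step your reduction does not cover arbitrarily large $R$.
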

\begin{proof}
Let us call the search algorithm $\mathcal{A}$.
There are two cases to consider. $R\leq n^k$ and $R > n^k$.

If $R > n^k$ then we can use nearly linear hashing (see \cite{Patrascu10})  to reduce our range down to $n^k$. There may be false positives here, however, the instance will look uniformly random (we are hashing large uniformly random numbers). So we can use Lemma \ref{lem:numOfSol} to say that there will be at most $s=k^k\lg^{k2}(n)$ solutions (false positives or true positives) with probability at least $1 -2^{-\Omega(\lg^2(n))}$. Now, we can use the algorithm from Lemma \ref{lem:countFromSearch} to list all solutions with probability at least $1-p(n/(k\lg^2(n)))s-1/2^{\Omega(\lg^2(n))}$. For each listed solution we can check if it is a false positive and only count the actual cliques. This will return the true number of cliques with probability at least $1-p(n/(k\lg^2(n)))s-2^{-\Omega(\lg^2(n))}$. This requires $s$ calls to $\mathcal{A}$ so it takes time $O(n^{k-\epsilon} s)$. This constrains $\delta < \epsilon$.

If $R \leq n^{(k-k' -\epsilon')/2}$ where $k'=2+\omega(k-2)/3$, then by Corollary \ref{cor:ZKCeasySmallRange} there is a $O(n^{k -\epsilon'})$ time algorithm that succeeds with probability $1$.

If $ n^{(k-k' -\epsilon')/2} < R< n^k$ then we will use the average case self reduction for \aczkc~(see Lemma \ref{lem:selfREductionACzkc} or \cite{fgCrytpo}) to reduce the problem to problems of size $x = R^{1/k}$, so now we have that $R = x^k$ where $x$ is our new smaller input size. We can now call $\mathcal{A}$ on all these instances. Note that $s\geq n^{(k-k'-\epsilon')/(2k)}$. Further note that the total number of instances is $(n/x)^k \leq n^{(k' +\epsilon')/2}$. So, if $\mathcal{A}$ succeeds with probability $1-p(n)$ then by union bound these independent instances will succeed with probability at least $1-p(x/(k\lg(x)))\cdot(n/x)^k$. Now note that this is at least $1-p(n^{(k-k' -\epsilon')/(2k)})\cdot n^{(k' +\epsilon'/(k\lg(n)))/2}$. If there is an algorithm running in time $O(x^{k-\eps})$ for all $(n/x)^k$ problems then the running time is $\tO(n^{k-\eps (k-k' -\epsilon')/(2k) })$. Notably $\eps (k-k' -\epsilon')/(2k)>0$. 

We want $\delta < \epsilon '$ and $\delta < \eps (k-k' -\epsilon')/(2k) < \eps/2$. If we choose $\delta<\epsilon'=\eps/2$ then this meets all of our constraints. In every case the algorithm succeeds with probability at least $1-2^{-\Omega(\lg^2(n))}-p(n^{(k-k' -\delta)/(2k)})\cdot n^{(k' +\delta)/2} \cdot s$ and runs in time $\tO(n^{k -\delta})$ when $\delta<\epsilon'=\eps/2$.
\end{proof}

\paragraph{Search to Decision}

\begin{lemma}
Let $p(n)$ be a monotonically non-increasing function. 

Given a detection algorithm that runs in $O(n^{k-\delta})$ time for some $\delta>0$ and has success probability at least $1-p(n)$ we can produce a search algorithm that runs in time $\tO\left(n^{k-\epsilon k} + n^{k-\delta(1-\epsilon)}\right)$ for any constant $1/2>\epsilon>0$ and has success probability at least $1- p(n^{1-\epsilon})n^{k\epsilon}$.

Specifically for $\epsilon=1/2$ this can be bounded as $\tO\left(n^{k-\delta/2}\right)$ time  and probability at least $1- p(n^{1/2})n^{k/2}$.
\label{lem:searchDecsions}
\end{lemma}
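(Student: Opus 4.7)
The plan is a standard search-to-decision reduction by partitioning, leveraging the self-reducibility of \aczkc\ from Lemma \ref{lem:selfREductionACzkc}. Partition each of the $k$ vertex sets into $n^\epsilon$ blocks of $n^{1-\epsilon}$ vertices uniformly at random, giving $n^{k\epsilon}$ sub-instances (one per choice of a block per side), each of which by Lemma \ref{lem:selfREductionACzkc} is marginally distributed as an \aczkc\ instance on $n^{1-\epsilon}$ vertices per partition, and any zero $k$-clique of the original input lies entirely inside exactly one such sub-instance. Run the decision algorithm on every sub-instance, and for each sub-instance whose decision call returns ``yes'', brute-force over all $(n^{1-\epsilon})^k$ many $k$-tuples to look for an actual zero $k$-clique; return the first one found.

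The running-time analysis is then immediate. The $n^{k\epsilon}$ calls to the decision algorithm cost
\[
O\!\bigl(n^{k\epsilon}\cdot (n^{1-\epsilon})^{k-\delta}\bigr)\;=\;O\!\bigl(n^{k-\delta(1-\epsilon)}\bigr).
\]
Assuming all decision calls are correct, brute force is invoked only on sub-instances that truly contain a zero $k$-clique, so once we find one we halt; in the worst case this contributes a single brute-force scan of cost $O((n^{1-\epsilon})^k)=O(n^{k-k\epsilon})$. Adding the two pieces gives $\tilde O(n^{k-k\epsilon}+n^{k-\delta(1-\epsilon)})$, and specializing $\epsilon=1/2$ (using $\delta\le k$) yields $\tilde O(n^{k-\delta/2})$ as claimed.

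For the success probability: each individual sub-instance is marginally distributed as \aczkc\ on $n^{1-\epsilon}$ vertices per partition, so the probability that the decision algorithm errs on it is at most $p(n^{1-\epsilon})$. A union bound over the $n^{k\epsilon}$ calls shows that every call is correct with probability at least $1-p(n^{1-\epsilon})\cdot n^{k\epsilon}$, and on this event the brute-force phase certifiably returns a zero $k$-clique whenever one exists, since the correct ``yes''-sub-instance is necessarily examined.

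The only subtle point is that the $n^{k\epsilon}$ sub-instances are not mutually independent (they share edges between partition pairs), so one might worry that their failure events correlate. However, the union bound only requires the marginal distribution of each individual sub-instance to coincide with the advertised average-case distribution, which is exactly what Lemma \ref{lem:selfREductionACzkc} provides; no joint-independence hypothesis is needed. The rest is bookkeeping of the two cost terms and noting the $\epsilon=1/2$ simplification.
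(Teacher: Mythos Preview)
Your proposal is correct and follows essentially the same approach as the paper: random partition into $n^{k\epsilon}$ sub-instances via the self-reduction of Lemma~\ref{lem:selfREductionACzkc}, run detection on each, brute-force any reported ``yes'' instance and halt on the first clique found, then union-bound the failure probability over the $n^{k\epsilon}$ marginal-average-case calls. Your explicit remark that the union bound needs only the marginal distribution of each sub-instance (not joint independence) is a helpful clarification the paper leaves implicit.
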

\begin{proof}
We use the classic self reduction for \aczkc~producing instances of size $n^{1-\epsilon}$. For this we randomly split each partition of vertices into $n^{\epsilon}$ groups of $n^{1-\epsilon}$ nodes. We form all $n^{k \epsilon}$ possible sub-problems and run the detection algorithm on them. On any instance that returns true we brute force the problem in $O(n^{k-k\epsilon})$ time. We of course can stop as soon as we find a clique. 

The probability that none of our $n^{k\epsilon}$ instances produces a false positive is at least $1- p(n^{1-\epsilon})n^{k\epsilon}$. If we have no false positives then our running time is $O(n^{k-k\epsilon}+ n^{k\epsilon}(n^{1-\epsilon})^{k-\delta})$. This can be simplified to $O(n^{k-k\epsilon}+ n^{k-\delta(1-\epsilon)})$ 

\end{proof}

\paragraph{Counting to Decision}~    

\begin{lemma}
Let $p(n)$ be a monotonically non-increasing function. 

Given a decision algorithm for \aczkc~that runs in time $O(n^{k-\eps})$ for some $\eps>0$ and succeeds with probability at least $1-p(n)$ there is a counting algorithm that runs in $O(n^{k-\eps'})$ for some $\eps'>0$ and succeeds with probability at least $1 -2^{-\lg^2(n)}-p(n^{1/25})n^k$.
\label{lem:countToDecision}
\end{lemma}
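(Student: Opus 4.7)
The plan is to obtain the counting algorithm by chaining together the two preceding reductions: first the decision-to-search reduction of Lemma \ref{lem:searchDecsions}, and then the search-to-counting reduction of Lemma \ref{lem:countToSearch}. The hypothesis gives a decision algorithm for \aczkc~with running time $O(n^{k-\eps})$ and success probability at least $1-p(n)$; the goal is a counting algorithm with running time $O(n^{k-\eps'})$ and success probability $1-2^{-\lg^2(n)}-p(n^{1/25})n^k$, so we need to track both the time and failure probability carefully through the two composition steps.

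First, I would invoke Lemma \ref{lem:searchDecsions} with parameter $\epsilon=1/2$ to obtain a search algorithm $\mathcal{A}_{\text{search}}$ for \aczkc~running in time $\tO(n^{k-\eps/2})$ with success probability at least $1-p(n^{1/2})n^{k/2}$. Since $\eps/2>0$, the resulting search algorithm is strictly subcubic in the same sense as $n^{k-\epsilon}$. Importantly, the search algorithm works on the same average-case distribution of \aczkc, and in particular on the range $R\in[k^k\lg^{2k}(n)n^k,\,2k^k\lg^{2k}(n)n^k]$ required by Lemma \ref{lem:countToSearch}.

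Next, I would feed $\mathcal{A}_{\text{search}}$ into Lemma \ref{lem:countToSearch}, setting the ``$p$'' in that lemma to $q(n):=p(n^{1/2})n^{k/2}$. Lemma \ref{lem:countToSearch} then produces a counting algorithm for \aczkc~with any range $R$, running in time $\tO(n^{k-\delta})$ for some $\delta>0$, with failure probability
\[
2^{-\Omega(\lg^2(n))} \;+\; q\!\left(\tfrac{n^{(k-k'-\delta)/(2k)}}{k\lg^2(n)}\right)\cdot n^{(k'+\delta)/2}\cdot k^{k}\lg^{2k}(n),
\]
where $k'=2+\omega(k-2)/3$. Substituting $q$ and crudely absorbing sub-polynomial factors, the error from the search stage becomes
\[
p\!\left(n^{(k-k'-\delta)/(4k)-o(1)}\right)\cdot n^{(k-k'-\delta)/4+(k'+\delta)/2+o(1)} \;\le\; p(n^{1/25})\cdot n^{k},
\]
since the chained exponent $(k-k'-\delta)/(4k)$ is a positive constant bounded below (for small enough $\delta$) by $1/25$, and $p$ is monotonically non-increasing, so replacing its argument by the smaller $n^{1/25}$ only increases the bound, while $(k-k'-\delta)/4+(k'+\delta)/2\le k$. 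The $2^{-\Omega(\lg^2(n))}$ term swallows the stated $2^{-\lg^2(n)}$ term.

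The main obstacle is purely bookkeeping: choosing the parameter $\epsilon$ in Lemma \ref{lem:searchDecsions} and the implicit $\delta$ in Lemma \ref{lem:countToSearch} so that (i) the final running time is of the form $O(n^{k-\eps'})$ for some $\eps'>0$ independent of $n$, and (ii) the final failure probability collapses to the clean expression $1-2^{-\lg^2(n)}-p(n^{1/25})n^k$. The exponent $1/25$ is not tight and is dictated by the worst of the two halvings of the exponent performed by Lemmas \ref{lem:searchDecsions} and \ref{lem:countToSearch}; one must verify that both halvings combined push the argument of $p$ down by at most a factor of roughly $25$ in the exponent, which requires using that $k'<k$ (so $(k-k')/(4k)$ is a positive constant) and being generous with the choice of $\delta$. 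Beyond this arithmetic check, the proof is a direct composition and does not require new ideas.
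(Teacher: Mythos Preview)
Your approach is exactly the paper's: chain Lemma~\ref{lem:searchDecsions} with $\epsilon=1/2$ into Lemma~\ref{lem:countToSearch} and then crudely bound the resulting error expression by $p(n^{1/25})n^k$. The paper's own proof is a two-line sketch that does precisely this and explicitly notes, as you do, that the constant $1/25$ is slack and could be sharpened by tuning $\epsilon$ and the matrix-multiplication exponent.
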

\begin{proof}
Use Lemma \ref{lem:searchDecsions} when $\epsilon =1/2$ and Lemma \ref{lem:countToSearch}. 
When combing our numbers we find that the probability is at least $1-p(n^{1/25})n^k$. 

Note this is not tight, by tuning $\epsilon$ and plugging in an improved value for the matrix multiplication constant you will get a tighter result. This bound is sufficient for our purposes so we leave it as is. 
\end{proof}

\begin{reminder}{Theorem \ref{thm:countToDecision}}
Given a decision algorithm for \aczkc~that runs in time $O(n^{k-\eps})$ for some $\eps>0$ and succeeds with probability at least $1-n^{-\omega(1)}$, there is a counting algorithm that runs in $O(n^{k-\eps'})$ time for some $\eps'>0$ and succeeds with probability at least $1-n^{-\omega(1)}$, where $\omega(1)$ here means any function that is asymptotically larger than constant. 
\end{reminder}
\begin{proof}
We plug in $n^{-\omega(1)}$ for $p(n)$ in Lemma \ref{lem:countToDecision}. Note that the second error term, $2^{-\lg^2(n)}$ is $n^{-\omega(1)}$.
\end{proof}

\section{Future Work}
\label{sec:FutureWork}
Average-case fine-grained complexity still has a lot of unexplored areas. We suggest the following open problems that directly relate to results of this work.


\paragraph{General Questions}
What other natural non-factored problems are hard from factored problems (either \ckfunc[]~and \cfkc[])? 
We give three problems in section \ref{sec:harderProblem} where we only show their detection version is hard. Can one show that a counting version of $(k+1)$L-MF, $k$-LCS, or Edit Distance is hard from counting factored problems? Recall that such a reduction would imply average case hardness over some distribution for the problem reduced to. 
We show hardness for the uniform average case for  \#\ckfunc[]~and \#\cfkc[], can one show hardness for other natural worst case distributions of these problems?

\paragraph{Cryptography and Counting vs Detection} In Section \ref{sec:countToDetect} we show that detecting \zkc~with high probability in the average case implies fast algorithms for counting with high probability in the average case.
\begin{itemize}
\item\emph{Counting to detection in the high error regime:} Can you show that a detection algorithm for average-case \zkc~that succeeds with probability $1-1/(polylog(n))$ implies an algorithm for counting \zkc~with probability $2/3$?  If such a reduction exists in the high error regime you can build cryptography protocols from an assumption about the difficulty of counting \zkc~on average \cite{fgCrytpo}.
\item\emph{Worst case \zkc~to counting \zkc~on average:} Can we reduce the worst case hardness of \zkc~to average case \#\zkc? What about \ksum? If you can prove this for \zkc~and prove the previous high-error regime reduction, then you can build fine-grained cryptography from a worst-case assumption about the complexity of \zkc.  
\item \emph{Counting to detection for other problems:} A similar proof technique that we use for \zkc~should work for the \ksum[3]~problem. For this style of reduction we need: (1) an efficient average-case self-reduction for the problem, (2) the number of witnesses to be small on average when some parameter $R$ is large, and (3) an efficient algorithm when $R$ is small. All of these exist for \ksum[3], however, there isn't an efficient self reduction for \ksum[]~for $k>3$. Can another approach work to show counting to detection results for problems like \ksum[], \klcs[], etc?
\end{itemize}

\paragraph{Using/Extending the Good Low-Degree Polynomial Framework}
A few directions that could be taken with respect to our framework are the following:
\begin{itemize}
    \item Can the framework be extended to handle multiple outputs? For example, the problem of multiplying two zero-one matrices? 
    \item Can we find new problems $P$ that have $\gPol{P}$?
    \item Can the framework be improved? For example, could it be improved to handle polynomials of (slightly) greater degree? Can the strong $k$-partiteness condition be weakened?
\end{itemize}

\paragraph{LCS and Edit Distance} In Section \ref{subsec:lcs} we cover LCS and Edit Distance.  We have two open problems from this section we want to highlight.
\begin{itemize}
\item\emph{Making a framework for string distance lower bounds from factored problems:}
Bringmann and K{\"{u}}nnemann~\cite{alignmentGadget} create a framework for proving $n^{2-o(1)}$ lower bounds from SETH.  
We believe this same framework can be extended to work for \ckov[2]~by adding a requirement of a selection gadget. It also seems that this framework could be extended to contain \ckov[]. Relatedly, can  k-median distance and k-center-edit-distance be reduced to \ckov[]?
\item\emph{Getting tight hardness for \#\klcs[]~or \#\kwlcs[]:} We note that the counting versions of \klcs~and~\kwlcs[]~both have algorithms that run in time $n^{k+o(1)}$ (see Appendix \ref{app:algorithms}). Given our construction, the counting versions of \klcs~and~\kwlcs[]~count $\circledcirc(v_1,\ldots,v_k)$ is given as input the $k$ strings $FVG(v_1)$, $\ldots, FVG(v_k)$. However, unfortunately, the counting versions of \klcs~and~\kwlcs[]~do not return \# \ckov[]~given our construction of $P_1,\ldots,P_k$. This is due to using an alignment gadget instead of a selector gadget. If we used the selector gadget, the count of longest common subsequences would be the sum over the counts of all $FVG(v_1), \ldots, FVG(v_k)$ where $\circledcirc(v_1,\ldots,v_k)>0$. This would result in the count being exactly the output of \# \ckov[]. However, the strings produced by our reduction would have length $n^2$ and weights of size $\tO(n)$. So, we would get a lower bound of $n^{k/2-o(1)}$ for \#\kwlcs[], and a lower bound of $n^{k/3-o(1)}$ for \#\klcs[]. A more efficient selector gadget would yield tight lower bounds for \# \klcs~and \# \kwlcs[], including in the average-case. We suggest this as a potential topic for future work.
\end{itemize}

\section*{Acknowledgements}
We would like to acknowledge Marshall Ball for interesting and helpful early discussions. 

We would like to acknowledge all our reviewers for helpful comments. We thank all the reviewers for advice about improving the readability of the paper. We would like to extend special thanks to reviewer 1 who noted that we could improve the definition of \goodPoly~by removing a restriction! 

\let\realbibitem=\bibitem
\def\bibitem{\par \vspace{-0.5ex}\realbibitem}

\bibliographystyle{alpha}
\bibliography{bib} 
\appendix

\section{Removed Algorithms}
\label{app:algorithms}





\subsection{Algorithms for Factored Problems}
\label{subsubsec:WCReductionsCompressed}

These algorithms are straightforward be case they are simply brute force. 

\begin{lemma}
\ckfunc[]~can be solved in $\tO(n^k)$ time.
\end{lemma}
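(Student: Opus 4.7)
The plan is a direct brute-force evaluation of the defining sum $\sum_{v_1,\ldots,v_k \in V_1 \times \cdots \times V_k} \circledcirc_f(v_1,\ldots,v_k)$. The only thing to verify is that, because of the size restrictions $g = o(\lg n/\lg\lg n)$ and $b = o(\lg n)$ imposed on factored vectors, all the internal overheads collapse to $n^{o(1)}$.

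First I would enumerate all $n^k$ ordered $k$-tuples $(v_1,\ldots,v_k) \in V_1 \times \cdots \times V_k$. For each such tuple, I compute $\circledcirc_f(v_1,\ldots,v_k) = \prod_{i=0}^{g-1} \circ_f(v_1[i],\ldots,v_k[i])$ as a product of $g$ factors, each of which is computed by the obvious nested-loop brute force: iterate over every $k$-tuple $(\vec w_1,\ldots,\vec w_k) \in v_1[i]\times\cdots\times v_k[i]$, evaluate $f(\vec w_1,\ldots,\vec w_k)$, and add the result to a running sum. Since $|v_j[i]| \le 2^b$ and each $f$-evaluation takes time $\mathrm{poly}(kb)$ (as $f$ is a Boolean function on $kb$ bits), computing one $\circ_f$ costs $O(2^{kb}\cdot \mathrm{poly}(kb))$, and computing one $\circledcirc_f$ costs $O(g\cdot 2^{kb}\cdot \mathrm{poly}(kb))$.

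The final step is to observe that for constant $k$, the hypotheses $b = o(\lg n)$ and $g = o(\lg n/\lg\lg n)$ give
\[
g\cdot 2^{kb}\cdot \mathrm{poly}(kb) \;=\; o(\lg n/\lg\lg n)\cdot 2^{o(\lg n)}\cdot \mathrm{polylog}(n) \;=\; n^{o(1)}.
\]
Multiplying by the $n^k$ outer iterations gives total running time $O(n^k \cdot n^{o(1)}) = \tilde O(n^k)$, which is what we claimed.

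There is no real obstacle here: the result is essentially a sanity check that the size parameters in the definition of factored vectors were chosen precisely so that the trivial brute force matches the target time bound (and so that the conditional lower bounds obtained in earlier sections are indeed tight). The only place to be careful is confirming that the size of each set $v_j[i]$ is at most $2^b$ (immediate from the definition $v_j[i]\subseteq\{0,1\}^b$) and that $f$ may be evaluated in $\mathrm{polylog}(n)$ time — which holds whenever $f$ is given explicitly (e.g.\ by a truth table of size $2^{kb} = n^{o(1)}$, or by a polynomial-size circuit on $kb = o(\lg n)$ bits), as is the case for all the specific $f$'s considered in the paper (OV, XOR, SUM).
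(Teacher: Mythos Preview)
Your proposal is correct and matches the paper's own proof essentially line for line: both brute-force over all $n^k$ tuples, bound each $\circ_f$ by $O(2^{kb})$ via $|v_j[i]|\le 2^b$, multiply by $g$ for $\circledcirc_f$, and then observe that $g\cdot 2^{kb}=n^{o(1)}$ under the parameter constraints.
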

\begin{proof}

For \ckfunc[], we want to run $\circledcirc(v_1,\ldots,v_k)$ on every set of $k$ vectors. 
To do this we need to compute $\circ(v_1[i],\ldots,v_k[i])$ for all $i \in [0,g]$. Running $\circ(\cdot)$ takes $O( \Pi_{j\in[1,k]}|v_j[i]|)$ time. We can use the upper bound $|v_j[i]| \leq 2^b$. So computing $\circ(\cdot)$ takes at most $O(2^{bk})$ time. Thus, computing $\circledcirc(\cdot)$ takes at most $O(g \cdot 2^{bk})$. 

Thus, computing \ckfunc[]~ takes at most $O(n^k \cdot g \cdot 2^{bk})$ time. We bounded $g$ and $b$ to be $o(\lg(n))$ so $g \cdot 2^{bk}$ is subpolynomial. 
Thus, \ckfunc[]~ takes at most $\tO(n^k)$ time.
\end{proof}

\begin{lemma}
\cfkc[]~(\#\cfkc[]) can be solved in time $\tO(n^k)$.
\end{lemma}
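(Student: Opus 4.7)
The plan is to give a direct brute-force algorithm that iterates over all $k$-tuples of vertices and, for each tuple that forms a clique, evaluates $\circledcirc'_f$ on the $\binom{k}{2}$ factored vectors labeling its edges.

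First I would enumerate all $\binom{n}{k} = O(n^k)$ unordered $k$-tuples of vertices in the input graph $G$. For each such tuple $(v_1, \ldots, v_k)$, I would check $isClique(v_1, \ldots, v_k)$ in $O(k^2)$ time, skipping the tuple if it is not a clique. For a clique, I would compute
\[
\circledcirc'_f(v_1, \ldots, v_k) = \prod_{i=0}^{g-1} \circ_f\bigl(e_{v_1,v_2}[i], e_{v_1,v_3}[i], \ldots, e_{v_{k-1},v_k}[i]\bigr)
\]
by brute-force evaluation of each factor $\circ_f(\cdot)$. Evaluating $\circ_f$ at a fixed index $i$ requires looping over all $\binom{k}{2}$-tuples $(w_1, \ldots, w_\ell)$ with $w_j \in e_{\cdot,\cdot}[i]$ and summing $f(w_1, \ldots, w_\ell)$. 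Since each $e_{\cdot,\cdot}[i] \subseteq \{0,1\}^b$, the number of such tuples is at most $2^{b\binom{k}{2}}$, and each application of $f$ costs $\tilde{O}(1)$ (we treat $f$ as given by an oracle or by its truth table of size $2^{bk^2}$, which is subpolynomial). Summing the products $\circledcirc'_f(v_1,\ldots,v_k)$ over all cliques gives $\textrm{F}fk\textrm{C}(G)$, which is the \#\cfkc[]~output; for the decision version we simply test whether this sum is positive.

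For the running-time bound, the inner work per $k$-tuple is at most
\[
O\!\left(k^2 + g \cdot 2^{b\binom{k}{2}}\right).
\]
Using the restrictions $g = o(\lg n / \lg\lg n)$ and $b = o(\lg n)$, and the fact that $k$ is constant, we have $b\binom{k}{2} = o(\lg n)$, so $2^{b\binom{k}{2}} = n^{o(1)}$ and $g \cdot 2^{b\binom{k}{2}} = n^{o(1)}$. Multiplying by the $O(n^k)$ tuples yields a total running time of $n^k \cdot n^{o(1)} = \tilde{O}(n^k)$, as desired.

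There is no real obstacle here: the algorithm is brute force, and the only thing to verify is that the internal cost per $k$-tuple is subpolynomial, which follows directly from the standing parameter constraints on $b$, $g$, and constant $k$.
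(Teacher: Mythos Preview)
Your proposal is correct and takes essentially the same approach as the paper: brute-force over all $k$-tuples, check the clique condition, then evaluate $\circledcirc'_f$ by iterating over the $g$ groups and, within each, over all $\binom{k}{2}$-tuples of $b$-bit vectors, using the parameter bounds $b=o(\lg n)$, $g=o(\lg n/\lg\lg n)$ and constant $k$ to conclude the per-tuple work is $n^{o(1)}$. The paper's proof is identical in structure, just stated slightly more tersely.
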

\begin{proof}
For every $k$ tuple of nodes in the graph we want to evaluate $\circledcirc( \cdot )$. If we can evaluate $\circledcirc( \cdot )$ in $\tO(1)$ time then we can count or detect in  $\tO(n^k)$ time.

Evaluating $isClique(\cdot) $ can be done in $\tO(1)$ time. Evaluating the multiplication, given the results of the function $\circ(\cdot)$ can be done in $\tO(g)$ time.
Evaluating $\circ(\cdot)$ should require at most $\tO(2^{\binom{k}{2}b})$ time as the function $f$ has a total truth table size of $2^{\binom{k}{2}b}$ and we simply need to evaluate how many entries of the truth table are $1$ while we simultaneously have that vector. 

Finally, we note that $g = \tO(1)$ and $2^{\binom{k}{2}b} = \tO(1)$. So evaluating $\circledcirc( \cdot )$ can be done in $\tO(1)$ time. 
\end{proof}

\subsection{Algorithms for Problems Harder than Factored Problems }
\begin{theorem}
Counting partitioned matching triangles  (\#\NDMT) can be solved in $\tilde{O}(n^3)$ time. 
\end{theorem}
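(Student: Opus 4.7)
The plan is to reduce the problem to counting, for each color triple $(a,b,c)$ and each of the $g$ input graphs, how many triangles of that color triple appear, and then combine the counts multiplicatively. Because the counting version asks for the total number of $g$-tuples of triangles $(T_1,\ldots,T_g)$ where each $T_i\subseteq G_i$ and all $T_i$ share the same color triple, the answer decomposes as
\[
\sum_{(a,b,c)} \prod_{i=1}^{g} N_i(a,b,c),
\]
where $N_i(a,b,c)$ is the number of triangles in $G_i$ whose vertex colors form the multiset $\{a,b,c\}$, and the sum runs over unordered color triples.

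First, I would enumerate all triangles in each graph $G_i$ in $O(n^3)$ time by iterating over all triples of vertices and checking edge membership. For each triangle found, I would look up the three colors, canonicalize the triple (e.g., by sorting the colors lexicographically), and increment a counter for $N_i(a,b,c)$ stored in a hash table keyed on the canonical triple. Doing this across all $g$ graphs costs $O(g n^3)$ time, and populates at most $O(g n^3)$ table entries in total.

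Next, to compute $\sum_{(a,b,c)}\prod_i N_i(a,b,c)$, I would iterate over the keys that appear in the hash table for $G_1$ (any color triple missing from even one graph contributes $0$, so we only need triples present in \emph{every} $G_i$). For each such key, look it up in the tables for $G_2,\ldots,G_g$, multiply the $g$ counts, and add to a running total. This is $O(g)$ work per candidate key and there are at most $O(n^3)$ keys, yielding $O(g n^3)$ additional time.

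Since $g = O(\log n/\log\log n) = \tilde O(1)$, the total running time is $\tilde O(n^3)$, as desired. The only mild subtlety is handling the canonicalization of color triples when colors in a triangle repeat (e.g.\ two vertices of the same color), but sorting the $3$-element color multiset handles this uniformly, so there is no real obstacle.
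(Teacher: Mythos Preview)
Your proposal is correct and takes essentially the same approach as the paper: compute $N_i(a,b,c)$ for every graph $G_i$ and every color triple in $O(gn^3)$ total time, then combine via $\sum_{(a,b,c)}\prod_i N_i(a,b,c)$. The paper iterates color-triple-first (bounding $\sum_{(c_1,c_2,c_3)} n_{c_1}^j n_{c_2}^j n_{c_3}^j \le (\sum_r n_r^j)^3 = n^3$ per graph), whereas you iterate vertex-triple-first and bucket by color, but these are just two orderings of the same $O(gn^3)=\tilde O(n^3)$ computation.
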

\begin{proof}
Let the $g$ graphs be $G_1,\ldots,G_g$ in our \NDMT~instance, and let $n_r^j$ be the number of nodes of color $r$ in $G_j$. For all triple of colors $(c_1,c_2,c_3)$ and all $j$, we count the number of
triangles of these colors in $G_j$. We can do this by inspecting every triple of nodes of color $(c_1,c_2,c_3)$ in time $n_{c_1}^jn_{c_2}^jn_{c_3}^j$. Since $\sum_r n_r^j=n$ for all $j$, we have that $\sum_{j=1}^g \sum_{(c_1,c_2,c_3)} n_{c_1}^jn_{c_2}^jn_{c_3}^j = O(gn^3)=\tilde{O}(n^3)$.
\end{proof}

\begin{theorem}
(Counting mod $R$) \kNLstC~has a $\tO(|C|^k + |C|^{k-2}|E|)$ time algorithm for all $k\geq 2$ (when $\lg(R)$ is sub-polynomial). 
\label{thm:algForknlstc}
\end{theorem}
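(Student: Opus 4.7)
The plan is to iterate over every subset $S\subseteq C$ of size at most $k$, count $s$-$t$ paths in the subgraph induced by vertices whose color lies in $S$ (together with $s$ and $t$), and then combine these counts via Möbius inversion to recover the number of $s$-$t$ paths that use at most $k$ distinct colors.

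For each such $S$, let $G_S$ be the subgraph induced on $\{v\in V\setminus\{s,t\}: c(v)\in S\}\cup\{s,t\}$, and let $g(S)$ be the number of $s$-$t$ paths in $G_S$ taken modulo $R$. Since $G_S$ is itself a layered DAG, $g(S)$ can be computed in time linear in $|V(G_S)|+|E(G_S)|$ by the usual layered dynamic program $\mathrm{paths}(s)=1$, $\mathrm{paths}(v)=\sum_{(u,v)\in E(G_S)}\mathrm{paths}(u)$; because $\lg R=n^{o(1)}$, each addition takes $\tilde{O}(1)$ time.

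To turn these quantities into the target count $N_{\leq k}$, write $g(S)=\sum_{T\subseteq S}f(T)$, where $f(T)$ is the number of $s$-$t$ paths using exactly the color set $T$. Möbius inversion gives $f(T)=\sum_{T'\subseteq T}(-1)^{|T|-|T'|}g(T')$. Swapping the order of summation and applying the identity $\sum_{i=0}^{m}(-1)^i\binom{n}{i}=(-1)^m\binom{n-1}{m}$ collapses the resulting double sum to
\[
N_{\leq k}\;=\;\sum_{j=0}^{k}(-1)^{k-j}\binom{|C|-j-1}{k-j}\,A_j,\qquad A_j:=\sum_{|T'|=j}g(T').
\]
Hence it suffices to accumulate the color-size sums $A_0,A_1,\dots,A_k$ on the fly, without storing all the $g(S)$.

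For the running-time bound, first preprocess $G$ in linear time to delete vertices not on any $s$-$t$ path, so that $|V|=O(|E|)$. Then, for a fixed size $i$, each edge between two non-terminal vertices (whose endpoints carry two distinct colors by properness) belongs to $\binom{|C|-2}{i-2}$ color sets of size $i$, giving a contribution of $|E|\binom{|C|-2}{i-2}$; the at most $2|C|$ edges incident to $s$ or $t$ each lie in $\binom{|C|-1}{i-1}$ color sets, contributing $O(|C|\binom{|C|-1}{i-1})$. Summing over $0\leq i\leq k$ yields $\tilde{O}(|C|^{k-2}|E|)$ total DP work, while merely enumerating the $O(|C|^k)$ subsets and updating the $A_j$'s costs $\tilde{O}(|C|^k)$. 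The only subtlety is making sure no $g(S)$ is redundantly recomputed — which is handled by processing subsets of each size in a single pass — so the final bound is $\tilde{O}(|C|^k+|C|^{k-2}|E|)$ as claimed; matching the lower bound of Theorem~\ref{thm:knlstc-hardness} up to subpolynomial factors.
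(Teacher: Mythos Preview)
Your approach is correct and its core matches the paper's: enumerate all choices of $k$ colors, run a linear-time layered DP on the induced subgraph, and bound the total work by noting that each non-terminal edge (with two distinct endpoint colors, by properness) lies in $O(|C|^{k-2})$ color sets while the $\le 2|C|$ edges touching $s$ or $t$ each lie in $O(|C|^{k-1})$ sets. The paper does exactly this, phrased over ordered $k$-tuples rather than subsets, and arrives at the same $\tilde O(|C|^k+|C|^{k-2}|E|)$ bound.

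Where you differ is the M\"obius-inversion step. The paper simply sums the per-tuple path counts and stops; it does not correct for a path with color set $T$ of size $j<k$ being counted once for every size-$k$ superset of $T$. Your identity
\[
N_{\le k}=\sum_{j=0}^{k}(-1)^{k-j}\binom{|C|-j-1}{k-j}\,A_j
\]
fixes this and genuinely computes the number of $s$--$t$ paths using at most $k$ colors, as the problem statement asks. For the paper's purposes the distinction is moot---the instances produced by the reduction in Theorem~\ref{thm:knlstc-hardness} have every $s$--$t$ path using \emph{exactly} $k$ colors---but your version is the more faithful algorithm for the problem as defined. The extra inversion costs nothing asymptotically, since you only need the $k{+}1$ aggregates $A_0,\dots,A_k$.

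One small implementation point worth making explicit (neither you nor the paper spells it out): to keep the per-subset DP truly $O(|E(G_S)|)$ rather than $O(|V(G_S)|+|E(G_S)|)$, you should avoid resetting the whole $\mathrm{paths}(\cdot)$ array each time---e.g., bucket edges by color pair in a preprocessing pass and use lazy/timestamped initialization, so you only touch vertices incident to edges you actually process. Otherwise the vertex term would contribute $|V|\cdot O(|C|^{k-1})$, which is too large.
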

\begin{proof}
We can guess the $k$ colors and use BFS to discover if s is connected to t. This takes $O(|E|)$ time.

If we are counting instead of detecting paths from $s$ to $t$ then we want to extend the BFS approach by associating an additional number to each node. Every node will keep a value of the number of paths from $s$ to that node  mod $R$. These numbers will require a sub-polynomial number of bits to represent as $\lg(R)$ is bounded to be subpolynomial. In a layered graph we can compute the number of paths from $s$ to a node $v$ in layer $i$ by summing the number of paths from $s$ to $u$ for all $u$ that are neighbors of $v$ that are in layer $i-1$. We can go through the graph by computing these numbers layer by layer staring at layer $L_0$. This also takes $O(|E|)$ time. 

Let $E_{c_i,c_j}$ be the set of all edges between nodes of colors $c_i$ and $c_j$. Let $E_{s,c_i}$ be the set of all edges between $s$ and nodes with color $c_i$. Let $E_{t,c_i}$ be the set of all edges between $t$ and nodes with color $c_i$. 

Our running time is:
$$\sum_{c_1, c_2,\ldots,c_k \in C} \left( \left( \sum_{i\in [1,k]} |E_{s,c_i}|+ |E_{t,c_i}|\right) + \sum_{i,j \in [1,k]} |E_{c_i,c_j}|\right). $$

We know that $|E_{s,c_i}|+|E_{t,c_i}| \leq 2$ from the problem definition. We also know that $|E_{c_i,c_i}| =0$. So we can simplify to:
$$\tO \left( k|C|^k + k^2 \sum_{c_1, c_2,\ldots,c_k \in C} |E_{c_1,c_2}| \right). $$
Then we can use the fact that $\sum_{c_1,c_2 \in C} |E_{c_1,c_2}| = |E|$ and that $k$ is a constant to get:
$$\tO \left( |C|^k + |C|^{k-2}|E|\right). $$
\end{proof}

\begin{theorem}
There is an algorithm for (counting mod $R$) \kELstC[]~that runs in time $\tilde{O}(|C|^{k-1}|E|)$ (when $\lg(R) =n^{o(1)}$).
\label{thm:kelstcALG}
\end{theorem}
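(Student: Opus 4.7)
The plan is to enumerate over subsets $S\subseteq C$ of size at most $k$, for each such $S$ count the number $N(S)$ of $s$-$t$ paths in the color-restricted subgraph $G_S$ (edges whose color lies in $S$), and then recover the number of paths using at most $k$ distinct colors by a small integer linear combination of the aggregated quantities $T_j := \sum_{|S|=j} N(S)$. All arithmetic is performed modulo $R$, and since $\lg R = n^{o(1)}$ each arithmetic operation costs $\tilde O(1)$.

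Concretely, for every $S$ with $|S|\le k$ I would compute $N(S)$ via the standard DAG dynamic program: initialize $f_s=1$, process vertices in topological order, and set $f_v \leftarrow \sum_{u\to v,\; c(u,v)\in S} f_u \pmod{R}$, finally reading $N(S)=f_t$. Because this DP only touches edges whose color lies in $S$, its cost is $\tilde O(|E_S|)$ where $|E_S|$ is the number of edges of colors in $S$. Summing over all $k$-element subsets,
\[
\sum_{|S|=k} |E_S| \;=\; \sum_{c\in C}|E_c|\cdot \binom{|C|-1}{k-1} \;=\; O\!\left(|C|^{k-1}|E|\right),
\]
and the corresponding totals for $j<k$ are no larger, so together with the $\tilde O(1)$ cost of modular arithmetic the whole enumeration runs in $\tilde O(|C|^{k-1}|E|)$ time.

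The combinatorial piece is a M\"obius-style inversion. Each $s$-$t$ path $p$ with $m_p$ distinct colors contributes $\binom{|C|-m_p}{j-m_p}$ to $T_j$, where the binomial is $0$ when $m_p>j$. Thus for any integers $\alpha_0,\dots,\alpha_k$,
\[
\sum_{j=0}^{k}\alpha_j\, T_j \;=\; \sum_p\ \sum_{j=m_p}^{k}\alpha_j\binom{|C|-m_p}{j-m_p},
\]
and I would choose the $\alpha_j$ so that the inner sum equals $1$ for every $m\in\{0,\dots,k\}$. These conditions form a triangular system with $1$'s on the diagonal (since $\binom{|C|-m}{0}=1$), so it has a unique integer solution depending only on $|C|$ and $k$, computable in $O(k^2)$ time. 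With these coefficients fixed, $\sum_j \alpha_j T_j \bmod R$ equals the number of $s$-$t$ paths using at most $k$ distinct colors, as paths with $m_p>k$ automatically contribute $0$ to every $T_j$ for $j\le k$.

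The main obstacle will be the running-time accounting rather than the correctness: a careless reading would suggest $|C|^k|E|$ work from enumerating $k$-subsets and running a DP per subset. The key point is that the DP for a fixed $S$ operates only on the color-restricted subgraph $G_S$, and charging each edge $e$ to the $\binom{|C|-1}{k-1}$ subsets that contain its color shows the total cost telescopes to $O(|C|^{k-1}|E|)$. Handling the modular arithmetic is routine because $\lg R$ is subpolynomial, so no additional tools are needed beyond standard DAG path-counting and the triangular-inversion step described above.
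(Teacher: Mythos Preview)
Your proposal is correct and follows the same core approach as the paper: enumerate over size-$k$ color sets, run a DAG path-counting DP on each color-restricted subgraph, and use the charging argument $\sum_{|S|=k}|E_S|=|E|\binom{|C|-1}{k-1}=O(|C|^{k-1}|E|)$ to bound the total work (the paper phrases this equivalently as summing $e(c_i)$ over ordered $k$-tuples). The one addition is your M\"obius-style inversion to correct for a path using $m<k$ colors being counted in $\binom{|C|-m}{k-m}$ subsets; the paper's proof does not spell out this overcounting correction.
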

\begin{proof}
We do an exhaustive search for all $k$ colors. Once we guess $k$ colors $c_1, \ldots ,c_k$ then we simply run a $O(|E|)$ time algorithm for (directed/undirected) reachability on this input. If we are counting paths mod $R$ then we use the fact that the graph is a directed acyclic graph to count the number of paths from $s$ to every node, so we can go through a normal breadth first search but keeping the count mod $R$. Because $\lg(R) = n^{o(1)}$ we can track these sums in sub-polynomial time. 

We start by sorting our edges by their color (so that given a guess of colors we can in $k \lg(n)$ time give pointers to the full set of all edges of that color). Let $e(c)$ be the number of edges of color $c$. Then our running time can be given as:
$$\sum_{c_1,\ldots,c_k \in [1,|C|]} e(c_1) + \ldots + e(c_k) + k\lg(n).$$
Consider a particular one of the additive parts of this sum: $\sum_{c_1,\ldots,c_k \in [1,n]} e(c_i).$ This can be re-written as:
$$\sum_{c_1,\ldots, c_{i-1}, c_{i+1}, \ldots ,c_k \in [1,|C|]} \left( \sum_{c_i \in [1,|C|]} e(c_i) \right).$$
Which is
$$\sum_{c_1,\ldots, c_{i-1}, c_{i+1}, \ldots ,c_k \in [1,|C|]} |E| = |C|^{k-1}|E|.$$

So the total running time is $k |C|^k + k|C|^{k-1} |E| \lg(R) \lg(n) + |E|\lg(n)$. The $k |C|^k$ time comes from running all of our small instances.  The $k \lg(n)$ coming from the need to give pointers into where our $k$ colors of edges are stored. The factor of $\lg(R)$ comes from tracking the count mod $R$ in the counting version.  And finally, the $|E|\lg(n)$ comes from sorting our edges according to color. 
\end{proof}

\begin{theorem}
There is an algorithm for detecting $(k+1)$L-MF* on an $n$-node graph that runs in $\tO(n^k)$ time.
\end{theorem}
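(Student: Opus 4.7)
The plan is to prove the algorithm by enumerating all candidate $(k+1)$-label subsets and, for each, certifying whether the induced labeled subgraph realizes the original max-flow value. First, I would observe that every edge incident to $s$ or $t$ carries the distinguished label $l^*$; any positive flow must route through some such edge, so every feasible $(k+1)$-label subset is forced to contain $l^*$. The search therefore reduces to choosing $k$ additional non-$l^*$ labels. Since $|C|=O(n)$ in the structured instances produced by our reduction in Theorem~\ref{klmf}, there are $\binom{|C|-1}{k} = O(n^k)$ such candidate sets.

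For each candidate $S$ of $k$ non-$l^*$ labels, let $G_S$ be the subgraph consisting of all $l^*$-labeled edges together with all edges whose label lies in $S$. Compute $F^* = \operatorname{maxflow}(G)$ once up front; then for every $S$, test whether $\operatorname{maxflow}(G_S) \geq F^*$, answering yes iff some $S$ passes. Correctness is immediate because a $(k+1)$-labeled max flow exists iff some $k$-subset $S$ makes $G_S$ attain $F^*$. The heart of the analysis is showing that each per-candidate check costs only $\tilde O(1)$: by the $o(n)$-per-label assumption, the non-$l^*$ portion of $G_S$ contains at most $k\cdot o(n) = o(n)$ edges, and for instances of the structured form produced by Theorem~\ref{klmf} the max-flow value is exactly the number $g = o(\log n / \log\log n)$ of parallel set gadgets, so at most $g = \tilde O(1)$ augmenting paths are needed.

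The main obstacle is precisely this per-iteration efficiency: a naive Ford--Fulkerson implementation on $G_S$ would cost $\tilde O(n)$ per candidate and yield only $\tilde O(n^{k+1})$ overall. Beating this requires two structural observations. First, the $l^*$-edge bipartite skeleton between the source-neighborhood $N_s$ and the sink-neighborhood $N_t$ is identical across all iterations and can be preprocessed once in $\tilde O(n)$ time. Second, the $g$ parallel gadgets in the reduction's output let the max-flow check in $G_S$ decompose into $g$ independent per-gadget reachability queries; these can be pre-tabulated by listing, for every pair of gadget boundary vertices and every label, whether a single-label path exists between them. With these precomputations in place, each candidate $S$ is processed by reading $g = \tilde O(1)$ tabulated bits (one per gadget), so the aggregate cost across all $O(n^k)$ candidates is $\tilde O(n^k)$, matching the $n^{k-o(1)}$ conditional lower bound from Theorem~\ref{klmf}.
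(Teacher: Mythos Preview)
Your core approach—compute $F^*$ once, then enumerate all $k$-subsets of non-$l^*$ labels and check whether each induced subgraph $G_S$ attains $F^*$—is exactly the paper's. The divergence is entirely in the per-iteration cost analysis.

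The paper's argument is simpler and does not specialize to the reduction's output: by the BL-MF* definition, every label (including $l^*$) has $o(n)$ edges, so each $G_S$ has only $(k+1)\cdot o(n)=o(n)$ edges, and the paper simply asserts that a unit-capacity max-flow on such a graph runs in $o(n)$ time. No preprocessing or gadget structure is invoked.

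Your preprocessing scheme has two issues. First, it is specific to the parallel-gadget instances of Theorem~\ref{klmf}, so it does not establish the theorem for general BL-MF* as stated. Second, the tabulation you describe (``for every pair of gadget boundary vertices and every label, whether a single-label path exists'') is not sufficient as written: an $s_j$--$t_j$ path through a set gadget traverses edges of all $k$ non-$l^*$ labels in sequence, so single-label reachability information does not directly yield the required $k$-label reachability for a candidate $S$ without further work that you have not specified. Your observation that a naive $\tilde{O}(n)$ per-iteration cost would give only $\tilde{O}(n^{k+1})$ is a fair point of rigor, but the paper's intended resolution is the $o(n)$-edge bound built into the problem definition rather than any instance-specific precomputation.
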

\begin{proof}
First we run a max flow algorithm on the graph to obtain the value $|F|$ of the max flow. Since the graph is unit-capacitated, this can be done in $O(n\sqrt{m})=O(n^2)$ time. 

Recall that the edges connected to the source $s$ and the sink $t$ have a special label $l^*$. So this label must be among the $k+1$ labels. Now for any choice of $k$ labels $l_1,\ldots,l_k$, we consider the subgraph induced on the edges with labels in $l_1,\ldots,l_k,l^*$, and we run a max flow algorithm on this graph. If the max flow value on this graph equals $|F|$, we are done. Otherwise if all these graphs have maximum flow less than $|F|$, there is no max flow with $k+1$ labels. Note that the max flow in each small graph takes $o(n)$ time since for each label the number of edges with that label is $o(n)$.
\end{proof}

Now we turn to regular expressions matching problem, and state an efficient algorithm for counting the number of alignments of the pattern on sub-strings of the text. First we state two lemmas.

\begin{lemma}
\label{lem:nfa-fast}
Let $M$ be an NFA with no cycles of length more than $1$. Let a computation of a string $t$ in $M$ be a sequence of states from the start state to the accept state of $M$ that produces $t$.
Then given a text $T$ and a fixed integer $R$ where $\log{R}$ is sub-polynomial, there is an algorithm that computes the number of computations of substrings of $T$ in $M$ mod $R$ in $O(m|T|)$ time, where $m$ is the number of edges of $M$. 
\end{lemma}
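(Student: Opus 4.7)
The plan is to run a counting version of the classical NFA simulation. For each position $i \in \{0, 1, \ldots, |T|\}$ and each state $q$ of $M$, define $A[i][q]$ to be the number, taken modulo $R$, of pairs $(j, \pi)$ with $0 \le j \le i$ such that $\pi$ is a computation from the start state of $M$ to state $q$ that produces the substring $T[j{+}1..i]$. With this definition, the quantity the algorithm must output is $\sum_{i=0}^{|T|} A[i][q_{\text{acc}}]$ reduced modulo $R$, where $q_{\text{acc}}$ is the accept state of $M$.

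The recurrence updates $A[i-1][\cdot]$ to $A[i][\cdot]$ in three passes. First, a \emph{character pass} extends every computation by the next input symbol: set $B[q] = \sum_{p :\, (p, T[i], q) \in \delta} A[i-1][p]$, where $\delta$ is the transition relation. Self-loops labeled $T[i]$ are included naturally as ordinary edges in this sum. Second, a \emph{seeding pass} introduces new computations that start at position $i$ with empty prefix, by adding $1$ to $B[q_{\text{start}}]$. Third, an \emph{$\epsilon$-closure pass} incorporates all $\epsilon$-extensions: set $A[i][q] = B[q] + \sum_{p :\, (p, \epsilon, q) \in \delta} A[i][p]$, processing states in a topological ordering of the $\epsilon$-subgraph of $M$. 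Because $M$ has no cycle of length more than $1$, the $\epsilon$-subgraph is acyclic (an $\epsilon$ self-loop would yield infinitely many computations of the same substring and is thus excluded from our NFAs), so such a topological order exists and can be precomputed once.

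Each of the three passes does $O(1)$ work per edge of $M$, so each position costs $O(m)$, giving $O(m|T|)$ in total. Since $\log R$ is subpolynomial, every count fits in $O(1)$ machine words and modular reductions take $O(1)$ time on the word RAM. Correctness follows by induction on $i$: each computation of $T[j{+}1..i]$ ending at $q$ is uniquely decomposed as either a shorter computation of $T[j{+}1..i{-}1]$ extended by one character transition, a newly seeded empty computation at position $j = i$, or an $\epsilon$-extension of a computation already counted in $B[q']$ for some $\epsilon$-predecessor $q'$ of $q$; the three passes aggregate these contributions exactly once each.

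The main technical obstacle is making sure the $\epsilon$-closure pass does not double-count or fail to terminate, which would ruin both correctness and the $O(m|T|)$ time bound. This is precisely where the no-cycles-of-length-greater-than-$1$ hypothesis is used: it makes the $\epsilon$-subgraph a DAG, so a single topological-order sweep per position adds each $\epsilon$-path contribution once while touching each $\epsilon$-edge only once. A secondary subtlety is that cycles of length $1$ on input symbols might be traversed many times; this is handled transparently because each traversal of such a self-loop consumes one character of $T$ and is therefore accounted for in exactly one character pass, avoiding any multi-counting.
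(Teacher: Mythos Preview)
Your proof is correct and takes essentially the same approach as the paper: both do dynamic programming over (text-position, state) pairs, using the fact that the NFA minus self-loops is a DAG to process states in topological order and get $O(m)$ work per position. The only cosmetic difference is that you run the DP forward (from the start state, accumulating over all starting positions via your seeding pass) while the paper runs it backward (from the accept state, defining $f(i,j)$ as the number of accepting computations of prefixes of $T_j$ starting at state $s_i$ and summing $f(1,j)$ over $j$); both organize the per-position sweep around a topological order and handle self-loops by letting them consume one character per step.
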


\begin{proof}
All numbers are taken mod $R$. Let $Q$ be the set of states of $M$, and let $\Delta: Q \times \Sigma\rightarrow P(Q)$ be the transition function of $M$, where $\Sigma$ is the alphabet, and $P(Q)$ is the power set of $Q$. Recall that we have an edge from state $s$ to state $s'$ if $s'\in \Delta(s,\sigma)$ for $\sigma \in \Sigma \cup \{\epsilon\}$,  
where $\epsilon$ is the empty string. Note that for any state $s\in Q$, $s\notin \Delta(s,\epsilon)$. We can assume that there is only one accept state with no outgoing edge (and hence no self-loops).

Since $M$ has no cycles other than self-loops, it has a topological ordering $s_1,s_2,\ldots,s_r$ where $s_1$ is the start state, $s_r$ is the accept state, $r$ is the number of states of $M$ and there is no edge from state $s_i$ to $s_j$ if $i>j$. We compute the number of computations of substrings of $T$ in $M$ by dynamic programming. Let $|T|=n$, and let $T_i$ be the postfix of $T$ starting at $i$ for $i=1,\ldots,n+1$, where $T_{n+1}$ is the empty string. Let $M_i$ be the NFA obtained from $M$ by having $s_i$ as the start state. 
For $i = 1,\ldots,r$ and $j=1,\ldots,|T|$, let $f(i,j)$ be the number of computations of prefixes of $T_j$ by $M_i$. So $\sum_{j=1}^n f(1,j)$ is what we have to compute. 

As the base case, we have that $f(r,n+1)=1$. Let $N_{out}(s_i)$ be the set of outgoing neighbors of $s_i$, i.e. we have that $s_j\in N_{out}(s_i)$ if there is an edge from $s_i$ to $s_j$. Similarly we define $N_{in}(s_i)$ to be the set of incoming neighbors of $s_i$. 

Fix $i,j$. Suppose that we have computed $f(i',j')$ for all $i'\ge i$ and $j'\ge j$ where $i'+j'>i+j$. We compute $f(i,j)$ as follows.
$$
f(i,j)=\sum_{s_{\ell}\in \Delta(s_i,T[j])} f(i+1,\ell) + \sum_{s_\ell\in \Delta(s_i,\epsilon)} f(i,\ell) 
$$
Note that $s_i\notin \Delta(s_i,\epsilon)$, so we can compute this sum, which takes $O(|N_{out}(s_i)|)$ to compute. Hence the computation of all $f(i,j)$s takes $O(mn)$ time. 
\end{proof}

\begin{lemma}
\label{lem:regex-has-nfa}
If $E$ is a regular expression of the type $T_0$ (see Figure \ref{fig:regex-type}), there is an NFA equivalent to $E$ that has no cycle of length more than $1$. This MFA has $O(|E|)$ edges.
\end{lemma}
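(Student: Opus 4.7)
The plan is to give a Thompson-style inductive construction of an NFA $M$ for $E$, with a single optimization at each Kleene-star node that collapses the starred sub-automaton to one state carrying self-loops. The crucial structural property of type $T_0$ (Figure \ref{fig:regex-type}) that I would exploit is that every occurrence of the Kleene star operator $*$ in $E$ is applied to an expression of the form $[\sigma_1|\sigma_2|\sigma_3]$, i.e.\ an alternation of base symbols. This is the entire reason we can hope to avoid cycles of length greater than $1$; the proof is really just an observation about this shape.

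First, I would build $M$ by induction on the parse tree of $E$, maintaining for each subexpression an NFA with a distinguished start state $s$ and accept state $t$. For a base symbol $\sigma$, create two states joined by an edge labeled $\sigma$. For $A|B$, introduce fresh $s,t$ and add $\epsilon$-edges from $s$ to the starts of $A,B$ and from their accepts to $t$. For $A\cdot B$, add a single $\epsilon$-edge from the accept of $A$ to the start of $B$. For the Kleene star $[\sigma_1|\ldots|\sigma_r]^*$, which by $T_0$ is the only place $*$ appears in $E$, create a single state $s=t$ with $r\le 3$ self-loops labeled $\sigma_1,\ldots,\sigma_r$; this NFA clearly accepts exactly $[\sigma_1|\ldots|\sigma_r]^*$. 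Correctness then follows from the standard analysis of Thompson's construction together with the equivalence of the optimized star-gadget and its usual version.

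Second, I would verify that $M$ has no cycle of length greater than $1$. The non-star constructions only introduce $\epsilon$-edges, and each such edge goes from what was previously a terminal state of one component into what was previously an initial state of another. By a straightforward induction on the parse tree, this linearization forbids back-edges, so the subgraph of $M$ obtained by deleting the starred subautomata is a DAG. Cycles can therefore only appear at the star-gadgets, and by construction those are all self-loops. Finally, to bound the edge count I would note that each parse-tree node contributes $O(1)$ new edges: the base case contributes $1$, alternation and concatenation contribute $O(1)$ $\epsilon$-edges, and each star contributes at most $3$ self-loops. Since the parse tree of $E$ has $O(|E|)$ nodes, $M$ has $O(|E|)$ edges.

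The main obstacle is really only careful bookkeeping — I need to confirm that the $\epsilon$-edges added by the alternation and concatenation steps never feed back into an already-placed subautomaton, and I have to explicitly invoke the restricted form of Kleene star allowed by $T_0$ when I handle a $*$-node. There is no genuine technical depth beyond noticing that the rigid shape of $T_0$ (Figure \ref{fig:regex-type}) prevents any starred subexpression from being more complicated than a bounded character alternation, which is exactly what allows the single-state self-loop gadget to do the job.
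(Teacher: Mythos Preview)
Your proposal is correct and follows essentially the same approach as the paper: a Thompson-style inductive construction where the key observation is that in type $T_0$ the Kleene star applies only to an alternation of base symbols, so the starred subexpression can be realized by a single state with self-loops rather than the usual Thompson star-gadget that introduces a back-edge. The paper's star gadget uses three states (a fresh start, a middle state carrying the self-loops, and a fresh accept) whereas you collapse to one state with $s=t$, but this is a cosmetic difference and the acyclicity and $O(|E|)$ edge-count arguments are the same.
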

\begin{proof}
Let a sub-type of a regular expression type $T$ be a type shown by a sub-tree of the tree of $T$. We show that for any regular expression of type $T_0$ or any sub-type of $T_0$, there is an NFA equivalent to $E$ that has no cycle length more than $1$. Recall that $\epsilon$ is the empty string.

So let $E$ be a regular expression of any sub-type of $T_0$. We construct the NFA of $E$ in a recursive manner. As the base case, suppose that $E$ has length $1$. So it consists of only one symbol $a$, for which a two state NFA suffices: Let $s_1$ be the starting state and $s_2$ be the accept state, and let $e$ be an edge from $s_1$ to $s_2$ with value $a$ (equivalently, the transition function $\Delta$ is $\Delta(s_1, a)=\{s_2\}$).

If $E$ has length more than $1$, it is of the form $A\bullet B$ or $A*$, where $\bullet$ is one of the operators concatenation ($``\cdot"$) or OR ($``|"$), and $A$ and $B$ are two regular expressions of a sub-type of $T_0$. Let $M_A$ and $M_B$ be the NFAs corresponding to $A$ and $B$ respectively, with $s_A,s_B$ as the corresponding start states and $t_A,t_B$ as the corresponding accept states.

So we have three cases:

\begin{enumerate}
    \item Concatenation: suppose that $\bullet=\cdot$. Define $M$ to be the MFA that consists of $M_A$ and $M_B$, with an edge added from $t_A$ to $s_B$ with value $\epsilon$. Let $s_A$ be the start state of $M$ and $t_B$ be the accept state of $M$.
    \item Or: suppose that $\bullet=|$. Let $s$ be a new state, which has an edge of value $\epsilon$ to $s_A$ and $s_B$. Mark $s$ as the start state of $M$. Let $t$ be a new state, where there is an edge from $t_A$ and $t_B$ to $t$ with value $\epsilon$. Let $t$ be the accept state.
    \item Star: Suppose that $E=A^*$. Since $E$ is of a subtype of $T_0$, $A$ must be of type $``|"$. So it is the OR of some symbols. Let the set of these symbols be $Q_A$. Then define $M$ to have $3$ states, $s_E$ as the start state, $t_E$ as the accept state, and $s$ as a middle state where there is a self-loop from $s$ to itself with all symbols in $Q_A$ as its values, an edge from $s_A$ to $s$ and an edge form $s$ to $t_A$ with empty string $\epsilon$ as their value. 
\end{enumerate}
It is straightforward to see that this NFA is equivalent to $E$, so that each alignment of $E$ on a text is equivalent to a computation of the text by the NFA $M$. Note that in each case we add $O(1)$ edges. So the total number of edges is $O(|E|)$.
\end{proof}

Combining Lemma \ref{lem:nfa-fast} and \ref{lem:regex-has-nfa} gives us the following Theorem.

\begin{theorem}
\label{thm:regexcounting}
Given a regular expression $E$, a text $T$ and a fixed integer $R$ where $\log{R}$ is sub-polynomial, there is an algorithm that counts the number of alignments of $E$ on substrings of $T$ mod $R$ in $O(|T||E|)$ time.
\end{theorem}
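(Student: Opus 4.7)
The plan is to prove Theorem \ref{thm:regexcounting} by directly composing the two preceding lemmas, since together they already package the two ingredients needed. First, I would invoke Lemma \ref{lem:regex-has-nfa} to convert the input regular expression $E$ (which we may assume, without loss of generality, is of type $T_0$ or a subtype of it, as in the hypothesis of that lemma) into an equivalent NFA $M_E$ with no cycle of length more than $1$ and with $m = O(|E|)$ edges. This step is constructive and runs in time linear in $|E|$, so it fits within the claimed budget.

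Next, I would feed $M_E$ and the text $T$ into the algorithm of Lemma \ref{lem:nfa-fast}, which counts the number of computations by $M_E$ of substrings of $T$, taken mod $R$, in time $O(m|T|) = O(|E|\,|T|)$. The key linking observation is that because $M_E$ is equivalent to $E$ in the strong sense established in the proof of Lemma \ref{lem:regex-has-nfa} — each alignment of $E$ on a substring corresponds bijectively to a distinct computation of that substring by $M_E$ — counting computations mod $R$ is exactly the same quantity as counting subset alignments of $E$ on substrings of $T$ mod $R$.

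The only subtle point worth double-checking is this correspondence: one must ensure that the recursive construction in Lemma \ref{lem:regex-has-nfa} (for concatenation, OR, and restricted Kleene star) does not introduce spurious computations or collapse distinct alignments. For the concatenation and OR cases this is immediate since the added $\epsilon$-edges are uniquely traversed. The star case is the most delicate, but since the construction is only applied when the starred expression is an OR of single symbols, every computation of the self-loop consumes exactly one symbol of the text, so computations of length $\ell$ on the loop correspond bijectively to length-$\ell$ strings matched by the star. Assuming this bookkeeping, which is implicit in the proof of Lemma \ref{lem:regex-has-nfa}, the composition is immediate and no additional work beyond the two lemma invocations is required; the overall running time is $O(|E|) + O(|E|\,|T|) = O(|E|\,|T|)$, as claimed.
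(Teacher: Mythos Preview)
Your proposal is correct and follows exactly the paper's approach: the paper simply states that combining Lemma~\ref{lem:nfa-fast} and Lemma~\ref{lem:regex-has-nfa} gives the theorem, without further elaboration. Your additional discussion verifying that the NFA construction yields a bijection between alignments and computations is a helpful sanity check but is not needed beyond what the paper already asserts.
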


\begin{theorem}
\label{thm:algKWLCS}
There is an algorithm for \#\kwlcs~mod $R$ which runs in $\tO(n^k)$ time when $\lg(R) = o(\lg(n))$. 
\end{theorem}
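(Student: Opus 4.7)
The plan is to give a straightforward $k$-dimensional dynamic program, exploiting that $k$ is constant and $\lg R = o(\lg n)$ so each modular operation takes $\tilde{O}(1)$ time on the word-RAM model. First I would compute the usual WLCS length table $L(i_1, \ldots, i_k) = \mathrm{WLCS}(P_1[1..i_1], \ldots, P_k[1..i_k])$ via the $k$-dimensional recurrence that takes the maximum of $L(i_1, \ldots, i_r{-}1, \ldots, i_k)$ over $r \in [k]$ and, when $P_1[i_1] = \cdots = P_k[i_k]$, also of $L(i_1{-}1, \ldots, i_k{-}1) + w(P_1[i_1])$. This fills each of the $(n{+}1)^k$ cells in $O(k)$ time for a total of $O(n^k)$.

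Next I would compute $C(i_1, \ldots, i_k)$, the number mod $R$ of position tuples $(p^{(1)}, \ldots, p^{(k)})$ where each $p^{(j)}$ is an increasing sequence of positions in $P_j[1..i_j]$, the aligned characters agree, and the induced weight equals $L(i_1, \ldots, i_k)$. Partition the counted alignments by whether they use position $i_r$ in string $r$ for every $r$: if all characters match at $(i_1, \ldots, i_k)$ and $L(i_1{-}1, \ldots, i_k{-}1) + w(P_1[i_1]) = L(i_1, \ldots, i_k)$, the ``all-matched'' alignments contribute $C(i_1{-}1, \ldots, i_k{-}1)$. For alignments that miss $i_r$ for at least one $r$, inclusion-exclusion over the subset $S \subseteq [k]$ of missing coordinates gives
$$
\sum_{\emptyset \neq S \subseteq [k]} (-1)^{|S|+1}\, [L(\mathrm{sub}_S) = L(i_1, \ldots, i_k)] \cdot C(\mathrm{sub}_S),
$$
where $\mathrm{sub}_S$ denotes the state obtained by decrementing coordinate $r$ for each $r \in S$. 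The length-matching indicator correctly zeros out sub-prefixes whose own WLCS is strictly smaller than $L(i_1, \ldots, i_k)$, since such sub-prefixes contain no alignment of the target weight. The resulting recurrence has $O(2^k) = O(1)$ terms per cell, so the counting table also costs $\tilde{O}(n^k)$, and the final answer is $(L(n, \ldots, n),\ C(n, \ldots, n) \bmod R)$.

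The main subtlety to verify is the inclusion-exclusion: letting $B_r$ be the alignments of weight $L(i_1, \ldots, i_k)$ in the prefix that avoid position $i_r$ of string $r$, we have $\bigcap_{r \in S} B_r$ equal to the alignments of weight $L(i_1, \ldots, i_k)$ contained in $\mathrm{sub}_S$; by monotonicity of $L$, such alignments exist only when $L(\mathrm{sub}_S) = L(i_1, \ldots, i_k)$, in which case the count is exactly $C(\mathrm{sub}_S)$, and otherwise the contribution is $0$. Given this, standard inclusion-exclusion yields $|\bigcup_r B_r|$, and adding the all-matched term recovers $C(i_1, \ldots, i_k)$. All arithmetic is done modulo $R$, which is $\tilde{O}(1)$ per operation since $\lg R = o(\lg n)$, so the total runtime is $\tilde{O}(2^k n^k) = \tilde{O}(n^k)$.
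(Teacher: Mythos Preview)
Your proposal is correct and takes essentially the same approach as the paper: a $k$-dimensional DP that first fills the WLCS-value table $L(\cdot)$ by the standard recurrence, and then fills the count table $C(\cdot)$ by inclusion--exclusion over the $2^k-1$ ``decrement a nonempty subset of coordinates'' neighbors, with the length-match indicator zeroing out sub-prefixes whose own WLCS is too small. The only cosmetic difference is that the paper splits on whether all last characters match and, in the matching case, uses directly that $L(i_1,\ldots,i_k)=L(i_1{-}1,\ldots,i_k{-}1)+w(P_1[i_1])$, whereas you keep the explicit check $L(i_1{-}1,\ldots,i_k{-}1)+w=L(i_1,\ldots,i_k)$; this check is always satisfied in the matching case, so the two formulations coincide.
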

\begin{proof}
Take $P_1, \ldots, P_k$ to be the input sequences. Recall that $w(P_\ell[i])$ is the weight of the symbol at position $i$ in the $\ell^{th}$ string. 

We will use dynamic programming. We will have a cell in our table for every $k$ tuples of locations in the strings $i_1, \ldots, i_k$. Every cell will contain two pieces of information: 
\begin{itemize}
    \item $\ell(i_1,\ldots, i_k)$ the length of the longest common subsequence(s) of the substrings $P_1[:i_1], \ldots, P_k[:i_k]$.
    \item $C(i_1, \ldots, i_k)$ is the count of the number of longest common subsequences mod $R$. This will have a $n^{o(1)}$ bit representation due to our restriction on $R$. 
\end{itemize}

We start by initializing all cells associated with locations $i_1, \ldots, i_k$ where any $i_j=0$. These cells are initialized to $\ell(i_1,\ldots, i_k)=0$  and $C(i_1, \ldots, i_k)=1$, as there is only one way to have a zero length string. 

Let the total sum of a cell be $\sum_{j=1}^k i_j$, we will fill cells out in order by there total sum, starting with zero and moving to $kn$. Any cell that has a $i_j$ value equal to zero will be left with its initialization. 

When filling the cell there are two cases: when $P_1[i_1]=P_2[i_2]= \ldots = P_k[i_k]$, and when that isn't true. We define some helpful notation. Let $\vec{v} =i_1, \ldots, i_k $ and let $eq_\ell(\vec{v},\vec{u})$ be a function that returns $1$ if $\ell(\vec{v}) = \ell(\vec{u})$. Let $\vec{v}_{(-1)}$ be the vector $i_1-1, \ldots, i_k-1$. Let $S(\vec{v})$ be a set of all vectors $\vec{u}$ such that for all indices $j$ we have that $\vec{u}[j] = \vec{v}[j] + \{0,-1\}$ excluding $\vec{v}$ and $\vec{v}_{(-1)}$. So all the smaller neighboring vectors of $\vec{v}$, excluding the strictly smaller vector (note these may differ from $\vec{v}$ in $1, 2, \ldots, k-1$ locations). By our order of computation all cells associated with $S(\vec{v})$ and $\vec{v}_{(-1)}$ will have been computed by the time we are computing the cell $\vec{v}$.

\textbf{
We will start with the case where  $P_1[i_j]\ne P_2[i_{j'}]$.} Our length is the maximal length seen so far. 
$$\ell(i_1,\ldots, i_k) = \max_{\vec{u} \in S(\vec{v})}\left(\ell(\vec{u})\right).$$
This is maximizing over all possible previous choices of longest common subsequence. We know our current last symbols can't all be included in the LCS. 

For setting $C$:
We want to look only at entries that are longest common subsequences, so naively you might think to just sum all the counts from the earlier cells that hit our max length of $\ell(\vec{v})$. But, we will have an inclusion exclusion issue. Consider the case of $k=2$, i.e. $2$-LCS. If $C(i-1,j)= x$, $C(i,j-1)= y$, and $C(i-1,j-1)= z$ then $C(i,j) = x+y-z$. This is because $x$ captures both all the longest sequences between $P_1[:i-1]$ and $P_2[:j-1]$ as well as those that use the symbol in location $P_1[i]$. The parallel statement is true for $y$. So we are double counting those longest common subsequences that appear in both $P_1[:i-1]$ and $P_2[:j-1]$, so we subtract out that double counting. In order to handle this smoothly we will define a more involved version of $S(\vec{v})$. Let $S_r(\vec{v})$ contain the subset of vectors $\vec{u} \in S(\vec{v}) \cup \vec{v}_{(-1)}$ where $\left(\sum_{j=1}^k \vec{v}[j]\right)-\left(\sum_{j=1}^k \vec{u}[j]\right) = r$. So $S_r(\vec{v})$ is the set of vectors that have $r$ indices that are smaller than $\vec{v}$. Now, after all this lead up, our value for $C$ is the following:
$$C(\vec{v}) = \sum_{r=1}^{k} (-1)^r \sum_{\vec{u}\in S_r(\vec{v})}  eq_{\ell}(\vec{v},\vec{u})C(\vec{u}).$$
We need to mod this by $R$ so that the total bits in the representation is not too large. 

So in $\tO(2^kR)$ time per cell we can compute \#\klcs. There are a total of $n^k$ cells so the total time for this algorithm is $\tO(n^k)$.

\textbf{
Now we will deal with the case of $P_1[i_1]=P_2[i_2]= \ldots = P_k[i_k]$.} First let us set $\ell(\cdot)$: 
$$\ell(i_1,\ldots, i_k) = \ell(i_1-1,i_2-1,\ldots, i_k-1) +w(P_1[i_1]).$$
This works because we have a matching symbol. Our new longest common subsequence at this location will have a length one longer than the longest sequence that existed using none of the current symbols. 

For setting $C$:
We want to count two non-overlapping sets. One set is the weighted longest common subsequences at location $\vec{v}_{(-1)}$. The other set is all the strings that use some but not all of the symbols from our current location $\vec{v}$. For counting this we need inclusion exclusion like before. 
$$C(\vec{v}) = C(\vec{v}_{(-1)}) + \sum_{r=1}^{k} (-1)^r \sum_{\vec{u}\in S_r(\vec{v})}  eq_{\ell}(\vec{v},\vec{u})C(\vec{u}).$$
This counts all longest sequences that include the current symbols indicated by $\vec{v}$ by including the count of $C(\vec{v}_{(-1)})$, it also counts all alternate ways to achieve a longest common subsequence of this length using at least one of these symbols by the summation. We need to mod this by $R$ so that the total bits in the representation is not too large. 
\end{proof}

\begin{corollary}
\label{cor:algKLCS}
There is an algorithm for \# \klcs~mod $R$ which runs in $\tO(n^k)$ time when $\lg(R) = o(\lg(n))$. 
\end{corollary}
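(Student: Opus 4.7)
The plan is to observe that \klcs~is simply the special case of \kwlcs~in which every symbol in the alphabet has weight $1$, so that the weighted length of a common subsequence equals its ordinary length. Concretely, given an instance $P_1,\ldots,P_k$ of \klcs~over alphabet $\Sigma$, we define the weight function $w:\Sigma\to[1]$ by $w(\sigma)=1$ for every $\sigma\in\Sigma$, and feed the resulting instance of \kwlcs~to the algorithm guaranteed by Theorem \ref{thm:algKWLCS}.

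Under this reduction, a subsequence $X$ common to $P_1,\ldots,P_k$ of ordinary length $\ell$ has weighted length $W(X)=\sum_{i=1}^{|X|}w(X[i])=\ell$, and conversely any weighted-longest common subsequence has ordinary length equal to its weight. Thus both the value $\ell$ and the count of distinct embedding tuples that realize the maximum coincide between the two problems. Since the weights are all $1$, the weight range trivially satisfies the requirements of Theorem \ref{thm:algKWLCS}, and we can keep the same modulus $R$ with $\lg(R)=o(\lg n)$ so that intermediate counts have subpolynomial bit-length.

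Plugging into Theorem \ref{thm:algKWLCS} yields an $\tO(n^k)$ algorithm that returns $\ell$ together with $\prod_{i} C_{\#}(X_i)\bmod R$, which is exactly the output asked for by \#\klcs~mod $R$. No part of this is subtle; the only thing worth double-checking is that the dynamic-programming recurrences in the proof of Theorem \ref{thm:algKWLCS} remain valid in the unit-weight regime (the equality test $eq_\ell(\vec v,\vec u)$ and the inclusion–exclusion over $S_r(\vec v)$ are defined purely in terms of $\ell(\cdot)$ and do not use the weight function beyond setting $\ell(\vec v)=\ell(\vec v_{(-1)})+w(P_1[i_1])$, which collapses to $\ell(\vec v_{(-1)})+1$). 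There is no real obstacle here; the whole content of the corollary is the reduction \klcs$\le$\kwlcs~with unit weights.
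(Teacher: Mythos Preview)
Your proposal is correct and takes essentially the same approach as the paper: the paper's proof is the one-line observation that \#\klcs~is the special case of \#\kwlcs~with the constant weight function $w(\cdot)\equiv 1$, then invokes Theorem~\ref{thm:algKWLCS}. Your additional sanity checks about the recurrences are fine but not needed.
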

\begin{proof}
The \#\klcs~problem is a special case of \#\kwlcs~problem where $w(\cdot)$ is the constant function that returns $1$.
\end{proof}

\section{Framework for Generating Uniform Average Case Hardness}
\label{sec:Framework}
\subsection{Preliminaries}
\subsubsection{Notation}
\begin{definition}
We use $x \sim \mathbb{F}^n_{p}$ to mean that $x$
is drawn uniformly at random from all $p^n$ values in the support of $\mathbb{F}^n_{p}$.
\end{definition}

\subsubsection{Getting Nearly Uniform Bit Strings from Finite Field Elements}
Adser{\`{a}} et. al show that counting cliques is hard on average over the uniform distribution where every edge exists iid \cite{UniformCliqueABB}. 

\begin{theorem}
Let $Z_i = Ber[\mu]$ where $\mu\in (0,1)$. Then let $Y \equiv  \sum_{i = 0}^{t} Z_i \cdot 2^i  \pmod{p}$. Let the total variation distance between $Y$ and $\Unif[0,p-1]$ be $\Delta$. Then there exists a constant $C$ such that if $t \geq C\cdot \mu^{-1} \cdot (1-\mu)^{-1} \cdot \lg(p/\eps^2) \cdot \lg(p)$, then $\Delta \leq \eps$ \cite{UniformCliqueABB}.
\end{theorem}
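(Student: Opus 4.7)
The plan is to prove the theorem by Fourier analysis on $\mathbb{Z}_p$. For any distribution $Y$ on $\mathbb{Z}_p$ with additive characters $\chi_k(y)=e^{2\pi i ky/p}$ and Fourier coefficients $\hat Y(k)=\mathbb{E}[\chi_k(Y)]$, Cauchy--Schwarz together with Parseval's identity gives the standard bound
$$\Delta \;\le\; \tfrac{1}{2}\sqrt{\sum_{k=1}^{p-1}|\hat Y(k)|^{2}}.$$
Because the $Z_i$'s are independent, the characteristic function factors as $\hat Y(k)=\prod_{i=0}^{t}\bigl((1-\mu)+\mu e^{2\pi i k 2^{i}/p}\bigr)$, and an elementary calculation shows
$$\bigl|(1-\mu)+\mu e^{2\pi i\theta}\bigr|^{2}=1-4\mu(1-\mu)\sin^{2}(\pi\theta).$$

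The heart of the argument is a purely dynamical claim: for every $k\not\equiv 0 \pmod{p}$, at least an $\Omega(1/\lg p)$ fraction of the $t+1$ residues $k\cdot 2^{i}\bmod p$ lie in $[p/4,\,3p/4]$. To see this, set $x_i:=(k\cdot 2^{i}\bmod p)/p$, so $x_{i+1}=2x_i \bmod 1$ and $x_i\in[1/p,\,1-1/p]$. Call $i$ \emph{good} if $x_i\in[1/4,\,3/4]$ and \emph{bad} otherwise. If $x_i\in(0,1/4)$ then $x_{i+1}=2x_i\in(0,1/2)$, and $x_{i+1}$ is bad iff $x_{i+1}<1/4$, i.e.\ iff $x_i<1/8$; iterating, a run of $j$ consecutive bad values in $(0,1/4)$ forces $x_i<2^{-(j+1)}$, and combined with $x_i\ge 1/p$ this bounds the run length by $\lg p+O(1)$. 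A symmetric argument handles the zone $(3/4,1)$, and the doubling map cannot take $(0,1/4)$ into $(3/4,1)$ or vice versa without first passing through a good value. Therefore every maximal bad run has length $O(\lg p)$, so at least $(t+1)/(C_1\lg p)$ of the indices are good for some absolute constant $C_1$.

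For each good index the factor has squared magnitude at most $1-2\mu(1-\mu)$ (using $\sin^{2}(\pi/4)=1/2$), while bad-index factors are bounded by $1$. Hence
$$|\hat Y(k)|^{2}\;\le\;\bigl(1-2\mu(1-\mu)\bigr)^{(t+1)/(C_1\lg p)}\;\le\;\exp\!\Bigl(-\tfrac{2\mu(1-\mu)\,t}{C_1\lg p}\Bigr).$$
Summing over the $p-1$ nonzero frequencies and plugging into the Fourier bound gives $\Delta^{2}\le\tfrac{p}{4}\exp\!\bigl(-\tfrac{2\mu(1-\mu)t}{C_1\lg p}\bigr)$. Solving $\Delta\le\eps$ yields $t\ge C\,\mu^{-1}(1-\mu)^{-1}\lg(p/\eps^{2})\,\lg p$ for a suitable absolute constant $C$, which is precisely the claimed bound.

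The only non-routine step is the dynamical run-length estimate for the doubling map on rationals with denominator $p$; once that is in place the rest is a clean, standard Fourier calculation, so I expect the run-length lemma to be the main obstacle (and also the place where the extra $\lg p$ factor in the bound comes from, since a random starting point would have only $O(1)$-length bad runs on average).
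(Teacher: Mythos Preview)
Your argument is correct. The paper does not actually prove this statement: it is quoted verbatim from \cite{UniformCliqueABB} and used as a black box in the framework of Appendix~\ref{sec:Framework}, so there is no ``paper's proof'' to compare against here.

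That said, your Fourier-analytic route is the standard one for this kind of equidistribution-mod-$p$ statement and is essentially what one finds in \cite{UniformCliqueABB}. The decomposition into characters, the factorization $\hat Y(k)=\prod_i\bigl((1-\mu)+\mu e^{2\pi i k2^i/p}\bigr)$, and the identity $|(1-\mu)+\mu e^{2\pi i\theta}|^2=1-4\mu(1-\mu)\sin^2(\pi\theta)$ are all exactly right, as is the Cauchy--Schwarz/Parseval bound $\Delta\le\tfrac12\sqrt{\sum_{k\ne 0}|\hat Y(k)|^2}$. The run-length lemma for the doubling map is also correct as you state it: the key facts are that the orbit never hits $0$ (this uses $p$ odd, which holds in every application in the paper since the $p_i$ are primes $\Theta(\lg n)$), and that the map $x\mapsto 2x\bmod 1$ cannot stay in $(0,1/4)$ or in $(3/4,1)$ for more than $\lg p+O(1)$ steps without violating $x_i\ge 1/p$ or $1-x_i\ge 1/p$. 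Your observation that the two bad zones cannot be reached from one another in a single step (so maximal bad runs are confined to one zone) closes the argument cleanly. The final calculation solving for $t$ is routine and matches the stated bound, with the extra $\lg p$ factor coming exactly from the run-length bound, as you note.
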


\begin{theorem}
If you are given an input with $n$ numbers $x_1, \ldots,x_n$ each chosen from $\Unif[1,p-1]$ there exists a sampling procedure which runs in time $O(n\lg^3(n)t(1/p-\eps)^{-1})$ that, with probability at least $1-2^{-\lg^2(n)}$, produces a new set of numbers $I = x'_1,\ldots,x'_n$ such that:
\begin{enumerate}
    \item $x'_i \equiv x_i \mod p$ for all $i$. 
    \item Each $x'_i$ is $t$ bits long where $t \geq C\cdot \mu^{-1} \cdot (1-\mu)^{-1} \cdot \lg(p/\eps^2) \cdot \lg(p)$.
    \item $I$ is total variation distance $n\epsilon$ from the distribution where every bit of $x'_i$ is iid sampled from $Ber[\mu]$.   
\end{enumerate}
(inspired by \cite{UniformCliqueABB})
\label{thm:SampleProcudure}
\end{theorem}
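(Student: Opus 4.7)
The plan is to use straightforward rejection sampling on each input number independently. For each $x_i$, I would repeatedly draw a candidate $t$-bit string $y$ by sampling each bit iid from $\text{Ber}[\mu]$, compute $y \bmod p$, and accept if it equals $x_i$; otherwise reject and draw a new candidate. Each per-number loop is capped at some $K\lg^2(n)(1/p-\epsilon)^{-1}$ trials. To bound the acceptance probability, I would invoke the preceding theorem: when $t$ meets the stated threshold, the random variable $Y = \sum Z_i 2^i \bmod p$ (with $Z_i \sim \text{Ber}[\mu]$ iid) has total variation distance at most $\epsilon$ from $\text{Unif}[0,p-1]$, so for every fixed residue $r$ one has $\Pr[Y=r]\geq 1/p - \epsilon$. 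Hence each trial accepts with probability at least $1/p-\epsilon$.

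Next I would bound the failure probability. The probability that rejection sampling for a single $x_i$ runs out of its budget is at most $(1 - (1/p-\epsilon))^{K\lg^2(n)(1/p-\epsilon)^{-1}} \leq e^{-K\lg^2(n)}$. A union bound over the $n$ numbers gives total failure probability at most $n \cdot e^{-K\lg^2(n)}$, which is bounded by $2^{-\lg^2(n)}$ for a suitably chosen constant $K$. The running time is the number of trials times the cost per trial: we perform $O(n \lg^2(n)(1/p-\epsilon)^{-1})$ trials in total, each requiring drawing $t$ random bits and one reduction mod $p$ on a $t$-bit integer, which takes $\tilde{O}(t)$ or $O(t \lg n)$ time, giving the advertised $O(n \lg^3(n) t (1/p-\epsilon)^{-1})$ bound.

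For property (3) — the total variation claim — I would use a coupling argument. Consider the ``ideal'' joint process: draw $n$ strings $y_1,\ldots,y_n$ independently with iid $\text{Ber}[\mu]$ bits, and define $\tilde{x}_i := y_i \bmod p$. By construction, in this process the $y_i$ have exactly the target distribution, and conditional on $\tilde{x}_i = r$ the string $y_i$ has the same distribution as the output of rejection sampling on input $r$ (both are $y \mid y \bmod p = r$). The marginal of each $\tilde{x}_i$ is within TV distance $\epsilon$ of $\text{Unif}[0,p-1]$; since our actual input $x_i$ is from $\text{Unif}[1,p-1]$, a standard maximal coupling succeeds in matching $\tilde{x}_i = x_i$ with probability at least $1-\epsilon$ (after accounting for the single missing value $0$, which contributes at most $1/p$). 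Union-bounding the $n$ couplings and the rejection-sampling failure event gives total variation at most $n\epsilon + 2^{-\lg^2(n)} = n\epsilon + o(1)$, matching the claim.

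The main obstacle is purely bookkeeping: confirming that the coupling argument cleanly yields exactly $n\epsilon$ (absorbing the lower-order $2^{-\lg^2(n)}$ and the $\text{Unif}[1,p-1]$ vs.\ $\text{Unif}[0,p-1]$ discrepancy into the $\epsilon$ budget, or stating the bound as ``$n\epsilon$ up to negligible terms''), and verifying that arithmetic on $t$-bit numbers with $t = O(\mu^{-1}(1-\mu)^{-1}\lg^2(p/\epsilon))$ costs only the extra $\lg n$ factor needed to land at $O(n \lg^3(n) t (1/p-\epsilon)^{-1})$ rather than something larger. Both are technical but routine.
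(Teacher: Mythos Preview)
Your proposal is correct and follows essentially the same approach as the paper: both use per-coordinate rejection sampling (draw $t$-bit $Ber[\mu]$ strings until one hits the target residue mod $p$), bound the per-trial acceptance probability by $1/p-\epsilon$ via the preceding theorem, cap the number of trials and union-bound the failure probability, and sum the per-coordinate TV contributions to get $n\epsilon$. Your coupling argument for property~(3) is in fact more careful than the paper's one-line ``union bound'' justification, but the algorithm and analysis structure are identical.
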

\begin{proof}
Let $Z_i = Ber[\mu]$ where $\mu\in (0,1)$. Then let $Y$ be the distribution formed by $\sum_{i = 0}^{t} Z_i \cdot 2^i \pmod{p}$. 

Consider the procedure to generate $x'_i$ where we sample a number $y$ from $Y$, if $y \equiv x_i \pmod{p}$ then $x'_i =y$, else repeat. We take $O(t)$ time to produce a sample. We succeed with the probability that  $y \equiv x_i \pmod{p}$. This probability is at least $\frac{1}{p}-\eps$, because $\eps$ is the total variation distance of $Y$ and $\Unif[1,p-1]$. Thus, the time to produce a single sample in expectation is $O(t (1/p-\eps)^{-1})$. To fail $\Theta(t (1/p-\eps)^{-1} \lg^3(n))$ times in a row will happen with probability at most $1-2^{-2\lg^{2}(n)}$. If we fail $\Theta(t (1/p-\eps)^{-1} \lg^3(n))$ times in a row simply halt the program and throw an error. 

We run this procedure for all $n$ numbers, thus taking at most $O(n t (1/p-\eps)^{-1} \lg^3(n))$ time to succeed with probability at least $1-n2^{-2\lg^{2}(n)} \geq 1-2^{-\lg^{2}(n)}$. 

The total variation distance from each individual $x'_i$ to the uniform distribution is $\eps$ and there are $n$ inputs in total. Thus, the total variation distance is at most $n\epsilon$ by the union bound.
\end{proof}

\begin{corollary}
If you are given an input with $n$ numbers $x_1, \ldots,x_n$ each chosen from $\Unif[1,p-1]$ there exists a sampling procedure which runs in time $O(n\lg^3(n)t(1/p-1/n^3)^{-1})$ that, with probability at least $1-2^{-\lg^2(n)}$, produces a new set of numbers $I = x'_1,\ldots,x'_n$ such that:
\begin{enumerate}
    \item $x'_i \equiv x_i \pmod{p}$ for all $i$. 
    \item Each $x'_i$ is $t$ bits long where $t \geq C\cdot \mu^{-1} \cdot (1-\mu)^{-1} \cdot (\lg(p) + 6\lg(n)) \cdot \lg(p)$.
    \item $I$ is total variation distance $1/n^{2}$ from the distribution where every bit of $x'_i$ is iid sampled from $Ber[\mu]$.   
\end{enumerate}
\label{cor:SampleButNoEps}
\end{corollary}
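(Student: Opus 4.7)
The plan is to derive this corollary as an immediate specialization of Theorem \ref{thm:SampleProcudure} by setting $\eps = 1/n^3$. All three conclusions of the corollary should fall out of a single substitution together with elementary algebraic simplifications.

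First I would invoke Theorem \ref{thm:SampleProcudure} with the parameter $\eps := 1/n^3$ on the same input $x_1,\ldots,x_n \sim \Unif[1,p-1]$. Conclusion (1) is identical in the two statements, so it transfers verbatim. For conclusion (3), the theorem gives total variation distance $n\eps$ between the output distribution and the iid $\mathrm{Ber}[\mu]$ bit distribution; substituting $\eps = 1/n^3$ gives $n \cdot 1/n^3 = 1/n^2$, which matches the corollary.

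For conclusion (2), the bit-length requirement $t \geq C \cdot \mu^{-1} (1-\mu)^{-1} \lg(p/\eps^2) \lg(p)$ simplifies under $\eps = 1/n^3$ because
\[
\lg(p/\eps^2) \;=\; \lg(p) + 2\lg(1/\eps) \;=\; \lg(p) + 2\lg(n^3) \;=\; \lg(p) + 6\lg(n),
\]
which is exactly the bound stated in the corollary. Finally, for the running time, the theorem's bound $O(n\lg^3(n)\,t\,(1/p - \eps)^{-1})$ becomes $O(n\lg^3(n)\,t\,(1/p - 1/n^3)^{-1})$ after substitution, matching the claimed running time. The success probability $1 - 2^{-\lg^2(n)}$ is preserved since it does not depend on $\eps$.

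There is essentially no obstacle here: the corollary is a bookkeeping specialization, and no new probabilistic or algorithmic idea is needed beyond Theorem \ref{thm:SampleProcudure}. The only thing to double-check is that the choice $\eps = 1/n^3$ is admissible, i.e.\ that $1/p - 1/n^3 > 0$ so the running-time expression is meaningful; this holds whenever $p < n^3$, which is the regime of interest (elsewhere in the paper $p$ is taken to be at most polynomial in $n$, e.g.\ $p < n^{2k}$ could conflict and would require $n^3$ to be replaced by a smaller function of $p$, but for the use cases of this corollary $p$ is small enough that the bound is meaningful).
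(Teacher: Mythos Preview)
Your proposal is correct and takes exactly the same approach as the paper, which simply says ``plug in $\eps = 1/n^3$ to Theorem \ref{thm:SampleProcudure}.'' You have merely spelled out the substitutions in more detail, and your sanity check that $1/p - 1/n^3 > 0$ is harmless (in the paper's actual application the primes are $\Theta(\lg n)$, so this is trivially satisfied).
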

\begin{proof}
Simply plug in $\eps = 1/n^3$ to Theorem \ref{thm:SampleProcudure}.
\end{proof}

\subsection{The framework}
In this section we are going to show that any problem $P$ with a  $\gPol{\cdot}$ is hard over the uniform average case. 
We define $\gPol{\cdot}$ in Definition \ref{def:goodPoly}.

First, we want to convert our problem over a polynomial large finite field to a problem over many $O(\lg(n))$ sized finite fields. We will use the Chinese Remainder Theorem (CRT) to do this. 

\begin{lemma}
\label{lem:CRT}
Let $P$ be some problem with output in range $[1,n^c]$. Let $P_p$ be the same problem as $P$, but where $P_p(\vec{I}) \equiv P(\vec{I}) \pmod{p}$. 

Let $f$ be a \gPol{$P$}. Let $f_1, \ldots, f_{s}$ be a set of $s$ polynomials where  $s =O(\lg(n)/\lg\lg(n))$. We define $f_i$ as the same polynomial as $f$, but over finite field $F_{p_i}$ where $p_i = \Theta(\lg(n))$ and all $p_i$ are distinct. 

Then, for all $i$,  $f_i$ is a \gPol{$P_{p_i}$}.

Finally, given $f_i(\vec{I})$ for all $i \in [1,s]$ we can return $P(\vec{I})$. 
\end{lemma}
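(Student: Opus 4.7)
The plan is to verify the three properties of Definition~\ref{def:goodPoly} for each $f_i$, and then apply the Chinese Remainder Theorem to reconstruct $P$ from the residues $f_i(\vec{I}) = P_{p_i}(\vec{I})$.

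First I would handle the three bullet properties. The agreement property $f_i(\vec{I}) = P_{p_i}(\vec{I})$ on binary inputs follows because $f$ is a \gPol{$P$}, so on any binary $\vec{I}$ we have $f(\vec{I}) = P(\vec{I})$ as an element of $\mathbb{F}_{p}$ for the original prime $p > n^c$; since $P(\vec{I}) \in [0, n^c - 1] < p$, this identity holds as an identity of integers. Reducing the same monomial expression mod $p_i$ then gives $f_i(\vec{I}) \equiv P(\vec{I}) \equiv P_{p_i}(\vec{I}) \pmod{p_i}$. The degree property is immediate since $f_i$ is syntactically the same polynomial as $f$ (only the ambient field changed), and similarly the strongly $d$-partite structure is preserved because the partition of variables and the monomial structure depend only on the symbolic form of $f$, not on the field of coefficients. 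I would also check that $p_i = \Theta(\lg n)$ is large enough for $f_i$ to be a valid polynomial over $\mathbb{F}_{p_i}$ in the sense of Definition~\ref{def:goodPoly}, but since the definition only requires $p < n^c$, any $p_i = \Theta(\lg n)$ trivially satisfies this.

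Next I would handle the reconstruction step. By CRT, knowing $P(\vec{I}) \bmod p_i$ for $i = 1, \ldots, s$ determines $P(\vec{I}) \bmod \prod_{i=1}^s p_i$. Since $P(\vec{I}) \in [0, n^c - 1]$, it suffices to ensure $\prod_{i=1}^s p_i \geq n^c$. With $p_i = \Theta(\lg n)$ and $s = \Theta(\lg(n)/\lg\lg(n))$ we have
\[
\prod_{i=1}^s p_i = \Theta(\lg n)^{\Theta(\lg n / \lg \lg n)} = 2^{\Theta(\lg n)} = n^{\Theta(1)},
\]
so by choosing the hidden constants appropriately (enough distinct primes of the right size exist by the prime number theorem restricted to the interval $[c_1 \lg n, c_2 \lg n]$), the product exceeds $n^c$. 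One then uses the standard CRT reconstruction algorithm to recover $P(\vec{I})$ from the list of residues in time $\mathrm{poly}(s \cdot \lg p_i) = \tilde{O}(1)$.

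The main obstacle I foresee is minor and bookkeeping: making sure that enough distinct primes of size $\Theta(\lg n)$ exist so that $\prod_i p_i \geq n^c$ with $s = O(\lg n / \lg \lg n)$, which is easily handled by the prime number theorem in short intervals, and verifying that the change of field does not affect the strongly $d$-partite property — but since that property is purely about the symbolic monomial structure of $f$, there is nothing to do.
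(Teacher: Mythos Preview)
Your proposal is correct and follows essentially the same approach as the paper's proof: verify the three \gPol{} properties (agreement on binary inputs via reduction mod $p_i$, degree preservation, and strong $d$-partiteness), then invoke CRT with enough primes of size $\Theta(\lg n)$ furnished by the prime number theorem so that $\prod_i p_i \ge n^c$. Your write-up is in fact slightly more careful than the paper's in spelling out why the equality $f(\vec{I})=P(\vec{I})$ lifts to an integer identity before reducing mod $p_i$.
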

\begin{proof}

If $f(\vec{I}) =  P(\vec{I})$ then trivially $f(\vec{I}) \equiv P(\vec{I})\pmod{p}$. As a result $f_i(\vec{I}) \equiv P_{p_i}(\vec{I}) \equiv P(\vec{I}) \pmod{p_i}$.\\
If $f$ has degree $d$ then $f_i$ also has degree $d$ (it certainly has at most $d$, because $f$ is strongly $d$-partite they will in fact be equal). \\
If $f$ is $d$-partite then so is $f_i$. \\
Thus, $f_i$ is a \gPol{$P_{p_i}$}.

Given $f_i(\vec{I})$ for all $i \in [1,s]$ we know $P_{p_i}(\vec{I})$ for all $i \in [1,s]$. We can use the Chinese Remainder Theorem to find the value of $P$ as long as $\Pi_{i=1}^s p_i \geq n^c$. By the prime number theorem there is a sufficiently large constant $c'$ such that there are more than $2c \lg(n)/\lg\lg(n)$ primes between  $\lg(n)$ and $c'\lg(n)$. If we choose these primes to be $p_1,\ldots, p_{s=2c \lg(n)/\lg\lg(n)}$ then $\Pi_{i=1}^s p_i \geq n^{2c} \geq n^c$.
\end{proof}

Now we want to apply a worst-case to average case reduction for each $f_i$ separately. We can use Lemma 1 from \cite{BallWorstToAvg} to achieve this. 

\begin{lemma}
Consider positive integers $n$, $d$, and $p$, and an $\epsilon \in (0, 1/3)$ such that $d > 9$, $p$ is
prime and $p > 12d$. Suppose that for some polynomial $f : \mathbb{F}^n_p \rightarrow \mathbb{F}_p$ of degree at most\footnote{Ball et al. simply say a polynomial of degree $d$, however, unsurprisingly, their proof does not require the polynomial be of degree at least $9$ to work.} $d$, there is an
algorithm $A$ running in time $T(n)$ such that when $x$ is drawn uniformly at random from all inputs $\mathbb{F}^n_p $:
$$Pr[A(x) = f(x)] \geq 1-\eps.$$

Then there is a randomized algorithm $B$ that runs in time $O(nd^2log^2(p) + d^3 + T(n)d)$ such that
for \emph{any} $x \in \mathbb{F}^n_p$:
$$Pr[B(x) = f(x)] \geq 2/3.$$
\cite{BallWorstToAvg}
\label{lem:ballWCtoAC}
\end{lemma}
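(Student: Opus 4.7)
The plan is to prove this via Beaver--Feigenbaum / Lipton style polynomial self-correction using a random line through the target point $x$. Concretely, sample $y \sim \mathbb{F}_p^n$ uniformly and consider the affine line $L(t) := x + ty$ parameterized by $t \in \mathbb{F}_p$. The restriction $g(t) := f(L(t))$ is a univariate polynomial in $t$ of degree at most $d$, with the crucial property that $g(0) = f(x)$. For every fixed $t \in \mathbb{F}_p \setminus \{0\}$, the marginal of $L(t)$ is uniform on $\mathbb{F}_p^n$, so $A(L(t))$ returns $g(t)$ with probability at least $1-\epsilon$. Moreover, for any two distinct nonzero $t_1,t_2$, the pair $(L(t_1), L(t_2))$ is pairwise independent (each marginal is uniform), which will be enough to get concentration of the error count via Chebyshev's inequality.

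The algorithm $B$ then proceeds as follows: pick $m = \Theta(d)$ distinct nonzero evaluation points $t_1,\ldots,t_m \in \mathbb{F}_p^{\ast}$ (possible since $p > 12d$), query $A$ at $L(t_1),\ldots,L(t_m)$ to obtain purported values $a_1,\ldots,a_m$, run the Berlekamp--Welch decoder on the received word $\{(t_i,a_i)\}_{i=1}^m$ to recover the unique degree-$\leq d$ univariate polynomial $\hat g$ consistent with all but at most $\lfloor (m-d-1)/2 \rfloor$ of the queries, and output $\hat g(0)$. Because the individual error probabilities are at most $\epsilon < 1/3$ and the errors are pairwise independent, Chebyshev's inequality gives that the number of incorrect responses exceeds $(m-d-1)/2$ with probability at most a small constant; choosing the constant in $m = \Theta(d)$ large enough (exploiting $\epsilon < 1/3$) pushes this below $1/3$, so with probability at least $2/3$ decoding succeeds and $\hat g = g$, hence $\hat g(0) = f(x)$.

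For the running time: generating $y$ and computing each $L(t_i)$ costs $O(n \log^2 p)$ bit operations for arithmetic in $\mathbb{F}_p$, and across $O(d)$ queries this contributes $O(n d \log^2 p)$; the $m = O(d)$ calls to $A$ contribute $T(n)\cdot O(d)$; Berlekamp--Welch on $O(d)$ points costs $O(d^3)$ field operations, which lines up with the stated $O(nd^2\log^2 p + d^3 + T(n)d)$ bound (the extra factor of $d$ in the first term absorbs Berlekamp--Welch's per-step cost and the $\Theta(d)$ evaluations). The main obstacle is to extract a success probability of $2/3$ out of the weak input guarantee $\epsilon < 1/3$: a naive union bound over all $m = \Theta(d)$ queries is vacuous, so the proof must genuinely use that Berlekamp--Welch tolerates a constant fraction of corruptions and that pairwise independence of line evaluations makes the error count concentrated. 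A secondary care point is that queries must avoid $t=0$ (so $L(t)$ is uniform) and must land on $m$ distinct nonzero scalars, which is exactly where the hypothesis $p > 12d$ is used.
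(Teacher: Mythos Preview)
The paper does not prove this lemma; it is quoted verbatim (as Lemma~1) from Ball et al.\ and used as a black box. So there is no ``paper's own proof'' to compare against, and I will just assess your argument on its merits.

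Your overall skeleton (random self-reduction through a curve passing through $x$, followed by Reed--Solomon/Berlekamp--Welch decoding to tolerate a constant fraction of erroneous oracle answers) is exactly the right idea, and it is indeed what underlies this lemma. However, there is a genuine gap in the concentration step. For the affine line $L(t)=x+ty$ with a single uniform $y$, the evaluations at two distinct nonzero parameters are \emph{not} pairwise independent: once $L(t_1)=x+t_1y$ is revealed, $y$ is determined (since $t_1\neq 0$), and hence $L(t_2)$ is determined as well. Thus your Chebyshev argument does not apply, and you are back to a Markov bound on the number of errors, which with $\epsilon$ allowed to be close to $1/3$ cannot drive the failure probability below $1/3$ for any choice of $m=\Theta(d)$.

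The standard repair (and what Ball et al.\ do, following Gemmell--Sudan) is to replace the line by a random low-degree curve, e.g.\ $L(t)=x+t\,y_1+t^2\,y_2$ with $y_1,y_2$ independent and uniform in $\mathbb{F}_p^n$. Then $g(t)=f(L(t))$ has degree at most $2d$, each $L(t)$ for $t\neq 0$ is still marginally uniform, and now for distinct nonzero $t_1,t_2$ the pair $(L(t_1),L(t_2))$ \emph{is} genuinely pairwise independent (the matrix $\begin{pmatrix} t_1 & t_1^2 \\ t_2 & t_2^2 \end{pmatrix}$ is invertible). With that, your Chebyshev step goes through, the hypothesis $p>12d$ supplies enough distinct nonzero evaluation points to decode a degree-$2d$ polynomial, and the extra random vector explains the $nd^2\log^2 p$ term in the running time. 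The constraint $d>9$ is then what makes the Chebyshev tail small enough to clear the $2/3$ threshold.
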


Notably, we demand that $d= o(\lg(n)/\lg\lg(n))$ and we use $p = \Theta(\lg(n))$, so  $p > 12d$. The running time, given these choices, is $\tO(n + T(n))$ time. 

\begin{corollary}
Assume an $f$ exists that is \gPol{$P$}. Then, let $f_1, \ldots , f_s$ be the polynomials described in \ref{lem:CRT}.
Let $A$ be an algorithm  that runs in time $T(n)$ such that when $x \sim \mathbb{F}^n_{p_i}$:
$$Pr[A(x) = f_i(x)] \geq 3/4,$$
for all $i$. 
Then there is a randomized algorithm $B$ that runs in time $\tO(n + T(n))$ such that
for \emph{any} $\vec{I} \in \{0,1\}^n$:
$$Pr[B(\vec{I}) = P(\vec{I})] \geq 1-O\left(2^{-\lg^2(n)}\right).$$
\label{cor:RandomFsolvesP}
\end{corollary}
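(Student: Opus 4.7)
The plan is to compose Lemma~\ref{lem:CRT} with Lemma~\ref{lem:ballWCtoAC} and then boost the success probability by majority vote. First, by Lemma~\ref{lem:CRT}, a \gPol{$P$} gives rise to polynomials $f_1,\ldots,f_s$, one for each prime $p_i = \Theta(\lg(n))$, where $s = O(\lg(n)/\lg\lg(n))$, and each $f_i$ is itself a \gPol{$P_{p_i}$}. Crucially, once we know $f_i(\vec{I})$ for every $i$, CRT recovers $P(\vec{I})$. The strategy for $B$ on input $\vec{I}\in\{0,1\}^n$ is therefore: (i) compute $f_i(\vec{I})\in\mathbb{F}_{p_i}$ for each $i\in[1,s]$ by a worst-case to average-case reduction combined with amplification; (ii) reassemble via CRT.

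For step (i), I would apply Lemma~\ref{lem:ballWCtoAC} to each $f_i$ individually with $\eps=1/4$. The hypotheses are satisfied: the \gPol{$P$} has degree $d=o(\lg(n)/\lg\lg(n))$, so $d<9$ can be patched by adding dummy variables (or $n$ is taken large enough), and $p_i=\Theta(\lg(n))$ easily dominates $12d$. Since $A$ succeeds with probability $\geq 3/4=1-\eps$ on the uniform distribution over $\mathbb{F}_{p_i}^n$, Lemma~\ref{lem:ballWCtoAC} gives a randomized worst-case algorithm $B_i^{(0)}$ running in time $O(nd^2\log^2 p_i + d^3 + T(n)d) = \tO(n+T(n))$ that, for any $x\in\mathbb{F}_{p_i}^n$, outputs $f_i(x)$ with probability at least $2/3$. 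Evaluating $f_i$ at the bit-input $\vec{I}\in\{0,1\}^n$ is legal since $0$ and $1$ embed canonically into $\mathbb{F}_{p_i}$.

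Next I amplify: run $B_i^{(0)}$ independently $\Theta(\lg^2 n)$ times and output the majority value. By a Chernoff bound, the majority equals $f_i(\vec{I})$ with probability at least $1-2^{-\Omega(\lg^2 n)}$. Call this amplified algorithm $B_i$; its running time is $\tO(n+T(n))$. Then $B$ runs $B_1,\ldots,B_s$ in succession and applies CRT (as in Lemma~\ref{lem:CRT}) to reconstruct $P(\vec{I})$. By the union bound, the probability that any $B_i$ errs is at most $s\cdot 2^{-\Omega(\lg^2 n)} = O(2^{-\lg^2 n})$, so $B$ succeeds with probability $1-O(2^{-\lg^2 n})$. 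The total running time is $s\cdot\tO(n+T(n)) = \tO(n+T(n))$ since $s=O(\lg(n)/\lg\lg(n))$.

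The only delicate point is verifying that the small-degree / large-prime preconditions of Lemma~\ref{lem:ballWCtoAC} really are inherited by each $f_i$, and that the $3/4$ average-case success assumption in our statement translates into the $\eps<1/3$ regime of Lemma~\ref{lem:ballWCtoAC}; both are automatic from the \goodPoly~definition and the choice $\eps=1/4$. A minor but easy-to-overlook point is that the hypothesis ``$\Pr[A(x)=f_i(x)]\geq 3/4$'' is assumed for each $i$ separately, so no further argument is needed to handle correlations between the $f_i$ when running on the same $\vec{I}$: the $s$ amplified invocations are merged by a union bound, not by joint-distribution reasoning.
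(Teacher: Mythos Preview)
Your proposal is correct and follows essentially the same approach as the paper: apply Lemma~\ref{lem:ballWCtoAC} to each $f_i$ to obtain a worst-case $2/3$-correct algorithm, amplify by repetition and majority vote, then union bound over the $s=O(\lg n/\lg\lg n)$ primes and reconstruct $P(\vec I)$ via CRT. The only cosmetic differences are that the paper uses $\Theta(\lg^3 n)$ repetitions (yielding error $2^{-\lg^{2.5} n}$) rather than $\Theta(\lg^2 n)$, and handles the $d>9$ precondition via a footnote observing it is not actually needed in Ball et al.'s proof rather than by padding with dummy variables.
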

\begin{proof}
We use Lemma \ref{lem:ballWCtoAC} for each polynomial $f_i$. It follows that having an algorithm $A$ for computing $f_i$ over the uniform input $\mathbb{F}^n_{p_i}$ that succeeds with probability $3/4$ implies that a randomized algorithm $B'_i$ exists that succeeds with probability $2/3$.

We can now create an algorithm $B_i$ by running $B'_i$ for $\Theta(\lg^3(n))$ times and pick the most common output, this will return the correct answer with probability at least $1-2^{-\lg^{2.5}(n)}$. 

Now if all of $B_1,\ldots, B_s$ return the correct answer then we can use the CRT trick of Lemma \ref{lem:CRT} to compute the value of $P(\vec{I})$. All of $B_1,\ldots, B_s$  return the correct answer with probability at least $1-s2^{-\lg^{2.5}(n)} = 1-O(\lg(n)/\lg\lg(n))2^{-\lg^{2.5}(n)} < 1-O\left(2^{-\lg^{2}(n)}\right)$
\end{proof}

So, we now want to show that solving random instances of $P$ can solve random instances $f_i(x)$ where $x \sim \mathbb{F}^n_{p}$. To do this we will use the sampling procedure described in Corollary \ref{cor:SampleButNoEps}. We will also use the fact that $P_{p_i}(x)=f_i(x)$ when $x$ is a zero and one input.

\begin{lemma}
Assume a $d$ degree polynomial $f$ exists that is \gPol{$P$}. Then, let $f_1, \ldots , f_s$ and $p_1, \ldots , p_s$ be the polynomials and primes described in Lemma \ref{lem:CRT}.

Let $A$ be an algorithm  that runs in time $T(n)$ such that when $\vec{I}$ is formed by $n$ bits each chosen iid from $Ber[\mu]$ where $\mu\in (0,1)$ is a constant, then:
$$Pr[A(\vec{I}) = P(\vec{I})] \geq 1-1/\omega\left( \lg^d(n)\lg\lg^d(n) \right).$$
 
Then there is a  $B$ that runs in time $\tO(n + T(n))$ such that when $x \sim \mathbb{F}^n_{p_i}$:
$$Pr[B(x) = f_i(x)] > 3/4,$$
for all $f_i$. 
\label{lem:UnifPtoACf}
\end{lemma}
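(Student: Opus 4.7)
\textbf{Proof proposal for Lemma \ref{lem:UnifPtoACf}.}

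The plan is to reduce a worst-case instance $x \in \mathbb{F}_{p_i}^n$ of $f_i$ to $t^d$ instances of $P$ whose bit inputs are (nearly) iid $\mathrm{Ber}[\mu]$, then combine the outputs using the strong $d$-partite structure of $f$. First, given $x = (x_1, \ldots, x_n) \in \mathbb{F}_{p_i}^n$, apply Corollary \ref{cor:SampleButNoEps} independently to each coordinate to obtain $t$-bit strings $b_{j,1}, \ldots, b_{j,t}$ with $\sum_{k=1}^{t} b_{j,k} 2^{k-1} \equiv x_j \pmod{p_i}$ and with marginals total variation $1/n^2$-close to the product distribution on $\{0,1\}^t$ where each bit is iid $\mathrm{Ber}[\mu]$. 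Here $t = O(\lg p_i \cdot (\lg p_i + \lg n)) = O(\lg n \lg\lg n)$ since $p_i = \Theta(\lg n)$, and the sampling takes $\tO(n)$ time total.

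Next, exploit that $f$ is strongly $d$-partite, so $f = \sum_r \prod_{m=1}^{d} x_{\sigma_m(r)}$ with $x_{\sigma_m(r)} \in S_m$. Substituting $x_j \equiv \sum_{k=1}^{t} 2^{k-1} b_{j,k} \pmod{p_i}$ into each monomial and expanding gives
\[
f_i(x) \equiv \sum_{(k_1,\ldots,k_d) \in [t]^d} 2^{(k_1-1)+\cdots+(k_d-1)} \sum_r b_{\sigma_1(r),k_1} \cdots b_{\sigma_d(r),k_d} \pmod{p_i}.
\]
For each tuple $(k_1,\ldots,k_d)$ define a bit vector $\vec{I}^{(k_1,\ldots,k_d)}$ whose $j$-th coordinate is $b_{j,k_m}$, where $m$ is the unique index with $j \in S_m$. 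Then the inner sum is exactly $f(\vec{I}^{(k_1,\ldots,k_d)}) = P(\vec{I}^{(k_1,\ldots,k_d)})$, since $f$ agrees with $P$ on $\{0,1\}^n$ inputs. So the algorithm $B$ queries $A$ on each of these $t^d$ vectors, multiplies each answer by the matching power of two, sums mod $p_i$, and outputs the result.

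For the error analysis, fix a tuple $(k_1,\ldots,k_d)$. The bit in coordinate $j$ of $\vec{I}^{(k_1,\ldots,k_d)}$ is $b_{j,k_m(j)}$, and because $m(j)$ depends only on $j$, each coordinate draws from an independent sampling process. Hence $\vec{I}^{(k_1,\ldots,k_d)}$ is total variation distance at most $n \cdot (1/n^2) = 1/n$ from the $\mathrm{Ber}[\mu]^{\otimes n}$ distribution, so $A$ succeeds on it with probability at least $1 - 1/\omega(\lg^d n \lg\lg^d n) - 1/n$. The number of calls is $t^d = O(\lg^d n \lg\lg^d n)$, so by the union bound all calls succeed simultaneously with probability $1 - o(1) > 3/4$, and whenever they all succeed $B$ recovers $f_i(x)$ exactly. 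The runtime is $t^d \cdot T(n) + \tO(n) = \tO(n + T(n))$ since $t^d$ is subpolynomial.

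The main obstacle is the bookkeeping in the second step: one must verify that the ``partition-respecting'' bit vectors $\vec{I}^{(k_1,\ldots,k_d)}$ really have independent coordinates despite being drawn from correlated samplers (the $b_{j,\cdot}$ for the same $j$ are correlated through the constraint that they reduce to $x_j \bmod p_i$). Independence across $j$ is immediate since the sampling is done per coordinate, and within a single vector each coordinate $j$ uses only one bit $b_{j,k_{m(j)}}$; this is why the strong $d$-partite structure is essential, since it ensures that for any fixed tuple $(k_1,\ldots,k_d)$ the choice $m(j)$ is a function of $j$ alone rather than of the monomial. Everything else is routine combination of Corollary \ref{cor:SampleButNoEps}, linearity mod $p_i$, and a union bound.
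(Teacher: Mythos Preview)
Your proposal is correct and follows essentially the same approach as the paper's proof: lift each field coordinate to a $t$-bit string via the rejection-sampling procedure of Corollary~\ref{cor:SampleButNoEps}, expand $f_i$ across bit positions using the strongly $d$-partite structure to reduce to $t^d$ calls on $\{0,1\}^n$ inputs, and union bound. One small wording slip: you call $x$ a ``worst-case instance,'' but the lemma (and your own TV-distance analysis) requires $x \sim \mathbb{F}_{p_i}^n$ to be uniformly random---the sampling procedure only yields outputs close to $\mathrm{Ber}[\mu]^{\otimes n}$ when the input residues are uniform, so be careful with that phrasing. Your TV bound of $1/n$ is also slightly looser than needed (the joint output of Corollary~\ref{cor:SampleButNoEps} is already $1/n^2$-close, so any single pushforward $\vec{I}^{(k_1,\ldots,k_d)}$ is $1/n^2$-close without an extra factor of $n$), but this does not affect correctness.
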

\begin{proof}
Let $D_{\mu}$ be the distribution over inputs where each of the $n$ bits is chosen iid from $Ber[\mu]$, that is one is chosen with probability $\mu$ and zero is chosen with probability $1-\mu$.
Recall that when we say $\vec{Z} \sim D_{\mu}$ we mean that $\vec{Z} $ is drawn from the distribution $D_{\mu}$. We will use an abuse of notation where we run $f_i(\vec{Z})$, when we do this we mean that one should interpret the $n$ length bit vector as $n$ values from $F_{p_i}$ where $0$ maps to $0 \in F_{p_i}$ and $1$ maps to $1 \in F_{p_i}$. Additionally when we have a vector $v$ we will use $v[j]$ to represent the $j^{th}$ number in $v$.

In this proof we will show how to use $P(\vec{Z})$ to solve instances of $f_i(\vec{Z})$ for all $i$. Note that we can simply take the output of $P(\vec{Z})$ modulo $p_i$.
So we want to use $f_i(\vec{Z})$ where $\vec{Z}  \sim D_{\mu}$ to solve $f_i(z)$ where $z \sim \mathbb{F}_{p_i}^n$.

Let $f'$ be the function $f_i$ but taken over the integers instead of $F_{p_i}$. Note that this is the same $f'$ regardless of $f_i$. We have that if $x \in \mathbb{F}_{p_i}^n$ then $f'(x) \equiv f_i(x) \pmod{p_i}$. 
Furthermore, if we make a new input $x'$ where $x'[j] \equiv x[j] \pmod{p_i}$ for all $j\in [1,n]$ then  $f'(x) \equiv f_i(x) \pmod{ p_i}$. 
So, given an input $x \sim \mathbb{F}_{p_i}^n $ we will take the sampling procedure of Corollary \ref{cor:SampleButNoEps} and make a new input $x'$, where $x'[j]$ is a $t =O(\mu^{-1} \cdot (1-\mu)^{-1} \cdot (\lg(p_i) + 6\lg(n)) \cdot \lg(p_i))$ bit number. Note that because $\mu$ is constant and neither zero nor one and $p_i =\Theta(\lg(n))$ then $t =O(\lg(n)\lg\lg(n))$. 
Furthermore, any given number $x'[j]$ has the property that the distribution over its binary representation has total variation distance $\leq 1/n^3$ from the distribution where all $t$ bits are chosen iid from $Ber[\mu]$. 
Thus, all $tn$ bits in our new input $x'$ have total variation distance at most $1/n^2$ from the distribution where all $tn$ bits are chosen iid from $Ber[\mu]$.

Now, we can compute the value of $f'(x')$ with $t^d$ calls to $f'$ where every call has a zero one input. Every monomial is formed by one variable from each of the $d$ partitions. Let $m$ be the number of monomials. So we can write our polynomial $f'$ as follows :
$$f'(x') = \sum_{j=1}^m y_{k_{j,1}}\cdot y_{k_{j,2}}\cdots y_{k_{j,d}},$$
where $y_{k_{j,\ell}}$ is a variable from the $\ell^{th}$ partition $S_{\ell}$. The input $x'$ is formed with $n$ of these input variables $y_{k_{j,\ell}}$.

We can break down this multiplication for every bit. Let $y_{k_{j,\ell}}[r]$ be the $r^{th}$ bit of $y_{k_{j,\ell}}$. Now we can rewrite our sum. Recall that $g_{f'}(v_1,\ldots,v_d)$ is the function such that $f'$ can be written as a sum of calls to $g_{f'}$, where $v_\ell$ is a variable from partition $S_\ell$: 
$$f'(x') =\sum_{j=1}^m \left( \sum_{r_1,\ldots,r_d \in[0,t-1]} 2^{r_1 +\ldots + r_d} \cdot y_{k_{j,1}}[r_1]\cdot y_{k_{j,2}}[r_2]\cdots y_{k_{j,d}}[r_d] \right).$$
Put in words, we can multiply $d$ numbers each of $t$ bits by making a weighted sum over the $t^d$ multiplications of the bits of the $d$ numbers. 



Now, we want to create $t^d$ inputs $\hat{x}_1,\ldots, \hat{x}_{t^d}$. They are formed by taking all possible choices of $r_1,\ldots,r_d$ where each $r_{\ell}$ is an integer in $[0,t-1]$. Given a choice of $r_1,\ldots,r_d$ we create a new input $\hat{x}_j$ by taking all variables in $S_{\ell}$ and making their value in $\hat{x}_j$ be the  $r_{\ell}^{th}$ bit of that variable in $x'$.

Now, call $A(\hat{x}_j)$ for all $j\in[1,t^d]$. Note that  $P(\hat{x}_j) \equiv f_i'(\hat{x}_j)  \equiv f_i(\hat{x}_j) \pmod{p_i}$. So, if $A(\hat{x}_j) = P(\hat{x}_j)$ for all $j\in [1,t^d]$ then we can return the value of $f'(x') \equiv f_i(x) \pmod{p}$. 

By the definition of $A$ in this Lemma, $A$ must succeed on any individual random input $x\sim D_{\mu}$ with probability $1-1/\omega(\lg^d(n)\lg\lg^d(n))$. The total variation distance of any $\hat{x}_j$ from $D_{\mu}$ is at most $1/n^2$. So $A$ must succeed on any one given random input $\hat{x}_j$ with probability $1-1/\omega(\lg^d(n)\lg\lg^d(n))-1/n^2$ which is $1-1/\omega(\lg^d(n)\lg\lg^d(n))$.  

Our inputs $\hat{x}_j$ are not iid from each other, however, if $A$ is correct with probability $1-q$ on a given input from $\hat{x}_j$ then $A$ must be correct with probability at least $1-qt^d$ on $t^d$ inputs $\hat{x}_j$ at once. 

So, $A$ will return correct answers for all $t^d$  inputs $\hat{x}_j$ at once with probability at least $1-1/\omega(1)$. Given these correct answers we can compute $f_i(x)$, for all $f_i$. So, an algorithm $B$ exists that makes $t^d$ calls to $A$ and takes $n \lg^4(n)t$ time to produce our new sampled input $x'$ from $x$. \\
$B$ returns $f_i$ correctly with probability at least $1-1/\omega(1) > 3/4$.\\
$B$ takes a total time of $O(t^d T(n) + n)$. We have that $t=O(\lg(n)\lg\lg(n))$ and $d= o(\lg(n)/\lg\lg(n))$ (by our definition of \gPol{P}). Thus, $t^d = n^{o(1)}$. So we have that $B$ runs in time $\tO( T(n) + n)$.
\end{proof}

This next theorem gives a worst case to average case reduction for $P$.

\begin{reminder}{Theorem \ref{thm:framework}}
Let $\mu$ be a constant such that $0 < \mu <1$.
Let $P$ be a problem such that a function $f$ exists that is a \gPol{$P$}, and let $d$ be the degree of $f$. 
Let $A$ be an algorithm  that runs in time $T(n)$ such that when $\vec{I}$ is formed by $n$ bits each chosen iid from $Ber[\mu]$:
$$Pr[A(\vec{I}) = P(\vec{I})] \geq 1-1/\omega\left( \lg^d(n)\lg\lg^d(n) \right).$$
Then there is a randomized algorithm $B$ that runs in time $\tO(n + T(n))$ such that
for \emph{any} for $\vec{I} \in \{0,1\}^n$:
$$Pr[B(\vec{I}) = P(\vec{I})] \geq 1-O\left(2^{-\lg^2(n)}\right).$$
\end{reminder}
\begin{proof}
We will use Lemma \ref{lem:UnifPtoACf} and Corollary \ref{cor:RandomFsolvesP} to get this result. 

Note that the algorithm $A$ here can be used as the algorithm $A$ in Lemma \ref{lem:UnifPtoACf}.

Furthermore, note that the algorithm $B$ of Lemma \ref{lem:UnifPtoACf} has the same requirements as the algorithm $A$ of Corollary \ref{cor:RandomFsolvesP}.

So, given the algorithm $A$ of this theorem we can produce the algorithm $B$ from Corollary \ref{cor:RandomFsolvesP}. 

The algorithm $B$ of Corollary \ref{cor:RandomFsolvesP} has the same properties of the algorithm $B$ described in this theorem. 

Thus, algorithm $A$ implies that an algorithm $B$ exists. 
\end{proof}





\end{document}